\pgfplotsset{compat=newest} 
\tikzset{>=latex} 
\tikzstyle{point}=[circle,draw=black,thick,minimum size=1.8mm,inner sep=0pt,fill=white] 
\tikzstyle{bpoint}=[circle,draw=black,thick,minimum size=1.5mm,inner sep=0pt,fill=black] 
\tikzstyle{lrgpoint}=[circle,draw=black,thick,minimum size=3.6mm,inner sep=0pt,fill=white] 
\tikzstyle{lrgbpoint}=[circle,draw=black,thick,minimum size=3mm,inner sep=0pt,fill=black] 
\tikzstyle{hom}=[densely dotted,thin] 
\tikzstyle{query}=[thick] 
\tikzstyle{can}=[thick] 
\tikzstyle{or-gate}=[rectangle,draw,inner sep=4pt,thick]
\tikzstyle{and-gate}=[rectangle,draw,inner sep=4pt,thick]
\tikzstyle{input}=[circle,draw,minimum size=4mm,text width=3mm,thick]
\newcommand{\OWL}{\textsl{OWL\,2}}
\newcommand{\OWLQL}{\textsl{OWL\,2\,QL}}
\newcommand{\DL}{\textsl{DL-Lite}}
\newcommand{\zo}{\{0,1\}}
\newcommand{\NC}{\mathsf{NC}}
\newcommand{\mNC}{\mathsf{mNC}}
\newcommand{\SAC}{\ensuremath{\mathsf{SAC}}}
\renewcommand{\P}{\mathsf{P}}
\newcommand{\mP}{\mathsf{mP}}
\newcommand{\NP}{\ensuremath{\mathsf{NP}}}
\newcommand{\mNP}{\mathsf{mNP}}
\newcommand{\NL}{\ensuremath{\mathsf{NL}}}
\newcommand{\mNL}{\mathsf{mNL}}
\newcommand{\comNL}{\mathsf{co}\text{-}\mathsf{mNL}}
\newcommand{\LOGCFL}{\ensuremath{\mathsf{LOGCFL}}}
\newcommand{\mLOGCFL}{\mathsf{mLOGCFL}}
\newcommand{\ACz}{{\ensuremath{\mathsf{AC}^0}}}
\newcommand{\mACz}{{\ensuremath{\mathsf{mAC}^0}}}
\newcommand{\Pitr}{{\ensuremath{\mathsf{\Pi}_3}}}
\newcommand{\mPitr}{{\ensuremath{\mathsf{m\Pi}_3}}}
\newcommand{\PSpace}{\textsc{PSpace}}
\newcommand{\T}{\mathcal{T}}
\newcommand{\A}{\mathcal{A}}
\newcommand{\C}{\mathcal{C}}
\newcommand{\I}{\mathcal{I}}
\newcommand{\Tmc}{\mathcal{T}}
\newcommand{\Amc}{\mathcal{A}}
\newcommand{\Cmc}{\mathcal{C}}
\newcommand{\q}{{\boldsymbol q}}
\newcommand{\li}{\boldsymbol l}
\newcommand{\FO}{\text{FO}}
\newcommand{\PE}{\text{PE}}
\newcommand{\NDL}{\text{NDL}}
\newcommand{\cli}{\textsc{Clique}}
\newcommand{\reach}{\textsc{Reachability}}
\newcommand{\avec}[1]{\boldsymbol{#1}}
\newcommand{\ind}{\mathsf{ind}}
\newcommand{\tw}{\mathsf{tw}}
\newcommand{\omq}{{\ensuremath{\boldsymbol{Q}}}}
\renewcommand{\t}{\mathfrak{t}}
\newcommand{\tr}{\mathfrak{t}_\mathsf{r}}
\newcommand{\ti}{\mathfrak{t}_\mathsf{i}}
\newcommand{\qtw}{\q_{\mathsf{tw}}}
\newcommand{\Cir}{\boldsymbol{C}}
\newcommand\twfn{\ensuremath{f^\triangledown_{\omq}}}
\newcommand\homfn{\ensuremath{f^\blacktriangledown_{\omq}}}
\newcommand\homfnprime{\ensuremath{f^{\blacktriangledown\prime}_{\omq}}}
\newcommand{\fun}{f^\triangledown_{\omq}}
\newcommand{\prim}{f^\vartriangle_{\omq}}
\newcommand{\primfnP}{\ensuremath{f^{\vartriangle}_{\OMQT{H}}}}
\newcommand\twset{\Theta_{\omq}}
\newcommand\atom{{S(\avec{z})}}
\newcommand\rsz{{S(\avec{z})}}
\newcommand{\HG}[1]{\mathcal{H}(#1)}
\newcommand{\OMQI}[1]{\boldsymbol{S}_{#1}}
\newcommand{\OMQT}[1]{\boldsymbol{T}_{\hspace*{-0.2em}#1}}
\newcommand\canmod{\mathcal{C}_{\T,\A}}
\newcommand{\AND}{\textsc{and}}
\newcommand{\OR}{\textsc{or}}
\newcommand{\NOT}{\textsc{not}}
\newcommand{\ANDOP}{\land}
\newcommand{\OROP}{\lor}
\newcommand{\NOTOP}{\neg}
\newcommand{\degree}{\mathsf{deg}}
\newcommand{\HGP}{\mathsf{HGP}}
\newcommand{\THGP}{\mathsf{THGP}}
\newcommand{\mTHGP}{\mathsf{mTHGP}}
\newcommand{\mHGP}{\mathsf{mHGP}}
\newcommand{\poly}{\mathsf{poly}}
\newcommand{\emptyword}{\varepsilon}
\newcommand{\numtypes}{M}
\newcommand{\twidth}{m}
\newcommand{\leftt}{\mathsf{{left}}}
\newcommand{\rightt}{\mathsf{{right}}}
\newcommand{\Cover}{\mathsf{{cover}}}
\newenvironment{nitemize}{\begin{list}{--}{\itemsep=1pt\leftmargin=1.5em\labelwidth=1.5em\topsep=4pt}}{\end{list}}
\newcommand{\frontier}{\mathsf{frontier}}
\newcommand{\stack}{\mathsf{stack}}
\newcommand{\deepest}{\mathsf{deepest}}
\newcommand{\topof}{\mathsf{top}}
\def\bbbdalgo{\ensuremath{\mathsf{TreeQuery}}}
\def\bbarbalgo{\ensuremath{\mathsf{BLQuery}}}
\def\lspace{\ensuremath{\mathsf{L}}}
\def\llred{\ensuremath{\lspace^{\LOGCFL}}}
\newcommand{\false}{\ensuremath{\mathsf{false}}}
\newcommand{\true}{\ensuremath{\mathsf{true}}}
\newcommand{\dom}{\mathsf{dom}}
\newcommand{\aor}{U}
\newcommand{\leftand}{L}
\newcommand{\rightand}{R}
\newcommand{\trueleaf}{A}
\newcommand{\qclin}{\q'}
\newcommand{\dcx}{D(\avec{\alpha})}
\newcommand{\kbC}{\Tmc_{\avec{\alpha}},\Amc}
\newcommand{\assign}{\longleftarrow}
\begin{document}

\markboth{M. Bienvenu et al.}{Ontology-Mediated Queries: Complexity and Succinctness}

\title{Ontology-Mediated Queries: Combined Complexity and\\ Succinctness of Rewritings via Circuit Complexity}
\author{MEGHYN BIENVENU$^1$, STANISLAV KIKOT$^2$, ROMAN KONTCHAKOV$^2$, VLADIMIR PODOLSKII$^3$,
MICHAEL ZAKHARYASCHEV$^2$}
\institute{$^1$ LIRMM, CNRS\\ $^2$ Birkbeck, University of London\\$^3$ Steklov Mathematical Institute, Moscow}
\maketitle

\begin{abstract}
We give solutions to two fundamental computational problems in ontology-based data access with the W3C standard ontology language \OWLQL: the succinctness problem for first-order rewritings of ontology-mediated queries (OMQs), and the complexity problem for OMQ answering. We classify OMQs according to the shape of their conjunctive queries (treewidth, the number of leaves) and the existential depth of their ontologies. For each of these classes, we determine the combined complexity of OMQ answering, and  whether all OMQs in the class have polynomial-size first-order, positive existential, and nonrecursive datalog rewritings. We obtain the  succinctness results using hypergraph programs, a new computational model for Boolean functions, which makes it possible to connect the size of OMQ rewritings and circuit complexity. 
\end{abstract}


\keywords{Ontology-Based Data Access, Description Logic, Ontology-Mediated Query, Query Rewriting, Succinctness, Computational Complexity, Circuit Complexity.}


\section{Introduction}\label{intro}

\subsection{Ontology-based data access}

Ontology-based data access (OBDA) via query rewriting was proposed by Poggi et al.\ \cite{PLCD*08} with the aim of facilitating query answering over complex, possibly incomplete and heterogeneous data sources. In an OBDA system (see Fig.~\ref{fig:obda}), the user does not have to be aware of the structure of data sources, which can be relational databases, spreadsheets, RDF triplestores, etc. Instead, the system provides the user with an ontology that serves as a high-level conceptual view of the data, gives a convenient vocabulary for user  queries, and enriches incomplete data with background knowledge. A snippet, $\T$, of such an ontology is shown below in the syntax of first-order (FO) logic:
\begin{align*}
& \forall x \, \big(\textit{ProjectManager}(x) \to \exists y\, (\textit{isAssistedBy}(x,y) \land \textit{PA}(y))\big),\\
& \forall x\, \big(\exists y\, \textit{managesProject}(x,y) \to \textit{ProjectManager}(x)\big),\\
& \forall x \, \big(\textit{ProjectManager}(x) \to \textit{Staff}(x)\big),\\ 
& \forall x \, \big(\textit{PA}(x) \to \textit{Secretary}(x)\big). 
\end{align*}
User queries are formulated in the signature of the ontology. For example, the conjunctive query (CQ)
\begin{align*}
\q(x) \ = \ \exists y \, (\textit{Staff}(x) \land \textit{isAssistedBy}(x,y) \land \textit{Secretary}(y)))
\end{align*}
\begin{figure}[t]%
\centering%
\begin{tikzpicture}[xscale=1.25]
\draw[thick,fill=gray!30] (-4,-1.7) rectangle +(11,4.4);
\begin{scope}[->,black!70,line width=2mm]
\draw[out=0,in=120] (0,4.21) to (4,3.4);
\draw[out=0,in=180] (0.3,2.25) to (1.4,2.25);
\draw[out=180,in=-90,looseness=1.5] (-0.5,-1.1) to (-2.3,0.7);
\draw[out=180,in=0] (3.5,-0.45) to (0.3,0.8);
\end{scope}
\draw[thick,fill=gray!30] (-4,3.1) rectangle +(4,1.2);
\node at (-2,3.7) {\itshape\tiny\begin{tabular}{l}SELECT ?s \{\\[-1pt]\hspace*{1em}?s a :Staff . \\[-1pt]\hspace*{1em}?s a $[$ a owl:restriction\textup{;} \\[-1pt]\hspace*{3.5em} owl:onProperty :assistedBy\textup{;}\\[-1pt]\hspace*{3.5em} owl:someValuesFrom :Secretary$]$ . \} \end{tabular}};
\node at (-0.75,3.95) {\bf query};
\draw[fill=gray!5] (-3.7,0.7) rectangle +(4,1.7);
\node at (-1.7,1.6) {\tiny\ttfamily\begin{tabular}{l}[] rdf:type rr:TriplesMap ;\\[-1pt]
\hspace*{1em}rr:logicalTable "SELECT  * FROM PROJECT";\\[-1pt]
\hspace*{1em}rr:subjectMap [ a rr:BlankNodeMap ;\\[-1pt]\hspace*{8em} rr:column "PRJ\_ID" ; ] ;\\[-1pt]
\hspace*{1em}rr:propertyObjectMap [ rr:property a:name;\\[-1pt]\hspace*{8em} rr:column "PRJ\_NAME" ] ;\\[-1pt]
\hspace*{1em}\dots\end{tabular}};
\node at (-0.8,0.95) {\small\bf mappings};
\draw[thick,fill=white] (1.4,1.1) rectangle +(5.3,2.3);
\node at (5.7,1.5) {\bf ontology};
\begin{scope}\tiny\itshape
\node[draw=black,fill=gray!5,rectangle] (staff) at (2.4, 3.1) {Staff};
\node[draw=black,fill=gray!5,rectangle] (projman) at (2.4, 2.5) {ProjectManager};
\node[draw=black,fill=gray!5,rectangle] (proj) at (2.4, 1.4) {\hspace*{1em}Project\hspace*{1em}};
\node[draw=black,fill=gray!5,diamond,aspect=2.5,inner ysep=0pt] (man) at (2.4, 1.95) {manages};
\node[draw=black,fill=gray!5,rectangle] (pa) at (5.7, 2.5) {\hspace*{2em}PA\hspace*{2em}};
\node[draw=black,fill=gray!5,diamond,aspect=2.5,inner ysep=0pt] (assisted) at (4.15, 2.5) {isAssistedBy};
\node[draw=black,fill=gray!5,rectangle] (sec) at (5.7, 3.1) {\hspace*{1em}Secretary\hspace*{1em}};
\draw(man) -- (projman);
\draw(man) -- (proj);
\draw(assisted) -- (pa);
\draw(assisted) -- (projman);
\draw[] (pa) -- (sec) node[sloped,midway,rotate=-90] {$\cup$};
\draw[] (projman) -- (staff) node[sloped,midway,rotate=-90] {$\cup$};
\end{scope}
\begin{scope}[shift={(1,0)}]
\draw[fill=gray!5] (0,-1.3) ellipse (1.5 and 0.2);
\fill[gray!5] (-1.5,-1.3) rectangle +(3,1.3);
\draw(-1.5,-1.3) -- +(0,1.3);
\draw(1.5,-1.3) -- +(0,1.3);
\draw[fill=gray!5] (0,0) ellipse (1.5 and 0.2);
\node at (0,-0.8) {\texttt{\tiny\begin{tabular}{l}CREATE TABLE PROJECT (\\[-1pt]\hspace*{1em}PRJ\_ID INT NOT NULL,\\[-1pt]\hspace*{1em}PRJ\_NAME VARCHAR(60) NOT NULL,\\[-1pt]\hspace*{1em}PRJ\_MANAGER\_ID INT NOT NULL\\[-1pt]\hspace*{1em}\dots\\[-1pt])\end{tabular}}};
\end{scope}
\begin{scope}[shift={(1,0.8)}]
\draw[fill=gray!5] (2.5,-1.6) rectangle +(3,1.6);
\draw (2.5,-0.2) -- +(3,0);
\draw (2.7,0) -- +(0,-1.6);
\foreach \x in {3.4,4.1,4.8} {\draw[ultra thin,gray] (\x,0) -- +(0,-1.6); }
\foreach \y in {-0.4,-0.6,-0.8,-1,-1.2,-1.4} {\draw[ultra thin,gray] (2.5,\y) -- +(3,0); }
\node at (3.05,-0.1) {\tiny\ttfamily\bfseries A};
\node at (3.75,-0.1) {\tiny\ttfamily\bfseries B};
\node at (4.45,-0.1) {\tiny\ttfamily\bfseries C};
\node at (5.15,-0.1) {\tiny\ttfamily\bfseries D};
\node at (2.6,-0.3) {\tiny\ttfamily\bfseries 1};
\node at (2.6,-0.5) {\tiny\ttfamily\bfseries 2};
\node at (2.6,-0.7) {\tiny\ttfamily\bfseries 3};
\node at (2.6,-0.9) {\tiny\ttfamily\bfseries 4};
\node at (2.6,-1.1) {\tiny\ttfamily\bfseries 5};
\node at (2.6,-1.3) {\tiny\ttfamily\bfseries 6};
\node at (2.6,-1.5) {\tiny\ttfamily\bfseries 7};
\end{scope}
\node at (4.2,-1.2) {\small\bf data sources};
\end{tikzpicture}%
\caption{Ontology-based data access.}\label{fig:obda}
\end{figure}
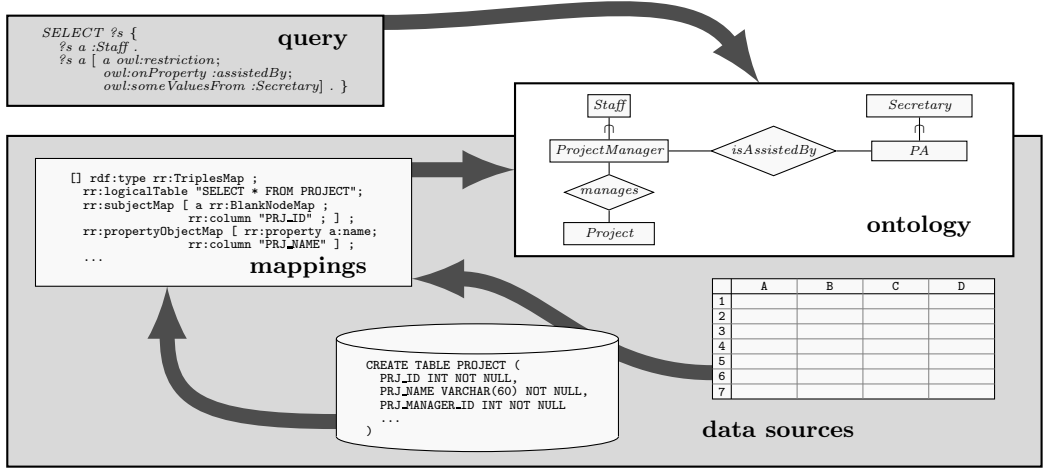%
is supposed to find the staff assisted by secretaries. The ontology signature and data schemas are related by mappings designed by the ontology engineer and invisible to the user. The mappings allow the system to view the data sources as a single RDF graph (a finite set of unary and binary atoms), $\A$, in the signature of the ontology. For example, the global-as-view (GAV) mappings 
\begin{align*}
& \forall x,y,z\,\big( {\small\texttt{PROJECT}}(x,y,z) \to \textit{managesProject}(x,z) \big),\\
& \forall x,y\, \big( {\small\texttt{STAFF}}(x,y) \land (y = 2) \to \textit{ProjectManager}(x)\big)
\end{align*}
populate the ontology predicates $\textit{managesProject}$ and $\textit{ProjectManager}$ with values from the database relations ${\small\texttt{PROJECT}}$ and ${\small\texttt{STAFF}}$. 
In the query rewriting approach of Poggi et al.\ \cite{PLCD*08}, the OBDA system employs the ontology and mappings in order to transform the user query into a query over the data sources, and then delegates the actual query evaluation to the underlying database engines and triplestores. 

For example, the first-order query 
\begin{multline*}
\q'(x) \ = \ \exists y \, \big[\textit{Staff}(x) \land \textit{isAssistedBy}(x,y) \land (\textit{Secretary}(y) \lor \textit{PA}(y))\big] \lor{}\\
 \textit{ProjectManager}(x) \lor \exists z\, \textit{managesProject}(x,z) 
\end{multline*}
is an \emph{FO-rewriting} of the \emph{ontology-mediated query} (OMQ) $\omq = (\T,\q)$  over any RDF graph $\A$ in the sense that $a$ is an answer to $\q'(x)$ over $\A$ iff $\q(a)$ is a logical consequence of $\T$ and $\A$.  As the system is not supposed to materialise $\A$, it uses the mappings to unfold the rewriting $\q'$ into an SQL (or SPARQL) query over the data sources.

Ontology languages suitable for OBDA via query rewriting have been identified by the Description Logic, Semantic Web, and Database/Datalog communities. The \DL{} family of description logics, first proposed by Calvanese et al.~\cite{CDLLR07} and later extended by Artale et al.~\cite{ACKZ09}, was specifically designed to ensure the existence of FO-rewritings for all conjunctive queries (CQs). Based on this family,
the W3C defined a profile \OWLQL\footnote{\url{http://www.w3.org/TR/owl2-overview/\#Profiles}}\ of the Web Ontology Language \OWL{}
`so that data [\ldots] 
stored in a standard relational database system can be queried through an ontology via a simple rewriting mechanism.\!' 
Various dialects of tuple-generating dependencies (tgds) that admit FO-rewritings of CQs and extend \OWLQL{} have also been identified~\cite{DBLP:journals/ai/BagetLMS11,DBLP:journals/ai/CaliGP12,DBLP:conf/datalog/CiviliR12}. 
We note in passing that while most work on OBDA (including the present paper) assumes that the user query is given as a CQ, 
other query languages, allowing limited forms of recursion and/or negation, have also been investigated \cite{DBLP:conf/icdt/Rosati07,DBLP:journals/ws/Gutierrez-Basulto15,DBLP:journals/jair/BienvenuOS15,DBLP:conf/aaai/KostylevRV15}.
SPARQL~1.1, the standard query language for RDF graphs, contains negation, aggregation and other features beyond first-order logic. The entailment regimes of SPARQL~1.1\footnote{\url{http://www.w3.org/TR/2013/REC-sparql11-entailment-20130321}}  also bring inferencing capabilities in the setting, which are, however, necessarily limited for efficient implementations.


By reducing OMQ answering to standard database query evaluation, which is generally regarded to be very efficient, OBDA via query rewriting has quickly become a hot topic in both theory and practice.
A number of rewriting techniques have been proposed and implemented for \OWLQL{} (PerfectRef~\cite{PLCD*08}, Presto/Prexto~\cite{DBLP:conf/kr/RosatiA10,DBLP:conf/esws/Rosati12}, tree witness rewriting~\cite{DBLP:conf/kr/KikotKZ12}), sets of tuple-generating dependencies (Nyaya~\cite{DBLP:conf/icde/GottlobOP11}, PURE~\cite{DBLP:journals/semweb/KonigLMT15}), and more expressive ontology languages that require recursive datalog rewritings (Requiem~\cite{DBLP:conf/dlog/Perez-UrbinaMH09}, Rapid~\cite{DBLP:conf/cade/ChortarasTS11}, Clipper~\cite{DBLP:conf/aaai/EiterOSTX12} and Kyrie~\cite{kyrie2}). 
A few mature OBDA systems have also recently emerged: pioneering MASTRO~\cite{DBLP:journals/semweb/CalvaneseGLLPRRRS11}, commercial Stardog~\cite{Perez-Urbina12} and Ultrawrap~\cite{DBLP:conf/semweb/SequedaAM14}, and the Optique platform~\cite{optique} based on the query answering engine Ontop~\cite{ISWC13,DBLP:conf/semweb/KontchakovRRXZ14}.
By providing a semantic end-to-end connection between users and multiple distributed data sources (and thus making the IT expert middleman redundant), OBDA has attracted the attention of industry,  with companies such as Siemens~\cite{DBLP:conf/semweb/KharlamovSOZHLRSW14} and Statoil~\cite{DBLP:conf/semweb/KharlamovHJLLPR15} experimenting with OBDA technologies to streamline the process of data access for their engineers.\!\footnote{See, e.g., \url{http://optique-project.eu}.} 

\subsection{Succinctness and complexity}

In this paper, our concern is two fundamental theoretical problems whose solutions 
will elucidate the computational costs required for answering OMQs with \OWLQL{} ontologies. The succinctness problem for FO-rewritings is to understand how difficult it is to construct FO-rewritings for OMQs in a given class and, in particular, to determine whether OMQs in the class have polynomial-size FO rewritings or not. In other words, the succinctness problem clarifies the computational costs of the \emph{reduction} of OMQ answering to database query evaluation. On the other hand, it is also important to measure the resources required to answer OMQs by a \emph{best possible algorithm}, not necessarily a reduction to database query evaluation. Thus, we are interested in the combined complexity of the OMQ answering problem: given an OMQ $\omq = (\T,\q(\avec{x}))$ from a certain class, a data instance $\A$ and a tuple $\avec{a}$ of constants from $\A$, decide whether $\T,\A \models \q(\avec{a})$.  The combined complexity of CQ evaluation has been thoroughly investigated in database theory; cf.~\cite{DBLP:conf/stoc/GroheSS01,Libkin} and references therein. To slightly simplify the setting for our problems, we assume that data is given in the form of an RDF graph and leave mappings out of the picture (in fact, GAV mappings only polynomially increase the size of FO-rewritings over RDF graphs).

We suggest a `two-dimensional' classification of OMQs. One dimension  takes account of the shape of the CQs in OMQs by quantifying their treewidth (as in classical database theory) and the number of leaves in tree-shaped CQs. Note that, in  SPARQL~1.1, the sub-queries that require rewriting under the \OWLQL{} entailment regime are always tree-shaped (they are, in essence, complex class expressions). 
The second dimension is the existential depth of ontologies, that is, the length of the longest chain of labelled nulls in the chase on any data. 
Thus, the NPD FactPages ontology,\!\footnote{http://sws.ifi.uio.no/project/npd-v2/} which was designed to facilitate  querying the datasets of the Norwegian Petroleum Directorate,\!\footnote{http://factpages.npd.no/factpages/} is of depth 5. A typical example of an ontology axiom causing infinite depth is \mbox{$\forall x\, \bigl(\textit{Person}(x) \to \exists y\, (\textit{ancestor}(y,x) \land \textit{Person}(y))\bigr)$}. 

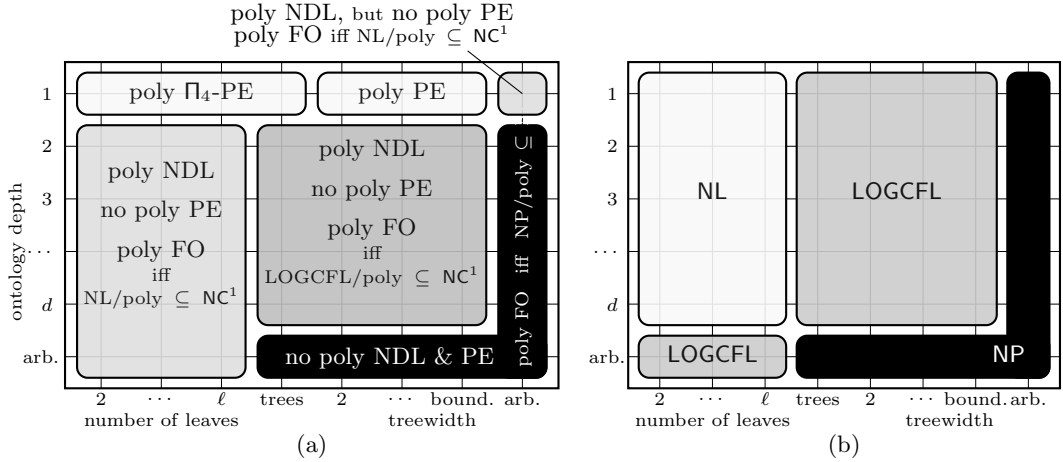
\begin{figure}[t]%
\tikzset{cmplx/.style={draw,thick,rounded corners,inner sep=0mm}}%
\mbox{}\hspace*{-0.3em}\begin{tikzpicture}[x=8mm,y=7mm]
\draw[thick] (0.4,-0.4) rectangle +(8.2,-6.2);
\node[rotate=90] at (-0.4,-4.0) {\scriptsize ontology depth};
\begin{scope}[ultra thin]
\draw (0.4,-1) -- +(8.2,0); \node at (0.1,-1) {\scriptsize 1};
\draw (0.4,-2) -- +(8.2,0); \node at (0.1,-2) {\scriptsize 2};
\draw (0.4,-3) -- +(8.2,0); \node at (0.1,-3) {\scriptsize 3};
\draw (0.4,-4) -- +(8.2,0); \node at (0,-4) {\scriptsize \dots};
\draw (0.4,-5) -- +(8.2,0); \node at (0.1,-5) {\scriptsize $d$};
\draw (0.4,-6) -- +(8.2,0); \node at (0,-6) {\scriptsize arb.};
\draw (1,-0.4) -- +(0,-6.2); \node at (1,-6.8) {\scriptsize 2}; 
\draw (2,-0.4) -- +(0,-6.2); \node at (2,-6.8) {\scriptsize \dots}; 
\draw (3,-0.4) -- +(0,-6.2); \node at (3,-6.8) {\scriptsize $\ell$}; 
\draw (4,-0.4) -- +(0,-6.2); \node at (4,-6.8) {\scriptsize trees}; 
\draw (5,-0.4) -- +(0,-6.2); \node at (5,-6.8) {\scriptsize 2}; 
\draw (6,-0.4) -- +(0,-6.2); \node at (6,-6.8) {\scriptsize \dots}; 
\draw (7,-0.4) -- +(0,-6.2); \node at (7,-6.8) {\scriptsize bound.}; 
\draw (8,-0.4) -- +(0,-6.2); \node at (8,-6.8) {\scriptsize arb.}; 
\end{scope}
\node at (2,-7.2) {\scriptsize number of leaves};
\node at (6.5,-7.2) {\scriptsize treewidth};
\node [fill=gray!25,cmplx,fill opacity=0.9,fit={(0.6,-1.6) (3.4,-6.4)}] 
{\raisebox{-5ex}{\begin{tabular}{c}poly NDL\\[4pt] no poly PE\\[4pt] 
\footnotesize poly FO\\[-2pt]\scriptsize iff\\[-2pt]\scriptsize NL/poly  $\,\subseteq\,$ $\NC^1$\end{tabular}}};
\node [fill=gray!50,cmplx,fill opacity=0.9,fit={(3.6,-1.6) (7.4,-5.4)}] 
{\raisebox{-6ex}{\begin{tabular}{c}poly NDL\\[4pt] no poly PE\\[4pt] 
\footnotesize poly FO\\[-2pt] \scriptsize iff\\[-2pt] \scriptsize LOGCFL/poly $\,\subseteq\,$ $\NC^1$\end{tabular}}};
\node [fill=black,cmplx,fit={(3.6,-5.6) (8.4,-6.4)}] 
{\hspace*{-1em}\raisebox{-9pt}{\textcolor{white}{\begin{tabular}{c}\small no poly NDL \& PE
\end{tabular}}}}; 
\node [fill=black,cmplx,fit={(7.6,-1.6) (8.4,-6.4)}]  
{\rotatebox{90}{\hspace*{-4.9em}\textcolor{white}{\begin{tabular}{c}
\scriptsize poly FO \ iff \  NP/poly $\subseteq$ NC$^1$
\end{tabular}}}};
\node [fill=gray!5,cmplx,fill opacity=0.9,fit={(0.6,-0.6) (4.4,-1.4)}]  {\raisebox{-1.5ex}{poly $\mathsf{\Pi}_4$-PE}};
\node [fill=gray!5,cmplx,fill opacity=0.9,fit={(4.6,-0.6) (7.4,-1.4)}]  {\raisebox{-1.5ex}{poly PE}};
\node [fill=gray!25,cmplx,fill opacity=0.9,fit={(7.6,-0.6) (8.4,-1.4)}] { };
\node[inner sep=0pt] (test) at (5.5,0.3) {\begin{tabular}{c}poly NDL, {\scriptsize but} no poly PE\\[-3pt]\footnotesize poly FO \scriptsize iff NL/poly  $\,\subseteq\,$ $\NC^1$\end{tabular}};
\draw (8,-1) -- (test.-15);
\node at (4.5,-7.7) {\footnotesize (a)};
\end{tikzpicture}
\hspace*{0.2em}
\begin{tikzpicture}[x=7mm,y=7mm]
\draw[thick] (0.4,-0.4) rectangle +(8.2,-6.2);
\begin{scope}[ultra thin]
\draw (0.4,-1) -- +(8.2,0); \node at (0.1,-1) {\scriptsize 1};
\draw (0.4,-2) -- +(8.2,0); \node at (0.1,-2) {\scriptsize 2};
\draw (0.4,-3) -- +(8.2,0); \node at (0.1,-3) {\scriptsize 3};
\draw (0.4,-4) -- +(8.2,0); \node at (0,-4) {\scriptsize \dots};
\draw (0.4,-5) -- +(8.2,0); \node at (0.1,-5) {\scriptsize $d$};
\draw (0.4,-6) -- +(8.2,0); \node at (0,-6) {\scriptsize arb.};
\draw (1,-0.4) -- +(0,-6.2); \node at (1,-6.8) {\scriptsize 2}; 
\draw (2,-0.4) -- +(0,-6.2); \node at (2,-6.8) {\scriptsize \dots}; 
\draw (3,-0.4) -- +(0,-6.2); \node at (3,-6.8) {\scriptsize $\ell$}; 
\draw (4,-0.4) -- +(0,-6.2); \node at (4,-6.8) {\scriptsize trees}; 
\draw (5,-0.4) -- +(0,-6.2); \node at (5,-6.8) {\scriptsize 2}; 
\draw (6,-0.4) -- +(0,-6.2); \node at (6,-6.8) {\scriptsize \dots}; 
\draw (7,-0.4) -- +(0,-6.2); \node at (7,-6.8) {\scriptsize bound.}; 
\draw (8,-0.4) -- +(0,-6.2); \node at (8,-6.8) {\scriptsize arb.}; 
\end{scope}
\node at (2,-7.2) {\scriptsize number of leaves};
\node at (6.5,-7.2) {\scriptsize treewidth};
\node [fill=gray!5,cmplx,fill opacity=0.9,fit={(3.4,-5.4) (0.6,-0.6)}]  {$\NL$};
\node [fill=gray!40,cmplx,fill opacity=0.9,fit={(3.6,-0.6) (7.4,-5.4)}]  {$\LOGCFL$};
\node [fill=black,cmplx,fit={(8.4,-6.4) (7.6,-0.6)}]  {}; 
\node [fill=black,cmplx,fit={(3.6,-5.6) (8.4,-6.4)}]  {\hspace*{7em}\raisebox{-1ex}{\textcolor{white}{\NP}}};
\node [fill=gray!40,cmplx,fill opacity=0.9,fit={(0.6,-5.6) (3.4,-6.4)}]  {\raisebox{-1ex}{\LOGCFL}};
\node at (4.5,-7.7) {\footnotesize (b)};
\end{tikzpicture}%
\caption{(a) Succinctness of OMQ rewritings, and (b) combined complexity of OMQ answering (tight bounds).}
\label{pic:results}
\end{figure}

\subsection{Results}\label{sec:results}

The results of our investigation are summarised in the succinctness and complexity landscapes of Fig.~\ref{pic:results}. In what follows, we discuss these results in more detail. 

The \emph{succinctness problem} we consider can be formalised as follows: given a sequence $\omq_n$ ($n<\omega$) of OMQs whose size is polynomial in $n$, determine whether the size of minimal rewritings of $\omq_n$ can be bounded by a polynomial function in $n$. We distinguish between three types of rewritings: arbitrary FO-rewritings, positive existential (PE-) rewritings (in which only $\land$, $\lor$ and $\exists$ are allowed), and non-recursive datalog (NDL-) rewritings.\!\footnote{Domain-independent FO-rewritings correspond to SQL queries, PE-rewritings to \textsc{Select}-\textsc{Project}-\textsc{Join}-\textsc{Union} (or SPJU) queries, and NDL-rewritings to SPJU queries with views; see also Remark~\ref{dom-ind}.}  
This succinctness problem was first considered by Kikot et al.~\cite{DBLP:conf/icalp/KikotKPZ12} and Gottlob and Schwentick~\cite{DBLP:conf/kr/GottlobS12}. The former constructed a sequence $\omq_n$ of OMQs (with tree-shaped CQs) whose PE- and NDL-rewritings are of exponential size, while FO-rewritings are superpolynomial unless $\NP \subseteq \P/\poly$. Gottlob and Schwentick~\cite{DBLP:conf/kr/GottlobS12} and Gottlob et al.~\cite{DBLP:journals/ai/GottlobKKPSZ14} showed that PE- (and so all other) `rewritings' can be made polynomial under the condition that all relevant data instances contain two special constants. The `succinctification' trick involves polynomially many extra existential quantifiers over these constants to guess a derivation of the given CQ in the chase, which makes such rewritings impractical (cf.~NFAs vs DFAs, and~\cite{DBLP:journals/tocl/Avigad03}). In this paper, we stay within the classical OBDA setting that does not impose any extra conditions on the data and does not allow any special constants in rewritings. 

Figure~\ref{pic:results}~(a) gives a summary of the succinctness results obtained in this paper.
It turns out that polynomial-size PE-rewritings are guaranteed to exist---in fact, can be constructed in polynomial time---only for the class of OMQs with ontologies of depth~1 and CQs of bounded treewidth; moreover, tree-shaped OMQs have polynomial-size $\mathsf{\Pi}_4$-PE-rewritings (with matrices of the form ${\land}{\lor}{\land}{\lor}$). Polynomial-size NDL-rewritings can be efficiently constructed for all tree-shaped OMQs with a bounded number of leaves, all OMQs with ontologies of bounded depth and CQs of bounded treewidth, and all OMQs with ontologies of depth 1. For OMQs with ontologies of depth 2 and arbitrary CQs, and OMQs with arbitrary ontologies and tree-shaped CQs, we have an exponential lower bound on the size of NDL- (and so PE-) rewritings. The  existence of polynomial-size FO rewritings for all OMQs in each of  these classes (save the first one) turns out to be equivalent to one of the major open problems in computational complexity such as $\NC^1 = \NP/\poly$.\!\footnote{$\mathsf{C}/\poly$ is the non-uniform analogue of a complexity class $\mathsf{C}$.}

We obtain these results by establishing a connection between succinctness of rewritings and circuit complexity, a branch of computational complexity theory that classifies Boolean functions according to the size of circuits computing them. Our starting point is the observation that the tree-witness PE-rewriting of an OMQ $\omq = (\T,\q)$ introduced by~\cite{DBLP:conf/kr/KikotKZ12} defines a hypergraph whose vertices are the atoms in $\q$ and whose hyperedges correspond to connected sub-queries of $\q$ that can be homomorphically mapped to labelled nulls of some chases for $\T$.
Based on this observation, we introduce a new computational model for Boolean functions by treating any hypergraph~$H$, whose vertices are labelled by (possibly negated) Boolean variables or constants~0 and~1, as a program computing a Boolean function $f_H$ that returns 1 on a valuation for the variables iff there is an independent subset of hyperedges covering all vertices labelled by 0 (under the valuation). We show that constructing short FO- (respectively,  PE- and NDL-) rewritings of $\omq$ is (nearly) equivalent to finding short Boolean formulas (respectively, monotone formulas and monotone circuits) computing the hypergraph function for~$\omq$. 

For each of the OMQ classes in Fig.~\ref{pic:results}~(a), we characterise the computational power of the corresponding hypergraph programs and employ results from circuit complexity to identify the size of rewritings. For example, we show that OMQs with ontologies of depth 1 correspond to hypergraph programs of degree $\le 2$ (in which every vertex belongs to at most two hyperedges), and that the latter are polynomially equivalent to nondeterministic branching programs (NBPs). Since NBPs compute the Boolean functions in the class $\NL/\poly \subseteq \P/\poly$, the tree-witness rewritings for OMQs with ontologies of depth 1 can be equivalently transformed into polynomial-size NDL-rewritings. On the other hand, there exist monotone Boolean functions computable by polynomial-size NBPs but not by polynomial-size monotone Boolean formulas, which establishes a superpolynomial lower bound for PE-rewritings. It also follows that all such OMQs have polynomial-size FO-rewritings iff \mbox{$\NC^1 = \NL/\poly$}.  
 
The succinctness results in Fig.~\ref{pic:results}~(a), characterising the complexity of the reduction to plain database query evaluation,  are complemented by the combined complexity results in Fig.~\ref{pic:results}~(b).
\emph{Combined complexity} measures the time and space required for a best possible algorithm to answer an OMQ $\omq = (\T,\q)$ from the given class over a data instance $\A$,  as a function of the size of $\omq$ and $\A$. 
It is known~\cite{CDLLR07,ACKZ09} that the general OMQ answering problem is \NP-complete for combined complexity---that is, of the same complexity as standard CQ evaluation in databases. However, answering tree-shaped OMQs turns out to be \NP-hard~\cite{DBLP:conf/dlog/KikotKZ11} in contrast to the well-known tractability of evaluating tree-shaped and bounded-treewidth CQs~\cite{DBLP:conf/vldb/Yannakakis81,DBLP:journals/tcs/ChekuriR00,DBLP:conf/icalp/GottlobLS99}. Here, we prove that, surprisingly, answering OMQs with ontologies of bounded depth and CQs of bounded treewidth is no harder than evaluating CQs of bounded treewidth, that is, \LOGCFL-complete. By restricting further the class of CQs to trees with a bounded number of leaves, we obtain an even better \NL-completeness result, which matches the complexity of evaluating the underlying CQs.   
If we consider bounded-leaf tree-shaped CQs coupled with arbitrary \OWLQL{} ontologies, then the OMQ answering problem remains tractable, \LOGCFL-complete to be more precise.
 
The plan of the paper is as follows. Section~\ref{sec:definitions} gives formal definitions of \OWLQL{}, OMQs and rewritings. Section~\ref{sec:TW} defines the tree-witness rewriting. Section~\ref{sec:Bfunctions} reduces the succinctness problem for OMQ rewritings to the succinctness problem for hypergraph Boolean functions associated with tree-witness rewritings, and introduces hypergraph programs for computing these functions. Section~\ref{sec:OMQs&hypergraphs} establishes a correspondence between the  OMQ classes in Fig.~\ref{pic:results} and the structure of the corresponding hypergraph functions and programs. Section~\ref{sec:circuit_complexity} characterises the computational power of hypergraph programs in these classes by relating them to standard models of computation for Boolean functions. Section~\ref{sec:7} uses  the results of the previous three sections and some known facts from circuit complexity to obtain the upper and lower bounds on the size of PE-, NDL- and FO-rewritings in Fig.~\ref{pic:results}~(a). Section~\ref{sec:complexity} establishes the combined complexity results in Fig.~\ref{pic:results}~(b). We conclude in Section~\ref{sec:conclusions} by discussing the obtained succinctness and complexity results and formulating a few open problems.
All omitted proofs can be found in the appendix.


\section{\OWLQL{} ontology-mediated queries and first-order rewritability}\label{sec:definitions}

In first-order logic, any \OWLQL{} \emph{ontology} (or \emph{TBox} in description logic parlance), $\T$, can be given as a finite set of sentences (often called \emph{axioms}) of the following forms 
\begin{align*}
& \forall x\,\big(\tau(x) \to \tau'(x)\big), &
& \forall x\, \big(\tau(x) \land \tau'(x) \to \bot \big),\\
& \forall x,y\,\big(\varrho(x,y) \to \varrho'(x,y)\big), &
& \forall x,y\,\big(\varrho(x,y) \land \varrho'(x,y) \to \bot\big),\\
& \forall x\, \varrho(x,x) , &
& \forall x\,\big(\varrho(x,x) \to \bot\big),
\end{align*}
where the formulas $\tau(x)$ (called \emph{classes} or \emph{concepts}) and $\varrho(x,y)$ (called \emph{properties} or \emph{roles}) are defined, using unary predicates $A$ and binary predicates $P$, by the grammars
\begin{equation}\label{syntax}
\tau(x) \ ::= \ \top \ \mid \ A(x) \ \mid \ \exists y\,\varrho(x,y) \qquad \text{and} \qquad \varrho(x,y) \ ::= \ \top \ \mid \ P(x,y) \ \mid \ P(y,x).
\end{equation}
(Strictly speaking, \OWLQL{} ontologies can also contain inequalities $a \ne b$, for constants $a$ and $b$. However, they do not have any impact on the problems considered in this paper, and so will be ignored.) 
\begin{example}\label{ex:NPDontology}
To illustrate, we show a snippet of the NPD FactPages ontology:
\begin{align*}
& \forall x \, (\textit{GasPipeline}(x) \to \textit{Pipeline}(x)),\\
& \forall x \, (\textit{FieldOwner}(x) \leftrightarrow \exists y \, \textit{ownerForField}(x,y)),\\
& \forall y \, (\exists x \, \textit{ownerForField}(x,y) \to  \textit{Field}(y)),\\
& \forall x,y \, (\textit{shallowWellboreForField}(x,y) \to  \textit{wellboreForField}(x,y)),\\
& \forall x,y \, (\textit{isGeometryOfFeature}(x,y) \leftrightarrow  \textit{hasGeometry}(y,x)).
\end{align*}
\end{example}

To simplify presentation, in our ontologies we also use sentences of the form
\begin{equation}\label{eq:sugar}
\forall x\, \big(\tau(x) \to \zeta(x)\big),
\end{equation}
where
\begin{equation*}
\zeta(x) \ ::= \ \tau(x) \ \mid \ \zeta_1(x) \land \zeta_2(x) \ \mid \ \exists y\, \big(\varrho_1(x,y) \land \dots \land \varrho_k(x,y) \land \zeta(y)\big).
\end{equation*}
It is readily seen that such sentences are just syntactic sugar and can be eliminated by means of polynomially many fresh predicates. Indeed, any axiom of the form~\eqref{eq:sugar} with 
\begin{equation*}
\zeta(x) = \exists y\, \big(\varrho_1(x,y) \land \dots \land \varrho_k(x,y) \land \zeta'(y)\big)
\end{equation*}
can be (recursively) replaced by the following axioms, for a fresh $P_\zeta$ and $i=1,\dots,k$:
\begin{equation}\label{eq:replacement}
\forall x \, \bigl(\tau(x) \to \exists y\, P_\zeta(x,y)\bigr),\quad
\forall x,y\,\bigl(P_\zeta(x,y) \to \varrho_i (x,y)\bigr),   \quad
\forall y\, \bigl(\exists x\, P_\zeta(x,y) \to \zeta'(y)\bigr)
\end{equation}
because any first-order structure is a model of~\eqref{eq:sugar} iff it is a  restriction of some model of~\eqref{eq:replacement} to the signature of~\eqref{eq:sugar}. The result of eliminating the syntactic sugar from an ontology $\T$  is called the \emph{normalisation} of $\T$. We always assume that all of our ontologies are normalised even though this is not done explicitly; however, we stipulate (without loss of generality) that the normalisation predicates $P_\zeta$ never occur in the data.

When writing ontology axioms, we usually omit the universal quantifiers. We typically use the characters $P$, $R$ to denote binary predicates, $A$, $B$, $C$ for unary predicates, and $S$  for either of them. For a binary predicate $P$, we write $P^-$ to denote its inverse; that is, $P(x,y) = P^-(y,x)$, for any $x$ and $y$, and $P^{--}=P$.

A \emph{conjunctive query} (CQ) $\q(\avec{x})$ is a formula of the form $\exists \avec{y}\, \varphi(\avec{x}, \avec{y})$, where $\varphi$ is a conjunction of atoms $S(\avec{z})$ all of whose variables are among $\avec{x}$, $\avec{y}$.

\begin{example}
Here is a (fragment of a) typical CQ from the NPD FactPages:
\begin{multline*}
\q(x_1,x_2,x_3) ~=~ \exists y,z \, \big[\textit{ProductionLicence}(x_1) \land
\textit{ProductionLicenceOperator}(y) \land{} \\
\textit{dateOperatorValidFrom}(y, x_2) \land
\textit{licenceOperatorCompany}(y, z) \land{}\\
\textit{name}(z, x_3) \land \textit{operatorForLicence}(y, x_1) \big].
\end{multline*}
\end{example}

To simplify presentation and without loss of generality, we assume that CQs do not contain constants. Where convenient, we regard a CQ as the \emph{set} of its atoms; in particular, $|\q|$ is the \emph{size of $\q$}. The variables in $\avec{x}$ are called the \emph{answer variables} of a CQ $\q(\avec{x})$.
A CQ without answer variables is called \emph{Boolean}.
With every CQ $\q$, we associate its \emph{Gaifman graph} $G_{\q}$ whose vertices are the variables of $\q$ and whose edges are the  pairs $\{u,v\}$ such that $P(u,v)\in\q$, for some $P$. A CQ $\q$ is \emph{connected} if the graph $G_\q$ is connected.
We call $\q$ \emph{tree-shaped} if $G_{\q}$ is a tree\footnote{Tree-shaped CQs also go by the name of \emph{acyclic queries}~\cite{DBLP:conf/vldb/Yannakakis81,DBLP:conf/ijcai/BienvenuOSX13}.}\!\!, and if $G_{\q}$ is a tree with at most two leaves, then $\q$ is said to be \emph{linear}. 

An \OWLQL{} \emph{ontology-mediated query} (OMQ) is a pair $\omq(\avec{x}) = (\T,\q(\avec{x}))$, where $\T$ is an \OWLQL{} ontology and $\q(\avec{x})$ a CQ. The \emph{size of $\omq$} is defined as $|\omq| = |\T| + |\q|$, where $|\T|$ is the number of symbols in $\T$.  

A \emph{data instance}, $\A$, is a finite set of unary or binary ground atoms (called an \emph{ABox} in description logic).
We denote by $\ind(\A)$ the set of individual constants in $\A$.
Given an OMQ $\omq(\avec{x})$ and a data instance $\A$, a tuple $\avec{a}$ of constants from $\ind(\A)$ of length $|\avec{x}|$ is called a \emph{certain answer to $\omq(\avec{x})$ over} $\A$ if $\mathcal{I} \models \q(\avec{a})$ for all models $\mathcal{I}$ of $\T\cup\A$; in this case we write \mbox{$\T,\A \models \q(\avec{a})$}. If $\q(\avec{x})$ is Boolean, a certain answer to $\omq$ over $\A$ is `yes' if $\T,\A \models \q$, and `no' otherwise. We remind the reader~\cite{Libkin} that, for any CQ $\q(\avec{x}) = \exists \avec{y}\, \varphi(\avec{x}, \avec{y})$, any first-order structure $\I$  and any tuple $\avec{a}$ from its domain $\Delta$, we have $\I \models \q(\avec{a})$ iff there is a map $h \colon \avec{x} \cup \avec{y} \to \Delta$ such that
(\emph{i}) if $S(\avec{z}) \in \q$ then $\I \models S(h(\avec{z}))$, and (\emph{ii}) $h(\avec{x}) = \avec{a}$.
If (\emph{i}) is satisfied then $h$ is called a \emph{homomorphism} from $\q$ to $\I$, and we write $h \colon \q \to \I$; if (\emph{ii}) also holds, we write $h \colon \q(\avec{a}) \to \I$.

Central to OBDA is the notion of OMQ rewriting that reduces the problem of finding certain answers to standard query evaluation.
More precisely, an FO-formula $\q'(\avec{x})$, possibly with equality, $=$, is an \emph{FO-rewriting of an OMQ $\omq(\avec{x}) = (\T,\q(\avec{x}))$} if, for \emph{any}  data instance $\A$ (without the normalisation predicates for $\T$) and any tuple $\avec{a}$ in $\ind(\A)$, 
\begin{equation}\label{def:rewriting}
\T, \A \models \q(\avec{a}) \qquad \text{iff} \qquad \I_\A \models \q'(\avec{a}),
\end{equation}
where $\I_\A$ is the first-order structure over the domain $\ind(\A)$ such that $\I_\A \models S(\avec{a})$ iff $S(\avec{a}) \in \A$, for any ground atom $S(\avec{a})$. 
As $\A$ is arbitrary, this definition implies, in particular, that the rewriting must be \emph{constant-free}. 
If $\q'(\avec{x})$ is a positive existential formula---that is, $\q'(\avec{x}) = \exists \avec{y}\, \varphi(\avec{x}, \avec{y})$ with $\varphi$  constructed from atoms (possibly with equality) using $\land$ and $\lor$ only---we call it a \emph{PE-rewriting of $\omq(\avec{x})$}. A PE-rewriting whose matrix $\varphi$ is a disjunction of conjunctions is known as a \emph{UCQ-rewriting}; if $\varphi$ takes the form ${\land}{\lor}{\land}{\lor}$ we call it a $\mathsf{\Pi}_4$-\emph{rewriting}. 
The size $|\q'|$ of  $\q'$ is the number of symbols in it. 

We also consider rewritings in the form of nonrecursive datalog  queries.
Recall~\cite{Abitebouletal95} that a \emph{datalog program}, $\Pi$, is a finite set of Horn clauses
$\forall \avec{x}\, (\gamma_1 \land \dots \land \gamma_m \to \gamma_0)$,
where each $\gamma_i$ is an atom $P(x_1,\dots,x_l)$ with  $x_i \in \avec{x}$. The atom $\gamma_0$ is the \emph{head} of the clause, and $\gamma_1,\dots,\gamma_m$ its (possibly empty) \emph{body}.  
A predicate $S$ \emph{depends} on $S'$ in $\Pi$ if $\Pi$ has a clause with $S$ in the head and $S'$ in the body; $\Pi$ is  \emph{nonrecursive} if this dependence relation is acyclic. 

Let $\omq = (\T,\q(\avec{x}))$ be an OMQ, $\Pi$ a constant-free nonrecursive program,  and $G(\avec{x})$ a predicate. The pair $\q'(\avec{x}) = (\Pi,G(\avec{x}))$ is an \emph{\NDL-rewriting of $\omq$} if, for any data instance $\A$ and any tuple $\avec{a}$ in $\ind (\A)$, we have $\T,\A\models \q(\avec{a})$ iff $\Pi(\I_\A) \models G(\avec{a})$, where 
$\Pi(\I_\A)$ is the structure with domain $\ind(\A)$ obtained by closing $\I_\A$ under the clauses in~$\Pi$. Every \PE-rewriting can clearly be represented as an \NDL-rewriting of linear size. 

\begin{remark}\label{dom-ind}
As defined, \FO- and \PE-rewritings are not necessarily domain-independent queries, while \NDL-rewritings are not necessarily safe~\cite{Abitebouletal95}. For example, $(x=x)$ is a PE-rewriting of the OMQ $(\{\forall x \, P(x,x)\},P(x,x))$, and the program $(\{\top \to A(x)\}, A(x))$ is an \NDL-rewriting of the OMQ $(\{\top\to A(x)\}, A(x))$.  Rewritings can easily be made domain-independent and safe by relativising their variables to the predicates in the data signature (relational schema). For instance, if this signature is $\{A, P\}$, then a domain-independent relativisation of $(x=x)$ is the \PE-rewriting $\bigl(A(x) \lor \exists y\,P(x,y) \lor \exists y\, P(y,x)\bigr) \land (x = x)$. Note that if we exclude from \OWLQL{} reflexivity statements and axioms with $\top$ on the left-hand side, then rewritings are guaranteed to be domain-independent, and no relativisation is required. In any case, rewritings are always interpreted under the \emph{active domain semantics} adopted in databases; see~\eqref{def:rewriting}.
\end{remark}

As mentioned in the introduction, the \OWLQL{} profile of \OWL{} was designed to ensure FO-rewritability of all OMQs with ontologies in the profile or, equivalently, OMQ answering in \ACz{} for data complexity.
It should be clear, however, that for the OBDA approach to work in practice, the  rewritings of OMQs must be of `reasonable shape and size'\!. Indeed, it was observed experimentally~\cite{DBLP:journals/semweb/CalvaneseGLLPRRRS11} and also established theoretically~\cite{DBLP:conf/icalp/KikotKPZ12} that sometimes the rewritings are prohibitively large---exponentially-large in the size of the original CQ, to be more precise. 
These facts imply that, in the context of OBDA, we should actually be interested not in arbitrary but in \emph{polynomial-size} rewritings. In complexity-theoretic terms, the focus should not only be on the data complexity of OMQ answering, which is an appropriate measure for database query evaluation (where queries are indeed usually small)~\cite{Vardi82}, but also on the combined complexity that takes into account the contribution of ontologies and queries.


\section{Tree-Witness Rewriting}\label{sec:TW}

Now we define one particular rewriting of \OWLQL{} OMQs that will play a key role in the succinctness and complexity analysis later on in the paper. This rewriting is a modification of the tree-witness PE-rewriting originally introduced by Kikot et al.~\cite{DBLP:conf/kr/KikotKZ12}
(cf.~\cite{Lutz-IJCAR08,KR10our,DBLP:journals/semweb/KonigLMT15} for rewritings based on similar ideas).

We begin with two simple observations that will help us remove unneeded clutter from the definitions. Every \OWLQL{} ontology $\T$ consists of two  parts: $\T^-$, which contains all the sentences with $\bot$, and the remainder, $\T^+$, which is consistent with every data instance. For any $\psi(\avec{z}) \to \bot$ in $\T^-$, consider the Boolean CQ $\exists \avec{z} \, \psi(\avec{z})$. It is not hard to see that, for any OMQ $(\T,\q(\avec{x}))$ and data instance $\A$, a tuple $\avec{a}$ is a certain answer to $(\T,\q(\avec{x}))$ over $\A$ iff either $\T^+,\A \models \q(\avec{a})$ or $\T^+,\A \models \exists \avec{z} \, \psi(\avec{z})$, for some $\psi(\avec{z}) \to \bot$ in $\T^-$; see~\cite{DBLP:journals/ws/CaliGL12} for more details. Thus, from now on we will assume that, in all our ontologies~$\T$, the `negative' part $\T^-$ is empty, and so  they are \emph{consistent} with all data instances.

The second observation will allow us to restrict the class of data instances we need to consider when rewriting OMQs.
In general, if we only require condition \eqref{def:rewriting} to hold for any data instance $\A$ from some class $\mathfrak A$, then we call $\q'(\avec{x})$ a \emph{rewriting of $\omq(\avec{x})$ over~$\mathfrak A$}. Such  classes of data instances  can be defined, for example, by the integrity constraints in the database schema and the mapping~\cite{ISWC13}.
We say that a data instance $\A$ is \emph{complete}\footnote{Rodriguez-Muro et al.~\cite{ISWC13} used the term `H-completeness'; see also~\cite{DBLP:conf/ijcai/KonigLM15}.} for an ontology $\T$ if $\T,\A \models S(\avec{a})$ implies $S(\avec{a}) \in \A$, for any ground atom $S(\avec{a})$ with $\avec{a}$ from $\ind(\A)$. The following proposition means that from now on we will only consider rewritings over complete data instances.

\begin{proposition}\label{complete}
If $\q'(\avec{x})$ is an \NDL-rewriting of $\omq(\avec{x}) = (\T,\q(\avec{x}))$ over com\-plete data instances, then there is an \NDL-rewriting $\q''(\avec{x})$ of $\omq(\avec{x})$ over arbitrary data instances with $|\q''| \leq |\q'| \cdot |\T|$. A similar result holds for \PE- and \FO-rewritings.
\end{proposition}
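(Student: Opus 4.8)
The plan is to reduce the arbitrary-data case to the complete-data case by \emph{completing} the given data instance and then folding the (data-independent) completion step into the rewriting. For an arbitrary $\A$, let $\A^c = \{ S(\avec a) \mid \avec a\subseteq\ind(\A),\ \T,\A\models S(\avec a)\}$ collect all ground atoms over $\ind(\A)$ entailed by $\T$ and $\A$. Since $\T^-$ is empty, $\T\cup\A$ is consistent and $\A^c$ is well defined; moreover $\A\subseteq\A^c$, every atom of $\A^c$ is a consequence of $\T\cup\A$, and $\ind(\A^c)=\ind(\A)$, so $\T\cup\A$ and $\T\cup\A^c$ have exactly the same models and hence the same certain answers, and $\A^c$ is complete for $\T$. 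Thus $\T,\A\models\q(\avec a)$ iff $\T,\A^c\models\q(\avec a)$ iff, using that $\q'$ is a rewriting over complete data, $\q'$ gives the answer $\avec a$ on $\I_{\A^c}$ (i.e.\ $\I_{\A^c}\models\q'(\avec a)$ for \FO/\PE, or $\Pi(\I_{\A^c})\models G(\avec a)$ for \NDL).

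It remains to decide, inside a rewriting evaluated on $\I_\A$, which atoms belong to $\A^c$. The crucial point is that for \OWLQL{} this membership is captured by a fixed, data-independent disjunction. Precompute from $\T$ the (polynomial-size) subsumption relation $\sqsubseteq_\T$ and set, for each unary $A$, $\ \delta_A(x)=\bigvee_{\T\models\top\sqsubseteq A}\top\ \lor\ \bigvee_{B:\,\T\models B\sqsubseteq A}B(x)\ \lor\ \bigvee_{\varrho:\,\T\models\exists\varrho\sqsubseteq A}\exists y\,\varrho(x,y)$, and for each binary $P$, $\ \delta_P(x,y)=\bigvee_{\varrho:\,\T\models\varrho\sqsubseteq P}\varrho(x,y)$, together with a disjunct $(x=y)$ whenever $\T$ forces $P$ to hold on the diagonal (here $\varrho(x,y)$ abbreviates $R(x,y)$ or $R(y,x)$ according to whether $\varrho$ is $R$ or $R^-$). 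Each $\delta_S$ is a positive existential formula with at most $|\T|$ constant-size disjuncts, so $|\delta_S|\le|\T|$.

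The single real obstacle is the following Key Lemma: for every ground atom $S(\avec a)$ with $\avec a\subseteq\ind(\A)$, we have $S(\avec a)\in\A^c$ iff $\I_\A\models\delta_S(\avec a)$. Both directions are routine from the precomputed subsumptions except for the unary $\exists\varrho$ disjuncts; here one invokes the structure of the canonical (chase) model of $\T$ and $\A$ (as in the \DL{} theory of~\cite{CDLLR07,ACKZ09}), using that no \emph{new} role atom is ever created between named individuals. Consequently a named individual satisfies $\exists\varrho$ either because of an actual data edge, captured by the existential disjunct, or because it already satisfies a concept that $\T$-entails $\exists\varrho$ — and by transitivity of $\sqsubseteq_\T$ the latter is absorbed into the unary disjuncts of $\delta_A$. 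This is the only place where the ontology semantics is used; everything else is bookkeeping.

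Granting the lemma, the three rewritings arise by substitution. For \FO{} and \PE, let $\q''$ be $\q'$ with every atom $S(\avec z)$ replaced by $\delta_S(\avec z)$ (renaming the bound $y$ to avoid capture, and leaving equalities untouched); substituting PE-formulas into a PE- (resp.\ FO-) formula yields a PE- (resp.\ FO-) formula, and since each replacement has size at most $|\T|$ we get $|\q''|\le|\q'|\cdot|\T|$, while by the lemma $\I_\A\models\q''(\avec a)$ iff $\I_{\A^c}\models\q'(\avec a)$, closing the chain above. For the \NDL-case $\q'=(\Pi,G)$, introduce for each ontology predicate $S$ occurring in $\Pi$ a fresh $S^c$, add the Horn rules obtained by reading each disjunct of $\delta_S$ as a clause with head $S^c$ (existential variables become body-only variables, and $(x=y)$ is realised by unifying the head arguments), and let $\Pi''$ replace every body occurrence of a data predicate $S$ in $\Pi$ by $S^c$. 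The program stays nonrecursive, since the only added dependencies run from the extensional predicates to the fresh $S^c$, which sit strictly below all predicates of $\Pi$; its size is $|\Pi|+\sum_S|\delta_S|\le|\q'|\cdot|\T|$; and $\Pi''(\I_\A)$ derives on the intensional predicates exactly what $\Pi(\I_{\A^c})$ does, so $\Pi''(\I_\A)\models G(\avec a)$ iff $\T,\A\models\q(\avec a)$.
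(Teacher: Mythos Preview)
Your proposal is correct and follows essentially the same approach as the paper: both replace each data predicate $S$ in the rewriting by a disjunction over all $\tau$ (resp.\ $\varrho$) with $\T\models\tau\to A$ (resp.\ $\T\models\varrho\to P$, plus the reflexivity disjunct $x=y$), and for \NDL{} both introduce fresh copies $S^c$/$S^*$ populated by Horn rules encoding those disjuncts. Your framing via the completed instance $\A^c$ and the explicit Key Lemma makes the correctness argument more visible than the paper's terse ``it is readily seen,'' but the construction and the size bound are identical.
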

\begin{proof}
Let $(\Pi, G(\avec{x}))$ be an \NDL-rewriting of $\omq(\avec{x})$ over complete data instances.
Denote by $\Pi^*$ the result of replacing each predicate $S$ in $\Pi$ with a fresh predicate $S^*$. Define $\Pi'$ to be the union of $\Pi^*$ and the following clauses for predicates in $\Pi$:  
\begin{align*}
\tau(x) & \to A^*(x),\qquad \text{ if } \ \T \models \tau(x) \to A(x),\\
\varrho(x,y) & \to P^*(x,y), \quad\text{ if } \ \T \models \varrho(x,y) \to P(x,y),\\
\top & \to P^*(x,x), \quad\text{ if } \ \T\models P(x,x)
\end{align*}
(the empty body is denoted by $\top$).
It is readily seen that $(\Pi',G^*(\avec{x}))$ is an \NDL-rewriting of $\omq(\avec{x})$  over arbitrary data instances. 
The cases of \PE- and \FO-rewritings are similar except that we replace $A(x)$ and $P(x,y)$ with  
\begin{equation*}
\bigvee_{\T\models \tau(x) \to A(x)} \hspace*{-2em}\tau(x) \qquad\text{ and }\qquad \bigvee_{\T\models \varrho(x,y) \to P(x,y)} \hspace*{-2em}\varrho(x,y) \quad \vee \bigvee_{\T\models P(x,x)} \hspace*{-1em}(x = y),
\end{equation*}
respectively (the empty disjunction is, by definition,  $\bot$).
%
\end{proof}

As is well-known~\cite{Abitebouletal95}, every pair $(\T,\A)$ of an ontology $\T$ and data instance $\A$ possesses a \emph{canonical model} (or \emph{chase}) $\mathcal{C}_{\T,\A}$ such that \mbox{$\T,\A \models \q(\avec{a})$} iff $\mathcal{C}_{\T,\A} \models \q(\avec{a})$, for all CQs $\q(\avec{x})$ and $\avec{a}$ in $\ind(\A)$. In our proofs, we use the following definition of $\C_{\T,\A}$, where without loss of generality we assume that $\T$ does not contain binary predicates $P$ such that $\T \models \forall x,y \, P(x,y)$. Indeed, occurrences of such $P$ in $\T$ can be replaced by $\top$ and occurrences of $P(x,y)$ in CQs can simply be removed without changing  certain answers over any data instance  (provided that $x$ and $y$ occur in the remainder  of the query).

The domain $\Delta^{\mathcal{C}_{\T,\A}}$ of the canonical model $\mathcal{C}_{\T,\A}$ consists of $\ind(\A)$ and the \emph{witnesses}, or \emph{labelled nulls}, introduced by the existential quantifiers in (the normalisation of) $\T$. More precisely, the labelled nulls in $\mathcal{C}_{\T,\A}$ are finite words of the form $w = a \varrho_1 \dots \varrho_n$ ($n \geq 1$) such that 
\begin{nitemize}
\item $a \in \ind(\A)$ and  $\T, \A \models \exists y\, \varrho_1 (a,y)$, but  $\T, \A \not\models \varrho_1(a,b)$  for any $b \in \ind(\A)$;
\item $\T\not\models \varrho_i(x,x)$ for $1 \le i \le n$;
\item $\T \models \exists x\, \varrho_i(x,y) \to \exists z\, \varrho_{i+1}(y,z)$  and $\T \not \models \varrho_i(y,x) \to \varrho_{i+1}(x,y)$ for $1 \leq i < n$.
\end{nitemize}
Every individual name $a \in \ind(\A)$ is interpreted in $\mathcal{C}_{\T,\A}$ by itself, and unary and binary predicates are interpreted as follows: for any $u,v \in \Delta^{\mathcal{C}_{\T,\A}}$,
\begin{nitemize}
\item $\mathcal{C}_{\T,\A} \models A(u)$ iff either $u \in \ind(\A)$ and $\T,\A \models A(u)$, or $u = w\varrho$, for some $w$ and $\varrho$ with $\T \models \exists y\,\varrho(y,x) \to A(x)$;
\item $\mathcal{C}_{\T,\A} \models P(u,v)$ iff one of the following holds: (\emph{i}) $u,v \in \ind(\A)$ and $\T,\A \models P(u,v)$; (\emph{ii}) $u=v$ and $\T\models P(x,x)$; (\emph{iii}) $v = u\varrho$ and $\T \models \varrho(x,y) \to P(x,y)$; (\emph{iv}) $u = v\varrho^-$ and $\T \models \varrho(x,y) \to P(x,y)$.
\end{nitemize}

\begin{example}\label{ex:canon}
Consider the following ontologies: 
\begin{align*}
& \T_1 \ = \ \{ \ A(x) \to \exists y\, \bigl(R(x,y) \land Q(y,x)\bigr)\ \},\\
& \T_2 \ = \ \{\ A(x) \to \exists y\, R(x,y), \ \ \exists x\,R(x,y) \to \exists z\, Q(z,y)\  \},\\
& \T_3 \ = \ \{\ A(x) \to \exists y\, R(x,y), \ \ \exists x\,R(x,y) \to \exists z\, R(y,z)\  \}.
\end{align*}
The canonical models of $(\T_i,\A)$, for $\A = \{A(a)\}$, $i=1,2,3$, are shown in Fig.~\ref{fig:canonical}, where $\zeta(x) = \exists y\, (R(x,y) \land Q(y,x))$ and $P_\zeta$ is the corresponding normalisation predicate. When depicting canonical models, we use \begin{tikzpicture}\node[bpoint] at (0,0) {};\end{tikzpicture} for constants and \begin{tikzpicture}\node[point] at (0,0) {};\end{tikzpicture} for labelled nulls.
\begin{figure}[t]%
\centering%
\begin{tikzpicture}\footnotesize
\node[bpoint, label=right:{$A$}, label=left:{\scriptsize $a$}] (a0) at (0,0) {};
\node at (-1.3,0.1) {\normalsize $\C_{\T_1,\A}$};
\node[point, label=left:{\scriptsize $aP_\zeta$}] (a1) at (0,1) {};
\draw[can,->] (a0)  to node[label=left:{\textcolor{gray}{$P_\zeta$}}, label=right:{$R,Q^-$}]{} (a1);
\node[bpoint, label=right:{$A$}, label=left:{\scriptsize $a$}] (b0) at (5,0) {};
\node at (3.8,0.1) {\normalsize $\C_{\T_2,\A}$};
\node[point, label=left:{\scriptsize $aR$}] (b1) at (5,1) {};
\node[point, label=left:{\scriptsize $aRQ^-$}] (b2) at (5,2) {};
\draw[can,->] (b0)  to node[label=right:{$R$}]{} (b1);
\draw[can,->] (b1)  to node[label=right:{$Q^-$}]{} (b2);
\node[bpoint, label=right:{$A$}, label=left:{\scriptsize $a$}] (c0) at (10,0) {};
\node at (8.8,0.1) {\normalsize $\C_{\T_3,\A}$};
\node[point, label=left:{\scriptsize $aR$}] (c1) at (10,1) {};
\node [point, label=left:{\scriptsize $aRR$}] (c2) at (10,2) {};
\draw[can,->] (c0)  to node[label=right:{$R$}]{} (c1);
\draw[can,->] (c1)  to node[label=right:{$R$}]{} (c2);
\draw[dotted,thick] (c2) -- ++(0,0.4);
\end{tikzpicture}%
\caption{Canonical models in Example~\ref{ex:canon}.}\label{fig:canonical}
\end{figure}%
\end{example}

For any ontology $\T$ and any formula $\tau(x)$ given by~\eqref{syntax}, we  denote by $\C_\T^{\smash{\tau(a)}}$ the canonical model of $(\T \cup \{ A(x) \to \tau(x)\} , \{ A(a) \})$, for a fresh unary predicate $A$. We say that $\T$ is \emph{of depth $k$}, $1 \le k < \omega$, if (\emph{i}) there is no $\varrho$ with $\T \models \varrho(x,x)$, (\emph{ii}) at least one of the $\C_\T^{\smash{\tau(a)}}$ contains a word $a \varrho_1 \dots \varrho_k$, but (\emph{iii}) none of the $\C_\T^{\smash{\tau(a)}}$ has such a word of greater length.
Thus, $\T_1$ in Example~\ref{ex:canon} is of depth 1, $\T_2$ of depth 2, while $\T_3$ is not of any finite depth.

Ontologies of infinite depth generate infinite canonical models. However, \OWLQL{} has the \emph{polynomial derivation depth property}  (PDDP) in the sense that there is a polynomial $p$ such that, for any OMQ $\omq(\avec{x}) = (\T, \q(\avec{x}))$, data instance $\A$ and $\avec{a}$ in $\ind(\A)$, we have $\T,\A \models \q(\avec{a})$ iff $\q(\avec{a})$ holds in the sub-model of $\mathcal{C}_{\T,\A}$ whose domain consists of words of the form $a \varrho_1 \dots \varrho_n$ with $n \le p(|\omq|)$~\cite{DBLP:conf/pods/JohnsonK82,DBLP:journals/ws/CaliGL12}. (In general, the bounded derivation depth property of an ontology language is a necessary and sufficient condition of FO-rewritability~\cite{DBLP:journals/ai/GottlobKKPSZ14}.)

We call a set $\Omega_{\omq}$ of words of the form $w=\varrho_1 \dots \varrho_n$ \emph{fundamental for $\omq$} if, for any  $\A$ and $\avec{a}$ in $\ind(\A)$, we have $\T,\A \models \q(\avec{a})$ iff $\q(\avec{a})$ holds in the sub-model of $\mathcal{C}_{\T,\A}$ with the domain $\{aw \mid a \in \ind(\A),\ w \in \Omega_{\omq}\}$.
We say that a class $\mathcal{Q}$ of OMQs has the \emph{polynomial  fundamental set property} (PFSP) if there is a polynomial $p$  such that every $\omq \in \mathcal{Q}$ has a  fundamental set~$\Omega_{\omq}$ with $|\Omega_{\omq}| \le p(|\omq|)$.
The class of \emph{all} OMQs (even with ontologies of finite depth and tree-shaped CQs) does not have the PFSP~\cite{DBLP:conf/icalp/KikotKPZ12}. On the other hand, it should be clear that the class of OMQs with ontologies of bounded depth does enjoy the PFSP. A less trivial example is given by the following theorem, which is an immediate consequence of Theorem~\ref{thm:noroles} below: 

\begin{theorem}\label{role-inc}
The class of OMQs whose ontologies do not contain axioms of the form $\varrho(x,y) \to \varrho'(x,y)$ \textup{(}and syntactic sugar~\eqref{eq:sugar}\textup{)} enjoys the PFSP.
\end{theorem}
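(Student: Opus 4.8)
The plan is to construct, for every OMQ $\omq=(\T,\q)$ with a role-inclusion-free (and sugar-free) ontology $\T$, a fundamental set $\Omega_{\omq}$ whose cardinality is bounded by a polynomial in $|\omq|$; this is exactly what the PFSP requires. By Proposition~\ref{complete} it suffices to work over complete data instances and with the canonical model $\canmod$ as defined above.

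The decisive first step is a structural lemma about the labelled-null part of $\canmod$. By the definition of $\canmod$, the unary type of a null $w\varrho$, i.e.\ the set $\{A \mid \canmod\models A(w\varrho)\}$, depends only on its last role $\varrho$, being equal to $\{A \mid \T\models \exists y\,\varrho(y,x)\to A(x)\}$. The one place where the absence of role inclusions is used concerns the binary atoms: since $\T$ entails no nontrivial role inclusion $\varrho(x,y)\to\varrho'(x,y)$, clauses (iii)--(iv) of the definition of $\canmod$ show that the edge between a null $u=w\varrho$ and a child $u\varrho'$ carries the single atom $\varrho'(u,u\varrho')$ (equivalently $(\varrho')^-(u\varrho',u)$) and nothing more, while the edge from $u$ to its parent carries only $\varrho^-$. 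The set of child roles of $u$ also depends only on $\varrho$ and, by clause (iii) of the null definition, never contains $\varrho^-$. Consequently: (a) any two nulls with the same last role root isomorphic typed subtrees; and (b) every null has \emph{at most one} neighbour carrying a given role label.

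Property (b) yields \emph{homomorphism rigidity}, the second step: if $\q'$ is a connected subquery all of whose variables are sent into the null part, then fixing the image of one (``entry'') variable determines, by propagating each role atom to the unique admissible neighbour, the images of all the remaining variables of $\q'$ (the map either extends uniquely or fails to extend). Hence, relative to its topmost image node, the word assigned to each variable of $\q'$ is a function of the entry variable and of the last role $\varrho$ of the entry node alone: each such word is a downward path of length $\le|\q|$ through the $\varrho$-rooted subtree, and for a fixed pair $(\text{entry variable},\varrho)$ there are at most $|\q|$ of them. Now take any homomorphism $h\colon\q(\avec{a})\to\canmod$ witnessing a certain answer and split $\q$ into connected components. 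A component is either \emph{anchored} (some variable is mapped to a constant, so each of its null-subtrees hangs at depth $1$ off a constant) or \emph{floating} (all its variables are mapped to nulls). Anchored components we leave untouched: by rigidity all their null-images are words determined by an entry variable and an initial (depth-$1$) role, and we place all these words, over all entry variables and all roles, into $\Omega_{\omq}$ (together with the empty word, so that every constant is retained). A floating component we \emph{relocate}: its topmost node has some last role $\varrho$, and we move the whole rigidly determined image to the shallowest null below the same constant $a$ ending in $\varrho$, prepending a fixed shortest coherent prefix $\mathrm{pre}(\sigma,\varrho)$ from a top-role $\sigma$ of $a$ to $\varrho$. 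By (a), the relocated map is again a homomorphism (unary atoms are preserved because types depend only on last roles, binary atoms because edge labels are rigid), and the relocated null really exists in $\canmod$ because its prefix starts from a role genuinely generated at $a$. Adding to $\Omega_{\omq}$ all words $\mathrm{pre}(\sigma,\varrho)\cdot(\text{determined suffix})$, over the $O(|\q|\cdot\#\text{roles})$ suffix shapes and the $\#\text{roles}^2$ prefixes, keeps $|\Omega_{\omq}|$ polynomial. The resulting homomorphism maps $\q(\avec{a})$ into the sub-model of $\canmod$ with domain $\{aw\mid a\in\ind(\A),\,w\in\Omega_{\omq}\}$; since that sub-model is contained in $\canmod$, no spurious answers arise, so $\Omega_{\omq}$ is fundamental and polynomially bounded.

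The main obstacle is the floating case, because $\Omega_{\omq}$ must be a \emph{single} set of words serving all data instances and all constants at once, while the shallowest occurrence of a $\varrho$-rooted subtree below a constant $a$ depends on which top-roles $a$ actually generates; this forces one to precompute canonical shortest prefixes for every (top-role, target-role) pair and to check their generability. The rigidity lemma is indispensable here: without it the exponential branching of the null-trees would force exponentially many distinct image words (indeed, with role inclusions the full class is known to fail the PFSP~\cite{DBLP:conf/icalp/KikotKPZ12}). The remaining verifications, that relocation preserves every query atom and that no certain answer is lost or gained, are routine consequences of the structural lemma.
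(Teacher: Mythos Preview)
Your argument is correct and rests on the same structural observation as the paper---namely, that in the absence of role inclusions every labelled null has at most one neighbour along any given role, so homomorphisms into the null part are rigid once one image point is fixed. The paper, however, takes a different and much shorter route: it simply notes that Theorem~\ref{role-inc} is an immediate consequence of Theorem~\ref{thm:noroles}, which bounds the number of tree witnesses of such OMQs by $3|\q|$ (the proof of that bound is exactly your rigidity lemma, stated in one line). The implicit step is that each of the polynomially many tree witnesses is induced, by rigidity, by only polynomially many canonical homomorphisms into $\C_\T^{\tau(a)}$, whose images furnish the fundamental set.

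What you do differently is bypass the tree-witness machinery altogether and construct $\Omega_\omq$ by hand, splitting into anchored versus floating components and handling the latter by an explicit relocation to shallowest depth via precomputed shortest prefixes $\mathrm{pre}(\sigma,\varrho)$. This is precisely the work the paper sweeps under the word ``immediate''---in particular, the floating/detached case genuinely needs the relocation argument and the pigeonhole bound on prefix length, which you supply. Your approach is therefore more self-contained and makes the polynomial bound fully explicit, at the cost of being longer; the paper's version is terser because the tree-witness apparatus (which is central to the rest of the paper) absorbs most of the bookkeeping. One minor imprecision in your write-up: the ``entry variable'' for a floating component is most cleanly taken to be a variable mapped to the \emph{topmost} image node, so that all other images lie below it and are described by downward suffixes; as written, your counting is slightly ambiguous about whether navigation may go upward, though the bound remains polynomial either way.
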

%

We are now in a position to define the tree-witness PE-rewriting of \OWLQL{} OMQs.
Suppose we are given an OMQ $\omq(\avec{x}) = (\T,\q(\avec{x}))$ with $\q(\avec{x}) = \exists \avec{y}\, \varphi(\avec{x}, \avec{y})$. For a pair $\t = (\tr, \ti)$ of disjoint sets of variables in $\q$, with $\ti\subseteq \avec{y}$\footnote{We (ab)use set-theoretic notation for lists and, for example, write $\ti\subseteq \avec{y}$ to say that every element of $\ti$ is an element of $\avec{y}$.} and $\ti \ne\emptyset$ ($\tr$ can be empty), set
\begin{equation*}
\q_\t \ = \ \bigl\{\, S(\avec{z}) \in \q \mid \avec{z} \subseteq \tr\cup \ti \text{ and } \avec{z}\not\subseteq \tr\,\bigr\}.
\end{equation*}
%
%
%
If $\q_\t$ is a minimal subset of $\q$ for which there is a homomorphism $h \colon \q_\t  \to \C_\T^{\smash{\tau(a)}}$ such that $\tr = h^{-1}(a)$ and $\q_\t$ contains every atom of $\q$ with at least one variable from $\ti$, then we call $\t = (\tr, \ti)$ a \emph{tree witness for $\omq$ generated by $\tau$} (and \emph{induced by $h$}). 
Observe that if $\tr = \emptyset$ then $\q_\t$ is a connected component of $\q$; in this case we call $\t$ \emph{detached}.
Note also that the same tree witness $\t = (\tr, \ti)$ can be generated by different $\tau$.
Now, we set 
\begin{equation}\label{tw-formula}
\tw_{\t}(\tr)  ~=~   \exists z\,\bigl(\bigwedge_{x \in \tr} (x=z) \ \ \land  \hspace*{-1mm}\bigvee_{\t \text{ generated by } \tau} \hspace*{-1.5em}\tau(z)\bigr).
\end{equation}
The variables in $\ti$ do not occur in $\tw_{\t}$ and are called \emph{internal}. The variables in $\tr$, if any, are called \emph{root variables}. Note that no answer variable in $\q(\avec{x})$ can be internal. The length $|\tw_\t|$ of $\tw_\t$ is $O(|\omq|)$. Tree witnesses $\t$ and $\t'$ are \emph{conflicting} if $\q_\t \cap \q_{\t'} \ne \emptyset$. Denote by $\twset$ the set of tree witnesses for $\omq(\avec{x})$. A subset $\Theta\subseteq \twset$ is  \emph{independent} if no pair of distinct tree witnesses in it is conflicting. Let $\q_\Theta = \bigcup_{\t\in\Theta} \q_\t$.
The following PE-formula is called the \emph{tree-witness rewriting of $\omq(\avec{x})$ over complete data instances}: 
\begin{equation}\label{rewriting0}
\qtw(\avec{x}) \ \ =  \hspace*{-0.3em}  \bigvee_{\Theta \subseteq \twset \text{ independent}} \hspace*{-0.2em} \exists\avec{y}\  \bigl(\hspace*{-1mm}
\bigwedge_{S(\avec{z}) \in \q \setminus \q_\Theta}\hspace*{-1mm} S(\avec{z})
 \ \land \ \bigwedge_{\t\in\Theta} \tw_\t(\tr) \,\bigr).
\end{equation}

\begin{remark}\label{ignored}
As the normalisation predicates $P_\zeta$  cannot occur in data instances, we can omit from~\eqref{tw-formula} all the disjuncts with $P_\zeta$. For the same reason, the tree witnesses generated only by concepts with normalisation predicates will be ignored in the sequel. 
\end{remark}

\begin{example}\label{ex:conf}
Consider the OMQ $\omq(x_1,x_2) = (\T,\q(x_1,x_2))$ with
\begin{align*}
\T & \ = \ \bigl\{A_1(x) \to \underbrace{\exists y\, \bigl(R_1(x,y) \land Q(x,y)\bigr)}_{\zeta_1(x)},\
A_2(x) \to \underbrace{\exists y\, \bigl(R_2(x,y) \land Q(y,x)\bigr)}_{\zeta_2(x)}\bigr\},\\
\q(x_1, x_2)  & \ = \ \exists y_1,y_2\, \bigl(R_1(x_1,y_1)\land Q(y_2,y_1)\land R_2(x_2,y_2)\bigr).
\end{align*}
The CQ $\q$ is shown in Fig.~\ref{fig:conflicting-tws} alongside $\C_{\T}^{\smash{A_1(a)}}$ and $\C_{\T}^{\smash{A_2(a)}}$. When  depicting CQs, we use \begin{tikzpicture}\node[bpoint] at (0,0) {};\end{tikzpicture} for answer variables and \begin{tikzpicture}\node[point] at (0,0) {};\end{tikzpicture} for existentially quantified variables.
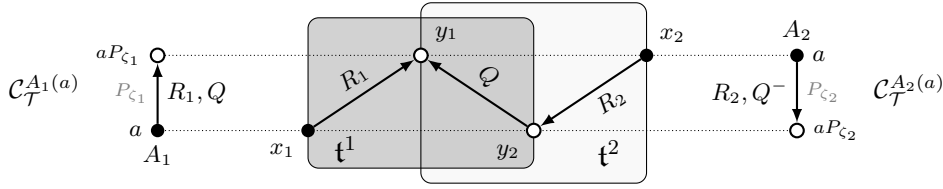
\begin{figure}[t]%
\centering%
\begin{tikzpicture}\footnotesize
\coordinate (c1) at (0,0);
\coordinate (c2) at (1.5,1);
\coordinate (c3) at (3,0);
\coordinate (c4) at (4.5,1);
\node [fill=gray!5,thin,rounded corners,inner xsep=0mm,inner ysep=7mm,fit=(c2) (c3) (c4)]{};
\node [draw,fill=gray!70,thin,fill opacity=0.5,rounded corners,inner xsep=0mm,inner ysep=5mm,fit=(c1) (c2) (c3)]{};
\node [draw,thin,rounded corners,inner xsep=0mm,inner ysep=7mm,fit=(c2) (c3) (c4)]{};
\draw[hom] (-2,1) -- (6.5,1);
\draw[hom] (-2,0) -- (6.5,0);
\node at (0.5,-0.25) {\large $\t^1$};
\node at (4,-0.3) {\large $\t^2$};
\node[bpoint, label=below left:{$x_1$}] (t1) at (c1) {};
\node[point, label=above right:{$y_1$}] (t2) at (c2) {};
\node[point, label=below left:{$y_2$}] (t3) at (c3) {};
\node[bpoint, label=above right:{$x_2$}] (t4) at (c4) {};
\draw[->,query] (t1)  to node [above, sloped]{$R_1$} (t2);
\draw[->,query] (t3)  to node [above, sloped]{$Q$} (t2);
\draw[->,query] (t4)  to node [pos=0.4, below, sloped]{$R_2$} (t3);
\node[bpoint, label=above:{$A_2$}, label=right:{$a$}] (cd1) at (6.5,1) {};
\node[point,label=right:{\scriptsize $aP_{\zeta_2}$}] (cd2) at (6.5,0) {};
\draw[can,->] (cd1)  to node [left]{$R_2,Q^-$} node [right]{\scriptsize\textcolor{gray}{$P_{\zeta_2}$}} (cd2);
\node at (8,0.5) {\normalsize $\C_{\T}^{\smash{A_2(a)}}$};
\node[bpoint, label=below:{$A_1$}, label=left:{$a$}] (cc1) at (-2,0) {};
\node[point, label=left:{\scriptsize $aP_{\zeta_1}$}] (cc2) at (-2,1) {};
\draw[can,->] (cc1)  to node [right]{$R_1,Q$} node [left]{\scriptsize\textcolor{gray}{$P_{\zeta_1}$}} (cc2);
\node at (-3.5,0.5) {\normalsize $\C_{\T}^{\smash{A_1(a)}}$};
\end{tikzpicture}%
\caption{Tree witnesses in Example~\ref{ex:conf}.}\label{fig:conflicting-tws}
\end{figure}
There are two tree witnesses, $\t^1$ and $\t^2$, for $\omq$ with
\begin{equation*}
\q_{\t^1} = \bigl\{\,R_1(x_1,y_1), Q(y_2,y_1)\,\bigr\}  \quad\text{ and }\quad \q_{\t^2} = \bigl\{\, Q(y_2,y_1), R_2(x_2,y_2)\,\bigr\}
\end{equation*}
shown in Fig.~\ref{fig:conflicting-tws} by the dark and light shading, respectively. 
The tree witness $\t^1 = (\tr^1, \ti^1)$
with $\tr^1 = \{x_1,y_2\}$ and $\ti^1 =\{y_1\}$ is generated by $A_1(x)$,  which gives
\begin{equation*}
\tw_{\t^1}(x_1,y_2)  = \exists z \, \bigl(A_1(z)\land(x_1 = z) \land (y_2 =z)\bigr). 
\end{equation*}
(Recall that although $\t^1$ is also generated by $\exists y\,P_{\zeta_1}(y,z)$, we do not include it in the disjunction in $\tw_{\t^1}$ because $P_{\zeta_1}$ cannot occur in data instances.)
Symmetrically, the tree witness $\t^2$ gives
\begin{equation*}
\tw_{\t^2}(x_2,y_1)  = \exists z \, \bigl(A_2(z)\land(x_2 = z) \land (y_1 =z)\bigr). 
\end{equation*}
As $\t^1$ and $\t^2$ are conflicting, $\twset$ contains three independent subsets: $\emptyset$, $\{\t^1\}$ and $\{\t^2\}$.  Thus, we obtain the following tree-witness rewriting $\q_\tw(x_1,x_2)$ of $\omq$ over complete data instances:
\begin{equation*}
\exists y_1,y_2 \, \big[\big(R_1(x_1,y_1) \land Q(y_2,y_1) \land R_2(x_2,y_2)\big) \lor{}
\big(\tw_{\t^1} \land R_2(x_2,y_2)\big) \lor \big(R_1(x_1,y_1) \land \tw_{\t^2} \big) \big].
\end{equation*}
\end{example}

\begin{theorem}[\cite{DBLP:conf/kr/KikotKZ12}]
For any OMQ $\omq(\avec{x}) = (\T,\q(\avec{x}))$, any data instance $\A$, which is  complete for $\T$, and any tuple $\avec{a}$ from $\ind(\A)$, we have $\T,\A \models \q(\avec{a})$ iff $\I_\A \models \qtw(\avec{a})$.
In other words, $\q_\tw$ is a rewriting of $\omq(\avec{x})$ over complete data instances.
\end{theorem}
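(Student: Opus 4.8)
The plan is to prove both directions of the equivalence by relating homomorphisms from $\q(\avec{a})$ into the canonical model $\C_{\T,\A}$ to the syntactic structure of the disjuncts of $\qtw$. Since $\A$ is complete for $\T$, the only `new' facts in $\C_{\T,\A}$ beyond $\A$ itself are those on the labelled nulls (words $aw$ with $w$ nonempty); by completeness, any atom $S(\avec{a})$ that $\T,\A$ entails with all arguments in $\ind(\A)$ is already literally present in $\A$, so $\I_\A$ and $\C_{\T,\A}$ agree on $\ind(\A)$. The soundness direction ($\I_\A \models \qtw(\avec{a})$ implies $\T,\A \models \q(\avec{a})$) is the easier half: given an independent $\Theta$ and witnesses satisfying the corresponding disjunct of~\eqref{rewriting0}, I would assemble a homomorphism $h\colon \q(\avec{a}) \to \C_{\T,\A}$ by sending the atoms in $\q \setminus \q_\Theta$ into $\I_\A \subseteq \C_{\T,\A}$ directly, and for each $\t \in \Theta$ sending $\q_\t$ into the `tree' of nulls hanging off the point $z$ witnessing $\tw_\t(\tr)$. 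Independence of $\Theta$ guarantees the $\q_\t$ are pairwise disjoint, so these partial maps do not clash on shared atoms, and the map on root variables is forced to collapse to $z$ by the equalities $(x=z)$ in $\tw_\t$.

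The completeness direction ($\T,\A \models \q(\avec{a})$ implies $\I_\A \models \qtw(\avec{a})$) is the main content. Here I would start from the canonical-model characterisation $\T,\A \models \q(\avec{a})$ iff $\C_{\T,\A} \models \q(\avec{a})$, fix a homomorphism $h \colon \q(\avec{a}) \to \C_{\T,\A}$, and read off an independent set $\Theta$ together with values for the surviving variables. The key structural observation is that every labelled null $aw$ in $\C_{\T,\A}$ lies on a unique maximal `tree' rooted at a constant $a \in \ind(\A)$, and the sub-tree of $\C_{\T,\A}$ below any such root is isomorphic to the relevant $\C_\T^{\smash{\tau(a)}}$. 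I would partition the atoms of $\q$ according to whether $h$ maps them entirely into $\ind(\A)$ or sends at least one variable to a labelled null. For the latter atoms, the standard tree-witness analysis shows that the set of variables mapped into the nulls below a single constant, together with those variables mapped to that constant, forms exactly a tree witness $\t=(\tr,\ti)$: $\ti$ collects the variables sent to proper nulls, $\tr$ the variables sent to the root constant, and $h$ restricted to $\q_\t$ realises the homomorphism $\q_\t \to \C_\T^{\smash{\tau(a)}}$ demanded by the definition, with $\tau$ being a generating concept for the null $h$ reaches.

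The crux is verifying that the tree witnesses so obtained are pairwise non-conflicting (hence form an independent $\Theta$) and that their union accounts for precisely the atoms $h$ pushes off the constants, leaving $\q \setminus \q_\Theta$ mapped into $\I_\A$. I expect the main obstacle to be handling the root variables correctly: a variable may be shared between an atom staying in $\ind(\A)$ and an atom belonging to some $\q_\t$, and one must check that such a variable is mapped to the constant $a = h(\tr)$ and that this is consistent with the equalities $(x=z)$ appearing in $\tw_\t(\tr)$, so that the chosen disjunct of~\eqref{rewriting0} is genuinely satisfied in $\I_\A$. Two tree witnesses arising from distinct roots cannot share a query atom because an atom with a variable forced into the nulls below $a$ cannot simultaneously be forced below a different constant $a'$; the minimality clause in the definition of a tree witness and the disjointness of the null-subtrees of distinct constants make this precise. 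Once independence and the partition $\q = (\q\setminus\q_\Theta) \cup \q_\Theta$ are established, evaluating the matching disjunct of $\qtw$ in $\I_\A$ is immediate, and completeness of $\A$ ensures the residual atoms in $\q\setminus\q_\Theta$, all mapped into $\ind(\A)$ by $h$, are literally present in $\A$ and hence satisfied by $\I_\A$.
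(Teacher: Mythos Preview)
Your approach matches the paper's own treatment: the theorem is cited from~\cite{DBLP:conf/kr/KikotKZ12} and the paper supplies only the intuitive sketch (decompose a homomorphism $h\colon\q(\avec{a})\to\C_{\T,\A}$ into an independent set of tree witnesses, and conversely glue tree-witness homomorphisms together), which you have fleshed out correctly, including the role of completeness of~$\A$ for the atoms in $\q\setminus\q_\Theta$.

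There is one step that would fail as written. You claim that ``the set of variables mapped into the nulls below a single constant, together with those variables mapped to that constant, forms exactly a tree witness.'' This is false in general: several disjoint fragments of $\q$ may be sent into the null-tree below the same constant~$a$ (take $\q = R(x_1,y_1)\land R(x_2,y_2)$, $\T=\{A(x)\to\exists y\,R(x,y)\}$, $\A=\{A(a)\}$, and map $y_1,y_2\mapsto aR$). Lumping them into one pair $(\tr,\ti)$ violates the minimality clause in the definition of tree witness, so what you obtain is not a tree witness at all. The correct decomposition is by \emph{connected components} of the set of atoms of $\q$ that $h$ sends outside $\ind(\A)$ (connected via shared variables with $h$-image a null); each such component yields a tree witness, and distinct components share no atom by definition, which gives independence immediately---for tree witnesses with the same root constant as well as for those with different roots. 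Your independence argument currently addresses only the distinct-root case, so this correction closes that gap too.
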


Intuitively, for every homomorphism $h \colon \q(\avec{a}) \to \C_{\T,\A}$, the sub-CQs of $\q$ mapped by~$h$ to sub-models of the form $\C^{\smash{\tau(a)}}_\T$ define an independent set $\Theta$ of tree witnesses; see Fig.~\ref{fig:twr}. Conversely, if $\Theta$ is such a set, then the homomorphisms corresponding to the tree witnesses in $\Theta$ can be pieced together into a homomorphism from $\q(\avec{a})$ to $\C_{\T,\A}$---provided that the $S(\avec{z})$ from $\q \setminus \q_\Theta$ and the $\tw_\t(\tr)$ for $\t \in \Theta$ hold in $\I_\A$.

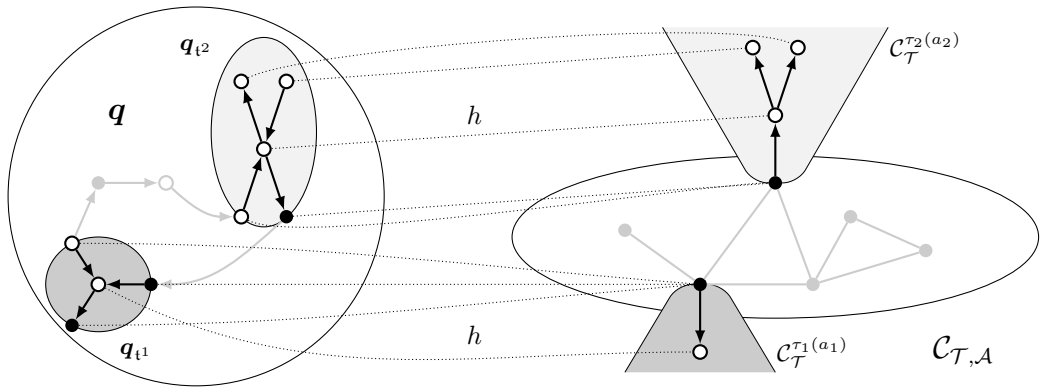
\begin{figure}[b]%
\centering%
\begin{tikzpicture}[yscale=0.9]
\draw (1.3,1.3) ellipse (2.5 and 2.8);
\draw[thin,fill=gray!40] (0,0) circle (0.7);
\node[point] (a0) at (0,0) {};
\node[bpoint] (a1) at (0:0.7) {};
\node[point] (a2) at (120:0.7) {};
\node[bpoint] (a3) at (-120:0.7) {};
\draw[query,->] (a1) -- (a0);
\draw[query,->] (a2) -- (a0);
\draw[query,<-] (a3) -- (a0);
\draw[thin,fill=gray!10] (2.2,2.25) ellipse (0.7 and 1.4);
\node[point] (a4) at (2.2,2) {};
\node[point] (a5) at (1.9,1) {};
\node[bpoint] (a6) at (2.5,1) {};
\node[point] (a7) at (2.5,3) {};
\node[point] (a8) at (1.9,3) {};
\draw[query,->] (a5) -- (a4);
\draw[query,<-] (a6) -- (a4);
\draw[query,->] (a7) -- (a4);
\draw[query,<-] (a8) -- (a4);
\node[bpoint,gray!40] (a9) at (0,1.5) {};
\node[point,gray!40,fill=white] (a10) at (0.9,1.5) {};
\draw[query,->,gray!40] (a9) -- (a10);
\draw[query,->,gray!40,out=-45,in=180] (a10) to (a5);
\draw[query,<-,gray!40] (a9) -- (a2);
\draw[query,<-,gray!40,out=0,in=-135] (a1) to (a6);
\node at (0.25,2.5) {\large $\q$};
\node at (0.5,-1) {\small $\q_{\t^1}$};
\node at (1.3,3.5) {\small $\q_{\t^2}$};
\begin{scope}[shift={(-1,0)}]
\draw (10,0.7) ellipse (3.5 and 1.2);
\draw[fill=gray!40,rounded corners=2mm,ultra thin] (8,-1.3) -- (8.7,0) -- (9.3,0) -- (10,-1.3);  
\node[bpoint] (b0) at (9,0) {};
\node[point] (b1) at (9,-1) {};
\draw[can,->] (b0) -- (b1);
\draw[fill=gray!10,rounded corners=2mm,ultra thin] (8.5,3.8) -- (9.7,1.5) -- (10.3,1.5) -- (11.5,3.8);  
\node[bpoint] (b2) at (10,1.5) {};
\node[point] (b3) at (10,2.5) {};
\draw[can,->] (b2) -- (b3);
\node[point] (b4) at (9.7,3.5) {};
\node[point] (b5) at (10.3,3.5) {};
\draw[can,->] (b3) -- (b4);
\draw[can,->] (b3) -- (b5);
\draw[hom,out=-30,in=180] (a0) to (b1);
\draw[hom] (a1) to (b0);
\draw[hom,out=0,in=175] (a2) to (b0);
\draw[hom,out=0,in=-175] (a3) to (b0);
\draw[hom] (a4) to (b3);
\draw[hom] (a7) to (b4);
\draw[hom,out=30,in=150,looseness=0.3] (a8) to (b5);
\draw[hom,out=-30,in=180,looseness=0.3] (a5) to (b2);
\draw[hom] (a6) to (b2);
\draw[can,gray!40] (b0) to (b2);
\node[bpoint,gray!40] (b6) at (11,1) {};
\node[bpoint,gray!40] (b7) at (12,0.5) {};
\node[bpoint,gray!40] (b8) at (10.5,0) {};
\node[bpoint,gray!40] (b9) at (8,0.8) {};
\draw[can,gray!40] (b6) to (b7);
\draw[can,gray!40] (b7) to (b8);
\draw[can,gray!40] (b2) to (b8);
\draw[can,gray!40] (b8) to (b6);
\draw[can,gray!40] (b0) to (b9);
\draw[can,gray!40] (b0) to (b8);
\node at (12.5,-1) {\large $\C_{\T,\A}$}; 
\node at (10.5,-1) {\small $\C_\T^{\tau_1(a_1)}$}; 
\node at (12,3.5) {\small $\C_\T^{\tau_2(a_2)}$}; 
\node at (6,2.5) {\normalsize $h$};
\node at (6,-0.75) {\normalsize $h$};
\end{scope}
\end{tikzpicture}%
\caption{Tree-witness rewriting.}\label{fig:twr}
\end{figure}

The \emph{size} of the tree-witness \PE-rewriting $\qtw$ depends on the number of tree witnesses in the given OMQ $\omq = (\T,\q)$ and, more importantly, on the cardinality of $\twset$ as we have $|\qtw| = O(2^{|\twset|} \cdot |\omq|^2)$ with $|\twset| \le 3^{|\q|}$. 

\begin{theorem}\label{thm:noroles}
OMQs $\omq = (\T,\q)$, in which $\T$ does not contain axioms of the form $\varrho(x,y) \to \varrho'(x,y)$ \textup{(}and syntax sugar~\eqref{eq:sugar}\textup{)}, have at most $3|\q|$ tree witnesses.
\end{theorem}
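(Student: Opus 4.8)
The plan is to exhibit an injection from the set $\twset$ of tree witnesses into a set of size $3|\q|$ built from the atoms of $\q$. The whole argument rests on a rigidity property of the canonical models $\C_\T^{\tau(a)}$ that holds \emph{precisely because} $\T$ has no role inclusions. First I would prove this rigidity lemma: in $\C_\T^{\tau(a)}$ every labelled null $w\varrho$ is joined to its unique predecessor $w$ only through the generating role $\varrho$, so for any binary atom $R(u,v)$ of $\q$ and any homomorphism $h\colon\q_\t\to\C_\T^{\tau(a)}$, fixing the image of $u$ forces the image of $v$: it must be the unique $R$-successor or $R$-predecessor of $h(u)$. By induction along paths in the Gaifman graph of $\q_\t$, the image of a single variable then determines $h$ on the whole connected fragment, so the fold of $\q_\t$ into the anonymous tree is completely rigid. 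The subtlety here is the normalisation predicates $P_\zeta$, whose elimination of the syntactic sugar~\eqref{eq:sugar} does introduce role inclusions $P_\zeta(x,y)\to\varrho_i(x,y)$; I would handle these via Remark~\ref{ignored}, noting that each $P_\zeta$ is fresh and used by a single axiom, so the corresponding child is still reached deterministically and the tree witnesses generated only through $P_\zeta$ are ignored.

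Next I would use rigidity to pin each tree witness to a canonical \emph{anchor atom}. For an attached tree witness ($\tr\neq\emptyset$) the fragment $\q_\t$ contracts to a subtree rooted at $a$, and I would single out a boundary atom, i.e.\ an atom $R(u,v)\in\q_\t$ with exactly one endpoint in $\tr$; such an atom crosses the root in one of two directions ($u\in\tr$, forcing $h(v)=a\varrho$, or $v\in\tr$, forcing $h(u)=a\varrho^-$). For a detached tree witness ($\tr=\emptyset$), $\q_\t$ is an entire connected component, and I would take as anchor the atom whose image is topmost, which lies entirely inside $\ti$. In every case, once the anchor atom and its direction (forward, backward, or internal/detached) are fixed, rigidity together with the closure condition built into the definition of $\q_\t$ (every atom touching $\ti$ must belong to $\q_\t$) reconstructs $\ti$, then $\tr$, and hence the pair $\t=(\tr,\ti)$ uniquely. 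That a single $\t$ may be generated by several $\tau$ causes no problem, since $\t$ as a combinatorial object is what we are counting.

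The counting then yields the bound: each atom of $\q$ can serve as the forward boundary anchor of at most one tree witness, as the backward boundary anchor of at most one, and as the internal anchor of at most one detached tree witness---three possibilities in all---so $|\twset|\le 3|\q|$ (unary atoms contribute fewer, so the overcount is harmless). I expect the main obstacle to be the rigidity lemma and, more delicately, the verification that the anchor-plus-direction assignment is genuinely injective: I must check that distinct tree witnesses never share the same anchor and direction, which requires that the closure defining $\ti$ from the anchor is forced and independent of the generating concept, and that the choice of anchor is made canonical (topmost, with ties broken by a fixed order on atoms). Ensuring that the normalisation roles $P_\zeta$ interact cleanly with rigidity, so that the permitted syntactic sugar does not smuggle in genuine role inclusions, is the other point that will need care.
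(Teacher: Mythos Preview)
Your approach is correct and rests on the same rigidity observation as the paper: without role inclusions, each point of $\C_\T^{\tau(a)}$ has at most one $R$-successor for each $R$, so once a single boundary crossing is fixed the homomorphism---and hence $(\tr,\ti)$---is determined. The paper's write-up is shorter because it skips the injection-via-canonical-anchor step and argues directly that for every binary atom $P(x,y)\in\q$ there is at most one tree witness with $x\in\tr$ and $y\in\ti$ (and at most one with the roles of $x,y$ swapped); together with the observation that each connected component admits at most one detached tree witness, this already gives $2|\q|+|\q|=3|\q|$. In particular, \emph{every} boundary atom with its direction pins down the tree witness, so your worry about choosing the anchor canonically (``topmost, ties broken by a fixed order'') is unnecessary. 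Your concern about normalisation predicates $P_\zeta$ is also moot: the hypothesis explicitly excludes the syntactic sugar~\eqref{eq:sugar}, so no $P_\zeta$ are introduced and the ontology genuinely has no role inclusions.
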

\begin{proof}
As observed above, there can be only one detached tree witness for each connected component of $\q$.  As $\T$ has no axioms of the form $\varrho(x,y) \to \varrho'(x,y)$, any two points in $\C_\T^{\smash{\tau(a)}}$ can be $R$-related for at most one $R$, and so no point can have more than one $R$-successor, for any $R$. It follows that, for every atom $P(x,y)$ in $\q$, there can be at most one tree witness $\t = (\tr, \ti)$ with $P(x,y)\in\q_\t$, $x\in \tr$ and $y \in \ti$ ($P^-(y,x)$ may give another tree witness). 
\end{proof}

OMQs with arbitrary axioms can have \emph{exponentially many} tree witnesses:

\begin{example}\label{ex:exp-tws}
Consider the OMQ $\omq_n = (\T,\q_n(\avec{x}^0))$, where 
\begin{align*}
\T \ & = \ \big\{ A(x) \to \exists y\, \big( R(y,x) \land \exists z\,  (R(y,z) \land B(z))\big)\big\},\\
\q_n(\avec{x}^0) \ & = \ \exists y,\avec{y}^1,\avec{x}^1,\avec{y}^2 \, \bigl( B(y) \ \ \land \bigwedge_{1\leq k\leq n} \hspace*{-0.5em}\big(R(y_k^1,y) \land R(y_k^1, x_k^1) \land R(y_k^2,x_k^1) \land R(y_k^2,x^0_k)\big)\bigr)
\end{align*}
and $\avec{x}^i$ and $\avec{y}^i$ denote vectors of $n$ variables $x^i_k$ and $y^i_k$, for $1\leq k \leq n$, respectively.
The CQ is shown in Fig.~\ref{fig:exp-tws} alongside the canonical model $\C_\T^{\smash{A(a)}}$.
\begin{figure}[t]%
\centering%
\begin{tikzpicture}\footnotesize
\node at (-1,0) {\normalsize $\q_n(\avec{x}^0)$};
\node[point, label=right:{$B$}, label=below:{$y$}] (a0) at (2,2) {};
\node[point] (a1) at (-1,1) {};
\node[point, label=below:{\normalsize $x^1_0$}] (a2) at (0,1) {};
\node[point] (a3) at (1,1) {};
\node[bpoint, label=below:{\normalsize $x^0_1$}] (a4) at (1,0) {};
\node[point] (b1) at (3,1) {};
\node[point, label=below:{\normalsize $x^1_n$}] (b2) at (4,1) {};
\node[point] (b3) at (5,1) {};
\node[bpoint, label=below:{\normalsize $x^0_n$}] (b4) at (5,0) {};
\draw[query,->] (a1)  to  (a0); 
\draw[query,->] (a1)  to  (a2); 
\draw[query,->] (a3)  to (a2); 
\draw[query,->] (a3)  to (a4); 
\draw[query,->] (b1)  to  (a0); 
\draw[query,->] (b1)  to  (b2); 
\draw[query,->] (b3)  to  (b2); 
\draw[query,->] (b3)  to  (b4); 
\node at (2,1) {\Large $\dots$};
\node[bpoint, label=left:{$A$}, label=below:{\scriptsize $a$}] (c0) at (8,0) {};
\node at (7,1.8) {\normalsize $\C_{\T}^{\smash{A(a)}}$};
\node[point] (c1) at (8,1) {};
\node[point, label=left:{$B$}] (c2) at (8,2) {};
\draw[can,->] (c0)  to node[label=left:{$R^-$}]{} (c1);
\draw[can,->] (c1)  to node[label=left:{$R$}]{} (c2);
\draw[hom] (11.5,0) -- (c0);
\draw[hom] (11.5,1) -- (c1);
\draw[hom] (11.5,2) -- (c2);
\node[point,label=left:{$B$},label=above:{$y$}] (d0) at (9.5,2) {};
\node[point] (d1) at (9.5,1) {};
\node[point] (d2) at (10,2) {};
\node[point] (d3) at (10,1) {};
\node[bpoint,label=below:{\normalsize $x_i^0$}] (d4) at (10,0) {};
\draw[query,->] (d1) to (d0);
\draw[query,->] (d1) to (d2);
\draw[query,->] (d3) to (d2);
\draw[query,->] (d3) to (d4);
\node[point,label=right:{$B$},label=above:{$y$}] (e0) at (11.5,2) {};
\node[point] (e1) at (11.5,1) {};
\node[point,label=below:{\normalsize $x_i^1$}] (e2) at (11.5,0) {};
\draw[query,->] (e1) to (e0);
\draw[query,->] (e1) to (e2);
\end{tikzpicture}%
\caption{The query $\q_n(\avec{x}^0)$ (all edges are labelled by $R$), the canonical model $\C_\T^{\smash{A(a)}}$ (the normalisation predicates are not shown) and two ways of mapping a branch of the query to the canonical model in Example~\ref{ex:exp-tws}.}\label{fig:exp-tws}
\end{figure}
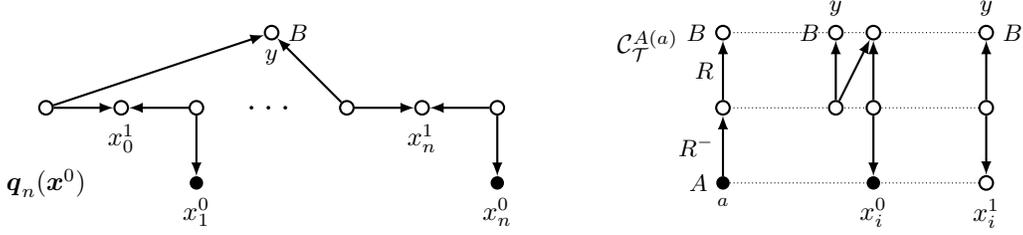
OMQ $\omq_n$ has at least $2^n$ tree witnesses: for any $\avec{\alpha} = (\alpha_1, \dots, \alpha_n) \in \{0,1\}^n$, there is a tree witness $(\tr^{\avec{\alpha}},\ti^{\avec{\alpha}})$ with $\tr^{\avec{\alpha}} = \{x_k^{\alpha_k} \mid 1 \le k \le n\}$. 
Note, however, that the tree-witness rewriting of $\omq_n$ can be equivalently transformed into the following \emph{polynomial-size} PE-rewriting: 
\begin{equation*}
\q_n(\avec{x}^0) \ \ \lor \ \ \exists z \, \big[A(z) \land \bigwedge_{1 \leq i\leq n} \big((x^0_i=z) \lor \exists y \, (R(y,x^0_i) \land R(y,z))\big)\big].
\end{equation*}
\end{example}

If any two tree witnesses for an OMQ $\omq$ are compatible in the sense that either they are non-conflicting or one is included in the other, then $\qtw$ can be equivalently transformed to the \PE-rewriting
\begin{equation*}
\exists \avec{y}\,\bigwedge_{S(\avec{z})\in \q} \bigl(\, S(\avec{z})  \ \ \lor \bigvee_{\t\in\twset \text{ with } S(\avec{z})\in\q_{\t}} \hspace*{-2em}\tw_\t(\tr)\,\bigr) \qquad \text{of size   $O(|\twset| \cdot |\omq|^2 )$.}
\end{equation*}
We now analyse transformations of this kind in the setting of Boolean functions.


\section{OMQ Rewritings as Boolean Functions}\label{sec:Bfunctions}

For any OMQ $\omq(\avec{x})=(\T,\q(\avec{x}))$, we define Boolean functions $\fun$ and $\prim$ such that:
\begin{nitemize}\itemsep=6pt
\item[--] if $\fun$ is computed by a Boolean formula (monotone  formula or monotone circuit) $\Phi$, then $\omq$ has an FO- (respectively, PE- or NDL-) rewriting of size $O(|\Phi| \cdot |\omq|)$;

\item[--] if $\q'$ is an FO- (PE- or NDL-) rewriting
of $\omq$, then $\prim$ is computed by a Boolean formula (respectively, monotone formula or monotone circuit) of size $O(|\q'|)$.
\end{nitemize}
We remind the reader (for details see, e.g.,~\cite{Arora&Barak09,Jukna12}) that an \emph{$n$-ary Boolean function}, for $n\ge 1$, is any function from $\{0,1\}^n$ to $\{0,1\}$. A Boolean function $f$ is \emph{monotone} if $f(\avec{\alpha}) \leq f(\avec{\beta})$ for all $\avec{\alpha}\leq \avec{\beta}$, where $\leq$ is the component-wise $\leq$ on vectors of $\{0,1\}$.
A \emph{Boolean circuit}, $\Cir$, is a directed acyclic graph whose vertices are called \emph{gates}. Each gate is labelled with a  propositional variable, a constant $0$ or~$1$, or with $\NOT$, $\AND$ or $\OR$. Gates  labelled with variables and constants have in-degree~$0$ and are called \emph{inputs}; $\NOT$-gates  have in-degree~$1$, while $\AND$- and $\OR$-gates have in-degree~$2$ (unless otherwise specified).
One of the gates in $\Cir$ is distinguished as the \emph{output gate}. Given an assignment $\avec{\alpha} \in \{0,1\}^n$ to the variables, we compute the value of each gate in $\Cir$ under $\avec{\alpha}$ as usual in Boolean logic. The \emph{output $\Cir(\avec{\alpha})$ of $\Cir$ on} $\avec{\alpha} \in \{0,1\}^n$ is the value of the output gate. We usually assume that the gates $g_1,\dots,g_m$ of $\Cir$ are ordered in such a way that $g_1,\dots,g_n$  are input gates; each gate $g_i$, for $i \geq n$, gets inputs from gates $g_{j_1},\dots,g_{j_k}$ with $j_1,\dots,j_k < i$, and $g_m$ is the output gate. We say that $\Cir$ \emph{computes} an $n$-ary Boolean function $f$ if \mbox{$ \Cir(\avec{\alpha})=f(\avec{\alpha})$} for all $\avec{\alpha} \in \{0,1\}^n$. The \emph{size} $|\Cir|$ of $\Cir$ is the number of gates in $\Cir$. A circuit is \emph{monotone} if it contains only inputs, $\AND$- and $\OR$-gates. \emph{Boolean formulas} can be thought of as circuits in which every logic gate has at most one outgoing edge.
Any monotone circuit computes a monotone function, and any monotone Boolean function can be computed by a monotone circuit.


\subsection{Hypergraph Functions}\label{hyper-functions}

Let $H = (V,E)$ be a hypergraph with \emph{vertices} $v \in V$ and \emph{hyperedges} $e \in E\subseteq 2^V$.
A subset $E' \subseteq E$  is said to be \emph{independent} if $e \cap e' = \emptyset$, for any distinct $e,e' \in E'$. The set of vertices that occur in the hyperedges of $E'$ is denoted by $V_{E'}$.
For each vertex $v \in V$ and each hyperedge $e \in E$, we take  propositional variables $p_v$ and $p_e$, respectively. The \emph{hypergraph function $f_H$ for $H$} is given by the monotone Boolean formula
\begin{equation}\label{hyper}
f_H \ \ \ = \ \bigvee_{
E' \text{ independent}}
\Big( \bigwedge_{v \in V \setminus V_{E'}}
\hspace*{-0.5em} p_v \ \land \ \bigwedge_{e \in E'} p_e
\Big).
\end{equation}

The tree-witness \PE-rewriting $\qtw$ of any OMQ $\omq(\avec{x})=(\T,\q(\avec{x}))$ defines a hypergraph whose vertices are the atoms of $\q$ and hyperedges are the sets $\q_\t$, where $\t$ is a tree witness for $\omq$. We denote this hypergraph by $\HG{\omq}$ and call $f_{\HG{\omq}}$ the \emph{tree-witness hypergraph function for $\omq$}. To simplify notation, we write $\twfn$ instead of $f_{\HG{\omq}}$. 
Note that formula~\eqref{hyper} defining $\twfn$ is obtained from rewriting~\eqref{rewriting0} by regarding the atoms $S(\avec{z})$ in $\q$ and tree-witness formulas $\tw_\t$ as propositional variables. We denote these variables by $p_{S(\avec{z})}$ and $p_{\t}$ (rather than $p_v$ and $p_e$), respectively.

\begin{example}\label{ex:simple hyper}
For the OMQ $\omq$ in Example~\ref{ex:conf}, the hypergraph $\HG{\omq}$ has 3~vertices (one for each atom in the query) and 2~hyperedges (one for each tree witness) shown in Fig.~\ref{fig:hyper-example}.
\begin{figure}[t]%
\centering%
\begin{tikzpicture}[label distance=-2pt,yscale=1.2,xscale=1.5]
\coordinate (c1) at (0,0);
\coordinate (c2) at (1,0.7);
\coordinate (c3) at (2,0);
\draw[fill=gray!5,rounded corners=10] (4.1,-0.2) -- (1.5,1.2) -- (-0.7,1.2) -- (1.9,-0.2) -- cycle;
\draw[fill=gray!30,rounded corners=10] (-2.1,-0.2) -- (0.5,1.2) -- (2.7,1.2) -- (0.1,-0.2) -- cycle;
\draw[rounded corners=10] (4.1,-0.2) -- (1.5,1.2) -- (-0.7,1.2) -- (1.9,-0.2) -- cycle;
\node at (c1) [point,fill=white,label=left:{\small $R_1(x_1,y_1)$}] {};
\node at (c2) [point,fill=white,label=above:{\small $Q(y_2,y_1)$}] {};
\node at (c3) [point,fill=white,label=right:{\small $R_2(y_2,x_2)$}] {};
\node at (-0.1,0.5) {\large $\t^1$};
\node at (2.1,0.5) {\large $\t^2$};
\end{tikzpicture}%
\caption{The hypergraph $\HG{\omq}$ for $\omq$ from Example~\ref{ex:conf}.}\label{fig:hyper-example}
\end{figure}
The tree-witness hypergraph function for $\omq$ is as follows: 
\begin{equation*}
\twfn \ \ =  \ \ \bigl(p_{R_1(x_1,y_1)} \land p_{Q(y_2,y_1)} \land p_{R_2(x_2,y_2)}\bigr) \lor
\bigl(p_{\t^1} \land p_{R_2(x_2,y_2)}  \bigr) \lor \bigl(p_{R_1(x_1,y_1)} \land p_{\t^2}\bigr).
\end{equation*}
\end{example}

Suppose the function $\twfn$ for an OMQ $\omq(\avec{x})$ is computed by a Boolean formula $\Phi$. Consider the first-order formula $\Phi^*(\avec{x})$ obtained by replacing each $p_{S(\avec{z})}$ in $\Phi$ with $S(\avec{z})$, each $p_{\t}$ with $\tw_\t$, and adding the appropriate prefix $\exists \avec{y}$. By comparing~\eqref{hyper} and~\eqref{rewriting0}, we see that $\Phi^*(\avec{x})$ is an \FO-rewriting of~$\omq(\avec{x})$ over  data instances that are complete over $\T$. This gives claim (\emph{i}) of the following theorem:

\begin{theorem}\label{TW2rew}
\textup{(}i\textup{)} If $\twfn$ is computed by a \textup{(}monotone\textup{)} Boolean formula $\Phi$, then
there is a \textup{(}\PE-\textup{)} \FO-rewriting of $\omq(\avec{x})$ of size $O(|\Phi| \cdot |\omq|)$.

\textup{(}ii\textup{)} If $\twfn$ is computed by a monotone Boolean circuit $\Cir$, then there is an \NDL-rewriting of
$\omq(\avec{x})$ of size $O(|\Cir|\cdot |\omq|)$.
\end{theorem}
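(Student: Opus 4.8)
Part (i) for the FO case is already established in the paragraph preceding the theorem: given a Boolean formula $\Phi$ computing $\twfn$, one replaces each variable $p_{S(\avec{z})}$ by the atom $S(\avec{z})$, each variable $p_\t$ by the tree-witness formula $\tw_\t(\tr)$, and prefixes the result with $\exists\avec{y}$; comparing~\eqref{hyper} with~\eqref{rewriting0} shows the resulting $\Phi^*(\avec{x})$ is an \FO-rewriting over complete data instances, and Proposition~\ref{complete} converts it to one over arbitrary data instances at the cost of a further factor $|\T|$. The size bound $O(|\Phi|\cdot|\omq|)$ follows because each of the $|\Phi|$ leaves of $\Phi$ is replaced by a subformula ($S(\avec{z})$ or $\tw_\t$) of length $O(|\omq|)$, while the internal $\land,\lor,\neg$ gates contribute $O(|\Phi|)$ connectives.

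For the \PE{} part of (i) I would argue that if $\Phi$ is \emph{monotone} (only $\land,\lor$), then $\Phi^*$ contains no negation, and since each $\tw_\t$ is itself a positive existential formula (it is $\exists z\,(\bigwedge(x=z)\land\bigvee_\tau\tau(z))$, using only $\land,\lor,\exists$ and equalities) and each $S(\avec{z})$ is an atom, the whole of $\Phi^*(\avec{x})$ is positive existential. Hence $\Phi^*$ is a \PE-rewriting of the required size. The only subtlety to flag is that pushing the existential prefix $\exists\avec{y}$ to the front is sound precisely because the variables $\avec{y}$ are shared across the disjuncts/conjuncts in the intended way; this is exactly the correspondence between~\eqref{hyper} and~\eqref{rewriting0} already noted, so no new work is needed.

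For part (ii) the plan is to turn a monotone circuit $\Cir$ computing $\twfn$ into an \NDL-program $\Pi$ by introducing one fresh predicate $G_g(\avec{x})$ per gate $g$ of $\Cir$, where the free variables are the answer variables $\avec{x}$ of $\omq$ together with whatever subset of $\avec{y}$ is needed to name the atoms/tree-witnesses feeding into $g$. Concretely: for an input gate labelled $p_{S(\avec{z})}$, add the clause $S(\avec{z})\to G_g(\cdot)$; for an input labelled $p_\t$, add a clause whose body encodes $\tw_\t(\tr)$ (introducing an auxiliary predicate if needed, since $\tw_\t$ is itself a small \PE-formula and can be written as $O(1)$ Horn clauses using a fresh head predicate); for an $\OR$-gate $g$ with inputs $g_1,g_2$, add $G_{g_1}\to G_g$ and $G_{g_2}\to G_g$; for an $\AND$-gate, add the single clause $G_{g_1}\land G_{g_2}\to G_g$. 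The output gate's predicate is the distinguished goal $G(\avec{x})$. Acyclicity of $\Pi$ is inherited from the fact that $\Cir$ is a DAG, so $\Pi$ is genuinely nonrecursive; the size is $O(|\Cir|)$ clauses each of length $O(|\omq|)$, giving the bound $O(|\Cir|\cdot|\omq|)$. Correctness follows by induction over the topological order of gates: the minimal model $\Pi(\I_\A)$ satisfies $G_g(\avec{a})$ (for the appropriate binding of the gate's free variables) iff the gate $g$ evaluates to $1$ under the valuation $p_{S(\avec{z})}\mapsto[\I_\A\models S(\avec{z})]$, $p_\t\mapsto[\I_\A\models\tw_\t(\tr)]$, so $G(\avec{a})$ is derived iff $\twfn$ is $1$ under this valuation iff $\qtw(\avec{a})$ holds.

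The main obstacle is \textbf{variable management in the datalog translation}: a circuit gate is a purely Boolean object, but each gate predicate $G_g$ must carry exactly the right tuple of individual variables so that the joins implicit in an $\AND$-gate are performed on shared variables, and so that variables that are existentially quantified in $\qtw$ (the internal variables of tree witnesses, and shared $\avec{y}$ across atoms) are neither projected away too early nor spuriously frozen. Unlike the formula case, where the existential prefix is simply placed in front, in datalog the quantification is realised by \emph{omitting} a variable from a rule head, so I must verify that the set of variables retained in each $G_g$ equals the set of variables occurring in the subcircuit rooted at $g$ that are either answer variables or are shared with gates outside that subcircuit --- otherwise the translation would under- or over-constrain the joins. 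Making this bookkeeping precise, and checking that it preserves equivalence to~\eqref{rewriting0}, is the technical heart of (ii); everything else is a routine induction. (The final application of Proposition~\ref{complete} to pass from complete to arbitrary data instances is as in part~(i).)
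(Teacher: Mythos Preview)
Your treatment of part~(i) matches the paper and is correct.

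For part~(ii), your plan is workable but you have made the construction harder than necessary, and the ``main obstacle'' you identify is not present in the paper's proof. The paper simply gives \emph{every} gate predicate the \emph{full} variable tuple $\avec{z}=\avec{x}\cup\avec{y}$: the clauses for $\AND$/$\OR$-gates are $G_j(\avec{z})\land G_k(\avec{z})\to G_i(\avec{z})$ and $G_j(\avec{z})\to G_i(\avec{z})$, $G_k(\avec{z})\to G_i(\avec{z})$ respectively, and the existential quantification over $\avec{y}$ is deferred to a single final projection rule $G_m(\avec{z})\to G(\avec{x})$. With this choice there is no bookkeeping of ``which variables are shared with gates outside the subcircuit'': every join is on the full tuple, and the correctness induction becomes the one-line observation that $G_i(\avec{b})$ is derivable iff gate $g_i$ outputs~$1$ under the valuation $p_{S(\avec{z})}\mapsto[\I_\A\models S(\avec{b}_{\avec{z}})]$, $p_{\t}\mapsto[\I_\A\models\tw_\t(\avec{b}_{\tr})]$. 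Your approach of computing, for each gate, the minimal set of variables to retain would also give a correct (and possibly smaller-arity) program, but it is exactly the delicate analysis you flag as the ``technical heart''; the paper shows it is avoidable.

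A second small difference: for a tree-witness input gate $p_{\t^j}$ the paper does not introduce an auxiliary predicate encoding $\tw_{\t^j}$. Instead it writes one clause $\tau(u)\to G_{n+j}(\avec{z}[\tr^j/u])$ for each $\tau$ generating $\t^j$, where the head substitutes the single fresh variable $u$ for every root variable in $\tr^j$. This realises both the equalities $\bigwedge_{x\in\tr^j}(x=z)$ and the disjunction $\bigvee_\tau\tau(z)$ in~\eqref{tw-formula} directly at the clause level, without equality atoms or extra predicates.
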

\begin{proof}
(\emph{ii}) Let $\t^1,\dots,\t^{l}$ be tree witnesses for  $\omq(\avec{x}) = (\T,\q(\avec{x}))$, where  $\q(\avec{x}) = \exists \avec{y}\, \bigwedge_{i = 1}^n S_i(\avec{z}_i)$.
We assume that the gates $g_1, \dots, g_n$ of $\Cir$ are the inputs $p_{S_1(\avec{z}_1)}, \dots, p_{S_n(\avec{z}_n)}$ for the atoms, the gates $g_{n+1},\dots,g_{n+l}$ are the inputs $p_{\t^1},\dots,p_{\t^{l}}$ for the tree witnesses and $g_{n+l+1},\dots,g_m$ are $\AND$- and $\OR$-gates. Denote by $\Pi$ the following NDL-program, where $\avec{z} = \avec{x} \cup\avec{y}$:
\begin{nitemize}
\item[--] $S_i(\avec{z}_i) \to G_i(\avec{z})$, for $0 < i \le n$;

\item[--] $\tau(u) \to G_{i+m}(\avec{z}[\tr^j/u])$, for $0 < j \le l$ and $\tau$ generating $\t^j$, where $\avec{z}[\tr^j/u]$ is the result of replacing each $z \in \tr^j$ in $\avec{z}$ with $u$;

\item[--] $\begin{cases}
G_j(\avec{z}) \land G_k(\avec{z})  \to G_i(\avec{z}), &\text{if } g_i = g_j \ANDOP g_k,\\
G_j(\avec{z}) \to G_i(\avec{z})\text{ and  } G_k(\avec{z}) \to G_i(\avec{z}), &\text{if } g_i = g_j \OROP g_k,
\end{cases}$\quad for $n+l < i \leq m$;
\item[--] $G_m(\avec{z}) \to G(\avec{x})$.
\end{nitemize}
It is not hard to see that $(\Pi, G(\avec{x}))$ is an NDL-rewriting of $\omq(\avec{x})$.
\end{proof}

Thus, the problem of constructing polynomial-size rewritings of OMQs reduces to finding polynomial-size (monotone) formulas or monotone circuits for the corresponding functions $\twfn$.
Note, however, that $\twfn$ contains a variable $p_\t$ for every tree witness~$\t$, which makes this reduction useless for OMQs with exponentially many tree witnesses. To be able to deal with such OMQs, we slightly modify the tree-witness rewriting. 
 
Suppose $\t = (\tr,\ti)$ is a tree witness for $\omq = (\T,\q)$  induced by a homomorphism $h \colon \q_\t  \to \smash{\C_\T^{\tau(a)}}$. We say that $\t$ is $\varrho$-\emph{initiated} if $h(z)$ is of the form  $a \varrho w$, for every (equivalently, some) variable $z\in \ti$. For such $\varrho$, we define a formula $\varrho^*(x)$ by taking the disjunction of $\tau(x)$  with $\T\models \tau(x) \to \exists y\,\varrho(x,y)$. Again, the disjunction includes only those $\tau(x)$ that do not contain normalisation predicates (even though $\varrho$ itself can be one).

\begin{example}\label{ex:initated}
Consider the OMQ $\omq(x) = (\T,\q(x))$ with 
\begin{equation*}
\T = \bigl\{\,\exists y\, Q(x,y) \to \exists y\, P(x,y), \ \ P(x,y) \to R(x,y)\, \bigr\}\quad \text{and} \quad
\q(x) = \exists y \, R(x,y).
\end{equation*}
As shown in Fig.~\ref{fig:ex:initiated}, the tree witness $\t = (\{x\},\{y\})$ for $\omq(x)$ is generated by $\exists y\, Q(x,y)$, $\exists y\, P(x,y)$ and $\exists y\, R(x,y)$; it is also $P$- and $R$-initiated, but not $Q$-initiated. We have:
\begin{equation*}
P^*(x) = \exists y\,Q(x,y) \lor \exists y\,P(x,y) \text{ and } R^*(x) = \exists y\,Q(x,y) \lor \exists y\,P(x,y) \lor \exists y\, R(x,y).
\end{equation*}
\begin{figure}[t]%
\centering%
\begin{tikzpicture}\footnotesize
\node (a0) at (-0.5,0) [bpoint, label=below:{$\exists y\, Q(a,y)$}, label=left:{$a$}]{};
\node (a1) at (-1.25,1) [point, label=left:{$aQ$}]{};
\node (a2) at (0.25,1) [point, label=right:{$aP$}]{};
\draw[can,->] (a0)  to node[label=left:{$Q$}]{} (a1);
\draw[can,->] (a0)  to node[label=right:{$P,R$}]{} (a2);
\node (b0) at (3,0) [bpoint, label=below:{$\exists y\, P(a,y)$}, label=left:{$a$}]{};
\node (b1) at (3,1) [point, label=left:{$aP$}]{};
\draw[can,->] (b0)  to node[label=right:{$P,R$}]{} (b1);
\node (c0) at (6,0) [bpoint, fill=black, label=below:{$\exists y\, R(a,y)$}, label=left:{$a$}]{};
\node (c1) at (6,1) [point, label=left:{$aR$}]{};
\draw[can,->] (c0)  to node[label=right:{$R$}]{} (c1);
\end{tikzpicture}%
\caption{Canonical models in Example~\ref{ex:initated}.}\label{fig:ex:initiated}
\end{figure}%
\end{example}

The modified tree-witness rewriting for $\omq(\avec{x}) = (\T,\q(\avec{x}))$, denoted $\qtw'(\avec{x})$, is obtained by replacing~\eqref{tw-formula} in~\eqref{rewriting0} with the formula
\begin{equation*}\label{tw-formula'}\tag{\ref{tw-formula}$'$}
\tw'_\t(\tr,\ti) ~=~ \bigwedge_{R(z,z')\in \q_\t}\hspace*{-0.5em} (z=z') \quad \land \bigvee_{\t \text{ is $\varrho$-initiated}} \ \bigwedge_{z \in \tr \cup \ti} \varrho^*(z).
\end{equation*}
Note that unlike~\eqref{tw-formula}, this formula contains the variables in both $\ti$ and $\tr$, which must be equal under every satisfying assignment. 
We associate with $\qtw'(\avec{x})$ the monotone Boolean function~$f^\blacktriangledown_{\omq}$ 
given by the formula obtained from~\eqref{hyper} by replacing each variable $p_v$ with the respective $p_{S(\avec{z})}$, for $S(\avec{z})\in \q$, and  each variable $p_e$ with the formula
\begin{equation}\label{subst}
\bigwedge_{R(z,z')\in \q_\t} \hspace*{-0.5em} p_{z=z'} \ \ \ \wedge \bigvee_{\t \text{ is $\varrho$-initiated}} \,\, \bigwedge_{z \in \tr\cup\ti} p_{\varrho^*(z)},
\end{equation}
for the respective tree witness $\t=(\tr,\ti)$ for $\omq(\avec{x})$, where $p_{z=z'}$ and $p_{\varrho^*(z)}$ are propositional variables. 
Clearly, the number of variables in $\homfn$ is \emph{polynomial} in $|\omq|$. 

\begin{example}
For the OMQ $\omq(x)$ in the Example~\ref{ex:conf}, we have: 
\begin{multline*}
\homfn \ \ =  \ \ \bigl(p_{R_1(x_1,y_1)} \land p_{Q(y_2,y_1)} \land p_{R_2(x_2,y_2)}\bigr)\ \  \lor {} \\ 
\bigl(
\bigl(p_{x_1 = y_1}\! \land p_{y_2 = y_1} \ \ \wedge \hspace*{-0.5em}\bigwedge_{z \in \{x_1, y_1, y_2\}} \hspace*{-1.5em} p_{P^*_{\zeta_1}(z)}\bigr) \ \ \land \ \ p_{R_2(x_2,y_2)} \bigr) \ \ \lor {} \\
\bigl(p_{R_1(x_1,y_1)} \ \ \land \ \ 
\bigl(p_{y_2 = y_1} \!\land p_{x_2 = y_2}
\ \ \wedge \hspace*{-0.5em}\bigwedge_{z \in\{y_1, y_2, x_2\}} \hspace*{-1.5em} p_{P^*_{\zeta_2}(z)}\bigr)\bigr).
\end{multline*}
\end{example}

The proof of the following theorem is given in Appendix~\ref{app:proof:Th4.5}:

\begin{theorem}\label{Hom2rew}
\textup{(}i\textup{)} For any OMQ $\omq(\avec{x})$, the formulas $\qtw(\avec{x})$ and $\qtw'(\avec{x})$ are equivalent, and so $\qtw'(\avec{x})$ is a PE-rewriting of $\omq(\avec{x})$ over complete data instances.

\textup{(}ii\textup{)} Theorem~\ref{TW2rew} continues to hold for $\twfn$ replaced by $\homfn$.
\end{theorem}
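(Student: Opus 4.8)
The plan is to prove part~(i) first; part~(ii) will then follow by adapting the substitution argument of Theorem~\ref{TW2rew}. For part~(i) I would compare $\qtw$ and $\qtw'$ disjunct by disjunct over the independent sets $\Theta\subseteq\twset$, and inside each disjunct match $\tw_\t(\tr)$ against $\tw'_\t(\tr,\ti)$. Two discrepancies must be reconciled: (a) $\tw_\t$ collapses only the \emph{root} variables to one element satisfying $\bigvee_{\tau\ \text{generates}\ \t}\tau$, whereas $\tw'_\t$ collapses \emph{all} variables of $\q_\t$ via $\bigwedge_{R(z,z')\in\q_\t}(z=z')$ and asks them to satisfy $\bigvee_{\t\ \varrho\text{-initiated}}\varrho^*$; and (b) the internal variables $\ti$ are absent from $\tw_\t$ but present in $\tw'_\t$. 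I would isolate the purely ontological content as a lemma and treat the variable bookkeeping separately.

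The key lemma, and the step I expect to be the main obstacle, is the concept-level equivalence
\[
\bigvee_{\tau\ \text{generates}\ \t}\tau(z)\ \equiv\ \bigvee_{\t\ \varrho\text{-initiated}}\varrho^*(z),
\]
which I would establish by a transplantation argument on canonical models. For one inclusion: if a normalisation-free $\tau$ generates $\t$ via $h\colon\q_\t\to\C_\T^{\tau(a)}$ and $\varrho$ labels the first edge out of $a$, then $\t$ is $\varrho$-initiated and $\T\models\tau(x)\to\exists y\,\varrho(x,y)$, so $\tau$ is one of the disjuncts of $\varrho^*$. For the other inclusion: suppose $\t$ is $\varrho$-initiated and $\T\models\tau'(x)\to\exists y\,\varrho(x,y)$. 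The crucial observation is that, once $a\varrho$ exists, the subtree of $\C_\T^{\tau(a)}$ below $a\varrho$ depends only on $\varrho$, not on $\tau$; moreover $\q_\t$ has no unary atom on a root variable (by the definition of $\q_\t$), while each root-incident atom $P(z,z')$ holds iff $\T\models\varrho(x,y)\to P(x,y)$, again independently of $\tau$. Hence the homomorphism witnessing $\varrho$-initiation transplants verbatim into $\C_\T^{\tau'(a)}$ with $\tr=h^{-1}(a)$, so $\tau'$ also generates $\t$ and $\tau'(z)$ is a disjunct on the left. Care is needed with normalisation predicates: $\varrho$ may itself be some $P_\zeta$, but the $\tau,\tau'$ summed over are always normalisation-free, matching the conventions of Remark~\ref{ignored} and the definition of $\varrho^*$.

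It remains to reconcile the quantifiers. First, each $\q_\t$ is connected (for detached $\t$ this is stated; otherwise it follows from minimality of $\q_\t$ and the tree shape of $\C_\T^{\tau(a)}$), so $\bigwedge_{R(z,z')\in\q_\t}(z=z')$ forces all of $\tr\cup\ti$ to a common value $u$, turning the second conjunct of $\tw'_\t$ into $\bigvee_{\t\ \varrho\text{-initiated}}\varrho^*(u)$, which by the lemma equals $\bigvee_{\tau\ \text{generates}\ \t}\tau(u)$. Second, in an independent $\Theta$ the internal variables are private: since $\tr\cup\ti=\mathrm{vars}(\q_\t)$, if some $z\in\ti$ occurred in another $\t'\in\Theta$ then an atom containing $z$ would lie in $\q_\t\cap\q_{\t'}$, contradicting independence. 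Thus $\ti$ occurs neither in $\q\setminus\q_\Theta$ nor in any other tree-witness formula, so in $\qtw$ the quantifier $\exists\,\ti$ is vacuous. I can therefore pass freely between a satisfying assignment of the $\Theta$-disjunct of $\qtw$ (where $\ti$ is unconstrained) and one of $\qtw'$ (where $\ti$ is reset to $u$): forgetting or resetting $\ti$ preserves every other atom. Summing over $\Theta$ gives $\qtw\equiv\qtw'$ on every $\I_\A$; since $\qtw$ rewrites $\omq$ over complete data instances, so does $\qtw'$.

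For part~(ii) I would run the substitution argument of Theorem~\ref{TW2rew} with the new inputs. Fixing $\I_\A$ and an assignment $\nu$ of $\avec x\cup\avec y$, the induced Boolean valuation sends $p_{S(\avec z)},p_{z=z'},p_{\varrho^*(z)}$ to the truth values of $S(\nu(\avec z))$, $\nu(z){=}\nu(z')$ and $\varrho^*(\nu(z))$; by construction, the defining formula~\eqref{subst} of $\homfn$ substituted accordingly and prefixed by $\exists\avec y$ is exactly the matrix of $\qtw'$. Hence for any formula computing $\homfn$, replacing each input by the corresponding FO subformula ($S(\avec z)$, $(z=z')$, or $\varrho^*(z)$, each of size $O(|\omq|)$) and adding $\exists\avec y$ yields, by part~(i), an \FO- (or \PE-, when the formula is monotone) rewriting of size $O(|\Phi|\cdot|\omq|)$, because on every $\I_\A$ and $\nu$ it agrees with $\qtw'$. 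For a monotone circuit $\Cir$ I would build the \NDL-program exactly as in the proof of Theorem~\ref{TW2rew}(ii), with gate predicates $G_i(\avec z)$: the clauses $S(\avec z)\to G_i(\avec z)$ for atom inputs, for each disjunct $\tau$ of $\varrho^*$ a clause $\tau(z)\to G_i(\avec z)$ for a $p_{\varrho^*(z)}$-input, and a clause deriving a $p_{z=z'}$-gate with $z,z'$ identified in the head (unsafe heads removed by relativisation as in Remark~\ref{dom-ind}); the $\AND/\OR$-gate and output clauses are unchanged. The result has size $O(|\Cir|\cdot|\omq|)$ and computes $\qtw'$, a rewriting by part~(i).
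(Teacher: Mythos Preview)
Your proposal is correct and follows essentially the same route as the paper's proof: reduce part~(i) to a per-$\Theta$, per-$\t$ comparison of $\tw_\t$ and $\tw'_\t$, establish the concept-level equivalence $\bigvee_{\tau\ \text{generates}\ \t}\tau(z)\equiv\bigvee_{\t\ \varrho\text{-initiated}}\varrho^*(z)$, use connectedness of $\q_\t$ via binary atoms to collapse all of $\tr\cup\ti$ to one point, and then drop the now-vacuous equalities on the internal variables; part~(ii) is the same substitution argument as in Theorem~\ref{TW2rew}. The only notable difference is that you supply a canonical-model transplantation argument for the concept-level equivalence (the observation that the $a\varrho$-subtree of $\C_\T^{\tau(a)}$ depends only on $\varrho$, together with the fact that $\q_\t$ has no unary atoms on root variables), whereas the paper simply records the two directions of that equivalence as facts; your version is more informative but not logically different.
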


Finally, we observe that although $\twfn$ and  $\homfn$ are defined by exponential-size formulas, each of these functions can be computed by a nondeterministic polynomial algorithm (in the number of propositional variables). Indeed, given truth-values for the $p_{S(\avec{z})}$ and~$p_{\t}$ in $\twfn$,  guess a set $\Theta$ of at most $|\q|$ tree witnesses and check whether (\emph{i}) $\Theta$ is independent, (\emph{ii}) $p_{\t} = 1$ for all $\t \in \Theta$, and (\emph{iii}) every $S(\avec{z})$ with $p_{S(\avec{z})} = 0$ belongs to some $\t \in \Theta$. The function $\homfn$ is computed similarly except that, in (\emph{ii}), we check whether the polynomial-size formula~\eqref{subst} is true under the given truth-values for every $\t\in\Theta$.


\subsection{Primitive Evaluation Functions}

To obtain lower bounds on the size of rewritings, we associate with every OMQ $\omq$ a third Boolean function $\prim$ that describes the result of evaluating $\omq$ on data instances with a single individual constant. Let $\avec{\gamma}\in \{0,1\}^n$ be a vector assigning the truth-value $\avec{\gamma}(S_i)$ to each unary  or binary predicate $S_i$ in $\omq$. We associate with $\avec{\gamma}$  the data instance
\begin{equation*}
\A(\avec{\gamma})  \ \ = \ \ \bigl\{\, A_i(a) \mid \avec{\gamma}(A_i) = 1\,\bigr\}  \ \ \cup \ \ \bigl\{\, P_i(a,a) \mid \avec{\gamma}(P_i) = 1\,\bigr\}
\end{equation*}
and set $\prim(\avec{\gamma}) = 1$ iff $\T, \A(\avec{\gamma}) \models \q(\avec{a})$, where $\avec{a}$ is the $|\avec{x}|$-tuple of $a$s. We call $\prim$ the \emph{primitive evaluation function} for $\omq$.
\begin{theorem}\label{rew2prim}
\textup{(}i\textup{)} If $\q'$ is a \textup{(}\PE-\textup{)} \FO-rewriting of $\omq$, then $\prim$ can be computed by a \textup{(}monotone\textup{)} Boolean formula of size $O(|\q'|)$.

\textup{(}ii\textup{)} If $\q'$ is an \NDL-rewriting of $\omq$, then $\prim$ can be computed by a monotone Boolean circuit of size $O(|\q'|)$.
\end{theorem}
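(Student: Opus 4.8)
The plan is to exploit the fact that the data instance $\A(\avec{\gamma})$ has a single individual constant $a$, so that the first-order structure $\I_{\A(\avec{\gamma})}$ over which the rewriting is evaluated has a one-element domain $\ind(\A(\avec{\gamma}))=\{a\}$. By the defining property of a rewriting, $\prim(\avec{\gamma})=1$ iff $\q'$ (respectively, the datalog program of the \NDL-rewriting) holds on $\I_{\A(\avec{\gamma})}$ at the tuple $\avec{a}=(a,\dots,a)$; and over a one-element domain this evaluation degenerates into a purely propositional computation. Concretely, every variable (free or quantified) can only be assigned $a$, every equality atom is automatically true, and each quantifier ranges over the singleton $\{a\}$ and is therefore vacuous. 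Thus I would simply transcribe the syntax of $\q'$ into a Boolean formula, and the syntax of the program into a Boolean circuit, replacing each occurrence of a predicate $S$ by the propositional variable $p_S$ recording whether $\avec{\gamma}(S)=1$.

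For part (\emph{i}), I would define a translation $\sigma$ on the subformulas of $\q'$ by structural recursion: an atom $S(\avec{z})$ becomes $p_S$ (since, with all arguments mapped to $a$, it holds iff $S(a,\dots,a)\in\A(\avec{\gamma})$, i.e.\ iff $\avec{\gamma}(S)=1$); an equality $z=z'$ becomes the constant $1$; the connectives $\land,\lor,\neg$ are kept verbatim; and each quantifier is deleted, $\sigma(\exists y\,\psi)=\sigma(\forall y\,\psi)=\sigma(\psi)$, since over $\{a\}$ both quantifiers reduce to substituting $y:=a$. A routine induction on the structure of $\q'$ then shows that $\I_{\A(\avec{\gamma})}\models\q'(\avec{a})$ iff $\sigma(\q')$ evaluates to $1$ under the assignment $p_S\mapsto\avec{\gamma}(S)$, so $\sigma(\q')$ computes $\prim$. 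The translation introduces at most one gate per node of the syntax tree of $\q'$ and preserves its tree shape, giving a formula of size $O(|\q'|)$; when $\q'$ is positive existential there are no $\neg$-nodes and every equality turns into $1$, so $\sigma(\q')$ is monotone.

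For part (\emph{ii}), given an \NDL-rewriting $\q'=(\Pi,G(\avec{x}))$ I would build a monotone circuit by introducing, for each predicate $R$ occurring in $\Pi$, a gate $g_R$ representing the truth of the single ground atom $R(a,\dots,a)$ in the closure $\Pi(\I_{\A(\avec{\gamma})})$. I would set $g_R$ to be the $\OR$ of: the input $p_R$, in case $R$ is a data predicate; and, for every clause of $\Pi$ whose head predicate is $R$ and whose body atoms have predicates $R_1,\dots,R_m$, the conjunction $g_{R_1}\land\dots\land g_{R_m}$ (an empty body contributing the constant $1$). Because all variables collapse to $a$, each clause has a unique ground instance, whose body holds exactly when this conjunction of gates is true. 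The output gate is $g_G$. Nonrecursiveness of $\Pi$ means the predicate-dependency relation is acyclic, which guarantees that this wiring is a directed acyclic graph, i.e.\ a genuine circuit; moreover each $g_R$ refers only to gates strictly earlier in a topological order, so an induction along that order identifies $g_R=1$ with $R(a,\dots,a)\in\Pi(\I_{\A(\avec{\gamma})})$, exactly as in the least-fixpoint semantics of datalog over a singleton domain. The circuit uses only $\AND$-, $\OR$- and input gates, hence is monotone, and its size is linear in the number of clauses and body atoms of $\Pi$, i.e.\ $O(|\q'|)$.

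The conceptual core is entirely the single-element collapse, so no step is genuinely hard; the points that need care are (a) checking that the predicates occurring in the rewriting are among the coordinates of $\avec{\gamma}$, so that the substitution $p_S\mapsto\avec{\gamma}(S)$ is well defined, and (b) in part (\emph{ii}), verifying that the monotone-circuit evaluation reproduces the datalog closure---which is precisely where the nonrecursiveness hypothesis is used, to rule out cyclic gate dependencies and to license the induction along the topological order.
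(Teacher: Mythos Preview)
Your proposal is correct and follows essentially the same approach as the paper: collapse everything to the single element $a$, strip quantifiers, replace atoms by propositional variables (equalities by $1$), and for the \NDL{} case ground the program and read off a monotone circuit whose acyclicity is guaranteed by nonrecursiveness. Your write-up is in fact slightly more explicit about the inductive justification than the paper's, but the construction is the same.
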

\begin{proof}
(\emph{i}) Let $\q'$ be an \FO-rewriting of $\omq$. We eliminate the quantifiers in $\q'$ by replacing each subformula of the form $\exists x\, \psi(x)$ and $\forall x\, \psi(x)$ in $\q'$ with $\psi(a)$. We then
replace each $a=a$ with $\top$ and each atom of the form $A(a)$ and $P(a,a)$ with the corresponding propositional variable. The resulting  Boolean formula clearly computes $\prim$.
If  $\q'$ is a \PE-rewriting of $\omq$, then the result is a monotone Boolean formula computing~$\prim$.

(\emph{ii}) If  $(\Pi, G)$ is an \NDL-rewriting of $\omq$, then
we replace all variables in $\Pi$ with $a$ and then perform the replacement described in (\emph{i}). We now turn the resulting propositional \NDL-program  $\Pi'$ into a monotone circuit computing $\prim$. For every (propositional) variable $p$ occurring in the head of a rule in $\smash{\Pi'}$, we take an appropriate number of $\OR$-gates whose output is $p$ and inputs are the bodies of the rules with head $p$; for every such body, we introduce an appropriate number of $\AND$-gates whose inputs are the propositional variables in the body, or, if the body is empty, we take the gate for constant~$1$.
\end{proof}


\subsection{Hypergraph Programs}

We introduced hypergraph functions as Boolean abstractions of the tree-witness rewritings. Our next aim is to define a model of computation for these functions.

A \emph{hypergraph program} (HGP) $P$ is a hypergraph $H = (V,E)$ each of whose vertices is labelled by $0$, $1$ or a literal over a list $p_1,\dots,p_n$ of propositional variables. (As usual, a \emph{literal}, $\li$, is a propositional variable or its negation.) An \emph{input} for $P$ is a tuple \mbox{$\avec{\alpha} \in \{0,1\}^n$}, which is regarded as a valuation for $p_1,\dots,p_n$. The output  $P(\avec{\alpha})$ of $P$ on~$\avec{\alpha}$ is 1 iff there is an independent subset of $E$ that \emph{covers all zeros}---that is, contains every vertex in $V$ whose label evaluates to $0$ under $\avec{\alpha}$. We say that $P$ \emph{computes} an $n$-ary Boolean function~$f$ if $f(\avec{\alpha})=P(\avec{\alpha})$, for all $\avec{\alpha}\in\{0,1\}^n$. An HGP is \emph{monotone} if its vertex labels do not have negated variables. The \emph{size} $|P|$ of an HGP $P$ is the size $|H|$ of the underlying hypergraph $H = (V,E)$, which is $|V| + |E|$.

The following observation shows that monotone HGPs capture the computational power of hypergraph functions. We remind the reader that a \emph{subfunction} of a Boolean function $f$ is obtained from $f$ using two operations: (1) fixing some of its variables to $0$ or $1$, and (2) renaming (in particular, identifying) some of the variables in $f$. A hypergraph $H$ is said to be of \emph{degree at most} $d$ if every vertex in it belongs to at most $d$ hyperedges; $H$ is of \emph{degree} $d$ if every vertex in it belongs to exactly $d$ hyperedges.

\begin{proposition}\label{hyper:program}
\textup{(}i\textup{)} Any monotone HGP based on a hypergraph $H$ computes a subfunction of the hypergraph function $f_H$.

\textup{(}ii\textup{)} For any hypergraph $H$ of degree at most $d$, there is a monotone HGP of size $O(|H|)$ that computes $f_H$ and such that its hypergraph is of degree at most $\max(2,d)$.
\end{proposition}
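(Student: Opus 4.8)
The plan is to handle the two directions separately, since (i) is a simple specialisation argument while (ii) needs an explicit gadget construction. For (i), I would observe that the only differences between a monotone HGP on $H = (V,E)$ and the function $f_H$ are that the HGP makes every hyperedge freely available (there are no edge variables $p_e$ gating their use) and that it attaches a label $\ell(v) \in \{0, 1, p_i\}$ to each vertex in place of the dedicated vertex variable $p_v$. Both features can be recovered from $f_H$ by the two subfunction operations. Concretely, I fix every edge variable $p_e$ to $1$, which deletes the conjuncts $\bigwedge_{e \in E'} p_e$ from~(\ref{hyper}); then, for each vertex $v$, I set $p_v := 0$ if $\ell(v) = 0$, set $p_v := 1$ if $\ell(v) = 1$, and rename $p_v$ to the input variable $p_i$ if $\ell(v) = p_i$ (identifying the vertex variables of all vertices carrying the same label). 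Reading off~(\ref{hyper}) after these substitutions yields exactly ``there is an independent $E'$ with $\ell(v) = 1$ for every $v \in V \setminus V_{E'}$'', which is the definition of the HGP output; monotonicity is what guarantees that no negated literal is ever needed, so renaming alone suffices.

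For (ii), the starting point is that the naive HGP on $H$ with each $v$ labelled $p_v$ already realises $f_H$ with all edge variables forced to $1$: a $0$-labelled vertex must be covered, which is precisely the requirement $\alpha_v = 1$ for $v \notin V_{E'}$. So the entire task is to re-install the gating condition ``$e$ may be used only if $p_e = 1$'' without disturbing this coverage behaviour. I would do so by a local gadget, replacing each hyperedge $e$ with two hyperedges on two fresh vertices $g_e, h_e$: the use-edge $U_e = e \cup \{g_e\}$ and the guard-edge $P_e = \{g_e, h_e\}$, where $g_e$ carries the constant label $1$ and $h_e$ carries the label $p_e$ (the original vertices keep their labels $p_v$). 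The shape of this gadget is forced by the observation that a zero placed inside an edge \emph{compels} using that edge, whereas here I need the opposite, to \emph{forbid} using it when $p_e = 0$; hence the controlling zero $h_e$ must sit outside $U_e$ and be entangled with it only through the shared vertex $g_e$.

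Then I verify both implications. When $p_e = 0$ the vertex $h_e$ is a zero coverable only by $P_e$, and since $P_e$ and $U_e$ share $g_e$, independence makes $U_e$ unavailable, so $e$ cannot be used; when $p_e = 1$, $h_e$ needs no cover and $U_e$ is freely available, covering exactly the original vertices of $e$. Formally, given an independent $E'' \subseteq E$ witnessing $f_H = 1$, the family $\{U_e : e \in E''\} \cup \{P_e : p_e = 0\}$ is independent and covers all zeros; conversely, from a witnessing independent family $F$ for the HGP, the set $\{e : U_e \in F\}$ witnesses $f_H = 1$, since $F$'s independence forces $p_e = 1$ on every used $e$ by the argument above, and any uncovered original zero would contradict coverage by $F$. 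Finally I would check the bookkeeping: the gadget adds $2|E|$ vertices and $2|E|$ edges, so $|H'| = O(|H|)$; each original $v$ lies in the same $U_e$'s as before and hence in at most $d$ hyperedges, each $g_e$ lies in exactly $U_e$ and $P_e$, and each $h_e$ in one edge, giving maximum degree $\max(2, d)$; and all labels are positive literals or constants, so the HGP is monotone.

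I expect the main obstacle to be the gadget in (ii). The tempting first attempt, simply adding a $p_e$-labelled vertex to $e$, backfires because a zero inside an edge forces rather than forbids its use, so the crux is recognising that the gating vertex must be external and coupled to the use-edge through a constant-$1$ vertex, so that it is the independence constraint, and not coverage, that performs the blocking.
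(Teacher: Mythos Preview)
Your proposal is correct and matches the paper's proof essentially verbatim: for (i) the paper also fixes all $p_e$ to $1$ and substitutes the vertex labels for the $p_v$, and for (ii) the paper uses exactly your gadget (their $a_e,b_e,e'$ are your $g_e,h_e,P_e$, and their modified $e$ is your $U_e$). Your write-up is in fact more thorough than the paper's, spelling out both directions of the correctness argument and the degree/size bookkeeping.
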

\begin{proof}
To show (\emph{i}), it is enough to replace the vertex variables $p_v$ in $f_H$ by the corresponding vertex labels of the given HGP and fix all the edge variables $p_e$ to $1$.

For (\emph{ii}), given a hypergraph $H = (V,E)$, we label each $v \in V$ by the variable $p_v$. For each $e \in E$, we add  a fresh vertex $a_e$ labelled by $1$ and a fresh vertex $b_e$ labelled by~$p_e$; then we create a new hyperedge $e' = \{a_e, b_e\}$ and add $a_e$ to the hyperedge $e$. We claim that the resulting HGP $P$
computes $f_H$. Indeed, for any input $\avec{\alpha}$ with $\avec{\alpha}(p_e) = 0$, we have to include the edge $e'$ into the cover, and so cannot include the edge $e$ itself. Thus, $P(\avec{\alpha})=1$ iff there is an independent set $E$ of hyperedges with $\avec{\alpha}(p_e)=1$, for all $e\in E$, covering all zeros of the variables $p_v$. 
\end{proof}

In some cases, it will be convenient to use \emph{generalised HGPs} that allow hypergraph vertices to be labelled by conjunctions $\bigwedge_{i} \li_i$ of literals $\li_i$. 
The following proposition shows that this generalisation does not increase the computational power of HGPs.

\begin{proposition}\label{lem:genhgps}
For every generalised HGP $P$ over $n$ variables, there is an HGP $P^\prime$ computing the same function and such that $|P'| \le n \cdot |P|$.
\end{proposition}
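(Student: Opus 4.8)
The plan is to eliminate conjunctive labels one vertex at a time by \emph{splitting} each generalised vertex into one ordinary vertex per conjunct, while keeping the hyperedge structure essentially intact. First I would normalise the labels. Since there are only $n$ variables, any conjunction of literals may be assumed to mention each literal at most once; a conjunction containing a complementary pair $p \wedge \neg p$ (or an explicit $0$) is identically false and can be relabelled by the constant $0$, and the empty conjunction is relabelled by $1$. After this clean-up every remaining conjunctive label consists of at most $n$ distinct literals. Vertices already carrying a single literal or a constant are left untouched.

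Next, for a vertex $v$ labelled $\li_1 \wedge \dots \wedge \li_k$, I would introduce fresh vertices $v_1, \dots, v_k$ with $v_i$ labelled by $\li_i$, delete $v$, and in \emph{every} hyperedge $e$ that contained $v$ replace the occurrence of $v$ by the whole block $v_1, \dots, v_k$. No hyperedges are created or destroyed, and no hyperedge changes except through this block substitution; in particular each copy $v_i$ lies in exactly those hyperedges that previously contained $v$. Doing this for all conjunctive vertices yields an ordinary HGP $P'$.

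The correctness argument then has two parts. (1) The hyperedges of $P'$ are in bijection with those of $P$, and this bijection preserves independence: two hyperedges of $P'$ meet in the block $v_1, \dots, v_k$ iff both contained $v$, i.e.\ iff the originals met in $v$, and unsplit vertices are shared exactly as before; so a family is independent in $P'$ iff the matching family is independent in $P$. (2) Fix an input $\avec{\alpha}$. The block $v_1, \dots, v_k$ contains a zero-vertex iff some $\li_i$ evaluates to $0$, i.e.\ iff $\li_1 \wedge \dots \wedge \li_k$ evaluates to $0$, i.e.\ iff $v$ was a zero-vertex in $P$; and since all copies sit in precisely the same hyperedges, any single hyperedge that covers one zero-copy covers the entire block at once. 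Hence an independent family covers all zeros in $P'$ iff the corresponding family covers all zeros in $P$, so $P'(\avec{\alpha}) = P(\avec{\alpha})$ for every $\avec{\alpha}$.

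Finally I would bound the size: each generalised vertex spawns at most $n$ ordinary vertices, so $|V'| \le n|V|$, while the number of hyperedges is unchanged, $|E'| = |E|$; therefore $|P'| = |V'| + |E'| \le n|V| + |E| \le n(|V| + |E|) = n|P|$. The one genuinely delicate point is part (2) of correctness, namely ruling out a cover that ``partially'' satisfies a conjunction by covering only some copies via unrelated hyperedges. This is exactly what is prevented by the block construction: because every copy of $v$ occurs only in the hyperedges that used to contain $v$, the copies are necessarily covered together as an indivisible unit, faithfully mimicking the all-or-nothing semantics of the conjunctive label.
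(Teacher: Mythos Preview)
Your proposal is correct and follows essentially the same construction as the paper: split each conjunctively labelled vertex into one ordinary vertex per literal and place all copies into exactly the hyperedges that contained the original. The paper's proof is a terse two-sentence version of your argument (it simply asserts the distinct-variable assumption ``without loss of generality'' and says ``it is easy to see'' for correctness), whereas you spell out the normalisation of labels and the independence/coverage bijection explicitly; the underlying idea and size bound are identical.
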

\begin{proof}
To construct $P^\prime$, we split every vertex $v$ of $P$ labelled with $\bigwedge_{i=1}^{k} \li_i$ into $k$ new vertices $v_1, \dots, v_k$ and label $v_i$ with $\li_i$, for $1 \leq i \leq k$ (without  loss of generality, we can assume that $\li_i$ and $\li_j$ have distinct variables for $i \ne j$); each hyperedge containing $v$ will now contain all the $v_i$. It is easy to see that $P(\avec{\alpha}) = {P^\prime}(\avec{\alpha})$, for any input $\avec{\alpha}$. Since $k \leq n$, we have $|P'| \le n \cdot |P|$.
\end{proof}


\section{OMQs, hypergraphs and monotone hypergraph programs}\label{sec:OMQs&hypergraphs}

We now establish a correspondence between the structure of OMQs and hypergraphs.

\subsection{OMQs with ontologies of depth 2}\label{sec5.1}

To begin with, we show that every hypergraph $H = (V, E)$ can be represented by a polynomial-size OMQ $\omq_H = (\T,\q)$ with $\T$ of depth~$2$. With every vertex $v\in V$ we associate a unary predicate $A_v$, and with every hyperedge $e \in E$ a unary predicate $B_e$ and a binary predicate $R_e$. We define $\T$ to be the set of the following axioms, for $e \in E$: 
\begin{equation*}
B_e(x) \ \ \to \ \ \exists y \,\bigl[ \hspace*{-0.5em}\bigwedge_{e \cap e' \neq \emptyset,\  e \ne e'} \hspace*{-1.5em}R_{e'}(x,y)  \ \land \
\bigwedge_{v \in e} A_v(y) 
\ \land \  \exists z\, R_{e}(z,y)\bigr].
\end{equation*}
Clearly, $\T$ is of depth 2. We also take the Boolean CQ $\q$ with variables $y_v$, for $v\in V$, and $z_e$, for $e\in E$:
\begin{equation*}
\q  \ \ = \ \  \bigl\{\, A_v(y_v) \mid v \in V \,\bigr\} \ \ \cup \ \ \bigl\{\,
R_e(z_e, y_v) \mid v \in e, \text{ for } v \in V \text{ and } e \in E \,\bigr\}.
\end{equation*}
\begin{example}
Consider again the hypergraph from Example~\ref{ex:simple hyper}, which we now denote by $H = (V,E)$ with $V = \{v_1,v_2,v_3\}$, $E = \{e_1,e_2\}$, $e_1 = \{v_1,v_2\}$ and $e_2 = \{v_2,v_3\}$. The CQ $\q$ and the canonical models $\C_{\T}^{B_{\smash{e_i}}(a)}$, for $i=1,2$, are shown in Fig.~\ref{fig:QH:simple hyper} along with four tree witnesses for $\omq_H$ (as explained in Remark~\ref{ignored}, we ignore the two extra tree witnesses generated only by normalisation predicates).
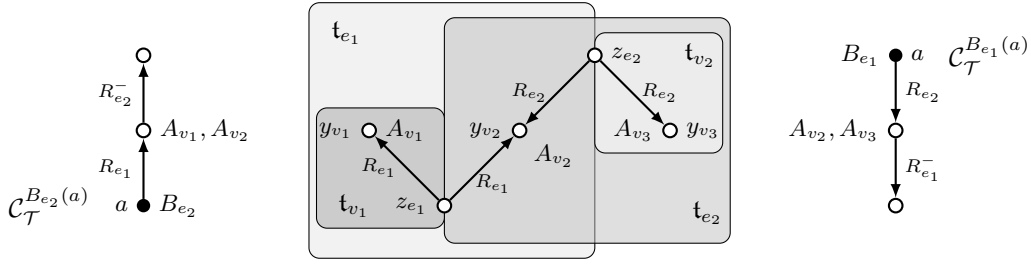
\begin{figure}[t]%
\centering%
\begin{tikzpicture}\footnotesize
\node[bpoint, label=left:{$B_{e_1}$}, label=right:{$a$}] (a) at (7,2) {};
\node[point, label=left:{$A_{v_2},A_{v_3}$}] (d1) at (7,1) {};
\node[point, fill=white] (d2) at (7,0) {};
\draw[can,->] (a)  to node [right]{\scriptsize $R_{e_2}$} (d1);
\draw[can,->] (d1)  to node [right]{\scriptsize $R_{e_1}^-$} (d2);
\node at (8.25,2) {\normalsize $\C_{\T}^{B_{e_1}(a)}$};
\node[bpoint, label=right:{$B_{e_2}$}, label=left:{$a$}] (a) at (-3,0) {};
\node[point, label=right:{$A_{v_1},A_{v_2}$}] (d1) at (-3,1) {};
\node[point, fill=white] (d2) at (-3,2) {};
\draw[can,->] (a)  to node [left]{\scriptsize $R_{e_1}$} (d1);
\draw[can,->] (d1)  to node [left]{\scriptsize $R_{e_2}^-$} (d2);
\node at (-4.25,0) {\normalsize $\C_{\T}^{B_{e_2}(a)}$};
\coordinate (czv1) at (0,1);
\coordinate (czv1p) at (-0.7,1);
\coordinate (czv1pp) at (-0.8,1);
\coordinate (cze1) at (1,0);
\coordinate (czv2) at (2,1);
\coordinate (cze2) at (3,2);
\coordinate (czv3) at (4,1);
\coordinate (czv3p) at (4.7,1);
\coordinate (czv3pp) at (4.8,1);
\node[draw,thin,fill=gray!10,rounded corners,inner xsep=0mm,inner ysep=7mm,fit=(cze2) (cze1) (czv1pp)] {};
\node[draw,thin,fill=gray!50,fill opacity=0.5,rounded corners,inner xsep=0mm,inner ysep=5mm,fit=(cze1) (cze2) (czv3pp)] {};
\node[draw,thin,fill=gray!40,rounded corners,inner xsep=0mm,inner ysep=3mm,fit=(cze1) (czv1p)] {};
\node[draw,thin,fill=gray!15,rounded corners,inner xsep=0mm,inner ysep=3mm,fit=(cze2) (czv3p)] {};
\node at (-0.2,0) {\normalsize $\t_{v_1}$};
\node at (-0.3,2.3) {\normalsize $\t_{e_1}$};
\node at (4.4,2) {\normalsize $\t_{v_2}$};
\node at (4.5,-0.1) {\normalsize $\t_{e_2}$};
\node[point,label=left:{$y_{v_1}$},label=right:$A_{v_1}$] (zv1) at (czv1) {};
\node[point,label=left:{$z_{e_1}$}] (ze1) at (cze1) {};
\node[point,label=left:{$y_{v_2}$},label=below right:{$A_{v_2}$}] (zv2) at (czv2) {};
\node[point,label=right:$z_{e_2}$] (ze2) at (cze2) {};
\node[point,label=right:{$y_{v_3}$},label=left:{$A_{v_3}$}] (zv3) at (czv3) {};
\draw[<-,query] (zv1) to node[left] {\scriptsize$R_{e_1}$} (ze1);
\draw[<-,query] (zv2) to node[right,near end] {\scriptsize$R_{e_1}$} (ze1);
\draw[->,query] (ze2) to node[left] {\scriptsize$R_{e_2}$} (zv2);
\draw[->,query] (ze2) to node[right] {\scriptsize$R_{e_2}$} (zv3);
\end{tikzpicture}%
\caption{The OMQ $\omq_H$ for $H$ from Example~\ref{ex:simple hyper} and its tree witnesses.}\label{fig:QH:simple hyper}
\end{figure}
\end{example}

It is not hard to see that the number of tree witnesses for $\omq_H$ does not exceed $|H|$. 
Indeed, all the tree witnesses for $\omq_H$ fall into two types:
\begin{align*}
& \t_v=(\ti, \tr) \text{ with } \tr =\{z_e \mid v \in e\} \text{ and } \ti = \{y_v\}, \ \ \text{ for } v\in V \text{ that belong to a single } e\in E;\\
%
%
& \t_e=(\ti, \tr) \text{ with } \tr = \{z_{e'} \mid e \cap e' \neq \emptyset, e \ne e' \} \text{ and } \ti = \{z_e\} \cup\{y_v \mid v \in e\},\quad \text{ for } e\in E.
%
%
%
\end{align*}

We call a hypergraph $H'$ a \emph{subgraph} of a hypergraph $H = (V,E)$ if $H'$ can be obtained from $H$ by (\emph{i}) removing some of its hyperedges and (\emph{ii}) removing some of its vertices from both $V$ and the hyperedges in $E$.

\begin{theorem}\label{hg-to-query}
\textup{(}i\textup{)} Any hypergraph $H$ is isomorphic to a subgraph of $\HG{\omq_H}$.

\textup{(}ii\textup{)} Any monotone HGP $P$ based on a hypergraph $H$ computes a subfunction of the primitive evaluation function $f^\vartriangle_{\omq_H}$.
\end{theorem}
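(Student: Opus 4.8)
The plan is to treat the two claims separately: \textup{(}i\textup{)} is a direct combinatorial identification of a subgraph, while \textup{(}ii\textup{)} reduces, via Proposition~\ref{hyper:program}\textup{(}i\textup{)}, to showing that the hypergraph function $f_H$ is itself a subfunction of the primitive evaluation function $f^\vartriangle_{\omq_H}$.

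For \textup{(}i\textup{)}, I would exhibit the isomorphism explicitly. The vertices of $\HG{\omq_H}$ are the atoms of $\q$, namely the $A_v(y_v)$ for $v\in V$ together with the $R_e(z_e,y_v)$ for $v\in e$, and its hyperedges are the sets $\q_\t$ ranging over the tree witnesses $\t_v$ and $\t_e$ listed before the theorem. I would delete from $\HG{\omq_H}$ all vertices of the form $R_e(z_e,y_v)$ and all hyperedges $\q_{\t_v}$, keeping only the vertices $A_v(y_v)$ and the hyperedges $\q_{\t_e}$. The map $v\mapsto A_v(y_v)$ and $e\mapsto\q_{\t_e}$ is then the claimed isomorphism: the only point to verify is that, after deleting the $R$-atom vertices, the restricted hyperedge $\q_{\t_e}$ is exactly $\{A_v(y_v)\mid v\in e\}$. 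This follows from the definition of $\q_\t$ together with the fact that, for $\t_e=(\tr,\ti)$, one has $y_v\in\ti$ precisely when $v\in e$, so $A_v(y_v)\in\q_{\t_e}$ iff $v\in e$. Incidence and bijectivity are then immediate, since distinct hyperedges of $H$ have distinct vertex sets and hence give distinct restricted $\q_{\t_e}$.

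For \textup{(}ii\textup{)}, the key step is the lemma that $f_H$ is a subfunction of $f^\vartriangle_{\omq_H}$, obtained by fixing every binary variable $R_e$ to $1$ and renaming $A_v\mapsto p_v$ and $B_e\mapsto p_e$. Writing $\avec{\gamma}$ for the induced valuation, the data instance $\A(\avec{\gamma})$ has the single constant $a$ carrying every self-loop $R_e(a,a)$, the label $A_v(a)$ iff $p_v=1$, and the seed $B_e(a)$ iff $p_e=1$. In the canonical model the only relevant elements are $a$, a depth-$1$ witness $w_e$ and a depth-$2$ witness $w'_e$ for each $e$ with $p_e=1$; here $A_v(w_e)$ holds iff $v\in e$, the depth-$2$ witnesses carry no $A_v$, the only $R_e$-successor of $w'_e$ is $w_e$, and the $R_e$-successors of $a$ are $a$ itself and those $w_{e'}$ with $e'\cap e\neq\emptyset$, $e'\neq e$, $p_{e'}=1$ (the cross-edges built into the axiom). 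I would then show $\T,\A(\avec{\gamma})\models\q$ iff $f_H=1$. For ($\Leftarrow$), given an independent $E'$ witnessing $f_H=1$, I would map $z_e\mapsto w'_e$ and $y_v\mapsto w_e$ for $e\in E'$, $v\in e$, map every uncovered $y_v\mapsto a$, and map each remaining $z_e\mapsto a$, checking that the cross-edges and self-loops satisfy the atoms $R_e(z_e,y_v)$. For ($\Rightarrow$), from a homomorphism $h$ I would set $E'=\{e\mid h(z_e)=w'_e\}$ and verify that it witnesses $f_H=1$. Finally, since a monotone HGP on $H$ computes a subfunction of $f_H$ by Proposition~\ref{hyper:program}\textup{(}i\textup{)}, and the subfunction relation is transitive, every such HGP computes a subfunction of $f^\vartriangle_{\omq_H}$.

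The main obstacle is the soundness direction ($\Rightarrow$) of the lemma: I must rule out spurious homomorphisms created by the self-loops. The crucial observations are that each $z_e$ can map only to $a$ or to $w'_e$ (no other element has an outgoing $R_e$-edge), that $h(z_e)=w'_e$ forces every $y_v$ with $v\in e$ to $w_e$, and that a vertex $v\notin V_{E'}$ must satisfy $h(y_v)=a$ and hence $p_v=1$ (otherwise $h(y_v)=w_e$ for some $e\ni v$, which forces $h(z_e)=w'_e$ and $e\in E'$, a contradiction). Together these yield both the independence of $E'$ and the coverage of all zeros, matching $f_H$ exactly.
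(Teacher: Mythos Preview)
Your proposal is correct and follows essentially the same approach as the paper: the same isomorphism $v\mapsto A_v(y_v)$, $e\mapsto\q_{\t_e}$ for part~(i), and the same homomorphism construction and case analysis for part~(ii). The only organisational difference is that you factor (ii) through Proposition~\ref{hyper:program}(i) and the intermediate lemma that $f_H$ itself is a subfunction of $f^\vartriangle_{\omq_H}$ (hence letting $B_e$ track $p_e$), whereas the paper argues directly for the HGP by fixing all $B_e=1$; your route proves a slightly stronger intermediate fact but with the same underlying argument.
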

\begin{proof}
(\emph{i}) An isomorphism between $H$ and a subgraph of $\HG{\omq_H}$ can be established by the map $v \mapsto A_v(y_v)$, for $v \in V$, and $e \mapsto \q_{\t_e}$, for $e \in E$. 

(\emph{ii}) Suppose that $P$ is based on a hypergraph $H= (V, E)$. Given an input $\avec{\alpha}$ for $P$, we define an assignment $\avec{\gamma}$ for the predicates in $\omq_H = (\T,\q)$ by taking $\avec{\gamma}(A_v)$ to be the value of the label of $v$ under  $\avec{\alpha}$, $\avec{\gamma}(B_e) = 1$, 
$\avec{\gamma}(R_e)=1$ (and of course $\avec{\gamma}(P_\zeta) = 0$, for all normalisation predicates $P_\zeta$).
By the definition of $\T$, for each $e\in E$, the canonical model $\C_{\T,\A(\avec{\gamma})}$ contains labelled nulls  $w_e$ and $w'_{e}$ such that
\begin{equation*}
\C_{\T,\A(\avec{\gamma})} \models \bigwedge_{e \cap e' \neq \emptyset, \ e \ne e'} \hspace*{-1em}R_{e'}(a, w_e) \ \land \ \bigwedge_{v \in e} A_v(w_e) \ \land \  R_e(w'_e, w_e).
\end{equation*}
We now show that $P(\avec{\alpha}) = 1$ iff $f^\vartriangle_{\omq_H}(\avec{\gamma})= 1$ (iff $\T,\A(\avec{\gamma}) \models \q$).
Suppose $P(\avec{\alpha}) = 1$, that is, there is an independent subset $E' \subseteq E$ such that the label of each 
$v \notin \bigcup E'$ evaluates to $1$ under $\avec{\alpha}$. Then the map $h \colon \q \to \C_{\T,\A(\avec{\gamma})}$ defined by taking
\begin{equation*}
h(z_e)=\begin{cases} w'_e,& \mbox{if } e \in E',\\
a,& \mbox{ otherwise,}
\end{cases} 
\qquad
h(y_v)=\begin{cases} w_e,& \mbox{if } v \in e \in E',\\
a,& \mbox{ otherwise}
\end{cases}
\end{equation*}
is a homomorphism witnessing $\C_{\T,\A(\avec{\gamma})} \models \q$, whence $f^\vartriangle_{\omq_H}(\avec{\gamma})= 1$.

Conversely, if $f^\vartriangle_{\omq_H}(\avec{\gamma})= 1$ then there is a homomorphism
$h\colon \q \to \C_{\T,\A(\avec{\gamma})}$. For any hyperedge $e\in E$, there are only two options for $h(z_e)$: either $a$ or  $w'_e$. It follows that the set $E' = \{e \in E \mid  h(z_e) = w'_e\}$ is independent and covers all zeros. Indeed, if $v\notin\bigcup E'$
then $h(y_v) = a$, and so the label of $v$ evaluates to $1$ under $\avec{\alpha}$ because $A_v(y_v) \in \q$.
\end{proof}

Next, we establish a tight correspondence between hypergraphs of degree at most~2 and OMQs with ontologies of depth 1.


\subsection{Hypergraphs of Degree 2 and OMQs with Ontologies of Depth 1}\label{sec:depth1}

\begin{theorem}\label{depth1}
For any OMQ $\omq = (\T,\q)$ with $\T$ of depth~$1$, the hypergraph $\HG{\omq}$ is of degree at most~2 and $|\HG{\omq}|\leq 2|\q|$. 
\end{theorem}
\begin{proof}
We have to show that every atom in $\q$ belongs to at most two $\q_\t$, $\t\in\twset$.
Suppose $\t = (\tr,\ti)$ is a tree witness for $\omq$ and $y \in \ti$. Since $\T$ is of depth~1, $\ti = \{y\}$ and $\tr$ consists of all the variables in $\q$ adjacent to $y$ in the Gaifman graph $G_\q$ of $\q$.
Thus, different tree witnesses have different internal variables $y$. An atom of the form $A(u)\in\q$ is in~$\q_\t$ iff $u = y$. An atom of the form $P(u,v)\in\q$ is in $\q_\t$ iff either $u = y$ or $v=y$. Therefore, $P(u,v)\in\q$ can only be covered by the tree witness with internal $u$ and by the tree witness with internal $v$.
\end{proof}

Conversely, we show now that any hypergraph $H$ of degree 2 is isomorphic to $\HG{\OMQI{H}}$, for some OMQ $\OMQI{H} = (\T,\q)$ with $\T$ of depth~1.
We can assume that $H = (V, E)$ comes with two fixed maps $i_1,i_2\colon V \to E$ such that for every  $v\in V$, we have $i_1(v) \ne i_2(v)$, $v \in i_1(v)$ and $v \in i_2(v)$.
For any $v \in V$, we fix 
a binary predicate $R_v$, and  let the ontology $\T$ in $\OMQI{H}$ contain the following axioms, for $e \in E$:
\begin{equation*}
A_e(x) \  \to \  \exists y\, \bigl[ \bigwedge_{\begin{subarray}{c}v\in V\\i_1(v) = e\end{subarray}} R_v(y,x) \ \ \land \bigwedge_{\begin{subarray}{c}v\in V\\i_2(v) = e\end{subarray}} R_v(x,y)\bigr].
\end{equation*}
Clearly, $\T$ is of depth 1. The Boolean CQ $\q$ contains variables $z_e$, for $e\in E$, and is defined by taking 
\begin{equation*}
\q ~=~ \bigl\{\, R_v(z_{i_1(v)}, z_{i_2(v)}) \mid v\in V\,\bigr\}.
\end{equation*}
\begin{example}\label{example1}
Suppose that $H = (V, E)$, where $V = \{v_1, v_2, v_3, v_4\}$, $E = \{e_1, e_2, e_3\}$ and $e_1 = \{v_1, v_2, v_3\}$, $e_2 = \{v_3, v_4\}$, $e_3 = \{v_1, v_2, v_4\}$. Let
\begin{align*}
& i_1\colon v_1\mapsto e_1, \quad v_2\mapsto e_3, \quad v_3 \mapsto e_1, \quad v_4 \mapsto e_2,\\
& i_2\colon v_1\mapsto e_3, \quad v_2 \mapsto e_1, \quad v_3 \mapsto e_2, \quad  v_4 \mapsto e_3.
\end{align*}
The hypergraph $H$ and the query $\q$ are shown in Fig.~\ref{fig:depth1}: each $R_{v_k}$ is represented by an edge, $i_1(v_k)$ is indicated by the circle-shaped end of the edge and $i_2(v_k)$ by the diamond-shaped end of the edge; the $e_j$ are shown as large grey squares.
\begin{figure}[t]%
\centering%
\begin{tikzpicture}[scale=1.2]
\coordinate (c2) at (-1,0.75);
\coordinate (c1) at (-1.5,1.5);
\coordinate (c3) at (0.5,1.5);
\coordinate (c4) at (-0.25,0) {};
\node[draw=black,fill=gray!20,ultra thin,rounded corners=12,fit=(c1) (c2) (c4),inner sep=20] {};
\node[draw=black,ultra thin,rounded corners=12,fit=(c1) (c2) (c4),inner sep=20] {};
\node[draw=black,fill=gray!60,fill opacity=0.5,ultra thin,rounded corners=12,fit=(c3) (c4),inner sep=15] {};
\node[draw=black,ultra thin,rounded corners=12,fit=(c3) (c4),inner sep=15] {};
\node[draw=black,fill=gray!5,fill opacity=0.5,ultra thin,rounded corners=12,fit=(c1) (c2) (c3),inner sep=10] {};
\node[draw=black,ultra thin,rounded corners=12,fit=(c1) (c2) (c3),inner sep=10] {};
\node[point,fill=white,label=left:{$v_2$}] (v2) at (c2) {};
\node[point,fill=white,label=right:{$v_1$}] (v1) at (c1) {};
\node[point,fill=white,label=left:{$v_3$}] (v3) at (c3) {};
\node[point,fill=white,label=right:{$v_4$}] (v4) at (c4) {};
\node at (-1.5,0) {\large $e_3$};
\node at (0.6,0) {\large $e_2$};
\node at (-0.25,1) {\large $e_1$};
\coordinate (v12) at (2,0.3); 
\coordinate (v22) at (3.3,1.3); 
\coordinate (v42) at (2.6,-0.1); 
\coordinate (v11) at (3,1.7); 
\coordinate (v21) at (2.6,0.3); 
\coordinate (v31) at (3.6,1.7); 
\coordinate (v32) at (4.5,0.3); 
\coordinate (v41) at (4.1,-0.1); 
\coordinate (v61) at (4.6,-0.1); 
\node[rectangle,draw=black,fill=gray!30,ultra thin,rounded corners=12,fit=(v41) (v61) (v32),inner sep=15]  (e2) {};
\node at ($(e2)+(0.2,-0.3)$) {\large $z_{e_2}$};
\node[rectangle,draw=black,fill=gray!5,ultra thin,rounded corners=12,fit=(v11) (v22) (v31),inner sep=15] (e1) {\large $z_{e_1}$};
\node[rectangle,draw=black,fill=gray!20,ultra thin,rounded corners=12,fit=(v12) (v21) (v42),inner sep=15] (e3) {};
\node at ($(e3)+(-0.1,-0.15)$) {\large $z_{e_3}$};
\begin{scope}[open diamond-*,line width=0.4mm,fill=gray,draw=black!80]
\draw (v12) to node[left, midway] {\small $R_{v_1}$} (v11) ;
\draw (v22) to node[right, pos=0.6] {\small $R_{v_2}$} (v21);
\draw (v42) to node[below, midway] {\small $R_{v_4}$}  (v41);
\draw (v32) to node[right, midway] {\small $R_{v_3}$} (v31);
\end{scope}
\node at (1.2,2) {\large $H$};
\node at (2,1.9) {\large $\q$};
\node at (6.2,-0.25) {\normalsize $\t^{e_1}$};
\node at (8.5,-0.25) {\normalsize $\C_\T^{A_{e_1}(a)}$};
\coordinate (v31c) at (6.05,1.75);
\coordinate (v22c) at (6.2,1.75);
\coordinate (v11c) at (6.35,1.75);
\coordinate (w1c) at (6.2,1.68);
\coordinate (w1c0) at (6.2,1.68);
\coordinate (v32c) at (6.05,0.25);
\coordinate (v21c) at (6.2,0.25);
\coordinate (v12c) at (6.35,0.25);
\coordinate (a1c) at (6.2,0.35);
\coordinate (a1c0) at (6.2,0.35);
\draw[line width=2.5mm,-triangle 60,postaction={draw, line width=6mm, shorten >=4mm, -},draw=gray!20] (6.2,0.25) -- (6.2,2);
\node[rectangle,draw=black,line width=0.3mm,rounded corners=5,fit=(w1c0) (w1c),inner xsep=20, inner ysep=10,label=right:{\small $z_{e_1}$}] (w1) {};
\node[rectangle,draw=black,line width=0.3mm,rounded corners=5,fit=(a1c0) (a1c),inner xsep=20, inner ysep=10,label=right:{\hspace*{-0.5em}\small\begin{tabular}{c}$z_{e_2}$\\$z_{e_3}$\end{tabular}}] (a1) {};
%
\begin{scope}[line width=0.25mm,draw=black]
\draw[open diamond-*]  (v32c) to (v31c); 
\draw[*-open diamond] (v21c) to (v22c); 
\draw[open diamond-*] (v12c) to (v11c); 
\end{scope}
\draw[densely dotted,bend left,looseness=0.8] (v31) to  (v31c);
\draw[densely dotted,bend left,looseness=0.6] (v22) to  (v22c);
\draw[densely dotted,bend left,looseness=0.8] (v11) to  (v11c);
\draw[densely dotted,bend right,looseness=0.2] (v32) to  (v32c); %
\draw[densely dotted,bend right,looseness=0.2] (v21) to  (v21c);
\draw[densely dotted,bend right,looseness=0.3] (v12) to  (v12c);
\node[bpoint, label=right:{$A_{e_1}$}, label=left:{\footnotesize $a$}] (aa) at (9,0.25) {};
\node[point] (ab) at (9,1.75) {};
\draw[can,->] (aa) to node[right] {\footnotesize\textcolor{gray}{$P_\zeta$}} node[left] {\footnotesize$R_{v_3}^-, R_{v_2}, R_{v_1}^-$} (ab);
\end{tikzpicture}%
\caption{Hypergraph $H$ in Example~\ref{example1}, its CQ $\q$, tree witness $\t^{e_1}$ for $\OMQI{H}$ and canonical model $\C_\T^{A_{e_1}(a)}$.}\label{fig:depth1}
\end{figure}
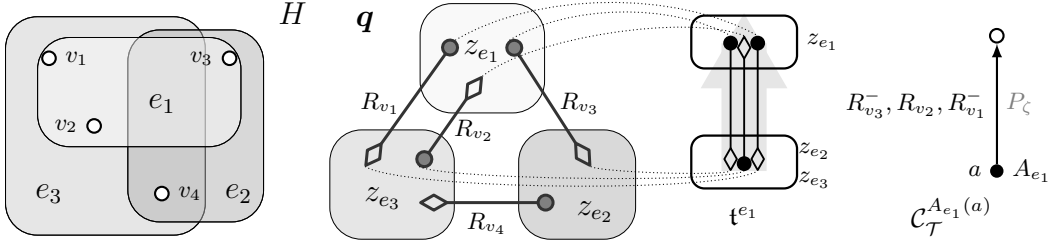
\noindent In this case,
\begin{equation*}
\q \ \ = \ \ \exists z_{e_1},z_{e_2},z_{e_3} \,\bigl(R_{v_1}(z_{e_1},z_{e_3}) \land R_{v_2}(z_{e_3},z_{e_1}) \land
R_{v_3}(z_{e_1},z_{e_2})\land R_{v_4}(z_{e_2},z_{e_3}) \bigr)
\end{equation*}
and $\T$ consists of the following axioms:
\begin{align*}
A_{e_1}(x) &\to 
\exists y\,  \bigl[R_{v_1}(y,x)\land R_{v_2}(x,y)\land R_{v_3}(y,x)\bigr],\\
A_{e_2}(x) &\to \exists y\, \bigl[  R_{v_3}(x,y) \land  R_{v_4}(y,x)\bigr], \\
A_{e_3}(x) &\to \exists y\,  \bigl[ R_{v_1}(x,y) \land R_{v_2}(y,x)\land R_{v_4}(x,y)\bigr].
\end{align*}
The canonical model $\C^{A_{\smash{e_1}}(a)}_{\T}$ is shown on the right-hand side of Fig.~\ref{fig:depth1}. Note that each $z_{e}$ determines the tree witness $\t^{e}$ with $\q_{\t^e} = \{R_v(z_{i_1(v)}, z_{i_2(v)}) \mid v \in e\}$; distinct $\t^e$ and $\t^{\smash{e'}}$ are conflicting iff $e \cap e' \ne \emptyset$. It follows that $H$ is isomorphic to $\HG{\OMQI{H}}$.
\end{example}
\begin{theorem}\label{representable}
\textup{(}i\textup{)} Any hypergraph $H$ of degree~2 is isomorphic to $\HG{\OMQI{H}}$.

\textup{(}ii\textup{)} 
Any monotone HGP $P$ based on a hypergraph $H$ of degree~2 computes a subfunction of the primitive evaluation function $f^\vartriangle_{\OMQI{H}}$.
\end{theorem}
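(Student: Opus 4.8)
The plan is to prove both parts by transferring the depth-2 arguments (Theorems~\ref{hg-to-query} and~\ref{depth1}) to the present construction, exploiting that $\T$ in $\OMQI{H}$ is of depth~$1$. Concretely, for every $e\in E$ the canonical model $\C_\T^{A_e(a)}$ consists of the root $a$ together with a single labelled null $w_e$ attached to it by the edges $R_v(w_e,a)$ for $i_1(v)=e$ and $R_v(a,w_e)$ for $i_2(v)=e$; and by Theorem~\ref{depth1} every tree witness has exactly one internal variable. These two facts drive everything.

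For (i), I would exhibit the isomorphism explicitly: send each $v\in V$ to the atom $R_v(z_{i_1(v)},z_{i_2(v)})$ and each $e\in E$ to the tree witness $\t^e$ with $\ti=\{z_e\}$, $\q_{\t^e}=\{R_v(z_{i_1(v)},z_{i_2(v)})\mid v\in e\}$, and $\tr$ the remaining variables $z_{e'}$ adjacent to $z_e$ (equivalently, those with $e'\cap e\neq\emptyset$ and $e'\neq e$). Two checks are needed. First, each $\t^e$ is genuinely a tree witness: since $i_1(v)\neq i_2(v)$ there are no self-loops, so mapping $z_e$ to the null of $\C_\T^{A_e(a)}$ and every neighbour to $a$ is a homomorphism generated by $A_e$, and for depth~$1$ this $\q_{\t^e}$ is automatically the minimal set required by the definition. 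Second, there are no further tree witnesses: by Theorem~\ref{depth1} distinct tree witnesses have distinct internal variables, the internal variable must be one of the $z_e$, and we have just produced a tree witness with internal variable $z_e$ for every $e$; hence the tree witnesses are exactly $\{\t^e\mid e\in E\}$. The vertex and hyperedge maps are then bijections, and $v\in e$ iff $R_v(z_{i_1(v)},z_{i_2(v)})\in\q_{\t^e}$, so the assignment is an isomorphism.

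For (ii), I would mimic the proof of Theorem~\ref{hg-to-query}(ii). Given an input $\avec{\alpha}$ for the monotone HGP $P$ on $H$, define $\avec{\gamma}$ by setting $\avec{\gamma}(A_e)=1$ for all $e\in E$ and letting $\avec{\gamma}(R_v)$ be the value of the label of $v$ under $\avec{\alpha}$; since $P$ is monotone, this either fixes $R_v$ to a constant or identifies it with an input variable, which is precisely what makes $P$ compute a \emph{subfunction} of $f^\vartriangle_{\OMQI{H}}$. Then $\A(\avec{\gamma})$ contains every $A_e(a)$ and the loop $R_v(a,a)$ exactly when $v$'s label is~$1$. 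I would show $P(\avec{\alpha})=f^\vartriangle_{\OMQI{H}}(\avec{\gamma})$ in two directions. For $\Rightarrow$: from an independent $E'\subseteq E$ covering all zeros, map $z_e\mapsto w_e$ for $e\in E'$ and $z_e\mapsto a$ otherwise; independence forces at most one of $i_1(v),i_2(v)$ into $E'$, and a short case analysis confirms a homomorphism $\q\to\C_{\T,\A(\avec{\gamma})}$. For $\Leftarrow$: the only $R_v$-edges in $\C_{\T,\A(\avec{\gamma})}$ are $R_v(a,a)$, $R_v(w_{i_1(v)},a)$ and $R_v(a,w_{i_2(v)})$, so any homomorphism sends each $z_e$ to $a$ or $w_e$, and $E'=\{e\mid h(z_e)=w_e\}$ is the desired independent zero-cover.

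The main obstacle is proving independence of $E'$ in the $\Leftarrow$ direction: if $e,e'\in E'$ shared a vertex $v$, then $\{i_1(v),i_2(v)\}=\{e,e'\}$, so the atom for $v$ would have to map to an edge $R_v(w_e,w_{e'})$ between two \emph{distinct} nulls, which does not exist in a depth-$1$ canonical model. This is exactly where degree~$2$ (each $v$ lies in precisely two hyperedges) and depth~$1$ (nulls attach only to $a$) are simultaneously used. The covering property is routine: a vertex $v\notin\bigcup E'$ forces both its endpoints to $a$, so its atom maps to the loop $R_v(a,a)$, which is present only if $v$'s label is~$1$. Beyond this, the only delicate point is the counting step in (i) that upgrades the mere ``subgraph'' statement of Theorem~\ref{hg-to-query}(i) to a full isomorphism, and this is settled cleanly by the uniqueness of internal variables for depth-$1$ ontologies.
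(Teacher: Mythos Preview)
Your proof is correct and, for part~(i), essentially identical to the paper's: the same explicit isomorphism $v\mapsto R_v(z_{i_1(v)},z_{i_2(v)})$, $e\mapsto\t^e$, with the enumeration of tree witnesses via Theorem~\ref{depth1}.

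For part~(ii) you take a slightly different route. The paper does not work directly with the HGP $P$; instead it observes (via Proposition~\ref{hyper:program}(i)) that $P$ computes a subfunction of $f_H$, and then shows that $f_H$ itself is a subfunction of $f^\vartriangle_{\OMQI{H}}$ by setting $\avec{\gamma}(A_e)=\avec{\alpha}(p_e)$ and $\avec{\gamma}(R_v)=\avec{\alpha}(p_v)$. Your argument fixes $\avec{\gamma}(A_e)=1$ for all $e$ and plugs the HGP labels into the $R_v$, exactly as in the depth-2 proof of Theorem~\ref{hg-to-query}(ii). Both are fine: the paper's factoring through $f_H$ is marginally cleaner (it isolates the combinatorics of $H$ from the particular labelling of $P$ and yields the slightly stronger statement that $f_H$ is a subfunction), while your direct argument avoids the extra proposition but has to redo the ``label $\to$ subfunction'' step by hand. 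The core homomorphism construction and the independence argument in the $\Leftarrow$ direction are the same in both proofs.
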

\begin{proof}
(\emph{i}) We show that the map $g\colon v\mapsto R_v(z_{i_1(v)}, z_{i_2(v)})$ is an isomorphism between $H$ and $\HG{\OMQI{H}}$. By the definition of $\OMQI{H}$, $g$ is a bijection between $V$ and the atoms of $\q$. For any $e\in E$, there is a tree witness
$\t^e = (\tr^e, \ti^e)$ generated by $A_e(x)$ with
$\ti^e = \{z_e\}$ and $\tr^e = \{z_{e'} \mid e \cap e' \neq\emptyset, e \ne e'\}$,
and $\q_{\t^e}$ consists of the $g(v)$, for $v\in e$. Conversely, every tree witness $\t$ for $\OMQI{H}$ contains $z_e \in \ti$, for some $e\in E$, and so
$\q_\t = \{ g(v) \mid v \in e\}$.

\smallskip

(\emph{ii}) By Proposition~\ref{hyper:program}~(\emph{i}), $P$ computes a subfunction of $f_H$. Thus, it suffices to show that $f_H$ is a subfunction of $f^\vartriangle_{\OMQI{H}}$. Let $H = (V,E)$ be a hypergraph of degree~$2$.  For any $\avec{\alpha}\in \{0,1\}^{\smash{|H|}}$, we define~$\avec{\gamma}$ by taking $\avec{\gamma}(R_v) = \avec{\alpha}(p_v)$ for $v \in V$, $\avec{\gamma}(A_e) = \avec{\alpha}(p_e)$ for $e \in E$ (and $\avec{\gamma}(P_\zeta) = 0$ for all normalisation predicates $P_\zeta$). We prove that  $f_H(\avec{\alpha}) = 1$ iff \mbox{$\T, \A(\avec{\gamma}) \models \q$}. By the definition of $\T$, for each $e\in E$ with $A_e(a) \in\A(\avec{\gamma})$ or, equivalently, $\avec{\alpha}(p_e) = 1$, the canonical model $\C_{\T,\A(\avec{\gamma})}$ contains a labelled null $w_e$ such that 
\begin{equation*}
\C_{\T,\A(\avec{\gamma})} \models \bigwedge_{\begin{subarray}{c}v\in V\\i_1(v) = e\end{subarray}} R_v(w_e,a) \ \ \land \bigwedge_{\begin{subarray}{c}v\in V\\i_2(v) = e\end{subarray}} R_v(a,w_e).
\end{equation*}

$(\Rightarrow)$ Let $E'$ be an independent subset of $E$ such that $\bigwedge_{v \in V \setminus V_{E'}} p_v \land \bigwedge_{e \in E'} p_e$ is true on~$\avec{\alpha}$.
Define $h \colon \q \to \C_{\T, \A(\avec{\gamma})}$
by taking $h(z_e) = a$ if $e\notin E'$ and $h(z_e) = w_e$  otherwise.
One can check that $h$ is a homomorphism, and so $\T, \A({\avec{\gamma}}) \models \q$.

$(\Leftarrow)$ Given a homomorphism $h \colon \q \to \C_{\T, \A(\avec{\gamma})}$, we show that $E' = \{e \in E  \mid h(z_e) \neq a\}$ is independent. Indeed, if $e, e' \in E'$ and $v \in e \cap e'$, then $h$ sends one variable of the $R_v$-atom to the labelled null $w_e$ and the other end to  $w_{e'}$, which is impossible.  We claim  that $f_H(\avec{\alpha}) = 1$. Indeed, for each $v \in V\setminus V_{E'}$, $h$ sends both ends of the $R_v$-atom to $a$, and so $\avec{\alpha}(p_v) = 1$. For each $e \in E'$, we must have $h(z_e) = w_{e}$ because $h(z_e) \neq a$, and so $\avec{\alpha}(p_e) = 1$. It follows that $f_H(\avec{\alpha}) = 1$.
\end{proof}


\subsection{Tree-Shaped OMQs and Tree Hypergraphs}\label{sec:5.3}

We call an OMQ $\omq = (\T,\q)$ \emph{tree-shaped} if the CQ $\q$ is tree-shaped. We now  establish a close correspondence between tree-shaped OMQs and tree hypergraphs that are defined as follows.\!\footnote{Our definition of tree hypergraph is a minor variant of the notion of (sub)tree hypergraph (aka hypertree) from graph theory~\cite{Flament1978223,Brandstadt:1999:GCS:302970,Bretto:2013:HTI:2500991}.}

Suppose $T=(V_T,E_T)$ is an (undirected) tree. A leaf is a vertex of degree~1. A subtree $T' = (V'_T,E'_T)$ of $T$ is said to be \emph{convex} if, for any non-leaf vertex $u$ in the subtree~$T'$, we have $\{u,v\} \in E'_T$ whenever $\{u,v\} \in E_T$. 
A hypergraph $H = (V,E)$ is called a \emph{tree hypergraph} if there is a tree $T=(V_T,E_T)$ such that $V = E_T$ and every hyperedge $e\in E$ induces a convex subtree $T_e$ of $T$. In this case, we call $T$ the \emph{underlying tree} of~$H$.
The \emph{boundary} of a hyperedge $e$ consists of all leaves of $T_e$; the interior of $e$ is the set of non-leaves of~$T_e$.
%
%
A \emph{tree hypergraph program} (THGP) is an HGP based on a tree hypergraph. 

%

\begin{example}\label{ex:hypertree}
Let $T$ be the tree shown in Fig.~\ref{fig:tree}. Any tree hypergraph with underlying tree $T$ has the set of vertices
$\{\{1,2\}, \{2,3\}, \{2,6\}, \{3,4\}, \{4,5\}\}$ (each vertex is an edge of $T$),
and its hyperedges may include 
$\{\{1,2\}, \{2,3\}, \{2,6\}\}$ 
as the subtree of $T$ induced by these edges is convex, but not $\{\{1,2\},\{2,3\}\}$.
%
\begin{figure}[t]%
\centering%
\begin{tikzpicture}[xscale=2,yscale=0.7]
\coordinate (n1) at (0,0);
\coordinate (n3) at (2,0);
\coordinate (n6) at (1,1);
\node[draw,ultra thin,fill=gray!20,rounded corners,inner xsep=5mm,inner ysep=7mm,fit=(n1) (n3) (n6)] {};
\node[draw,ultra thin,fill=gray!5,rounded corners,inner xsep=3mm,inner ysep=5mm,fit=(n1) (n3)] {};
\node (1) at (0,0) [point,label=below:{\footnotesize $1$}]{};
\node (2) at (1,0) [point,label=below:{\footnotesize $2$}]{};
\node (3) at (2,0) [point,label=below:{\footnotesize $3$}]{};
\node (4) at (3,0) [point,label=below:{\footnotesize $4$}]{};
\node (5) at (4,0) [point,label=below:{\footnotesize $5$}]{};
\node (6) at (1,1) [point,label=above:{\footnotesize $6$}]{};
\draw[query] (1) to (2); 
\draw[query] (2) to (3); 
\draw[query] (3) to  (4); 
\draw[query] (4) to (5); 
\draw[query] (2) to  (6); 
\node (c) at (3.5,1.4) {\small\begin{tabular}{c}non-convex\\[-1pt]subtree\end{tabular}};
\node (d) at (-1.2,1.4) {\small\begin{tabular}{c}convex\\[-1pt]subtree\end{tabular}};
\draw[thin] (0,0.9) -- (d);
\draw[thin] (2,0.3) -- (c);
\end{tikzpicture}%
\caption{Tree $T$ in Example~\ref{ex:hypertree}.}\label{fig:tree}
\end{figure}
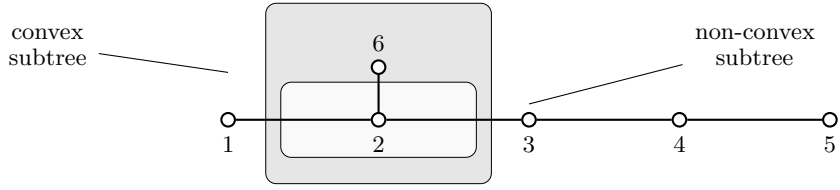
\end{example}


\begin{theorem}\label{prop:tree-shaped}
If an OMQ $\omq = (\T,\q)$ is tree-shaped, then $\HG{\omq}$ is isomorphic to a tree hypergraph. Furthermore, if $\q$ has at least one binary atom, then the number of leaves in the tree underlying $\HG{\omq}$ is the same as the number of leaves in $\q$.
\end{theorem}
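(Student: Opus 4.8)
The plan is to realise $\HG{\omq}$ as a tree hypergraph by building, from the Gaifman tree $G_\q$ of $\q$, a tree $T$ whose edge set is in bijection with the atoms of $\q$ (so that $V(\HG{\omq})=E_T$), arranged so that every hyperedge $\q_\t$ induces a convex subtree. Throughout I rely on the structural description of tree witnesses of a tree-shaped CQ, which I would record first as a lemma: for a tree witness $\t=(\tr,\ti)$, the set $\q_\t$ is exactly the set of atoms of $\q$ touching $\ti$, the internal variables $\ti$ induce a \emph{connected} subtree of $G_\q$, and $\tr$ is precisely the set of variables outside $\ti$ that are $G_\q$-adjacent to $\ti$. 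Since $G_\q$ is acyclic and $\ti$ is connected, every $v\in\tr$ then has a \emph{unique} neighbour in $\ti$ (two such neighbours would close a cycle together with a path inside $\ti$). This single-attachment property is the combinatorial fact that drives the whole argument, and pinning down the connectedness of $\ti$ is the delicate structural input.

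Next I construct $T$. I turn each variable $x$ of $\q$ into a gadget $P_x$, a path whose edges are labelled by the unary atoms $A(x)\in\q$ (if $x$ carries $u_x$ unary atoms then $P_x$ has $u_x$ edges; if $u_x=0$ it is a single vertex). A Gaifman edge $\{u,v\}$ may carry several binary atoms $P_1(u,v),\dots,P_k(u,v)$, and I replace it by a path of $k$ edges, one per binary atom, joining a vertex of $P_u$ to a vertex of $P_v$ through $k-1$ fresh degree-$2$ vertices. I choose the attachment points so as not to create spurious leaves: a Gaifman \emph{leaf} variable (degree $1$) has its single binary path attached to one end of $P_x$, leaving the other end as a leaf of $T$; a Gaifman \emph{internal} variable (degree $\ge 2$) has a binary path attached to \emph{each} of the two ends of $P_x$ (further binary paths, if any, attach to interior vertices of $P_x$), so that both ends of $P_x$ acquire degree $\ge 2$. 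Since $G_\q$ is a tree and all gadgets and subdivided edges are glued along single vertices, $T$ is a tree, and its edges biject with the atoms of $\q$.

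To verify convexity, fix $\t=(\tr,\ti)$. The $T$-edges lying in $\q_\t$ are the unary gadget-edges of the $\ti$-variables together with all binary-atom paths incident to $\ti$; as $\ti$ is connected and the corresponding gadgets are linked by these binary paths, $T_{\q_\t}$ is connected. For convexity I check that every non-leaf vertex of $T_{\q_\t}$ retains all of its $T$-edges. A vertex interior to the gadget of some $y\in\ti$, or a subdivision vertex on a binary path between two $\ti$-variables, has all incident atoms touching $\ti$, hence keeps every $T$-edge. The only other vertices of $T_{\q_\t}$ occur where a binary path enters the gadget of a boundary variable $v\in\tr$; by the single-attachment property $v$ has exactly one $\ti$-neighbour, so exactly one binary path reaches $P_v$ while all other edges of $P_v$ (its unary edges and its binary paths to non-$\ti$ neighbours) are excluded from $\q_\t$. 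Thus this entry vertex has degree $1$ in $T_{\q_\t}$, is a leaf, and imposes no convexity condition. Hence each $\q_\t$ is convex, proving that $\HG{\omq}$ is (isomorphic to) a tree hypergraph.

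Finally I count leaves, assuming $\q$ has at least one binary atom (so $G_\q$ has an edge and at least two variables). Subdividing binary multi-edges and threading unary atoms along the gadgets only introduces degree-$2$ vertices, so every leaf of $T$ is a free end of some $P_x$; by the attachment rule a Gaifman-leaf variable yields exactly one leaf of $T$ and a Gaifman-internal variable yields none. Therefore $T$ has the same number of leaves as $G_\q$, i.e.\ as $\q$. The hypothesis ``at least one binary atom'' is exactly what rules out the degenerate case of a single variable carrying only unary atoms, where $T$ would be a path with two leaves while $\q$ has no Gaifman leaf. I expect the main obstacle to be the two intertwined points around the unary atoms: establishing the connectedness of $\ti$ (so that the single-attachment property is available), and then placing the unary atoms \emph{inside} the gadgets rather than as ordinary pendant edges at internal variables — the naive pendant placement would both break convexity (by letting two binary paths meet an excluded unary edge at a $\tr$-variable) and create extra leaves, so the leaf-count claim genuinely forces the more careful threading described above.
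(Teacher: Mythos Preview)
Your proposal is correct and follows essentially the same route as the paper: build a tree $T$ whose edges biject with the atoms of $\q$ by subdividing Gaifman multi-edges and threading the unary atoms into the tree so that no new leaves appear, then verify that each $\q_\t$ induces a convex subtree. The paper's version is slightly leaner—rather than turning each variable into a path-gadget with flexible attachment points, it simply picks one incident (already subdivided) edge at each vertex and splices the unary atoms of that vertex into it—but the underlying idea and the leaf-count argument are the same; one minor point is that you should also treat loop atoms $P(x,x)$ as ``unary'' labels on $x$ (as the paper does), since otherwise they have no home in your construction.
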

\begin{proof}
The case when $\q$ has no binary atoms is trivial. Otherwise, let $G_\q$ be the Gaifman graph of $\q$ whose vertices $u$ are labelled with the unary atoms $\xi(u)$ in $\q$ of the form $A(u)$ and $P(u,u)$, and whose edges $\{u,v\}$ are labelled with the atoms of the form $P(u,v)$ and $P'(v,u)$ in $\q$.
We replace every edge $\{u,v\}$ labelled with $P_1(u_1',v_1'), \dots, P_n(u_n',v_n')$, for $n \ge 2$, by a sequence of $n$ edges forming a path from $u$ to $v$ and label them with $P_1(u_1',v_1'), \dots, P_n(u_n',v_n')$, respectively. In the resulting tree,  for every vertex $u$ labelled with $n$ unary atoms $\xi_1(u),\dots, \xi_n(u)$,  for $n\ge 1$, we pick an edge $\{u,v\}$ labelled with some $P(u',v')$ and replace it by a sequence of $n+1$ edges forming a path from $u$ to $v$ and label them with $\xi_1(u),\dots, \xi_n(u), P(u',v')$, respectively. The resulting tree $T$ has the same number of leaves as $\q$. It is readily checked that, for any tree witness $\t$ for $\omq$, the set of edges in $T$ labelled with atoms in $\q_\t$ forms a convex subtree of $T$, which gives a tree hypergraph isomorphic to~$\HG{\omq}$.
%
\end{proof}


Suppose $H=(V, E)$ is a tree hypergraph whose underlying tree $T = (V_T,E_T)$ has vertices $V_T = \{1,\dots, n\}$,  for  $n > 1$, and $1$ is a leaf of $T$. Let $T^1 = (V_T,E^1_T)$ be the \emph{directed} tree obtained from $T$ by fixing $1$ as the root and orienting the edges away from~$1$. We associate with $H$ a tree-shaped OMQ $\OMQT{H} = (\T,\q)$, in which $\q$ is the Boolean CQ
\begin{equation*}
\q  \ \ = \ \  \bigl\{\, R_{ij}(z_i, y_{ij}), \ \ S_{ij}(y_{ij}, z_{j}) \mid (i,j) \in E^1_{T} \,\bigr\},
\end{equation*}
where the $z_i$, for $i \in V_T$, are the variables for vertices of the tree and the $y_{ij}$, for $(i,j)\in E_T^1$, are the variables for the edges of the tree.
To define $\T$,  
suppose a hyperedge $e \in E$ induces a convex directed subtree $T_e = (V_e,E_e)$ of $T^1$ with root $r^e\in V_e$  and leaves $L_e \subseteq V_e$. Denote by $\T$ the ontology that contains the following axiom, for each $e\in E$:
\begin{multline*}
A_e(x) \ \ \to \ \ \exists y\,\bigl[\bigwedge_{(i,j)\in E_e, \ i = r^e}\hspace*{-1em} R_{r^ej}(x,y) \ \ \land \bigwedge_{(i,j)\in E_e, \ j\in L_e} \hspace*{-1.5em}S_{ij}(y,x) \ \ \ \land\\ 
\exists z\,\bigl( \bigwedge_{(i,j)\in E_e, \ i \ne r^e} \hspace*{-1.5em}R_{ij}(z,y) \ \  \ \land \bigwedge_{(i,j)\in E_e, \ j\notin L_e} \hspace*{-2em} S_{ij}(y,z) \bigr)\bigr].
\end{multline*}
%
%
%
%
%
%
Since $T_e$ is convex, its root, $r_e$, has only one outgoing edge, $(r^e,j)$, for some $j$, and so the first conjunct above contains a single atom, $R_{r^ej}(x,y)$. These axioms (together with convexity of hyperedges) 
ensure that  $\OMQT{H}$ has a tree witness $\t^e = (\tr^e,\ti^e)$, for $e\in E$,  with
\begin{align*}
& \tr^e \ \ = \ \ \{\, z_i \mid i \text{ is on the boundary of } e \,\},\\ 
&  \ti^e \ \ = \ \
\{\,z_i \mid i \text{ is in the interior of } e\,\} \ \ \cup \ \ \{\,y_{ij} \mid (i,j) \in e\,\}.
\end{align*}
%
Note that $\T$ is of depth 2, and $\OMQT{H}$ is of polynomial size in $|H|$. 


\begin{example}\label{ex:tree-hypergraph:char}
Let $H$ be the tree hypergraph whose underlying tree is as in Example~\ref{ex:hypertree} with fixed root $1$ and whose only hyperedge is $e = \{\{1,2\},\{2,3\},\{3,4\},\{2,6\}\}$. The CQ $\q$ and the canonical model $\Cmc^{\smash{A_e(a)}}_{\Tmc}$ for this $H$ are shown  in Fig.~\ref{fig:hypertree-query}.
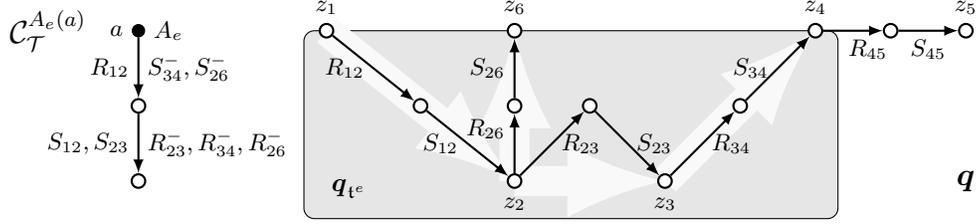
\begin{figure}[t]%
\centering%
\begin{tikzpicture}\footnotesize
\coordinate (z1) at (-1.5,2);
\coordinate (z4) at (5,2);
\coordinate (z2) at (1,-0.5);
\node[draw,thin,fill=gray!20,rounded corners,inner xsep=3mm,inner ysep=0mm,fit=(z1) (z2) (z4)] {};
\begin{scope}[draw=gray!5,fill=gray!5,line width=4mm]
\draw[->] (-1.5,2) -- (1,0);
\draw[->] (1,0) -- (3,0);
\draw[->] (1,0) -- (1,2);
\draw[->] (3,0) -- (5,2);
\end{scope}
\node[point,label=above:{$z_1$}] (y1) at (-1.5,2) {};
\node[point] (y12) at (-0.25,1) {}; 
\node[point,label=below:{$z_2$}] (y2) at (1,0) {};
\node[point] (y26) at (1,1) {}; 
\node[point,label=above:{$z_{6}$}] (y6) at (1,2) {};
\node[point] (y23) at (2,1) {}; 
\node[point,label=below:{$z_3$}] (y3) at (3,0) {};
\node[point] (y34) at (4,1) {}; 
\node[point,label=above:{$z_{4}$}] (y4) at (5,2) {};
\node[point] (y45) at (6,2) {}; 
\node[point,label=above:{$z_{5}$}] (y5) at (7,2) {};
\node at (-1.2, -0.1){\normalsize $\q_{\t^e}$};
\node at (7, 0){\large $\q$};
\draw[->,query] (y1) to node[left] {\footnotesize$R_{12}$} (y12);
\draw[->,query] (y12) to node[left] {\footnotesize$S_{12}$} (y2);
\draw[->,query] (y2) to node[left,near end] {\footnotesize$R_{26}$} (y26);
\draw[->,query] (y2) to node[right] {\footnotesize$R_{23}$} (y23);
\draw[->,query] (y26) to node[left] {\footnotesize$S_{26}$} (y6);
\draw[->,query] (y23) to node[right] {\footnotesize$S_{23}$} (y3);
\draw[->,query] (y3) to node[right] {\footnotesize$R_{34}$} (y34);
\draw[->,query] (y34) to node[left] {\footnotesize$S_{34}$} (y4);
\draw[->,query] (y4) to node[below,near end] {\footnotesize$R_{45}$} (y45);
\draw[->,query] (y45) to node[below] {\footnotesize$S_{45}$} (y5);
\node (a) at (-4,2) [bpoint, label=right:{$A_e$}, label=left:{$a$}]{};
\node (d1) at (-4,1) [point, fill=white]{};
\node (d2) at (-4,0) [point, fill=white]{};
\node at (-5.2, 2){\large $\Cmc^{A_e(a)}_{\Tmc}$};
\draw[can,->] (a)  to node [right]{\footnotesize $S_{34}^-,S_{26}^-$} node[left]{$R_{12}$} (d1);
\draw[can,->] (d1)  to node [left]{\footnotesize $S_{12},S_{23}$} node[right]{$R_{23}^-,R_{34}^-, R_{26}^-$} (d2);
\end{tikzpicture}%
\caption{The canonical model $\C_\T^{\smash{A_e(a)}}$ and the query $\q$ ($y_{ij}$ is the half-way point between $z_i$ and $z_j$) for the tree hypergraph $H$ in Example~\ref{ex:tree-hypergraph:char}.}\label{fig:hypertree-query}
\end{figure}
Note the homomorphism from $\q_{\t^e}$ into $\Cmc^{\smash{A_e(a)}}_{\Tmc}$.
\end{example}

The proofs of the following results (which are THGP analogues of Theorem~\ref{hg-to-query} and Propositions~\ref{hyper:program}~(\emph{ii}) and~\ref{lem:genhgps}, respectively) are given in Appendices~\ref{app:proof:Th5.9} and~\ref{app:proof:Prop5.10}:

\begin{theorem}\label{tree-hg-to-query}
\textup{(}i\textup{)} Any tree hypergraph $H$ is isomorphic to a subgraph of $\HG{\OMQT{H}}$.

\textup{(}ii\textup{)} Any monotone THGP based on a tree hypergraph $H$ computes a subfunction of the primitive evaluation function $f^\vartriangle_{\OMQT{H}}$.
\end{theorem}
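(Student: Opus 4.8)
The plan is to follow the depth-2 arguments of Theorems~\ref{hg-to-query} and~\ref{representable}, exploiting the explicit description of the tree witnesses $\t^e$ of $\OMQT{H}$ recorded just before the statement. First I would pin down the precise shape of each $\q_{\t^e}$: an atom $R_{ij}(z_i,y_{ij})$ (and likewise $S_{ij}(y_{ij},z_j)$) lies in $\q_{\t^e}$ iff its edge-variable $y_{ij}$ is internal, which by the definition of $\ti^e$ happens exactly when the tree-edge $(i,j)$ belongs to the convex subtree $e = E_e$, while each $z_k$ with $k \in V_e$ lies in $\tr^e \cup \ti^e$; hence
\[
\q_{\t^e} \ = \ \bigl\{\, R_{ij}(z_i,y_{ij}),\ S_{ij}(y_{ij},z_j) \mid (i,j)\in e \,\bigr\}.
\]

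For part~(i) I would take the vertex map $g\colon (i,j)\mapsto R_{ij}(z_i,y_{ij})$, a bijection from $V=E_T$ onto the set of $R$-atoms of $\q$, and form the subgraph of $\HG{\OMQT{H}}$ obtained by deleting every $S$-atom vertex and every hyperedge other than the $\q_{\t^e}$, $e\in E$. Under $g$ the hyperedge $e$ is sent to the $R$-atoms of $\q_{\t^e}$, that is to $\{R_{ij}\mid (i,j)\in e\}$; since distinct hyperedges of $H$ are distinct subsets of $E_T$, the correspondence $e\mapsto g(e)$ is a bijection on hyperedges, so $g$ is the required hypergraph isomorphism.

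For part~(ii), Proposition~\ref{hyper:program}(i) reduces the task to showing that $f_H$ is a subfunction of $\primfnP$, exactly as in Theorem~\ref{representable}(ii). Given an input $\avec{\alpha}$ for $f_H$, I would set $\avec{\gamma}(A_e)=\avec{\alpha}(p_e)$ for $e\in E$, $\avec{\gamma}(R_{ij})=\avec{\gamma}(S_{ij})=\avec{\alpha}(p_{(i,j)})$ for each tree-edge $(i,j)$, and $\avec{\gamma}(P_\zeta)=0$ for the normalisation predicates; identifying $R_{ij}$ with $S_{ij}$, renaming, and fixing the remaining variables is a legitimate sequence of subfunction operations. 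It then remains to prove that $f_H(\avec{\alpha})=1$ iff $\T,\A(\avec{\gamma})\models \q$. For the forward implication, from an independent $E'$ covering all zeros I would glue the tree-witness homomorphisms $h_e$ for the members of $E'$ (sending boundary vertices to $a$, interior vertices to the deep null $w'_e$, and edge-midpoints to $w_e$) with the assignment sending every variable of an uncovered tree-edge to $a$. This is well defined because $\avec{\alpha}(p_{(i,j)})=1$ on uncovered edges supplies the self-loops $R_{ij}(a,a),S_{ij}(a,a)$, and because convexity of $T_e$ forces every interior (non-leaf) vertex of $T_e$ to have all its incident tree-edges inside $e$, so interior vertices are never shared between distinct members of the independent $E'$ and every shared vertex is a boundary vertex mapped to $a$.

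The main obstacle is the converse implication: recovering an independent zero-cover $E'$ from an arbitrary homomorphism $h\colon \q\to\C_{\T,\A(\avec{\gamma})}$. The key structural facts are that the domain of $\C_{\T,\A(\avec{\gamma})}$ consists of $a$ together with pairwise-disjoint bundles $\{w_e,w'_e\}$, one for each $e$ with $\avec{\alpha}(p_e)=1$; that the only $R_{ij}$- and $S_{ij}$-atoms incident to a bundle $\{w_e,w'_e\}$ are those with $(i,j)\in e$; and that $w'_e$ is never the image of an edge-variable, so $h(y_{ij})\in\{a\}\cup\{w_e\mid (i,j)\in e\}$. The crux is a \emph{propagation lemma}: if $h$ maps even one edge-variable $y_{ij}$ with $(i,j)\in e$ to $w_e$, then, walking outward along $T_e$ and using convexity, $h$ must coincide on all of $\q_{\t^e}$ with the tree-witness homomorphism $h_e$. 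Granting this, I would set $E' = \{\,e \mid h(y_{ij})=w_e \text{ for some } (i,j)\in e\,\}$: it is independent, since a tree-edge shared by two of its members would send the single midpoint $y_{pq}$ to two distinct bundles; and it covers all zeros, since an uncovered edge has $h(y_{ij})=a$, forcing $R_{ij}(a,a)$ and hence $\avec{\gamma}(R_{ij})=\avec{\alpha}(p_{(i,j)})=1$, while $\avec{\alpha}(p_e)=1$ for $e\in E'$ because the bundle exists. Establishing the propagation lemma rigorously---in particular treating single-edge hyperedges and verifying that no atom can straddle two bundles, so that the depth-2 null structure leaves the internal variables no other placement---is the delicate and tedious step, but it reduces to a routine induction on the shape of $\C_{\T,\A(\avec{\gamma})}$.
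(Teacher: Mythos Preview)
Your proposal is correct and closely tracks the paper's proof: part~(i) is identical (delete the $S$-atoms and superfluous hyperedges, use $(i,j)\mapsto R_{ij}(z_i,y_{ij})$), and in part~(ii) the heart of the argument---your ``propagation lemma'' that one edge-variable landing in $w_e$ forces $h(y_{ij})=w_e$ for \emph{all} $(i,j)\in e$---is exactly the paper's Claim, proved by picking a highest preimage of $w_e$, arguing it must be some $y_{ij}$ with $i=r^e$, and then propagating downward through $T_e$.  The only routing difference is that the paper works directly with the THGP $P$ and sets $\avec{\gamma}(A_e)=1$ for \emph{all} $e$, whereas you detour through $f_H$ (via Proposition~\ref{hyper:program}(i), as in Theorem~\ref{representable}(ii)) and therefore set $\avec{\gamma}(A_e)=\avec{\alpha}(p_e)$; both encodings yield the same propagation argument and the same independence/coverage checks, so this is a cosmetic rather than substantive variation.
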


\begin{proposition}\label{hyper:thgp}
\textup{(}i\textup{)} For any tree hypergraph $H$ of degree at most  $d$, there is a monotone THGP of size $O(|H|)$ that computes $f_H$ and such that its hypergraph is of degree at most $\max(2,d)$.

\textup{(}ii\textup{)} For every generalised THGP $P$ over $n$ variables, there is a THGP $P'$ such that $|P'| \le n \cdot |P|$ and $P'$ has the same degree and number of leaves as $P$ and computes the same function.
\end{proposition}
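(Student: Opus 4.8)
The plan is to prove the two parts by transporting the gadget constructions of Propositions~\ref{hyper:program}(ii) and~\ref{lem:genhgps} into the tree-hypergraph world, the only extra work being to keep every modified hyperedge a convex subtree.

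For part~(i) I start from the gadget in the proof of Proposition~\ref{hyper:program}(ii): for each hyperedge $e$ I add a fresh vertex $a_e$ labelled $1$ and a fresh vertex $b_e$ labelled $p_e$, put $a_e$ into $e$, and create the new hyperedge $\{a_e,b_e\}$. In a tree hypergraph a vertex is an edge of the underlying tree $T$ and a hyperedge is (the edge set of) a convex subtree $T_e$, so I realise this gadget as a pendant path of length two: attach two fresh tree-edges $a_e,b_e$ in a row to some vertex of $T_e$, let $T_e\cup\{a_e\}$ be the new convex subtree for $e$, and let the two-edge path $\{a_e,b_e\}$ be the subtree for the new hyperedge. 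Exactly as before, when $p_e=0$ the vertex $b_e$ must be covered, which forces $\{a_e,b_e\}$ into the independent cover and hence excludes $e$; when $p_e=1$ it does not. So the resulting monotone THGP computes $f_H$. Since $a_e$ lies in exactly two hyperedges, $b_e$ in one, and every original vertex keeps its hyperedges, the degree is at most $\max(2,d)$; and only a constant number of vertices and edges is added per hyperedge, so the size is $O(|H|)$.

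The main obstacle is attaching the pendant without spoiling convexity of the \emph{other} hyperedges. Adding a pendant edge at a vertex $v$ turns $v$ into an interior vertex of $e$, and it destroys convexity of every other hyperedge in whose interior $v$ lies; it is harmless only when $v$ is a leaf of $T$. Hence the clean case is to attach at a boundary vertex of $T_e$ that is a leaf of the whole tree $T$: such a vertex becomes interior only of $T_e\cup\{a_e\}$ and remains an (unconstrained) leaf in every hyperedge that contains it, and the pendant $a_e$ stays private to $e$ and $\{a_e,b_e\}$. The delicate point, which is the heart of the argument, is that a hyperedge may be \emph{buried} — its subtree may contain no leaf of $T$ and no private vertex at all (e.g.\ a single middle edge of a path) — and that distinct hyperedges may compete for the same boundary leaf. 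I would handle this by first re-choosing the representation: every tree hypergraph can be re-realised, after subdividing tree-edges to make room, by a tree in which each hyperedge owns a private boundary leaf, at which its pendant then attaches cleanly; verifying that such a simultaneous re-realisation exists, while keeping the size linear and the degree bounded, is where the real work lies.

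Part~(ii) mirrors Proposition~\ref{lem:genhgps} but stays inside the class of tree hypergraphs. A vertex labelled by a conjunction $\li_1\wedge\dots\wedge\li_k$ is split by \emph{subdividing} the corresponding tree-edge into a path of $k$ consecutive tree-edges $v_1,\dots,v_k$, labelling $v_i$ by $\li_i$ and replacing the old vertex by the whole path in every hyperedge that contained it. Subdividing an edge inside a convex subtree keeps it convex, since each new subdivision point has both of its incident edges in the subtree, so $P'$ is again a THGP; each $v_i$ belongs to exactly the hyperedges of the old vertex, so the degree is unchanged, and since only degree-two vertices are inserted the underlying tree keeps the same leaves. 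As $k\le n$ we get $|P'|\le n\cdot|P|$, and the covering semantics is unaffected exactly as in Proposition~\ref{lem:genhgps}, so $P'$ computes the same function.
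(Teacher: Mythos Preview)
Your part~(ii) is correct and is exactly the paper's argument: subdivide the tree-edge carrying the conjunctive label into a path of single-literal edges, and let every hyperedge that contained the old edge contain the whole path. Convexity, degree, leaf count, and the computed function are preserved for the reasons you give.

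For part~(i), your gadget is precisely the paper's: for each hyperedge $e$ pick a tree-vertex $v_e\in\bigcup e$, hang a two-edge pendant $v_e\!-\!a_e\!-\!b_e$ off it, label the two new edges $1$ and $p_e$, make $\{\,\{v_e,a_e\},\{a_e,b_e\}\,\}$ a new hyperedge, and put $\{v_e,a_e\}$ into $e$. The paper does \emph{not} discuss the convexity issue you raise; it simply says ``choose some $v_e\in\bigcup e$'' and moves on. So you have been more careful than the paper here, and your worry is genuine: take the path $1\!-\!2\!-\!3\!-\!4\!-\!5$ with $e_1$ the whole path and $e_2=\{\{2,3\},\{3,4\}\}$. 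No vertex of $T_{e_2}$ is simultaneously (a) interior to $T_{e_2}$ and not interior to $T_{e_1}$, or (b) a leaf of $T$; either condition is needed for the pendant at $v_{e_2}$ to keep both $e_1$ and the extended $e_2$ convex.

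Your proposed remedy---subdivide to manufacture private leaves---is the right direction but, as you say, is not yet an argument. One clean way to finish it (and to see that the degree and size bounds survive) is this: for each $e$ with $|e|\ge 2$ choose an interior vertex $v$ of $T_e$, first attach a single ``hook'' edge $\{v,\ell_e\}$ labelled $1$ and add it to \emph{every} hyperedge whose subtree has $v$ in its interior (there are at most $d$ of these, since each must contain every $T$-edge at $v$); now $\ell_e$ is a leaf of the enlarged $T$ lying in $T_e$, and the two-edge gadget can be attached at $\ell_e$ with $\{\ell_e,a_e\}$ put into $e$ alone. All convexity checks go through, the new hook edge has degree $\le d$, the gadget edges have degree $\le 2$, and only $O(1)$ new vertices and one new hyperedge are added per $e$. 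For $|e|=1$ one first subdivides the single edge of $e$ (labelling one half with $p_v$ and the other with $1$, and placing both halves in every hyperedge that contained the original edge) to create an interior vertex.
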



\subsection{OMQs with Bounded Treewidth CQs and Bounded Depth Ontologies}\label{sec:boundedtw}

Recall (see, e.g.,~\cite{DBLP:series/txtcs/FlumG06}) that a \emph{tree decomposition} of an undirected graph $G=(V,E)$ is a pair $(T,\lambda)$, where $T$ is an (undirected) tree and $\lambda$ a function from the set of nodes of $T$ to $2^V$ such that 
\begin{nitemize}
\item[--] for every $v \in V$, there exists a node $N$ with $v \in \lambda(N)$;

\item[--] for every $e \in E$, there exists a node $N$ with $e \subseteq \lambda(N)$;

\item[--] for every $v \in V$, the nodes $\{N\mid v \in \lambda(N)\} $ induce a (connected) subtree of~$T$.
\end{nitemize}
We call the set $\lambda(N) \subseteq V$ a \emph{bag for} $N$. The \emph{width of a tree decomposition} $(T, \lambda)$ is the size of its largest bag minus one. The \emph{treewidth} of $G$ is the minimum width over all tree decompositions of $G$. The \emph{treewidth of a CQ} $\q$ is the treewidth of its Gaifman graph~$G_\q$.

\begin{example}\label{ex:boundedtree} 
The Boolean CQ
$\q = \bigl\{ R(y_2, y_1),\ R(y_4, y_1), \ S_1(y_3, y_4), \ S_2(y_2, y_4)\bigr\}$
and its tree decomposition $(T, \lambda)$ of width 2 are shown in Fig.~\ref{fig:tree-decomposition}, where $T$ has two nodes, $N_1$ and $N_2$, connected by an edge, with bags $\lambda(N_1)= \{y_1, y_2, y_4\}$ and \mbox{$\lambda(N_2) = \{y_2, y_3, y_4\}$}.%
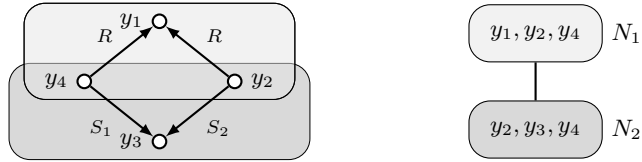
\begin{figure}[t]%
\centering%
\begin{tikzpicture}[bag/.style={draw,rectangle,rounded corners=8,inner sep=8pt},yscale=0.8]
\draw[ultra thin,fill=gray!30,rounded corners=8] (-1,0.7) rectangle +(4,1.6);
\draw[ultra thin,fill=gray!20,fill opacity=0.5,rounded corners=8] (-0.8,1.7) rectangle +(3.6,1.6);
\draw[ultra thin,rounded corners=8] (-0.8,1.7) rectangle +(3.6,1.6);
\node[point,label=left:{${y_1}$}] (y1) at (1,3) {};
\node[point,label=right:{${y_2}$}] (y2) at (2,2) {};
\node[point,label=left:{${y_3}$}] (y3) at (1,1) {};
\node[point,label=left:{${y_4}$}] (y4) at (0,2) {};
\draw[->,query] (y4) to node[above left] {\scriptsize$R$} (y1);
\draw[->,query] (y2) to node[above right] {\scriptsize$R$} (y1);
\draw[->,query] (y4) to node[below left] {\scriptsize$S_{1}$} (y3);
\draw[->,query] (y2) to node[below right] {\scriptsize$S_{2}$} (y3);
\node[bag,fill=gray!10,label=right:{$N_1$}](N1) at (6,2.8){$y_1, y_2, y_4$};
\node[bag,fill=gray!30,label=right:{$N_2$}](N2) at (6,1.2){$y_2, y_3, y_4$};
\draw[-,query] (N1) to (N2);
\end{tikzpicture}
\caption{Tree decomposition in Example~\ref{ex:boundedtree}.}\label{fig:tree-decomposition}
\end{figure}%
\end{example}

Our aim in this section is to show that, for any OMQ $\omq(\avec{x}) = (\T,\q(\avec{x}))$ with $\q$ of bounded treewidth and a finite fundamental set $\Omega_{\omq}$, the modified tree-witness hypergraph function $\smash{\homfn}$ can be computed using a monotone THGP of size bounded by a polynomial in $|\q|$ and $|\Omega_{\omq}|$.

Let  $(T, \lambda)$ be a  tree decomposition of $G_\q$  of width~\mbox{$\twidth - 1$}. 
In order to refer to the variables of $\q$, for each bag $\lambda(N)$, we fix an order of variables in the bag and define a injection $\nu_N\colon \lambda(N) \to \{1,\dots, m\}$ that gives the index of each $z$ in $\lambda(N)$. 
A  \emph{\textup{(}bag\textup{)} type} is an $\twidth$-tuple of the form $\avec{w} = (\avec{w}[1], \dots, \avec{w}[\twidth])$, where each $\avec{w}[i] \in \Omega_{\omq}$. Intuitively, the $i$th component $\avec{w}[i]$ of $\avec{w}$ indicates that the $i$th variable in the bag is mapped to a domain element of the form $a\avec{w}[i]$ in the canonical model $\C_{\T,\A}$. 
We say that a type $\avec{w}$ is \emph{compatible with a node} $N$ of $T$ if the following conditions hold, for all $z,z'\in\lambda(N)$:
\begin{enumerate}[(1)]
\item if $A(z) \in \q$ and $\avec{w}[\nu_N(z)] \neq \varepsilon$, then $\avec{w}[\nu_N(z)]= w\varrho$ and $\Tmc \models \exists y\, \varrho(y,x) \to A(x)$;

\item if $P(z,z') \in \q$ and either $\avec{w}[\nu_N(z)] \ne\varepsilon$ or  $\avec{w}[\nu_N(z')] \ne \varepsilon$, then 
\begin{nitemize}
\item[--] $\avec{w}[\nu_N(z)] = \avec{w}[\nu_N(z')]$ and $\Tmc \models P(x,x)$, or 

\item[--] $\avec{w}[\nu_N(z')]= \avec{w}[\nu_N(z)] \varrho$ and $\Tmc \models \varrho(x,y) \rightarrow P(x,y)$, or

\item[--] $\avec{w}[\nu_N(z)] = \avec{w}[\nu_N(z')] \varrho^-$ and $\Tmc \models \varrho(x,y) \rightarrow P(y,x)$.
\end{nitemize}
\end{enumerate}
Clearly, the type with all components equal to $\varepsilon$ is compatible with any node $N$ and corresponds to mapping all variables in $\lambda(N)$ to individuals in $\ind(\A)$.

\begin{example}\label{ex:5.11}
Suppose $\T = \{\,A(x) \to \exists y\, R(x, y)\,\}$ and $\q$ is the same as in Example~\ref{ex:boundedtree}. Let $\nu_{N_1}$ and $\nu_{N_2}$ respect the order of the variables in the bags shown in Fig.~\ref{fig:tree-decomposition}. The only types compatible with $N_1$ are $(\varepsilon,\varepsilon,\varepsilon)$ and $(R, \varepsilon,\varepsilon)$, whereas the only type compatible with $N_2$ is $(\varepsilon,\varepsilon,\varepsilon)$.
\end{example}

Let $\avec{w}_1, \dots, \avec{w}_\numtypes$ be all the bag types for $\Omega_\omq$ ($M = |\Omega_\omq|^\twidth$). Denote by $T'$
the tree obtained from $T$ by replacing every edge $\{N_i, N_j\}$ with the following sequence of edges:
\begin{multline*}
\{N_i, u^1_{ij}\},\qquad \{u^k_{ij}, v^k_{ij}\} \text{ and } \{ v^k_{ij}, u^{k+1}_{ij} \}, \text{ for } 1 \leq k < \numtypes, \qquad \{ u^\numtypes_{ij}, v^\numtypes_{ij}\},\qquad \{ v^\numtypes_{ij}, v^\numtypes_{ji}\},\\
\{v^\numtypes_{ji}, u^\numtypes_{ji}\},\qquad \{ u^{k+1}_{ji}, v^k_{ji}  \} \text{ and }  \{ v^k_{ji}, u^k_{ji}\}, \text{ for } 1 \leq k < \numtypes,\qquad \{u^1_{ji}, N_j\}, 
\end{multline*}
for some fresh nodes $u^k_{ij}$, $v^k_{ij}$, $u^k_{ji}$ and $v^k_{ji}$.
We now define a generalised monotone THGP $P_{\omq}$ based on a hypergraph with the underlying tree $T'$. 
Denote by $[L]$  the set of nodes of the minimal convex subtree of $T'$ containing all nodes of $L$. The hypergraph has the following hyperedges:
\begin{nitemize}
\item[--] $E_i^k = [N_i,u_{ij_1}^k, \dots, u_{i j_n}^k]$ if $N_{j_1}, \dots, N_{j_n}$ are the neighbours of $N_i$ in $T$ and $\avec{w}_k$ is compatible with $N_i$;
 
\item[--] $E_{ij}^{k \ell} = [v_{ij}^k, v_{ji}^\ell]$ if $\{N_i, N_j\}$ is an edge in $T$ and $(\avec{w}_k, \avec{w}_\ell)$ is compatible with $(N_i, N_j)$ in the sense that $\avec{w}_k[\nu_{N_i}(z)] = \avec{w}_\ell[\nu_{N_j}(z)]$, for all $z\in \lambda(N_i)\cap \lambda(N_j)$. 
\end{nitemize}
We label the vertices of the hypergraph---that is, the edges of $T'$---in the following way. The edges $\{N_i, u_{ij}^1\}$, $\{v_{ij}^k, u_{ij}^{k +1}\}$ and $\{v_{ij}^\numtypes, v_{ji}^\numtypes\}$ are labelled with $0$, and every edge $\{u_{ij}^k, v_{ij}^k\}$ is labelled with the conjunction of the following variables:
\begin{nitemize}
\item[--] $p_\atom$, whenever $\atom \in \q$, $\avec{z}\subseteq \lambda(N_i)$ and  
$\avec{w}_k[\nu_{N_i}(z)]= \varepsilon$, for all 
$z\in \avec{z}$;

\item[--] $p_{\varrho^*(z)}$, whenever $A(z) \in \q$, $z\in \lambda(N_i)$ and 
$\avec{w}_k[\nu_{N_i}(z)] = \varrho w$;

\item[--] $p_{\varrho^*(z)}$, $p_{\varrho^*(z')}$ and $p_{z=z'}$, whenever $R(z,z') \in \q$ (possibly with $z=z'$), $z,z'\in \lambda(N_i)$,
and either $\avec{w}_k[\nu_{N_i}(z)]= \varrho w$ or $\avec{w}_k[\nu_{N_i}(z')]= \varrho w$.
\end{nitemize}
The following result is proved in Appendix~\ref{app:proof:5.13}:  

\begin{theorem}\label{DL2THP}
For every OMQ $\omq = (\Tmc,\q)$  with a fundamental set $\Omega_{\omq}$ and with $\q$ of treewidth~$t$,
the generalised monotone THGP $P_{\omq}$ computes $\homfn$ and is of size polynomial in $|\q|$ and $|\Omega_{\omq}|^t$. 
\end{theorem}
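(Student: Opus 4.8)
The plan is to prove the two assertions—correctness and the size bound—separately, with correctness being the substantial part. For correctness I fix an input $\avec{\alpha}$ and show $P_{\omq}(\avec{\alpha}) = \homfn(\avec{\alpha})$ by relating both values to a single combinatorial object: a \emph{consistent type labelling}, i.e.\ a choice, for each node $N_i$ of $T$, of a type $\avec{w}_{k(i)}$ that (a) is compatible with $N_i$, (b) is edge-compatible with the type chosen at each neighbour, and (c) whose realisation conjunction—the label of the vertices $\{u^{k(i)}_{ij},v^{k(i)}_{ij}\}$—evaluates to $1$ under $\avec{\alpha}$. Throughout I take the tree decomposition to have $O(|\q|)$ nodes and assume $T$ has at least one edge (the single-bag case being handled directly), so that all quantities below are of the claimed order.

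The heart of the argument is a coverage analysis of $P_{\omq}$. I would first show that a valid independent set of hyperedges covering all zeros must contain exactly one node-hyperedge $E_i^{k(i)}$ per node $N_i$ (these are the only hyperedges meeting the $0$-edge $\{N_i,u^1_{ij}\}$, and any two of them conflict there) and exactly one edge-hyperedge $E_{ij}^{k_{ij}\ell_{ij}}$ per edge $\{N_i,N_j\}$ (the only hyperedges meeting the crossing $0$-edge $\{v^M_{ij},v^M_{ji}\}$, again pairwise conflicting there). The decisive point—and the reason the slots $u^k_{ij},v^k_{ij}$ are threaded in the prescribed order—is that covering the intermediate $0$-edges $\{v^k_{ij},u^{k+1}_{ij}\}$ forces the alignment $k_{ij}=k(i)$ and $\ell_{ij}=k(j)$: on the branch towards $N_j$ the set $E_i^{k(i)}$ reaches only up to $u^{k(i)}_{ij}$ and $E_{ij}^{k_{ij}\ell_{ij}}$ only from $v^{k_{ij}}_{ij}$ onwards, so any $0$-edge strictly between would remain uncovered unless the indices coincide. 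Under this forced alignment the unique conjunction vertex left uncovered on each branch of $N_i$ is $\{u^{k(i)}_{ij},v^{k(i)}_{ij}\}$, whose label depends only on $(i,k(i))$. Hence $P_{\omq}(\avec{\alpha})=1$ iff a consistent type labelling exists, with the existence of $E_i^{k(i)}$ encoding (a), the existence of $E_{ij}^{k(i)k(j)}$ encoding (b), and the uncovered labels encoding (c).

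It then remains to identify consistent type labellings with satisfying assignments of $\homfn$, and here I would invoke the connectedness (running-intersection) condition of tree decompositions: an edge-compatible family $(\avec{w}_{k(i)})_i$ assigns to each variable $z$ of $\q$ a well-defined word $f(z)\in\Omega_{\omq}\cup\{\varepsilon\}$ (take any bag containing $z$; edge-compatibility together with connectedness makes this independent of the bag). Reading $f$ through the canonical-model structure, the atoms of $\q$ having at least one variable with $f(z)\neq\varepsilon$ split into the query parts $\q_\t$ of an independent family $\Theta\subseteq\twset$, each $\t$ being $\varrho$-initiated by the first letter of the words involved, whereas the atoms all of whose variables satisfy $f(z)=\varepsilon$ are exactly those of $\q\setminus\q_\Theta$; the fundamental-set property is precisely what guarantees the words $f(z)$ may be taken in $\Omega_{\omq}$, making $M=|\Omega_{\omq}|^{\twidth}$ finite. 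A term-by-term comparison of the labelling rules with the substitution formula~\eqref{subst} then shows that the node-compatibility and realisation conditions hold under $\avec{\alpha}$ exactly when the atoms of $\q\setminus\q_\Theta$ and the formulas $\tw'_\t$ for $\t\in\Theta$ are satisfied, i.e.\ exactly when the corresponding disjunct of $\homfn$ is true; conversely every satisfying assignment of $\homfn$ yields such an $f$. Combining the two characterisations gives $P_{\omq}(\avec{\alpha})=\homfn(\avec{\alpha})$.

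Finally, the size bound is a direct count: with $O(|\q|)$ nodes and $M=|\Omega_{\omq}|^{\twidth}$ types ($\twidth=t+1$), the tree $T'$ has $O(|\q|\cdot M)$ edges (the hypergraph vertices), there are $O(|\q|\cdot M)$ node-hyperedges and $O(|\q|\cdot M^2)$ edge-hyperedges, and each conjunction label carries $O(|\q|)$ literals, so $|P_{\omq}|=O(|\q|\cdot M^2)$, which is polynomial in $|\q|$ and $|\Omega_{\omq}|^{t}$ as claimed. I expect the main obstacle to be the coverage analysis of the second paragraph—making the forced-alignment argument fully rigorous at nodes of arbitrary degree, and then matching, atom by atom and via the running-intersection property, the per-bag realisation conjunctions against the global tree-witness formulas~\eqref{subst}, since this is where the threaded gadget and the fundamental set must be shown to interact exactly as intended.
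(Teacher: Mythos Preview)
Your proposal is correct and follows essentially the same approach as the paper's proof: both directions go through the intermediate object you call a ``consistent type labelling'' (the paper's $\mu(N_i)$), with the same coverage analysis forcing exactly one $E_i^k$ per node and one $E_{ij}^{k\ell}$ per edge with aligned indices, and the same reconstruction of an independent $\Theta\subseteq\twset$ from the induced map $z\mapsto w_z$ via connected components of atoms sharing a variable with nonempty word. The paper spells out the latter step as an explicit three-part claim proved by induction along a chain of atoms; your identification of this as the main obstacle is apt, and the size count is the same.
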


\begin{example}\label{ex:thgp}
Let $\omq = (\T,\q)$ be the OMQ from Example~\ref{ex:5.11}. As we have seen, there are only two types compatible with nodes in $T$: $\avec{w}_1=(\varepsilon,\varepsilon,\varepsilon)$ and $\avec{w}_2 = (R, \varepsilon, \varepsilon)$.
This gives us the generalised THGP $P_{\omq}$ shown in Fig.~\ref{fig:thgp}, where the omitted labels are all~0. 
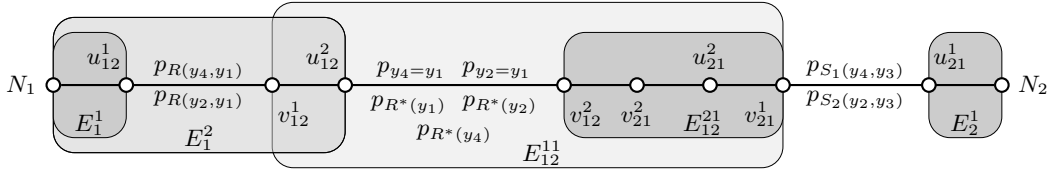
\begin{figure}[t]%
\centering%
\begin{tikzpicture}[xscale=0.97]
\draw[ultra thin,fill=gray!10,rounded corners=8] (3,-1.1) rectangle +(7,2.2);
\node at (6.7,-0.9) {\small $E_{12}^{11}$};
\draw[ultra thin,fill=gray!40,fill opacity=0.5,rounded corners=8] (0,-0.9) rectangle +(4,1.8);
\draw[ultra thin,rounded corners=8] (0,-0.9) rectangle +(4,1.8);
\node at (2,-0.7) {\small $E_1^2$};
\draw[ultra thin,fill=gray!40,rounded corners=8] (0,-0.7) rectangle +(1,1.4);
\node at (0.5,-0.5) {\small $E_1^1$};
\draw[ultra thin,fill=gray!40,rounded corners=8] (12,-0.7) rectangle +(1,1.4);
\node at (12.5,-0.5) {\small $E_2^1$};
\draw[ultra thin,fill=gray!40,rounded corners=8] (7,-0.7) rectangle +(3,1.4);
\node at (8.9,-0.5) {\small $E^{21}_{12}$};
\node[point,label=left:{$N_1$}] (N1) at (0,0) {};
\node[point,label=right:{$N_2$}] (N2) at (13,0) {};
\draw[query] (N1) -- (N2);
\node[point,label=above:{\small $u_{12}^1$\hspace*{1.8em}}] (u121) at (1,0) {};
\node[point,label=below:{\small\hspace*{1.8em}$v_{12}^1$}] (v121) at (3,0) {};
\node[point,label=above:{\small$u_{12}^2$\hspace*{1.8em}}] (u122) at (4,0) {};
\node[point,label=below:{\small\hspace*{1.8em}$v_{12}^2$}] (v122) at (7,0) {};
\node[point,label=below:{\small$v_{21}^2$}] (v212) at (8,0) {};
\node[point,label=above:{\small$u_{21}^2$}] (u212) at (9,0) {};
\node[point,label=below:{\small$v_{21}^1$\hspace*{1.8em}}] (v211) at (10,0) {};
\node[point,label=above:{\small\hspace*{1.8em}$u_{21}^1$}] (u211) at (12,0) {};
\node at (2, 0.2) {\small $p_{R(y_4,y_1)}$};
\node at (2,-0.2) {\small $p_{R(y_2, y_1)}$};
\node at (5.5, 0.2) {\small$p_{y_4 = y_1} \ \ p_{y_2=y_1}$};
\node at (5.5,-0.4) {\small\begin{tabular}{c}$p_{R^*(y_1)} \ \ p_{R^*(y_2)}$\\$p_{R^*(y_4)}$\end{tabular}};
\node at (11, 0.2) {\small $p_{S_1(y_4,y_3)}$};
\node at (11,-0.2) {\small $p_{S_2(y_2, y_3)}$};
\end{tikzpicture}%
\caption{THGP $P_\omq$ in Example~\ref{ex:thgp}: non-zero labels of vertices in $P_\omq$ are given on the edges of the tree.}\label{fig:thgp}
\end{figure}
To explain the meaning of $P_{\omq}$, suppose $\T,\A \models \q$, for some data instance $\A$. Then there is a homomorphism \mbox{$h \colon \q  \to \mathcal{C}_{\T,\A}$}. This homomorphism  defines the type of bag $N_1$, which can be either $\avec{w}_1$ (if $h(z) \in \ind(\A)$ for all $z \in \lambda(N_1)$)  or $\avec{w}_2$ (if \mbox{$h(y_1) = aR$} for some $a \in \ind(\A)$). These two cases are represented by the hyperedges $E^1_1 = [N_1, u^1_{12}]$ and $E^2_1 = [N_1, u^2_{12}]$. Since
$\{N_1, u^1_{12}\}$ is labelled with 0, exactly one of them must be chosen to construct an independent subset of hyperedges covering all zeros. In contrast to that, there is no hyperedge $E^2_2$ because $\avec{w}_2$ is not compatible with $N_2$, and so  \mbox{$E_2^1 = [u_{21}^1, N_2]$} must be present in every covering of all zeros. Both $(\avec{w}_1, \avec{w}_1)$ and $(\avec{w}_2, \avec{w}_1)$ are compatible with $(N_1, N_2)$, which gives $E^{11}_{12}= [v^1_{12}, v^1_{21}]$ and  $E^{21}_{12}=  [v^2_{12}, v^1_{21}]$. Thus, if $N_1$ is of type $\avec{w}_1$, then we include $E^1_1$ and $E^{11}_{12}$ in the covering of all zeros, and so $p_{R(y_4,y_1)}\land p_{R(y_2, y_1)}$ should hold. If $N_1$ is of type $\avec{w}_2$, then instead of $E^{11}_{12}$, we take $E^{21}_{12}$, and so
$p_{y_4 = y_1}\land p_{y_2=y_1}\land p_{R^*(y_1)} \land p_{R^*(y_2)} \land p_{R^*(y_4)}$ should be true. Finally, since  $\{v_{21}^1, u_{21}^1\}$ does not belong to any hyperedge, $p_{S_1(y_4,y_3)}\land p_{S_2(y_2, y_3)}$ must hold in either case.
\end{example}

%
%

\subsection{Summary}

In Tables~\ref{table:OMQ-to-HGP} and~\ref{table:HG-HGf}, we summarise the results of Section~\ref{sec:OMQs&hypergraphs} that will be used in Section~\ref{sec:7} to obtain lower and upper bounds for the size of OMQ rewritings. 
%
%
Table~\ref{table:OMQ-to-HGP} shows how Theorems~\ref{depth1} and~\ref{prop:tree-shaped}  (on the shape of tree-witness hypergraphs) combined with Proposition~\ref{hyper:program}~(\emph{ii}),  as well as Theorem~\ref{DL2THP} provide us with hypergraph programs computing tree-witness hypergraph functions for OMQs.
\begin{table}[t]
\caption{HGPs computing tree-witness hypergraph functions for OMQs.}
{\centerline{\renewcommand{\arraystretch}{1.5}\begin{tabular}{cccc}\toprule
OMQ $\omq= (\Tmc,\q)$ & $P_\omq$ & of size   & computes \\\midrule
%
%
$\T$ of depth~1 & $\mHGP^2$ & $O(|\q|)$ & $\twfn$\\
tree-shaped $\q$ with $\ell$ leaves & $\mTHGP(\ell)$ & $|\q|^{O(\ell)}$ &  $\twfn$\\
{\renewcommand{\arraystretch}{1}\begin{tabular}{c}$\q$ of treewidth $t$,\\$\Omega_{\omq}$ a fundamental set\end{tabular}} &  $\mTHGP$ & $|\q|^{O(1)} \cdot |\Omega_{\omq}|^{O(t)}$  & $\homfn$\\\bottomrule
\end{tabular}}}%
\label{table:OMQ-to-HGP}
\end{table}
Table~\ref{table:HG-HGf} contains the representation results of Theorems~\ref{hg-to-query},~\ref{representable} and~\ref{tree-hg-to-query} that show how abstract hypergraphs can be embedded into tree-witness hypergraphs of OMQs.
\begin{table}[t]
\caption{Representation results for classes of hypergraphs.}{\centerline{\renewcommand{\arraystretch}{1.6}\begin{tabular}{ccc}\toprule
hypergraph $H$  & is isomorphic to & \rule[-12pt]{0pt}{30pt}\parbox{43mm}{\centering any mHGP based on $H$\\ computes a subfunction of} \\\midrule
any &  a subgraph of $\HG{\omq_H}$ &  $f^\vartriangle_{\omq_H}$ \\
of degree 2 &  $\HG{\OMQI{H}}$ &  $f^\vartriangle_{\OMQI{H}}$ \\
tree hypergraph & a subgraph of $\HG{\OMQT{H}}$ & $f^\vartriangle_{\OMQT{H}}$\\\bottomrule
\end{tabular}}}
\label{table:HG-HGf}
\end{table}
%


\section{Hypergraph Programs and Circuit Complexity}\label{sec:circuit_complexity}

In the previous section, we saw how different classes of OMQs gave rise to different classes of monotone HGPs. Here we characterise the computational power of HGPs in these classes by relating them to  standard models of computation for Boolean functions. Table~\ref{table:comp:classes} shows some of the obtained results. For example, its first row says that any Boolean function computable by a polynomial-size nondeterministic circuit can also be computed by a polynomial-size HGP of degree at most 3, and the other way round.
\begin{table}
\caption{Complexity classes, models of computation and corresponding classes of HGPs.}{{%
\renewcommand{\arraystretch}{1.6}\tabcolsep=10pt%
\begin{tabular}{lll}\toprule
{\renewcommand{\arraystretch}{0.8}\begin{tabular}{c}complexity\\class\end{tabular}} & \mbox{}\hfil model of computation & \mbox{}\hfil class of HGPs \\\midrule
$\NP/\poly$ & nondeterministic Boolean circuits & $\HGP = \HGP^d$, $d\geq 3$\\
$\P/\poly$ & Boolean circuits & ---\\
{\tabcolsep=0pt\renewcommand{\arraystretch}{1}\begin{tabular}{l}$\LOGCFL/\poly$\\($\SAC^1$)\end{tabular}} & {\tabcolsep=0pt\renewcommand{\arraystretch}{0.8}\begin{tabular}{l}logarithmic-depth circuits with\\\hspace*{1em}unbounded fan-in \AND-gates and\\\hspace*{1em}\NOT-gates only on inputs\end{tabular}} & $\THGP$ \\
$\NL/\poly$ & nondeterministic branching programs & $\HGP^2 = \THGP(\ell)$, $\ell \ge 2$\\
$\NC^1$ & Boolean formulas & $\THGP^d$, $d \ge 3$\\
$\ACz$ & {\tabcolsep=0pt\renewcommand{\arraystretch}{0.8}\begin{tabular}{l}constant-depth circuits with\\\hspace*{1em}unbounded fan-in \AND- and \OR-gates, and\\\hspace*{1em}\NOT-gates only on inputs\end{tabular}} & --- \\
$\Pitr$ & $\ACz$ circuits of depth 3 with output \AND-gate & $\THGP^2 = \THGP^2(2)$\\\bottomrule
\end{tabular}}%
}\label{table:comp:classes}
\end{table}

We remind the reader that the complexity classes in the table form the chain
\begin{equation} \label{eq:inclusions}
\Pitr ~\subsetneqq~ \ACz ~\subsetneqq~ \NC^1 ~\subseteq~ \NL/\poly ~\subseteq~ \LOGCFL/\poly ~\subseteq~ \P/\poly ~\subseteq~ \NP/\poly
\end{equation}
and that whether any of the non-strict inclusions is actually strict remains a major open problem in complexity theory; see, e.g.,~\cite{Arora&Barak09,Jukna12}. 
All these classes are non-uniform in the sense that they are defined in terms
of polynomial-size non-uniform sequences of Boolean circuits of certain shape and depth. The suffix `$/\poly$' comes from an alternative definition of $\mathsf{C}/\poly$ in terms of Turing machines for the class $\mathsf{C}$ with an additional advice input of polynomial size. 

When talking about complexity classes, instead of individual Boolean functions, we consider \emph{sequences} of functions $f = \{f_n\}_{n < \omega}$ with $f_n \colon \zo^n \to \zo$. 
The same concerns circuits, HGPs and the other models of computation we deal with.
For example, we say that a circuit $\Cir = \{\Cir_n\}_{n < \omega}$ \emph{computes} a function $f = \{f_n\}_{n < \omega}$ if  $\Cir_n$ computes $f_n$ for every $n < \omega$. (It will always be clear from context whether $f$, $\Cir$, etc.\ denote an individual function, circuit, etc.\ or a sequence thereof.)   
A circuit $\Cir$ is said to be \emph{polynomial} if there is a polynomial $p \colon \mathbb N \to \mathbb N$ such that $|\Cir_n| \le p(n)$, for every $n < \omega$.  The \emph{depth} of~$\Cir_n$ is the length of the longest directed path from an input to the output of~$\Cir_n$. 

The complexity class $\P/\poly$ can be defined as comprising those Boolean functions that are computed by polynomial circuits, and $\NC^1$ consists of functions  computed by polynomial formulas (that is, circuits  every logic gate in which has at most one output).
Alternatively, a Boolean function is in $\NC^1$ iff it can be computed by a polynomial-size circuit of logarithmic depth, whose \AND- and \OR-gates have two inputs.

$\LOGCFL/\poly$ (also known as $\SAC^1$) is the class of Boolean functions computable by polynomial-size and logarithmic-depth circuits in which \AND-gates have two inputs but \OR-gates can have arbitrarily many inputs (\emph{unbounded fan-in}) and \NOT-gates can only be applied to inputs of the circuit~\cite{circuit}. 
$\ACz$ is the class of functions computable by polynomial-size circuits of constant depth with \AND- and \OR-gates of unbounded fan-in and \NOT-gates only at the inputs; $\Pitr$ is the subclass of $\ACz$ that only allows  circuits of depth 3 (not counting the \NOT-gates) with an output \AND-gate. 

Finally, a Boolean function $f = \{f_n\}_{n < \omega}$ is in the class $\NP/\poly$ if  there is a polynomial $p$ and a polynomial circuit $\Cir = \{\Cir_{n+p(n)}\}_{n < \omega}$ such that, for any $n$ and $\avec{\alpha} \in \{0,1\}^n$,
\begin{equation}\label{eq:NP_poly}
f_n(\avec{\alpha}) = 1 \quad \text{ iff } \quad \text{there is } \avec{\beta} \in \zo^{p(n)} \text{ such that } \Cir_{n+p(n)}(\avec{\alpha},\avec{\beta}) = 1
\end{equation}
(the $\avec{\beta}$-inputs are sometimes called \emph{certificate inputs}).

By allowing only \emph{monotone} circuits or formulas in the definitions of the complexity classes, we obtain their monotone variants: for example, the monotone variant of  $\NP/\poly$ is denoted by $\mNP/\poly$ 
and defined  by restricting the use of \NOT-gates in the circuits to the certificate inputs only. We note in passing that the monotone variants of the classes in~\eqref{eq:inclusions} also form a chain~\cite{Razborov85,AlonB87,KarchmerW88}:
\begin{equation} \label{eq:inclusions_monotone}
\mPitr \subsetneqq \mACz ~\subsetneqq~ \mNC^1 ~\subsetneqq~ \mNL/\poly ~\subseteq~ \mLOGCFL/\poly ~\subsetneqq~ \mP/\poly ~\subsetneqq~ \mNP/\poly.
\end{equation}
Whether the inclusion $\mNL/\poly \subseteq \mLOGCFL/\poly$ is proper remains an open problem. 

We use these facts in the next section to show lower bounds on the size of OMQ rewritings.


\subsection{NP/poly and HGP$^\textbf{3}$}

Our first result shows that $\NP/\poly$ and $\mNP/\poly$ coincide with the classes $\HGP^3$ and $\mHGP^3$ of Boolean functions computable by polynomial-size (sequences of) HGPs and monotone HGPs of degree at most~3, respectively.

\begin{theorem}\label{NBC}
$\NP/\poly = \HGP = \HGP^3$ and $\mNP/\poly = \mHGP = \mHGP^3$.
\end{theorem}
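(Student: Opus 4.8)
The plan is to prove the cycle of inclusions
\[
\HGP^3 \ \subseteq\ \HGP \ \subseteq\ \NP/\poly \ \subseteq\ \HGP^3,
\]
together with its monotone analogue, so that all the classes in the statement coincide. The inclusion $\HGP^3 \subseteq \HGP$ is trivial (a degree bound is a restriction), so only two genuine steps remain, and the same steps specialise to the monotone case.

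First I would establish $\HGP \subseteq \NP/\poly$ by a guess-and-check argument. Given a polynomial HGP $P$ over a hypergraph $H=(V,E)$, let the certificate $\avec{\beta}$ encode a subset $E' \subseteq E$ (one bit per hyperedge). A polynomial circuit then checks that (i) $E'$ is pairwise disjoint and (ii) $E'$ contains every vertex whose label evaluates to $0$ under $\avec{\alpha}$; by the definition of an HGP, $P(\avec{\alpha})=1$ iff such an $E'$ exists, which matches~\eqref{eq:NP_poly}. For the monotone version, note that the acceptance condition can be written as $\bigwedge_{v \in V}\bigl(\ell_v \OROP \mathrm{cov}_v\bigr)$, where $\ell_v$ is the label of $v$ and $\mathrm{cov}_v$ (``$v$ is covered by $E'$'') depends only on the certificate; since a monotone HGP carries only positive literals as labels, this formula is monotone in $\avec{\alpha}$ and uses $\NOT$-gates only on certificate bits, giving $\mHGP \subseteq \mNP/\poly$.

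The substantial step is $\NP/\poly \subseteq \HGP^3$ (and $\mNP/\poly \subseteq \mHGP^3$), where I would simulate a nondeterministic circuit by a degree-$3$ HGP. Starting from a polynomial circuit $\Cir$ with $f(\avec{\alpha})=1$ iff $\exists\,\avec{\beta}\ \Cir(\avec{\alpha},\avec{\beta})=1$, I push all $\NOT$-gates down to the inputs, so the internal gates compute a \emph{monotone} function of literals over $\avec{\alpha}$ and $\avec{\beta}$ (in the monotone case the $\avec{\alpha}$-inputs already occur positively). For every gate and input-literal $g$ I introduce a value vertex $c_g$ labelled $0$ together with a ``true'' edge $T_g$ and a ``false'' edge $F_g$, both containing $c_g$; since $c_g$ must be covered but the two edges share it, exactly one is selected, encoding a guessed value $v_g$. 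I then realise the downward implications of a monotone certificate by conflict vertices shared between two edges: for an $\AND$-gate $g=g_1 \ANDOP g_2$ I let $T_g$ conflict with $F_{g_1}$ and $F_{g_2}$ (forcing $v_g{=}1 \Rightarrow v_{g_1}{=}v_{g_2}{=}1$); for an $\OR$-gate $g=g_1 \OROP g_2$ I split the true edge into $T^{1}_g$ and $T^{2}_g$, conflicting respectively with $F_{g_1}$ and $F_{g_2}$ (so $v_g{=}1 \Rightarrow v_{g_1}{=}1 \OROP v_{g_2}{=}1$); an $\avec{\alpha}$-input-literal $\ell$ is guarded by a vertex in $F_g$ labelled $\ell$, which must be covered exactly when $\ell$ is false, forcing $v_g{=}0$ then; certificate literals $\beta_j,\neg\beta_j$ are left free but linked by a consistency vertex (label $1$) forbidding both from being set true; and $v_{\text{out}}{=}1$ is forced by an extra $0$-vertex lying only in $T_{\text{out}}$. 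Every $c_g$ of an $\OR$-gate lies in the three edges $F_g,T^1_g,T^2_g$, while all other vertices lie in at most two edges, so the degree is at most $3$; the $\avec{\alpha}$-labels are literals (positive in the monotone case), so the HGP is of the required kind and its size is polynomial in $|\Cir|$.

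The correctness of this construction is where the real work lies, and I would prove it by two inductions. For soundness, from any independent zero-covering selection I read off $v_g$ for each $g$, use the consistency vertices to fix a genuine $\avec{\beta}$, and show by induction up the circuit that every gate with $v_g{=}1$ truly evaluates to $1$ under $(\avec{\alpha},\avec{\beta})$ --- this is precisely where monotonicity-after-pushing-negations and the one-directional (downward) implications suffice, and where the $\OR$-split, the input guards, and the $v_{\text{out}}$-gadget are needed. For completeness I take a witnessing $\avec{\beta}$ with the true gate values, select $T_g$ or $F_g$ accordingly, and verify that all guards are met and no conflict is violated. Combining these two steps with the trivial $\HGP^3 \subseteq \HGP$ closes the cycle, and the verbatim argument for the monotone classes yields $\mNP/\poly = \mHGP = \mHGP^3$. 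I expect the main obstacles to be the exact gadget forcing $v_{\text{out}}{=}1$, the book-keeping that keeps every vertex in at most three hyperedges, and checking that the certificate-consistency gadget does not secretly raise the degree.
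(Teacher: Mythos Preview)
Your proposal is correct, and the $\HGP \subseteq \NP/\poly$ direction is essentially identical to the paper's (a CNF checking independence and coverage, monotone in $\avec{\alpha}$ with negations only on certificate bits).

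For $\NP/\poly \subseteq \HGP^3$ you take a genuinely different route from the paper. The paper keeps $\NOT$-gates inside the circuit and encodes \emph{exact} gate semantics: each gate $g_i$ gets a $0$-vertex in two hyperedges $e_i,\bar e_i$, and for $\OR$/$\AND$ it adds, besides conflict vertices, auxiliary vertices $u_i,v_j,v_k$ and two extra $2$-element hyperedges $\{v_j,u_i\},\{v_k,u_i\}$ so that the implication goes both ways (``$e_i$ is in the cover iff $e_j$ or $e_k$ is''). Degree~$3$ arises at the vertex $u_i$, which lies in $\bar e_i$ and the two auxiliary hyperedges. Your approach instead first pushes negations to the leaves, after which the internal circuit is monotone in the literal inputs, and then encodes only the \emph{downward} implications (a ``proof-certificate'' encoding), recovering degree~$3$ by splitting the true-edge of each $\OR$-gate into $T_g^1,T_g^2$ so that the value vertex $c_g$ sits in exactly three hyperedges. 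Both constructions are valid; yours avoids the auxiliary hyperedges at the price of the NNF preprocessing, and it makes the monotone case immediate (all $\avec{\alpha}$-guards are positive literals by assumption), whereas the paper obtains the monotone statement by stripping the $\neg p$-labelled guards from the general construction and then invoking monotonicity of $\Cir$ in $\avec{\alpha}$ to argue that the relaxation does not accept any extra inputs. One small point you should make explicit: the conflict and consistency vertices must be labelled~$1$ (so they never require covering), and if the output gate is an $\OR$ you should place the forcing $0$-vertex in both $T_{\text{out}}^1$ and $T_{\text{out}}^2$ rather than in a single ``$T_{\text{out}}$''.
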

\begin{proof}
Suppose $P$ is a (monotone) HGP. We construct a non-deterministic 
circuit~$\Cir$ of size polynomial in $|P|$, whose input variables are the same as the variables in $P$, certificate inputs correspond to the hyperedges of $P$, and such that $\Cir(\avec{\alpha},\avec{\beta}) = 1$ iff $\{e_i \mid \avec{\beta}(e_i) = 1\}$ is an independent set of hyperedges covering all zeros under $\avec{\alpha}$. It will then  follow that
\begin{equation}\label{eq:prop:NBC}
P(\avec{\alpha}) = 1 \quad \text{ iff } \quad \text{there is } \avec{\beta} \text{ such that } \Cir(\avec{\alpha}, \avec{\beta}) = 1.
\end{equation}
First, for each pair of intersecting hyperedges $e_i, e_j$ in $P$, we take the disjunction \mbox{$\neg e_i \vee \neg e_j$}, and, for each vertex in $P$  labelled with a literal $\li$ (that is, $p$ or $\neg p$) and the hyperedges $e_{i_1}, \dots, e_{i_k}$ incident to it, we take the disjunction $\li \vee  e_{i_1} \lor \dots \lor e_{i_k}$. The circuit $\Cir$ is then a conjunction of all such disjunctions. Note that if $P$ is monotone, then~$\neg$ is only applied to the certificate inputs, $\avec{e}$, in $\Cir$.

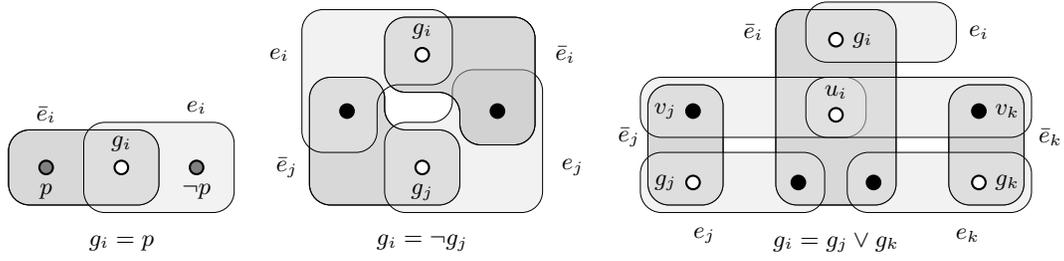
\begin{figure}[t]%
\centering%
\begin{tikzpicture}\footnotesize
\draw[ultra thin,fill=gray!60,fill opacity=0.5,rounded corners=8] (-1.5,-0.5) rectangle +(2,1);
\draw[ultra thin,fill=gray!20,fill opacity=0.5,rounded corners=8] (-0.5,-0.6) rectangle +(2,1.2);
\draw[ultra thin,rounded corners=8] (-1.5,-0.5) rectangle +(2,1);
\begin{scope}\small
\node at (-1,0.7) {$\bar{e}_i$};
\node at (1,0.8) {$e_i$};
\end{scope}
\node[point,label=above:{$g_i$}] (gxi) at (0,0) {}; 
\node[point,fill=gray,label=below:{$p$}] (xi) at (-1,0) {};
\node[point,fill=gray,label=below:{$\neg p$}] (xi) at (1,0) {};
\node at (0,-1) {\small $g_i = p$};
\begin{scope}[xshift=40mm]
\draw[ultra thin,fill=gray!60,fill opacity=0.5,rounded corners=8] (-1.5,-0.5) -- ++(0,1.7) -- ++(1,0) -- ++(0,-0.7) -- ++(1,0) -- ++ (0,-1) -- cycle;
\draw[ultra thin,fill=gray!20,fill opacity=0.5,rounded corners=8] (-0.5,-0.6) -- ++(2.1,0)  -- ++ (0,1.9)-- ++ (-1.2,0) -- ++(0,-0.7) -- ++(-0.9,0)  --cycle;
\draw[ultra thin,fill=gray!60,fill opacity=0.5,rounded corners=8] (-0.5,1)  -- ++(0,1)  -- ++(2,0) -- ++(0,-1.7) -- ++(-1,0) -- ++(0,0.7)  -- cycle;
\draw[ultra thin,fill=gray!20,fill opacity=0.5,rounded corners=8] (-1.6,0.2) -- ++(0,1.9) -- ++(2,0) -- ++(0,-1) -- ++ (-1,0) -- ++(0,-0.9) -- cycle;
\draw[ultra thin,rounded corners=8] (-1.5,-0.5) -- ++(0,1.7) -- ++(1,0) -- ++(0,-0.7) -- ++(1,0) -- ++ (0,-1) -- cycle;
\draw[ultra thin,rounded corners=8] (-0.5,1)  -- ++(0,1)  -- ++(2,0) -- ++(0,-1.7) -- ++(-1,0) -- ++(0,0.7)  -- cycle;
\begin{scope}\small
\node at (-1.8,0) {$\bar{e}_j$};
\node at (2,0) {$e_j$};
\node at (-1.9,1.5) {$e_i$};
\node at (1.9,1.5) {$\bar{e}_i$};
\end{scope}
\node[point,label=above:{$g_i$}] (gxi) at (0,1.5) {}; 
\node[point,label=below:{$g_j$}] (gxj) at (0,0) {}; 
\node[point,fill=black] (xi) at (-1,0.75) {}; 
\node[point,fill=black] (xi) at (1,0.75) {}; 
\node at (0,-1) {\small $g_i = \NOTOP g_j$};
\end{scope}
\begin{scope}[xshift=95mm]
\draw[ultra thin,fill=gray!60,fill opacity=0.5,rounded corners=8] (-2.5,-0.5) rectangle +(1,1.6);
\draw[ultra thin,fill=gray!60,fill opacity=0.5,rounded corners=8] (1.5,-0.5) rectangle +(1,1.6);
\draw[ultra thin,fill=gray!60,fill opacity=0.5,rounded corners=8] (-0.8,-0.5) rectangle +(1.6,2.6);
\draw[ultra thin,fill=gray!20,fill opacity=0.5,rounded corners=8] (-2.6,-0.6) rectangle +(2.45,0.8);
\draw[ultra thin,fill=gray!20,fill opacity=0.5,rounded corners=8] (0.15,-0.6) rectangle +(2.45,0.8);
\draw[ultra thin,fill=gray!20,fill opacity=0.5,rounded corners=8] (-2.6,0.4) rectangle +(3,0.8);
\draw[ultra thin,fill=gray!20,fill opacity=0.5,rounded corners=8] (-0.4,0.4) rectangle +(3,0.8);
\draw[ultra thin,fill=gray!20,fill opacity=0.5,rounded corners=8] (-0.4,1.4) rectangle +(2,0.8);
\draw[ultra thin,rounded corners=8] (-2.5,-0.5) rectangle +(1,1.6);
\draw[ultra thin,rounded corners=8] (1.5,-0.5) rectangle +(1,1.6);
\draw[ultra thin,rounded corners=8] (-0.8,-0.5) rectangle +(1.6,2.6);
\node[point,label=right:{$g_i$}] (gxi) at (0,1.7) {}; 
\node[point,label=above:{$u_i$}] (ui) at (0,0.7) {}; 
\node[point,label=right:{$g_k$}] (gxjp) at (1.9,-0.2) {}; 
\node[point,label=left:{$g_j$}] (gxj) at (-1.9,-0.2) {}; 
\node[point,fill=black] (xi) at (-0.5,-0.2) {}; 
\node[point,fill=black] (xip) at (0.5,-0.2) {}; 
\node[point,fill=black,label=left:{$v_j$}] (yi) at (-1.9,0.75) {}; 
\node[point,fill=black,label=right:{$v_k$}] (yio) at (1.9,0.75) {}; 
\begin{scope}\small
\node at (-1.75,-0.9) {$e_j$};
\node at (1.75,-0.9) {$e_k$};
\node at (-2.75,0.4) {$\bar{e}_j$};
\node at (2.85,0.4) {$\bar{e}_k$};
\node at (1.9,1.8) {$e_i$};
\node at (-1.1,1.8) {$\bar{e}_i$};
\end{scope}
\node at (0,-1) {\small $g_i = g_j \OROP g_k$};
\end{scope}
\end{tikzpicture}
\caption{HGP in the proof of Theorem~\ref{NBC}: black vertices are labelled with 1 and white vertices with 0.}\label{fig:NBC}
\end{figure}

\smallskip

Conversely, let $\Cir$ be a circuit with certificate inputs. We construct an HGP $P$ of degree at most 3 satisfying~\eqref{eq:prop:NBC} as follows.
For each gate $g_i$ in $\Cir$, the HGP contains a vertex $g_i$ labelled with $0$ and a pair of hyperedges $\bar{e}_i$ and $e_i$, both containing $g_i$. No other hyperedge contains $g_i$, and so either $\bar{e}_i$ or $e_i$ should be present in any cover of zeros.  To ensure this property, for each gate $g_i$, we add the following vertices and hyperedges to $P$ (see Fig.~\ref{fig:NBC}):
\begin{nitemize}
\item[--] if $g_i$ is an input $p$, then we add a vertex labelled with $\neg p$ to $e_i$  and a vertex labelled with $p$ to $\bar{e}_i$;

\item[--] if $g_i$ is a certificate input, then no additional vertices and hyperedges are added;

\item[--] if $g_i = \NOTOP g_j$, then we add a vertex labelled with $1$ to hyperedges $e_i$ and $\bar{e}_j$, and a vertex labelled with $1$ to hyperedges $\bar{e}_i$ and $e_j$;

\item[--] if $g_i = g_j \OROP g_k$, then we add a vertex labelled with $1$ to hyperedges $e_j$ and $\bar{e}_i$,
add a vertex labelled with $1$ to $e_k$ and $\bar{e}_i$;
then, we add vertices $v_j$ and $v_k$ labelled with~$1$ to $\bar{e}_j$ and $\bar{e}_k$, respectively, and a vertex $u_{i}$ labelled with $0$ to $\bar{e}_i$; finally, we add hyperedges $\{v_j, u_i\}$ and $\{v_k, u_i\}$ to $P$;

\item[--] if $g_i = g_j \ANDOP g_k$, then we add the pattern dual to the case of $g_i = g_j \OROP g_k$:
we add a vertex labelled with $1$ to  $\bar{e}_j$ and $e_i$,  a vertex labelled with $1$ to  $\bar{e}_k$ and $e_i$;
then, we add vertices $v_j$ and $v_k$ labelled with $1$ to $e_j$ and $e_k$, respectively, and a vertex $u_{i}$ labelled with $0$ to $e_i$; finally, we add hyperedges $\{v_j, u_i\}$ and $\{v_k, u_i\}$ to $P$.
\end{nitemize}
Finally, we add one more vertex labelled with $0$ to $e_m$ for the output gate $g_m$ of $\Cir$, which ensures that $e_m$ must be included the cover. 
It is easily verified that the constructed HGP is of degree at most 3. 
One can establish~\eqref{eq:prop:NBC} by induction on the structure of $\Cir$.
We illustrate the proof of the inductive step for the case of \mbox{$g_i = g_j \OROP g_k$}: we show that $e_i$ is in the cover iff it contains either $e_j$ or $e_k$.
Suppose the cover contains~$e_j$. Then it cannot contain $\bar{e}_i$, and so it contains $e_i$. The vertex $u_i$ in this case can be covered by $\{v_j, u_i\}$ since $\bar{e}_j$ is not in the cover.
Conversely, if neither $e_j$ nor $e_k$ is in the cover, then it must contain both $\bar{e}_j$ and
$\bar{e}_k$, and so neither $\{v_j, u_i\}$ nor $\{v_k, u_i\}$ can belong to the cover,
and thus we will have to include $\bar{e}_i$ to the cover.

If $\Cir$ is monotone, then we remove from $P$ all vertices labelled with $\neg p$, for an input~$p$, and denote the resulting program by $P'$. We claim that, for any $\avec{\alpha}$, we have $P'(\avec{\alpha})=1$ iff there is $\avec{\beta}$ such that $\Cir(\avec{\alpha}, \avec{\beta})=1$. The implication $(\Leftarrow)$   is trivial: if \mbox{$\Cir(\avec{\alpha}, \avec{\beta})=1$} then, by the argument above, $P(\avec{\alpha})=1$ and, clearly,  $P'(\avec{\alpha}) = 1$. Conversely, suppose $P'(\avec{\alpha})=1$. Each of the vertices $g_i$ in $P'$ corresponding to the inputs is covered by one of the hyperdges $e_i$ or $\bar{e}_i$. Let $\avec{\alpha}'$ be the vector corresponding to these hyperedges; clearly, $\avec{\alpha}' \leq \avec{\alpha}$. This cover of vertices of $P'$ gives us $P(\avec{\alpha}')=1$.
Thus, by the argument above, there is $\avec{\beta}$ such that $\Cir(\avec{\alpha}',\avec{\beta})=1$. Since $\Cir$ is monotone,  $\Cir(\avec{\alpha},\avec{\beta})=1$.
\end{proof}


\subsection{NL/poly and HGP$^\textbf{2}$}

A Boolean function belongs to the class $\NL/\poly$ iff it can be computed by a polynomial-size \emph{nondeterministic branching program} (NBP). We remind the reader (consult~\cite{Jukna12} for more details)  that an NBP $B$ is a directed graph $G=(V,E)$,  whose arcs are labelled with the Boolean constants 0 and 1, propositional variables $p_1, \dots, p_n$ or their negations, and which distinguishes two vertices $s,t\in V$. Given an assignment $\avec{\alpha}$ to variables $p_1,\dots,p_n$, we write $s \to_{\avec{\alpha}}t$ if there is a path in $G$ from~$s$ to~$t$ all of whose labels evaluate to $1$ under $\avec{\alpha}$. We say that an NBP $B$ \emph{computes} a Boolean function $f$ if $f(\avec{\alpha}) = 1$ iff $s \to_{\avec{\alpha}}t$, for any $\avec{\alpha} \in \zo^n$.
The \emph{size} $|B|$ of $B$ is the size of the underlying graph, $|V| + |E|$.
An NBP is \emph{monotone} if there are no negated variables among its labels. The class of Boolean functions computable by polynomial-size monotone NBPs is denoted by $\mNL/\poly$; the class of functions $f$ whose \emph{duals} $f^*(p_1,\dots,p_n) = \neg f(\neg p_1,\dots,\neg p_n)$ are in $\mNL/\poly$ is denoted by $\comNL/\poly$.

%
%
%
%
%


\begin{theorem}\label{thm:deg_2}
$\NL/\poly = \HGP^2$ and $\comNL/\poly = \mHGP^2$.
\end{theorem}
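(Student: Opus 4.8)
The plan is to read a degree-$2$ hypergraph as an edge-labelled (multi-)graph and to recognise the acceptance condition of the HGP as a restricted $2$-satisfiability problem. Given a degree-$2$ HGP based on $H=(V,E)$, I pass to the graph $G_H$ whose nodes are the hyperedges of $H$ and in which every vertex $v\in V$ lying in two hyperedges $e,e'$ becomes an edge $\{e,e'\}$ carrying the label of $v$; vertices of degree $1$ and $0$ contribute unit clauses below. Introducing a Boolean ``selection'' variable $x_e$ for each hyperedge, an independent subset covering all zeros under $\avec{\alpha}$ is exactly a satisfying assignment of the $2$-CNF $\Phi_{\avec{\alpha}}$ that contains, for every edge $\{e,e'\}$ of $G_H$, the clause $\neg x_e\vee\neg x_{e'}$ (independence), and, for every edge whose label evaluates to $0$ under $\avec{\alpha}$, the additional clause $x_e\vee x_{e'}$ (covering); a degree-$1$ vertex with vanishing label yields the unit clause $x_e$, and a degree-$0$ one forces unsatisfiability. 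Thus $P(\avec{\alpha})=1$ iff $\Phi_{\avec{\alpha}}$ is satisfiable.

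For $\HGP^2\subseteq\NL/\poly$ I use the implication graph $I_{\avec{\alpha}}$ of $\Phi_{\avec{\alpha}}$: its arcs flip literal polarity and are present either unconditionally (from the independence clauses) or exactly when the label of the corresponding vertex evaluates to $0$ (from the covering clauses), so each arc is a projection of $\avec{\alpha}$. By the standard criterion, $\Phi_{\avec{\alpha}}$ is unsatisfiable iff $x_e\rightsquigarrow\neg x_e$ and $\neg x_e\rightsquigarrow x_e$ in $I_{\avec{\alpha}}$ for some $e$, which is a reachability property and hence computable by a polynomial-size NBP. Therefore $P\in\coNL/\poly$, and since $\NL/\poly=\coNL/\poly$ (Immerman--Szelepcs\'enyi, which holds non-uniformly), $P\in\NL/\poly$.

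For the converse $\NL/\poly\subseteq\HGP^2$ I construct, from a layered NBP $B$ for $f$, a degree-$2$ HGP whose $\Phi_{\avec{\alpha}}$ is satisfiable iff $s\to_{\avec{\alpha}}t$. The only clause shapes available are the symmetric ``at most one selected'' and ``at least one selected'', so I must simulate the \emph{directed} reachability of $B$ by a chain of such NAND/NAE gadgets running along the levels of $B$; the implication arcs of $I_{\avec{\alpha}}$ remain directed because each flips polarity, which is precisely what lets these symmetric constraints capture a genuinely $\NL$ (rather than merely undirected, hence $\mathsf{L}$) condition. Concretely, I lay the hyperedges along the edges of a caterpillar whose spine follows the levels of $B$, so that the underlying hypergraph is in fact a bounded-leaf tree hypergraph (matching the row $\HGP^2=\THGP(\ell)$ of Table~\ref{table:comp:classes}), and I use the covering clauses of the $0$-labelled vertices to force, level by level, the choice of one active out-edge of the currently guessed vertex. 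Verifying that a cover of all zeros exists iff some active $s$--$t$ path exists, and that fixing and renaming input variables keeps the degree at $2$ (so that the reduction of an arbitrary $\NL/\poly$ function to reachability is absorbed, using Proposition~\ref{hyper:program}), is the main technical obstacle.

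Finally, for $\comNL/\poly=\mHGP^2$ I specialise the same translations to positive labels. For a monotone HGP the covering clause of a $p_i$-labelled edge is present iff $\avec{\alpha}(p_i)=0$, so reading the construction through the duality $f^*(\avec{p})=\neg f(\neg\avec{p})$ turns ``cover all zeros'' into ``some active path'': $P^*(\avec{\alpha})$ equals the unsatisfiability of $\Phi_{\neg\avec{\alpha}}$, whose implication arcs are now present monotonically in $\avec{\alpha}$, so $P^*$ is computed by a monotone NBP and $P\in\comNL/\poly$. Conversely, given a monotone NBP for $f^*$ I dualise the caterpillar construction of the previous paragraph, using that a directed $s$--$t$ cut certifying non-reachability is exactly an independent cover of the $0$-labelled vertices; this yields a monotone degree-$2$ HGP for $f$. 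Thus the monotone statement is the same pair of constructions with negations forbidden, and the appearance of the \emph{co}-class is accounted for by the dualisation inherent in ``covering zeros''.
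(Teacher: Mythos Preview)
Your direction $\HGP^2 \subseteq \NL/\poly$ via the $2$-SAT implication graph and Aspvall--Plass--Tarjan is exactly the paper's argument, and your treatment of the monotone case by dualising the same construction is also what the paper does.

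The gap is in $\NL/\poly \subseteq \HGP^2$. You appeal to the row ``$\HGP^2 = \THGP(\ell)$'' of Table~\ref{table:comp:classes} to justify that a caterpillar-shaped THGP suffices. That row only asserts that both program classes capture the \emph{complexity class} $\NL/\poly$; it does not say that a bounded-leaf THGP is a degree-$2$ HGP. In fact the paper's own $\THGP(2)$ construction in Theorem~\ref{thm:linear_hgp} has unbounded degree: the interval hyperedges $h^k_i,g^k_i$ overlap heavily along the spine. So laying hyperedges on a caterpillar does not by itself give degree $2$, and you yourself flag the correctness of the encoding as ``the main technical obstacle''---in other words, the construction is not actually supplied. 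There is also a polarity mismatch: you want $\Phi_{\avec\alpha}$ \emph{satisfiable} iff a path \emph{exists}, but satisfiability of a $2$-CNF is naturally a non-reachability condition on the implication graph, so a direct encoding of reachability as satisfiability is not obvious without a complementation step.

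The paper's actual construction handles both issues at once. It first invokes Immerman--Szelepcs\'enyi to pass to a polynomial NBP $B$ for $\neg f$, and then builds a degree-$2$ HGP whose cover condition encodes \emph{non}-reachability in $B$ (hence the HGP computes $f$). Concretely: each arc $e$ of $B$ contributes two vertices $e^0,e^1$ ($e^0$ labelled by the negated arc label, $e^1$ by $1$) and an arc-hyperedge $\{e^0,e^1\}$; each internal node $v$ contributes a vertex-hyperedge collecting the $e^1$ of incoming arcs and the $e^0$ of outgoing arcs; a single $0$-vertex is attached to the hyperedge of arcs entering $t$. Every vertex then lies in at most one arc-hyperedge and one vertex-hyperedge, which is exactly degree $2$. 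An independent cover corresponds to the frontier between ``reachable from $s$ along $1$-arcs'' and ``not reachable'', with $t$ forced to the non-reachable side---your cut intuition in the monotone paragraph is morally this, but the explicit arc/vertex gadget (and starting from the NBP for the complement) is what makes the degree bound work.
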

\begin{proof}
As follows from~\cite{szelepcsenyi88,immerman88}, if a function $f$ is computable by a polynomial-size NBP, then $\neg f$ is also computable by a polynomial-size NBP. So  suppose $\neg f$ is computed by an NBP $B$. We construct an HGP $P$ computing~$f$ of degree at most~2 and polynomial size in $|B|$ as follows (see Fig.~\ref{fig:deg_2}). 
\begin{figure}[t]%
\centering%
\begin{tikzpicture}\footnotesize
\node[point,label=left:{$v_0$}] (v0) at (-1.5,0) {};
\node[point,label=above:{$v_1$}] (v1) at (0,0) {};
\node[point,label=above:{$v_2$}] (v2) at (1.5,0.5) {};
\node[point,label=below:{$v_3$}] (v3) at (1.5,-0.5) {};
\draw[->,thick] (v0) -- (v1) node[below,midway,sloped] {\small${\scriptstyle e_0\colon} q$};
\draw[->,thick] (v1) -- (v2) node[above,midway,sloped] {\small${\scriptstyle e_1\colon} \neg q$};
\draw[->,thick] (v1) -- (v3) node[below,midway,sloped] {\small${\scriptstyle e_2\colon} p$};
\begin{scope}[xshift=30mm]
\draw[ultra thin,fill=gray!60,fill opacity=0.5,rounded corners=8] (2.1,-1) -- ++(0,2) -- ++(2.4,0) -- ++(0,-2) -- cycle;
\draw[ultra thin,rounded corners=8] (2.1,-1) -- ++(0,2) -- ++(2.4,0) -- ++(0,-2) -- cycle;
\draw[ultra thin,fill=gray!20,fill opacity=0.5,rounded corners=8] (0.3,-0.4) -- ++(0,0.8) -- ++(2.9,0) -- ++(0,-0.8) -- cycle;
\draw[ultra thin,rounded corners=8] (0.3,-0.4) -- ++(0,0.8) -- ++(2.9,0) -- ++(0,-0.8) -- cycle;
\draw[ultra thin,fill=gray!20,fill opacity=0.5,rounded corners=8] (3.3,0) -- ++(0,0.7) -- ++(2.9,0.25) -- ++(0,-0.7) -- cycle;
\draw[ultra thin,rounded corners=8] (3.3,0) -- ++(0,0.7) -- ++(2.9,0.25) -- ++(0,-0.7) -- cycle;
\draw[ultra thin,fill=gray!20,fill opacity=0.5,rounded corners=8] (3.3,0) -- ++(0,-0.7) -- ++(2.9,-0.25) -- ++(0,0.7) -- cycle;
\draw[ultra thin,rounded corners=8] (3.3,0) -- ++(0,-0.7) -- ++(2.9,-0.25) -- ++(0,0.7) -- cycle;
\draw[ultra thin,rounded corners=8] (2.1,-1) -- ++(0,2) -- ++(2.4,0) -- ++(0,-2) -- cycle;
\node at (3.3,1.3) {$v_1$-hyperedge};
\node at (1,0.7) {$e_0$-hyperedge};
\node[rotate=5] at (5.5,1.2) {$e_1$-hyperedge};
\node[rotate=-5] at (5.5,-1.2) {$e_2$-hyperedge};
\node[point,fill=gray,label=left:{\scriptsize$e_0^0$},label=below:{\small $\neg q$}] (e00) at (1,0) {};
\node[point,fill=gray,label=left:{\scriptsize$e_1^0$},label=above:{\small $q$}] (e10) at (4,0.25) {};
\node[point,fill=gray,label=left:{\scriptsize$e_2^0$},label=below:{\small $\neg p$}] (e20) at (4,-0.25) {};
\node[point,fill=black,label=right:{\scriptsize$e_0^1$}] (e01) at (2.5,0) {};
\node[point,fill=black,label=right:{\scriptsize$e_1^1$}] (e11) at (5.5,0.5) {};
\node[point,fill=black,label=right:{\scriptsize$e_2^1$}] (e21) at (5.5,-0.5) {};
\end{scope}
\end{tikzpicture}
\caption{HGP in the proof of Theorem~\ref{thm:deg_2}: black vertices are labelled with 1.}\label{fig:deg_2}
\end{figure}
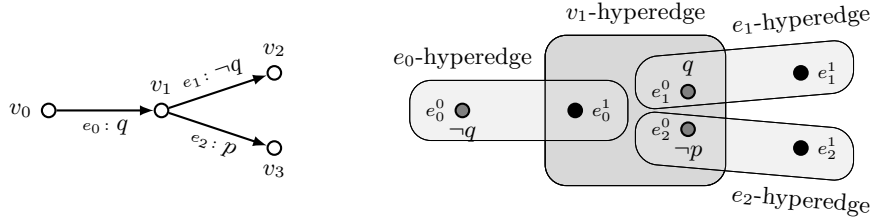
For each arc $e$ in $B$, the HGP $P$ has two vertices $e^0$ and $e^1$, which represent the beginning and the end of $e$, respectively. The vertex $e^0$ is labelled with the \emph{negated} label of $e$ in $B$ and $e^1$  with $1$. For each arc $e$ in $B$, the HGP $P$ has an $e$-hyperedge $\{e^0, e^1\}$.
For each vertex $v$ in $B$ but $s$ and $t$, the HGP $P$ has a $v$-hyperedge comprising all vertices $e^1$ for the arcs $e$ leading to $v$, and all vertices $e^0$ for the arcs $e$ leaving $v$. We also add to the HGP $P$ a vertex $w$ labelled with~$0$  and
a hyperedge, $\bar e_w$, that consists of $w$ and all vertices $e^1$ for the arcs $e$ in $B$ leading to $t$. 
We claim that the constructed HGP $P$ computes $f$.
Indeed, if $s \not\to_{\avec{\alpha}} t$ then the following subset of hyperedges is independent and covers all zeros: all $e$-hyperedges, for the arcs~$e$ reachable from $s$ and labelled with 1 under $\avec{\alpha}$, and all $v$-hyperedges with $s\not\to_{\avec{\alpha}} v$ (including $\bar e_w$).
Conversely, if $s\to_{\avec{\alpha}} t$ then one can show by induction that, for each arc $e$ of the path, the $e$-hyperedge must be in the cover of all zeros. Thus, no independent set can cover $w$, which is labelled with 0.

Conversely, suppose $f$ is computed by an HGP $P$ of degree 2 with hyperedges $e_1, \dots, e_k$. We first provide a graph-theoretic characterisation of independent sets covering all zeros based on the implication graph~\cite{AspvallPlassTarjan79}. 
With every hyperedge $e_i$ we associate a propositional variable $u_i$ and with every assignment $\avec{\alpha}$ we associate the following set $\Phi_{\avec{\alpha}}$ of  propositional binary clauses:
\begin{align*}
& \neg u_i  \lor \neg u_j, && \text{if } \ e_i\cap e_j \ne \emptyset,\\
& u_i \lor u_j, && \text{if there is } \ v\in e_i\cap e_j \text{ with } \avec{\alpha}(v) = 0.
\end{align*}
Informally, the former means that  intersecting hyperedges cannot be chosen at the same time and the latter that all zeros must be covered; note that all vertices have at most two incident edges.
By definition, $X$ is an independent set covering all zeros iff  $X = \{ e_i \mid  \avec{\gamma}(u_i) = 1\}$, for some assignment $\avec{\gamma}$ satisfying $\Phi_{\avec{\alpha}}$. Let $C_{\avec{\alpha}} = (V, E_{\avec{\alpha}})$ be the implication graph of $\Phi_{\avec{\alpha}}$, that is, a directed graph with
\begin{align*}
V & ~=~ \bigl\{ u_i, \bar{u}_i \mid 1\leq i \leq k\bigr\},\\
E_{\avec{\alpha}} & ~=~ \bigl\{ (u_i, \bar{u}_j) \mid e_i \cap e_j \ne \emptyset \bigr\} \ \cup
\bigl\{ (\bar{u}_i,u_j) \mid \text{there is } v\in e_i\cap e_j \text{ with } \avec{\alpha}(v) = 0 \bigr\}.
\end{align*}
($V$ is the set of all `literals' for the variables of $\Phi_{\avec{\alpha}}$ and $E_{\avec{\alpha}}$ is the arcs for the implicational form of the clauses of $\Phi_{\avec{\alpha}}$.) Note that $\neg u_i \lor \neg u_j$ gives rise to two implications, $u_i \to \neg u_j$ and $u_j \to \neg u_i$, and so to two arcs in the graph; similarly, for $u_i \lor u_j$.
%
By~\cite[Theorem~1]{AspvallPlassTarjan79},
$\Phi_{\avec{\alpha}}$ is satisfiable iff there is no $u_i$ with a (directed) cycle going through $u_i$ and $\bar{u}_i$.
It will be convenient for us to regard the $C_{\avec{\alpha}}$, for assignments~$\avec{\alpha}$, as a single labelled directed graph $C$ with arcs of the form $(u_i, \bar{u}_j)$ labelled with $1$ and arcs of the form $(\bar{u}_i, u_j)$ labelled with the negation of the literal labelling the uniquely defined $v\in e_i\cap e_j$ (recall that the hypergraph of $P$ is of degree~2). It should be clear that $C_{\avec{\alpha}}$ has a cycle going through $u_i$ and $\bar{u}_i$ iff we have both $\bar{u}_i \to_{\avec{\alpha}} u_i$ and $u_i \to_{\avec{\alpha}} \bar{u}_i$ in $C$.
The required NBP $B$ contains distinguished vertices $s$ and $t$, and, for each hyperedge $e_i$ in $P$, two copies, $C_i^0$ and $C_i^1$, of $C$ with additional arcs from $s$ to the $\bar{u}_i$ vertex of $C_i^0$, from the $u_i$ vertex of $C_i^0$ to the $u_i$ vertex of $C_i^1$, and from the $\bar{u}_i$ vertex of $C_i^1$ to $t$; see Fig.~\ref{fig:deg_2:NBP}. By construction, $s\to_{\avec{\alpha}} t$ iff there is a hyperedge $e_i$ in $P$ such that $C_{\avec{\alpha}}$ contains a cycle going through $u_i$ and $\bar{u}_i$.
We have thus constructed a polynomial-size NBP $B$ computing $\neg f$, and thus $f$ must also be computable by a polynomial-size NBP.

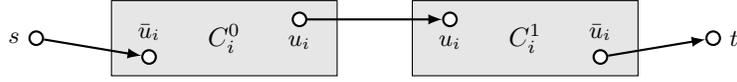
\begin{figure}[t]%
\centering%
\begin{tikzpicture}\footnotesize
\node[point,label=left:{$s$}] at (0,0) (s) {}; 
\filldraw[fill=gray!20] (1,-0.5) rectangle +(3,1);
\filldraw[fill=gray!20] (5,-0.5) rectangle +(3,1);
\node[point,label=right:{$t$}] at (9,0) (t) {}; 
\node[point,label=above:{$\bar{u}_i$}] at (1.5,-0.25) (bui) {}; 
\node at (2.5,0) {\normalsize $C_i^0$};
\node[point,label=below:{$u_i$}] at (3.5,0.25) (ui) {}; 
\draw[thick,->] (s) -- (bui);
\node[point,label=below:{$u_i$}] at (5.5,0.25) (bui2) {}; 
\node at (6.5,0) {\normalsize $C_i^1$};
\node[point,label=above:{$\bar{u}_i$}] at (7.5,-0.25) (ui2) {}; 
\draw[thick,->] (ui2) -- (t);
\draw[thick,->] (ui) -- (bui2);
\end{tikzpicture}
\caption{The NBP in the proof of Theorem~\ref{thm:deg_2}.}\label{fig:deg_2:NBP}
\end{figure}

As to $\comNL/\poly = \mHGP^2$,  observe that the first construction, if applied to a monotone NBP for $f^*$, produces a polynomial-size HGP of degree~2 computing $\neg f^*$, all of whose labels are negative. By removing negations from labels, we obtain a monotone HGP computing~$f$.
The second construction allows us to transform a monotone HGP of degree~2 for $f$ into an NBP with only negative literals that computes $\neg f$. By changing the polarity of the literals in the labels, we obtain a monotone NBP computing $f^*$. 
\end{proof}

\subsection{NL/poly and THGP($\ell$)}\label{sec:NLpoly-THGP}

For any natural $\ell \ge 2$, we denote by $\THGP(\ell)$ and $\mTHGP(\ell)$ the classes of Boolean functions computable by (sequences of) polynomial-size THGPs and, respectively, monotone THGPs whose underlying trees have at most $\ell$ leaves.

\begin{theorem}\label{thm:linear_hgp}
$\NL/\poly = \THGP(\ell)$ and $\mNL/\poly = \mTHGP(\ell)$, for any $\ell \ge 2$.
\end{theorem}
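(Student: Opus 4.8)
The plan is to prove $\NL/\poly = \THGP(\ell)$ (and its monotone counterpart) by showing the two inclusions through NBPs, reusing Theorem~\ref{thm:deg_2} where convenient. The key observation is that a tree hypergraph whose underlying tree has at most $\ell$ leaves is, for fixed $\ell$, essentially a bounded number of paths glued at branching points, and each path can be simulated by an NBP reading its vertices in order. Since $\THGP(\ell) \subseteq \THGP \subseteq \HGP^2$-like arguments do not immediately apply (THGPs need not have bounded degree), I would argue directly rather than by appeal to Theorem~\ref{thm:deg_2}.

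For the inclusion $\THGP(\ell) \subseteq \NL/\poly$, let $P$ be a THGP on a tree hypergraph $H$ with underlying tree $T$ having at most $\ell$ leaves. First I would observe that since each hyperedge induces a \emph{convex subtree}, membership of an independent, zero-covering set of hyperedges is a highly \emph{local} constraint along the tree: as one walks along any root-to-leaf path, the hyperedges covering consecutive vertices (edges of $T$) must form a sequence of convex intervals. The plan is to build an NBP that guesses, for each of the $O(\ell)$ paths of $T$ between branching nodes/leaves, which hyperedge (if any) currently covers the vertex being read, nondeterministically threading these guesses together. The state of the NBP records a bounded amount of information: at each branching node of $T$ (of which there are at most $\ell-1$), the guessed hyperedge spanning into each incident path. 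Because $\ell$ is a constant, this is polynomially many states, and the arc labels check the vertex literal against whether that vertex is required to be 1 (uncovered) or may be 0 (covered). Convexity guarantees that a hyperedge, once entered and exited along a path, need not be re-entered, so the NBP never needs to remember unboundedly many active hyperedges simultaneously.

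For the converse $\NL/\poly \subseteq \THGP(\ell)$, it suffices to treat $\ell = 2$, since $\mTHGP(2) \subseteq \mTHGP(\ell)$ and likewise in the non-monotone case; a tree with two leaves is just a path, so a $\THGP(2)$ is a \emph{linear} HGP. The plan is to show that an NBP can be simulated by a path hypergraph program, mirroring the NBP-to-HGP construction of Theorem~\ref{thm:deg_2} but laying the resulting hyperedges out along a single path. I would take a polynomial-size NBP $B$ for $f$ (using $\NL = \coNL$ of Szelepcs\'enyi--Immerman as in the previous proof to handle $\neg f$), and arrange its arc-gadgets sequentially so that the underlying hypergraph is a path with the $e^0, e^1$ vertices and the $v$-hyperedges realised as intervals; convexity of each interval is automatic on a path. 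The main technical check is that the degree and independence/covering conditions translate faithfully to the linear layout, which follows the same cycle-in-implication-graph characterisation used in Theorem~\ref{thm:deg_2}.

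I expect the \textbf{main obstacle} to be the $\THGP(\ell) \subseteq \NL/\poly$ direction, specifically formalising that the NBP needs to remember only boundedly much state. The subtlety is that hyperedges may be large and overlap in complicated ways at branching nodes, so I must argue carefully that convexity plus the at-most-$\ell$-leaves condition bounds the number of hyperedges that are simultaneously ``active'' across any edge-cut of $T$ by a function of $\ell$ alone. Once that structural bound is established, the NBP construction and the polynomial state count are routine; the monotone case follows by checking that the constructions introduce no negations when the input program/NBP is monotone, exactly as in the proof of Theorem~\ref{thm:deg_2}.
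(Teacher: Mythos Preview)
Your $\THGP(\ell) \subseteq \NL/\poly$ direction is on the right track, though vague. The paper formalises your ``bounded number of simultaneously active hyperedges'' as \emph{flat subsets}: independent sets of at most $\ell$ hyperedges forming a layer (no root-to-leaf path meets two of them). There are polynomially many such subsets, and the NBP's nodes are pairs $(u_F,\bar u_F)$ for each flat $F$, with arcs from one layer to any strictly deeper one labelled by the conjunction of the uncovered-vertex variables between them. Your path-walking description would need to be reorganised into this layered form to work, but the core intuition is sound.

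The real gap is in the $\NL/\poly \subseteq \THGP(2)$ direction. You propose to reuse the construction of Theorem~\ref{thm:deg_2} and ``lay its arc-gadgets along a path,'' but that construction does not produce a tree hypergraph and cannot be linearised: the $v$-hyperedge there contains the $e^1$-vertices of \emph{all} arcs into $v$ and the $e^0$-vertices of \emph{all} arcs out of $v$, and each $e^0$ must simultaneously sit next to its $e^1$ (for the $e$-hyperedge) and inside the source-vertex hyperedge. As soon as some NBP vertex has out-degree or in-degree at least two, these adjacency constraints are unsatisfiable on a path. The paper instead uses a genuinely different, \emph{layered} construction: it lays out on a path $n$ copies of the NBP's vertex set alternating with $n-1$ copies of its edge set, and a pair of hyperedges $h^k_i,g^k_i$ encodes choosing edge $e_i$ as the $k$th step of an $s$--$t$ walk of length $n-1$.

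This difference also matters for the monotone claim. Theorem~\ref{thm:deg_2} goes through complementation and yields $\comNL/\poly = \mHGP^2$, not $\mNL/\poly$; following that route here you would prove $\mTHGP(2) = \comNL/\poly$, which is not the statement. The paper's layered simulation is \emph{direct}: the THGP's labels are exactly the NBP's edge labels (together with $0$'s), so a monotone NBP yields a monotone THGP, giving $\mNL/\poly \subseteq \mTHGP(2)$ without any appeal to Immerman--Szelepcs\'enyi.
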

\begin{proof}
Suppose a polynomial-size THGP $P$ computes a Boolean function $f$. 
For simplicity,  we consider only $\ell =2$ here and prove the general case in Appendix~\ref{app:circuit_complexity}. Thus, we can assume that the vertices $v_1, \dots, v_n$ of $P$ are consecutive edges of the path graph underlying $P$, and therefore, every hyperedge in $P$ is of the form \mbox{$[v_i,v_{i+m}]=\{v_i,\dots,v_{i+m}\}$}, for some $m \ge 0$. We add to $P$ two extra vertices, $v_0$ and $v_{n+1}$ (thereby extending the underlying 2-leaf tree to $v_0,v_1, \dots, v_n,v_{n+1}$) and label them with 0; we also add two hyperedges $s = \{v_0\}$ and $t = \{v_{n+1}\}$ to $P$. Clearly, the resulting THGP $P'$ computes the same $f$.
To construct a polynomial-size NBP $B$ computing $f$, we take a directed graph whose vertices are hyperedges  of $P'$ and which contains an arc from  $e_i = [v_{i_1}, v_{i_2}]$ to $e_j = [v_{j_1}, v_{j_2}]$ iff 
$i_2 < j_1$; we label this arc with $\bigwedge_{i_2 <  k < j_1} \li_k$, where $\li_k$ is the label of $v_k$ in HGP $P$.
It is not hard to see that a path from $s$ to $t$ evaluated to~1 under given assignment~$\avec{\alpha}$ corresponds to a cover of zeros in $P'$ under $\avec{\alpha}$. Finally, to get rid of conjunctive labels on edges, we replace every arc with a label $\li_{i_1}\land \dots\land \li_{i_k}$ by a sequence of $k$ arcs consequently labelled with $\li_{i_1}, \dots, \li_{i_k}$.

Conversely, suppose a Boolean function $f$ is computed by an NBP $B$ based on a directed graph with vertices $V = \{v_1, \dots, v_n\}$, edges $E = \{e_1, \dots, e_m\}$, $s=v_1$ and $t=v_n$. Without loss of generality, we assume that $e_m$ is a loop from $t$ to $t$ labelled with~1. Thus, if there is a path from $s$ to $t$ whose labels evaluate to 1, then there is such a path of length $n-1$. 
We now construct a polynomial-size THGP computing $f$ whose underlying tree $T$ has two leaves. The vertices of the tree $T$ are arranged into $n$ vertex blocks and $n-1$ edge blocks, which alternate. The $k$th vertex (edge) block contains two copies $v_i^k, \bar v_i^k$ (respectively, $e_i^k, \bar e_i^k$) of every  $v_i \in V$
(respectively, $e_i \in E$):
\begin{multline*}
\underbrace{\textcolor{gray!50}{v_1^1}, \bar{v}_1^1, v_2^1,\bar{v}_2^1,\dots,v_n^1,\bar{v}_n^1,}_{\text{1st vertex block}} \ \ 
\underbrace{e_1^1, \bar{e}_1^1, e_2^1,\bar{e}_2^1,\dots,e_m^1,\bar{e}_m^1,}_{\text{1st edge block}} \ \ 
\underbrace{v_1^2, \bar{v}_1^2, v_2^2,\bar{v}_2^2,\dots,v_n^2,\bar{v}_n^2,}_{\text{2nd vertex block}} \ \ \dots\\
\underbrace{e_1^{n-1}, \bar{e}_1^{n-1}, e_2^{n-1},\bar{e}_2^{n-1},\dots,e_m^{n-1},\bar{e}_m^{n-1},}_{\text{$(n-1)$th edge block}} \ \ 
\underbrace{v_1^n, \bar{v}_1^n, v_2^n,\bar{v}_2^n,\dots,v_n^n,\textcolor{gray!50}{\bar{v}_n^n}}_{\text{$n$th vertex block}}.
\end{multline*}
We remove the first, $v^1_1$, and last vertex, $\bar v_n^n$ (shown in grey in the formula above), and connect the adjacent vertices by edges to construct the undirected tree $T$. 
Consider now a hypergraph~$H$ whose vertices are the edges of $T$ and hyperedges are of the form $h^k_i = [\bar v^k_j, e^k_i]$ and $g^k_i = [{\bar e}^k_i, v^{k+1}_{j'}]$, for $e_i =(v_j,v_{j'}) \in E$ and $1 \leq k < n$. The vertices of $H$ of the form $\{e^k_i,\bar e^k_i\}$, which separate hyperedges $h^k_i$ and $g^k_i$, are labelled with the label of $e_i$ in the given NBP $B$, and all other vertices of $H$ with $0$. 
%
%
%
%
%
%
%
We show now that the constructed THGP $P$ computes $f$. Indeed, if $f(\avec{\alpha})=1$,
then there is a path $e_{i_1}, \dots, e_{i_{n-1}}$ from $v_1$ to $v_n$ whose labels evaluate to 1 under $\avec{\alpha}$. It follows that $\{h^k_{i_k}, g^k_{i_k} \mid 1 \leq k< n\}$ is an independent set in $H$ covering all zeros. 
Conversely, if $E'$ is an independent set in~$H$ and covers all zeros under $\avec{\alpha}$, then it must contain exactly one pair of hyperedges $h^k_{i_k}$ and $g^k_{i_k}$ for every $k$ with $1 \leq k <n$, and the corresponding sequence of edges $e_{i_1}, \dots, e_{i_{n-1}}$ defines a path from $v_1$ to $v_n$. 
Moreover, since $E'$ does not cover vertices $\{e^k_{i_k},\bar e^k_{i_k}\}$, for $1 \leq k <n$, their labels (that is, the labels of the  $e_{i_k}$ in $B$)  evaluate to 1 under $\avec{\alpha}$. 
\end{proof}

To prove Theorem~\ref{bbcq-ndl} below, we shall require a somewhat different variant of Theorem~\ref{thm:linear_hgp}. The proof of the following result is given in Appendix~\ref{app:circuit_complexity}:

\begin{theorem}\label{thm:linear_hgp1}
Fix $\ell \ge 2$. For any tree hypergraph $H$ based on a  tree with at most $\ell$ leaves, the function $f_H$ can be computed by an NBP of size polynomial in $|H|$.
\end{theorem}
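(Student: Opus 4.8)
$\textbf{Proof proposal.}$ The plan is to reduce Theorem~\ref{thm:linear_hgp1} to the (already-proven) easy direction of Theorem~\ref{thm:linear_hgp}: the construction that turns a THGP on a tree with $\le \ell$ leaves into a polynomial-size NBP. The difference is that here we are handed a tree hypergraph $H$ directly, rather than a polynomial-size THGP, and we want an NBP for the hypergraph function $f_H$ itself (not merely for some subfunction). So the first thing I would do is recall the forward construction from the proof of Theorem~\ref{thm:linear_hgp} and check that it applies verbatim to an arbitrary THGP based on $H$; the content of the theorem is really that the shape of $H$ (a bounded number of leaves) is what enables the NBP, independently of the vertex labels.

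First I would handle the case $\ell = 2$, where the underlying tree is a path and the vertices of $H$ are consecutive edges $v_1, \dots, v_n$ of that path, so every hyperedge is an interval $[v_i, v_{i+m}] = \{v_i, \dots, v_{i+m}\}$. By Proposition~\ref{hyper:thgp}~(\emph{i}), $f_H$ is computed by a polynomial-size monotone THGP $P$ on a path, and then the forward direction of Theorem~\ref{thm:linear_hgp} yields an NBP of size polynomial in $|P|$, hence in $|H|$, computing $f_H$. Concretely: build the directed graph whose nodes are the hyperedges (plus sentinel hyperedges $s = \{v_0\}$, $t = \{v_{n+1}\}$), with an arc from $e_i = [v_{i_1}, v_{i_2}]$ to $e_j = [v_{j_1}, v_{j_2}]$ precisely when $i_2 < j_1$, labelled by the conjunction $\bigwedge_{i_2 < k < j_1} p_{v_k}$ of the labels of the uncovered vertices strictly between them, then split each conjunctive arc into a chain of single-literal arcs. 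A path $s \to_{\avec{\alpha}} t$ corresponds exactly to a left-to-right sequence of pairwise disjoint intervals covering all the $0$-vertices, which is precisely what $P(\avec{\alpha}) = 1$ demands.

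For general $\ell$, the underlying tree $T$ has at most $\ell$ leaves, so it decomposes into $O(\ell)$ maximal paths meeting only at the $O(\ell)$ branch/leaf nodes. A convex subtree (a hyperedge of $H$) restricts to a contiguous interval on each of these paths, and the key structural fact I would use is that a convex subtree is determined by $O(\ell)$ interval endpoints, one pair per maximal path, constrained to be consistent at the branch nodes. This lets me generalise the path construction: an NBP node records the current ``frontier'' position along each of the $O(\ell)$ paths simultaneously, and the nondeterministic transitions guess the next hyperedge to add and advance the frontier, while the conjunctive labels account for the uncovered zero-vertices swept over. Because $\ell$ is a fixed constant, the number of frontier states is polynomial in $n$, so the NBP stays polynomial in $|H|$; covering-all-zeros and independence translate into, respectively, forcing the labels of skipped vertices and never letting two chosen convex subtrees overlap on any path. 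I would cross-reference the appendix proof of Theorem~\ref{thm:linear_hgp} for the bookkeeping, since the branching argument there is exactly what generalises.

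The main obstacle I expect is the multi-path bookkeeping for $\ell > 2$: on a path the ordering ``$i_2 < j_1$'' gives a clean linear sweep, but on a tree with branch nodes one must track how several partially-covered paths are glued at junctions, and verify that a global independent cover corresponds to a single monotone advance of all frontiers rather than to some inconsistent mix of local choices at different branches. Getting the state space to stay polynomial (not $n^{O(\ell)}$ blowing up implicitly or, worse, exponential) while still capturing convexity at the branch nodes is the delicate point; the saving grace is that $\ell$ is a constant and each branch node is touched by only $O(\ell)$ paths, so a careful encoding of the frontier as a tuple of $O(\ell)$ positions suffices. I would make sure the reduction produces $f_H$ exactly and not just a subfunction, which is immediate here because we start from the genuine $f_H$ via Proposition~\ref{hyper:thgp}~(\emph{i}) rather than from an arbitrary THGP on $H$.
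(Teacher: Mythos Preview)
Your reduction via Proposition~\ref{hyper:thgp}~(\emph{i}) does not go through: the construction in that proposition attaches, for every hyperedge $e$, a fresh two-edge branch $v_e\text{--}a_e\text{--}b_e$ to the underlying tree, so the resulting THGP $P$ has one extra leaf per hyperedge and is no longer based on a tree with at most $\ell$ leaves. In particular, for $\ell=2$ the THGP $P$ is \emph{not} ``on a path'', and the forward direction of Theorem~\ref{thm:linear_hgp} cannot be applied to it. The paper flags exactly this obstacle immediately after the statement of Theorem~\ref{thm:linear_hgp1}. Your concrete path construction also reflects the confusion: the arc labels $\bigwedge_{i_2<k<j_1} p_{v_k}$ only carry vertex variables, and the hyperedge variables $p_e$ never appear, so what you build computes the subfunction of $f_H$ obtained by fixing all $p_e$ to~$1$, not $f_H$ itself.

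The paper's proof avoids this by working directly with $H$. Fix a root $r$ of the underlying tree and call an independent set $F\subseteq E$ \emph{flat} if every root-to-vertex path meets at most one $T_e$ with $e\in F$; since the tree has at most $\ell$ leaves, $|F|\le\ell$ and there are only polynomially many flat sets. The NBP has a pair of nodes $u_F,\bar u_F$ for each flat $F$ (plus $s,t$); the arc $(u_F,\bar u_F)$ carries $\bigwedge_{e\in F}p_e$, while arcs $(s,u_F)$, $(\bar u_F,u_{F'})$ for $F\prec F'$, and $(\bar u_F,t)$ carry the vertex variables $p_v$ for the vertices respectively before, between, and after the relevant flat layers. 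Any independent cover decomposes into a $\preceq$-chain of flat layers, which gives an $s$--$t$ path, and conversely. Your ``frontier'' idea for general $\ell$ is in the right spirit, but the flat-layer decomposition is what makes the branching bookkeeping precise: it replaces the problematic ``advance one hyperedge at a time'' sweep (which has no natural total order across different branches) with a genuine partial-to-total ordering of layers.
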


Note that Theorem~\ref{thm:linear_hgp1} does not immediately follow  from Theorem~\ref{thm:linear_hgp} and Proposition~\ref{hyper:thgp}~(\emph{i})  because the transformation of $H$ into a monotone HGP computing $f_H$ given in the proof of Proposition~\ref{hyper:thgp}~(\emph{i}) does not preserve the number of leaves.


\subsection{LOGCFL/poly and THGP}

$\THGP$ and $\mTHGP$ are the classes of functions computable by polynomial-size THGPs and, respectively, monotone THGPs.

\begin{theorem} \label{thm:thp_vs_sac}
$\LOGCFL/\poly = \THGP$ and $\mLOGCFL/\poly = \mTHGP$.
%
%
\end{theorem}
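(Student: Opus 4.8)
The plan is to prove the two equalities via the standard semiunbounded-circuit characterisation of $\LOGCFL/\poly$, namely $\LOGCFL/\poly = \SAC^1$, the class of polynomial-size, logarithmic-depth circuits with fan-in-$2$ $\AND$-gates, unbounded fan-in $\OR$-gates and $\NOT$-gates only at the inputs~\cite{circuit}; its monotone variant gives $\mLOGCFL/\poly = \mSAC^1$. Equivalently, I will use the characterisation of $\LOGCFL$ by nondeterministic auxiliary pushdown automata (NAuxPDA) running in polynomial time and logarithmic space. Each equality splits into the inclusion of the HGP class into the circuit class and its converse, and the monotone and non-monotone cases are handled by the same constructions, observing that monotone THGPs only ever read input variables positively.

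First I would show $\THGP \subseteq \LOGCFL/\poly$ (and $\mTHGP \subseteq \mLOGCFL/\poly$). Given a polynomial-size THGP $P$ with underlying tree $T$ and convex-subtree hyperedges, I design an NAuxPDA that accepts $\avec{\alpha}$ iff some independent set of hyperedges covers all zeros. Root $T$ and perform a depth-first traversal; at each tree-edge (a vertex of the hypergraph) the automaton nondeterministically guesses the chosen hyperedge, if any, covering it, and pushes/pops this choice on the stack as it descends into and returns from subtrees. The key structural fact is that, by convexity, each hyperedge restricted to any root-to-leaf path is contiguous, and since the chosen set is independent these segments are pairwise disjoint; hence the ``currently active'' hyperedge along the path behaves like a stack, and at a branch node convexity determines whether the active hyperedge continues into all children or ends there. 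The automaton checks in logarithmic space that the region coloured by each chosen hyperedge is exactly its convex subtree and that every zero-labelled edge is coloured; it reads an input literal only to decide whether a vertex is a zero that must be covered. Translating this poly-time log-space NAuxPDA into an $\SAC^1$ circuit in the usual way, and noting that for a monotone $P$ the input bits are used only positively, yields both inclusions.

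The converse, $\LOGCFL/\poly \subseteq \THGP$ (and its monotone form), is the hard part and where I expect the main obstacle. I would start from a $\SAC^1$ circuit for $f$ and use the proof-tree characterisation: $C(\avec{\alpha})=1$ iff there is a proof tree (both children kept at each $\AND$-gate, one child chosen at each $\OR$-gate) all of whose input leaves are satisfied, and such proof trees have only $2^{O(\log n)}=\poly(n)$ nodes because the depth is logarithmic and $\AND$-gates have fan-in $2$. The goal is a fixed polynomial tree hypergraph $H=(V,E)$ whose underlying tree $T$ hosts every possible proof tree, with hyperedges (convex subtrees) encoding the local gate choices so that an independent cover of all zeros corresponds exactly to a complete, satisfied proof tree. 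The difficulty is twofold: the naive unfolding of the circuit into a tree is only quasi-polynomial, and the convexity requirement on hyperedges must be reconciled with branching at $\AND$-gates and with the unbounded choice at $\OR$-gates. To keep $T$ polynomial I would instead encode the NAuxPDA surface-configuration / realizable-pair recursion, whose derivations are \emph{binary} trees with polynomially many leaves (one per time step), laying them out on a single ``universal'' binary skeleton $T$; input-dependent vertex labels force each guessed move to be consistent with $\avec{\alpha}$, while the cover-all-zeros condition forces the derivation to be complete and accepting. Since a monotone circuit reads only positive literals, the same construction uses only positive vertex labels, giving $\mLOGCFL/\poly \subseteq \mTHGP$.

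Combining the two inclusions in each case gives $\LOGCFL/\poly = \THGP$ and $\mLOGCFL/\poly = \mTHGP$. The step I expect to be most delicate is the polynomial layout of arbitrary, input-dependent proof trees (equivalently, NAuxPDA derivations) onto a single fixed tree whose hyperedges respect the convexity constraint; the realizable-pair formulation is what makes this layout polynomial, because its derivation trees are binary with only a polynomial number of leaves.
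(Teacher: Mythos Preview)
Your $\THGP \subseteq \LOGCFL/\poly$ direction via an NAuxPDA doing a DFS of the underlying tree is a reasonable alternative to what the paper does. The paper instead builds an $\SAC^1$ circuit directly by divide-and-conquer: for a subtree $D$ and a splitting vertex $v$, the predicate $\Cover_D$ (``some independent set of hyperedges lying inside $D$ covers all zeros of $D$'') satisfies a recurrence
\[
\Cover_D \ = \ \bigwedge_j \Cover_{D_j}\ \ \lor\ \ \bigvee_{v\in h\subseteq D}\ \bigwedge_j \Cover_{D_j^h},
\]
and a centroid-splitting lemma ensures that only subtrees of degree $\le 2$ ever arise, so there are only polynomially many distinct $D$ and the recursion depth is $O(\log|P|)$; a Huffman-coding trick then balances the (possibly long) conjunctions so that the $\AND$-depth stays $O(\log|P|)$. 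This direct construction makes monotonicity immediate (only $\land,\lor$ and the vertex labels appear), whereas your NAuxPDA route needs an extra sentence about why the Sudborough/Venkateswaran translation to $\SAC^1$ yields a monotone circuit when all vertex labels are positive.

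The $\LOGCFL/\poly \subseteq \THGP$ direction, however, has a genuine gap. You correctly identify the obstacle (unfolding a DAG circuit into a tree is a priori only quasi-polynomial) but your proposed fix---laying out NAuxPDA realisable-pair derivations on a ``universal binary skeleton''---is a plan, not a construction, and it is not at all clear how to make the hyperedges encoding local derivation steps into \emph{convex} subtrees of a single fixed tree while keeping everything polynomial. The paper bypasses this with a structural trick you are missing. Given an $\SAC^1$ circuit, one first transforms it (at a $2^d \le |\Cir|$ blow-up, $d$ the $\AND$-depth) so that, at every $\AND$-depth level $n$, the union of left sub-circuits of the $\AND$-gates in $S_n$ is \emph{disjoint} from the union of their right sub-circuits. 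After this separation the circuit itself can be laid out on a tree: assign each gate $g_i$ three consecutive vertices $u_i,v_i,w_i$; run the gates of maximal $\AND$-depth along a path, then fork into two branches (one for the combined left sub-circuits, one for the right), and recurse. The resulting tree has $3|\Cir|-1$ vertices. Each gate becomes one or a few hyperedges---$[w_i,u_i]$ for every gate, $[v_j,v_k,v_i]$ for $g_i=g_j\ANDOP g_k$, and $[v_{j_s},v_i]$ for each input of an $\OR$-gate---and convexity is automatic from the layout. Input-gate literals become vertex labels, everything else is labelled $0$ or $1$, and an induction shows that the subtree rooted at $v_i$ can be covered iff $g_i$ outputs $1$. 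This is the missing idea: you do not need to host \emph{proof trees} on a skeleton; after the left/right separation, the \emph{circuit} already lives on a polynomial-size tree.
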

\begin{proof}
\begin{figure}[t]%
\centering%
\begin{tikzpicture}
\node at (-3.2,3.7) {{\small (a)}};
\node at (3,3.7) {{\small (b)}};
\draw[dashed] (-3.5,1.6) -- (2,1.6);
\node[rotate=90] at (-3.3,2.4) {\scriptsize $\AND$-depth 1};
\node[rotate=90] at (-3.3,0.7) {\scriptsize $\AND$-depth 0};
\node[input,label=below:{\footnotesize $g_{1}$}] (g1) at (-2.8,-0.2) {\small $x_1$}; 
\node[input,label=below:{\footnotesize $g_{2}$}] (g2) at (-1.4,-0.2) {\small $x_2$}; 
\node[input,label=below:{\footnotesize $g_{3}$}] (g3) at (-.6,1) {\small\!\!$\NOTOP x_3$}; 
\node[input,label=below:{\footnotesize $g_{4}$}] (g5) at (.6,1) {\small $x_4$}; 
\node[or-gate,label=left:{\footnotesize $g_5$}] (g4) at (-2.2,1) {$\OR$}; 
\node[fill=gray!60,and-gate,label=left:{\footnotesize $g_{6}$}] (g6) at (-1.5,2.2) {$\AND$}; 
\node[fill=gray!20,and-gate,label=left:{\footnotesize $g_{7}$}] (g7) at (0.6,2.2) {$\AND$}; 
\node[fill=white,or-gate,label=left:{\footnotesize $g_{8}$}] (g8) at (-0.4,3.4) {$\OR$}; 
\draw[->] (g1) to (g4);
\draw[->] (g2) to (g4);
\draw[->] (g4) to (g6);
\draw[->] (g3) to (g6);
\draw[->] (g4) to (g7);
\draw[->] (g5) to (g7);
\draw[->] (g6) to (g8);
\draw[->] (g7) to (g8);
\begin{scope}[yshift=5mm,xshift=-5mm]\footnotesize
\begin{scope}[rounded corners=2mm]
\filldraw[fill=gray!60,ultra thin]  (6,1.9) -- ++(-3,0) -- ++(0,-0.3) -- ++(1.35,-0.9) -- ++(2.2,0) -- ++(0.9,-1) -- ++(1.2,0) -- ++(0,0.9) -- cycle; 
\filldraw[fill=gray!40,ultra thin,fill opacity=0.5] (8,2.5) -- ++(-4,0) -- ++(0,-0.9) -- ++(1.35,-0.9) -- ++(5.6,0) -- ++(0,0.9) -- cycle; 
\end{scope}
\node[rotate=-15] at (8.8,1.8) {$g_7 = g_5 \ANDOP g_4$};
\node[rotate=-35] at (4,1.3) {$g_6 = g_3 \ANDOP g_4$};
\node[point,draw=gray!40,label=above:{\scriptsize\textcolor{gray!40}{$w_8$}}] (w8) at (9,3.4) {}; 
\node[point,fill=gray,label=right:{\scriptsize$v_8$}] (v8) at (9,3.1) {}; 
\node[point,label=right:{\scriptsize$u_8$},label=below:{$8$}] (u8) at (9,2.8) {}; 
\node[point] (w7) at (7,2.8) {}; 
\node[point,fill=gray] (v7) at (7,2.5) {}; 
\node[point,label=below:{$7$}] (u7) at (7,2.2) {}; 
\node[point] (w6) at (5,2.2) {};  
\node[point,fill=gray] (v6) at (5,1.9) {};  
\node[point,label=below:{$6$}] (u6) at (5,1.6) {}; 
\node[point] (w5) at (6,1) {};  
\node[point,fill=gray] (v5) at (6,0.7) {};  
\node[point,label=below:{$5$}] (u5) at (6,0.4) {}; 
\node[point,label=above:{\scriptsize$w_2$}] (w2) at (4,0.4) {}; %
\node[point,fill=gray,label=left:{\scriptsize$v_2$}] (v2) at (4,0.1) {};  %
\node[point,label=right:{\scriptsize$u_2$},label=below:{$2$}] (u2) at (4,-0.6) {}; 
\node[point,label=above:{\scriptsize$w_1$}] (w1) at (2,-0.6) {}; %
\node[point,fill=gray,label=left:{\scriptsize$v_1$}] (v1) at (2,-0.9) {}; %
\node[point,label=left:{\scriptsize$u_1$},label=below:{$1$}] (u1) at (2,-1.6) {}; 
\node[point] (w4) at (10,1) {}; 
\node[point,fill=gray] (v4) at (10,0.7) {};  
\node[point,label=below:{$4$}] (u4) at (10,0) {}; 
\node[point] (w3) at (8,0) {}; 
\node[point,fill=gray] (v3) at (8,-0.3) {}; 
\node[point,label=below:{$3$}] (u3) at (8,-1) {}; 
\begin{scope}[thick]\tiny
\draw (v8) -- (u8);
\draw (u8) -- node[midway,fill=black,rectangle,draw,inner sep=1pt] {\textcolor{white}{$1$}} (w7);
\draw (w7) -- (v7);
\draw (v7) -- (u7);
\draw (u7) -- node[midway,fill=black,rectangle,draw,inner sep=1pt] {\textcolor{white}{$1$}} (w6);
\draw (w6) -- (v6);
\draw (v6) -- (u6);
\draw (u6) -- node[midway,fill=black,rectangle,draw,inner sep=1pt,sloped] {\textcolor{white}{$1$}} (w5);
\draw (u6) -- node[midway,fill=black,rectangle,draw,inner sep=1pt,sloped] {\textcolor{white}{$1$}} (w4);
\draw (w5) -- (v5);
\draw (v5) --  (u5);
\draw (u5) -- node[midway,fill=black,rectangle,draw,inner sep=1pt] {\textcolor{white}{$1$}} (w2);
\draw (w2) -- (v2);
\draw (v2) -- node[fill=black,rectangle,draw,inner sep=2pt] {\small\textcolor{white}{$x_2$}} (u2);
\draw (u2) -- node[midway,fill=black,rectangle,draw,inner sep=1pt] {\textcolor{white}{$1$}} (w1);
\draw (w1) -- (v1);
\draw (v1) -- node[fill=black,rectangle,draw,inner sep=2pt] {\small\textcolor{white}{$x_1$}} (u1);
\draw (w4) -- (v4);
\draw (v4) -- node[fill=black,rectangle,draw,inner sep=2pt] {\small\textcolor{white}{$x_4$}} (u4);
\draw (u4) -- node[midway,fill=black,rectangle,draw,inner sep=1pt] {\textcolor{white}{$1$}} (w3);
\draw (w3) -- (v3);
\draw (v3) -- node[fill=black,rectangle,draw,inner sep=2pt] {\textcolor{white}{\small $\NOTOP x_3$}} (u3);
\end{scope}
\end{scope}
\end{tikzpicture}
\caption{(a) A circuit $\Cir$. (b) The labelled tree $T$ for $\Cir$: the vertices in the $i$th triple are $u_i,v_i,w_i$ and the omitted edge labels are 0s. The vertices of THGP are the edges of $T$ (with the same labels) and the hyperedges are sets of edges of $T$ (two of them are shown).}
\label{fig:7}
\end{figure}
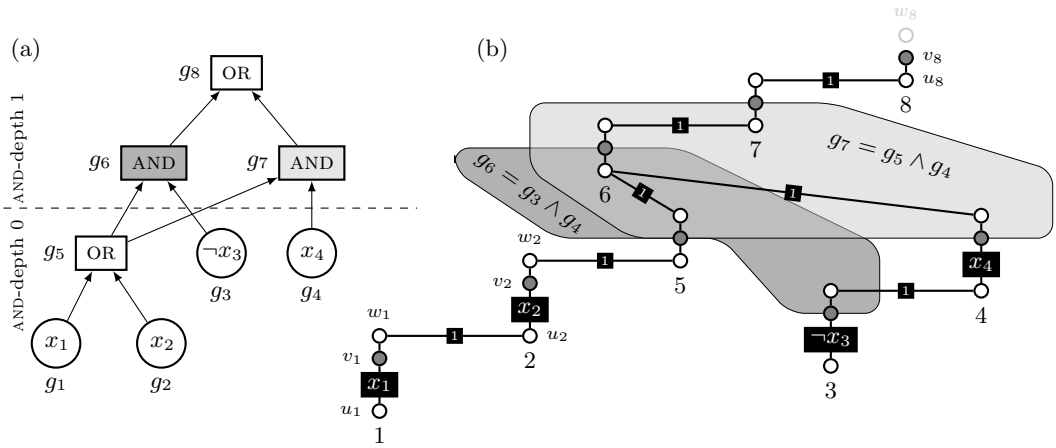
To show $\LOGCFL/\poly \subseteq \THGP$, consider a $\SAC^1$-circuit $\Cir$ of depth \mbox{$d \le \log |\Cir|$}.
It will be convenient to think of $\Cir$ as containing no $\NOT$-gates but having \emph{literals} as inputs.
By the $\AND$-\emph{depth} of a gate $g$ in $\Cir$ we mean the maximal number of $\AND$-gates in a path from an input of $\Cir$ to $g$ (it does not exceed $d$). Let $S_n$ be the set of $\AND$-gates in $\Cir$ of $\AND$-depth $n$.
We denote by $\leftt(g)$ and $\rightt(g)$ the sub-circuits of $\Cir$ computing the left and right inputs of an $\AND$-gate $g$, respectively.  Without loss of generality (see Lemma~\ref{l:6.5} in Appendix~\ref{app:thp_vs_sac}) we can assume that, for any $n \le d$,  
\begin{equation*}
\bigcup\nolimits_{g \in S_n} \leftt(g) \quad \cap \quad \bigcup\nolimits_{g \in S_n} \rightt(g) \ \ = \ \ \emptyset.
\end{equation*}
Our aim is to transform $\Cir$ into a polynomial-size THGP $P$. We construct its underlying tree $T$ by associating with each gate $g_i$  three vertices $u_i, v_i, w_i$ and arranging them into a tree as shown in Fig.~\ref{fig:7}. More precisely, we first  arrange the vertices associated with the gates of maximal $\AND$-depth, $n$, into a path following the order of the gates in~$\Cir$ and the alphabetic order for $u_i, v_i, w_i$. Then we fork the path into two branches one of which is associated with the sub-circuit $\bigcup_{g \in S_n} \leftt(g)$ and the other with $\bigcup_{g \in S_n} \rightt(g)$, and so forth. We obtain the tree $T$ by removing the vertex $w_{m}$ from the result, where $m = |\Cir|$ and $g_m$ is the output gate of $\Cir$; it has $v_{m}$ as its root and contains $3|\Cir| -1$ vertices.
The THGP $P$ is based on the hypergraph whose vertices are the edges of $T$ and whose hyperedges 
comprise the following (see Fig.~\ref{fig:7}):
\begin{nitemize}
\item $[w_i, u_i]$, for each $i < m$ (pairs of edges in each triple of vertices in Fig.~\ref{fig:7});

\item $[v_j,v_k,v_i]$, for each $g_i = g_j \ANDOP g_k$ (shown in Fig.~\ref{fig:7} by shading);

\item $[v_{j_1}, v_i], \dots, [v_{j_k}, v_i]$, for each $g_i = g_{j_1} \OROP \cdots \OROP g_{j_k}$,
\end{nitemize}
where $[L]$ is the minimal convex subtree of $T$ containing the vertices in $L$.
Finally, if an input gate $g_i$ is a literal $\li$, we label the edge $\{u_i, v_i\}$ with $\li$; we label all other $\{u_j,v_j\}$- and $\{w_j,v_j\}$-edges with 0, and the remaining ones with 1. Clearly, the size of $P$ is polynomial in $|\Cir|$. By Lemma~\ref{D2}, for any input $\avec{\alpha}$, the output of $g_i$ is 1 iff the subtree with root $v_i$ can be covered, i.e., there is an independent set of hyperedges wholly inside and covering all zeros. Thus, $P$ computes the same function as $\Cir$.

To show $\THGP \subseteq \LOGCFL/\poly$, suppose a THGP $P$ is based on a hypergraph~$H$ with an underlying tree $T$. 
By a \emph{subtree} of $T$ we understand a (possibly empty) connected subset of edges in $T$.
Given an input $\avec{\alpha}$ for $P$ and a nonempty subtree $D$ of $T$, we set  $\Cover_D$ true iff there exists an independent subset of hyperedges in $H$ that lie in $D$ and cover all zeros in $D$. We also set $\Cover_{\emptyset}$ true. Note that, for any edge $e$ of $T$, $\Cover_{\{e\}}$ is true if $\{e\}$ is a hyperedge of $H$; otherwise $\Cover_{\{e\}}$ is the value of $e$'s label in $P$ under~$\avec{\alpha}$. 

Our aim is to construct recursively a polynomial-size $\SAC^1$-circuit $\Cir$ computing the function $\Cover_T$. 
Observe that, if $D$ is a subtree of $T$ and a vertex $v$ splits $D$ into subtrees $D_1, \dots, D_k$, then 
\begin{equation} \label{eq:sac_construction}
\Cover_D \ \ = \ \ \bigwedge_{1 \leq j \leq k} \Cover_{D_j} \ \ \lor \ \   
\bigvee_{v \in h \subseteq D} \hspace{3mm} \bigwedge_{1 \leq j \leq k_h} \Cover_{D_j^h},
\end{equation}
where $h$ ranges over the hyperedges in $H$, and $D_1^h,\dots,D_{k_h}^h$ are the maximal convex subtrees of $T$ that lie in $D \setminus h$.
We call a vertex $v$ of $D$ \emph{boundary} if $T$ has an edge $\{v,u\}$ with $u$ not in $D$, and define the \emph{degree} $\degree(D)$ of $D$ to be the number of its boundary vertices. Note that $T$ itself  is the only subtree of $T$ of degree $0$. 
The following lemma shows that to compute $\Cover_T$ we only need subtrees of degree 1 and 2 and the depth of recursion $O(\log |P|)$.

\begin{lemma}\label{l:6.8}
Let $D$ be a subtree of $T$ with $m$ vertices and $\degree(D) \leq 2$. If $\degree(D) \leq 1$, then there is a vertex $v$ splitting $D$ into subtrees of size at most $m/2 + 1$ and degree at most $2$. If $\degree(D) =2$, then there is $v$ splitting $D$ into subtrees of size at most $m/2+1$ and degree at most $2$ and, possibly, one subtree of size less than $m$ and degree~$1$.
\end{lemma}
\begin{proof}
Let $\degree(D) \leq 1$. Suppose some vertex $v_1$ splits $D$ into subtrees one of which, say $D_1$, is larger than $m/2+1$. Let $v_2$ be the (unique) vertex in $D_1$ adjacent to $v_1$. The splitting of $D$ by $v_2$ consists of the subtree $D_2 = (D \setminus D_1) \cup \{v_1,v_2\}$ of size at most $m/2$ and some other subtrees lying inside $D_1$; all of them are of degree at most 2. We repeat this process until the size of the largest subtree becomes at most $m/2+1$.

Let $\degree(D) =2$, with $b_1$ and $b_2$ being the boundary vertices. We proceed as above starting from $v_1=b_1$, but stop when either the largest subtree has $\le m/2+1$ vertices or $v_{i+1}$ leaves the path between $b_1$ and $b_2$, in which case $v_i$ splits $D$ into subtrees of degree at most $2$ and one subtree of degree $1$ with more than $m/2 + 1$ vertices. 
\end{proof}

By applying~\eqref{eq:sac_construction} to $T$ recursively and choosing the splitting vertices $v$ as prescribed by Lemma~\ref{l:6.8}, we obtain a circuit $\Cir$ whose inputs are the labels of some vertices of~$H$. Since any tree has polynomially many subtrees of degree $1$ or $2$, the size of $\Cir$ is polynomial in $|P|$. We now show how to make the depth of $\Cir$ logarithmic in $|P|$. 

Suppose $D$ is a subtree with $m$ edges constructed on the recursion step $i$. To compute $\Cover_D$ using~\eqref{eq:sac_construction}, we need one $\OR$-gate of unbounded fan-in and a number of $\AND$-gates of fan-in 2. We show by induction that we can make the $\AND$-depth of these $\AND$-gates at most $\log m + i$. Suppose $D_j$ in~\eqref{eq:sac_construction} has $m_j$ edges, and so $m = m_1 + \dots + m_k$.
By the induction hypothesis, we can compute each $\Cover_{D_j}$ within the $\AND$-depth at most $\log m_j + i-1$. Assign  the probability $m_j/m$ to $D_j$. As shown by Huffman~\cite{huf52}, there is a prefix binary code such that each $D_j$ is encoded by a word of length $\lceil \log(m/m_j)\rceil$. This encoding can be represented as a binary tree whose leaves are labelled with the $D_j$ so that the length of the branch ending at $D_j$ is $\lceil \log(m/m_j)\rceil$. By replacing each non-leaf vertex of the tree with an $\AND$-gate, we obtain a circuit for the first conjunction in~\eqref{eq:sac_construction} whose depth does not exceed
\begin{equation*}
\max_j\{\log m_j + (i-1) + \log(m/m_j) + 1\} \ \ \ = \ \  \log m + i.
\end{equation*}
The second conjunction is considered analogously.\qed
\end{proof}


\subsection{NC$^{\boldsymbol{1}}$, $\boldsymbol{\mathsf{\Pi}}_{\boldsymbol{3}}$ and THGP$^d$}

The proof of the following theorem, given in Appendix~\ref{app:thp_vs_sac}, is a simplified version of the proof of Theorem~\ref{thm:thp_vs_sac}:

\begin{theorem}\label{thm:nc1-thgp3}
$\NC^1 =  \THGP^d$ and $\mNC^1 =  \mTHGP^d$, for any $d \geq 3$.
\end{theorem}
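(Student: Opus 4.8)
The plan is to mirror the structure of the proof of Theorem~\ref{thm:thp_vs_sac}, since the difference between $\LOGCFL/\poly$ and $\NC^1$ is exactly the difference between unbounded fan-in $\OR$-gates and bounded fan-in $\OR$-gates, and the difference between $\THGP$ and $\THGP^d$ (for $d \ge 3$) is the bounded degree of the hypergraph. Recall that $\NC^1$ consists of functions computed by polynomial-size Boolean formulas, equivalently by polynomial-size circuits of logarithmic depth with fan-in-$2$ $\AND$- and $\OR$-gates, with negations pushed to the inputs. I would treat the two inclusions $\NC^1 \subseteq \THGP^d$ and $\THGP^d \subseteq \NC^1$ separately, and then observe that both constructions preserve monotonicity to obtain the monotone variant $\mNC^1 = \mTHGP^d$.

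For $\NC^1 \subseteq \THGP^d$, I would take a log-depth fan-in-$2$ circuit $\Cir$ with literals at the inputs and apply essentially verbatim the first half of the proof of Theorem~\ref{thm:thp_vs_sac}: associate vertices $u_i, v_i, w_i$ with each gate $g_i$, arrange them into a tree $T$ by forking at each $\AND$-gate into a left branch for $\leftt(g)$ and a right branch for $\rightt(g)$, and build the hyperedges $[w_i,u_i]$, the $\AND$-hyperedges $[v_j,v_k,v_i]$, and the $\OR$-hyperedges. The key simplification is that, because every gate now has fan-in $2$, each $\OR$-gate $g_i = g_j \OROP g_k$ contributes only the two hyperedges $[v_j,v_i]$ and $[v_k,v_i]$, rather than an unbounded family. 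I would then verify that the resulting THGP has \emph{bounded degree}: each $v_i$-edge and each $\{u_i,v_i\}$-, $\{w_i,v_i\}$-edge belongs to only a constant number of hyperedges, since every gate feeds into at most constantly many others in a formula — here I would need the standard observation that a formula can be taken to have fan-out $1$, so each gate is an input to exactly one parent gate, giving degree at most $3$. Correctness (the output of $g_i$ is $1$ iff the subtree rooted at $v_i$ can be covered) follows by Lemma~\ref{D2} exactly as before.

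For $\THGP^d \subseteq \NC^1$, I would reuse the second half of the proof of Theorem~\ref{thm:thp_vs_sac}: the recursive identity~\eqref{eq:sac_construction} and the Huffman-code balancing of Lemma~\ref{l:6.8} already produce a circuit of polynomial size and logarithmic $\AND$-depth. The only extra point is to bound the fan-in of the $\OR$-gates. In~\eqref{eq:sac_construction} the disjunction ranges over hyperedges $h$ with $v \in h \subseteq D$; because the hypergraph has bounded degree $d$, only boundedly many hyperedges pass through the chosen splitting vertex $v$ (each vertex of $T$ is an edge of the underlying tree, belonging to at most $d$ hyperedges), so this $\OR$ has fan-in $O(d) = O(1)$. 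A constant-fan-in $\OR$ can be rewritten as a depth-$O(1)$ tree of fan-in-$2$ $\OR$-gates, keeping the overall circuit of logarithmic depth and hence in $\NC^1$.

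The main obstacle I anticipate is the degree bound in the forward direction: transforming a formula into a bounded-degree THGP requires that every gate have bounded fan-out, and one must check carefully that the tree-and-hyperedge construction does not inflate the degree of any single tree-edge beyond $\max(3,\text{const})$ — in particular that the $[w_i,u_i]$ separating edges and the shared $v_i$-vertices are each touched by only a constant number of hyperedges. This is the precise place where fan-in-$2$ (as opposed to unbounded fan-in) is genuinely used, and why the theorem holds for any $d \ge 3$ rather than for $d = 2$. Finally, I would remark that both directions respect the restriction that negations appear only at inputs, so a monotone formula yields a monotone THGP and vice versa, establishing $\mNC^1 = \mTHGP^d$.
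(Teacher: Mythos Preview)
Your backward direction $\THGP^d \subseteq \NC^1$ is essentially the paper's argument: bounded degree forces the big disjunction in~\eqref{eq:sac_construction} to have at most $d+1$ terms (any hyperedge $h$ with the splitting vertex $v$ in its interior must contain every tree-edge incident to $v$, and each such tree-edge lies in at most $d$ hyperedges), so the resulting circuit has logarithmic depth with bounded fan-in throughout. That part is fine.

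The gap is in the forward direction. Reusing the $\SAC^1$ construction verbatim on a fan-in-$2$ formula does \emph{not} give bounded degree. In that construction the underlying tree $T$ forks \emph{only at $\AND$-gates}; all gates of a given $\AND$-depth---including all $\OR$-gates and their inputs at that depth---sit on a single path. An $\OR$-hyperedge $[v_j,v_i]$ is then the entire segment of that path between the triples of $g_j$ and $g_i$, which may pass through the triples of many other gates. With a chain (or balanced tree) of $\OR$-gates and no $\AND$-gates at all, every gate lives on one long path, and a tree-edge near the middle can lie in $\Omega(n)$ such $\OR$-hyperedges. Fan-out~$1$ of the formula is irrelevant here: the blow-up comes from convexity of hyperedges along a long path, not from a gate feeding several parents.

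The paper avoids this by changing the underlying tree: since $\Cir$ is already a tree of gates, one takes $T$ to mirror the \emph{formula} tree (branching at every gate, $\OR$ as well as $\AND$), so that each hyperedge is local---it touches only a gate and its two children. The $\OR$-hyperedges are also altered (to $[v_i,v_j,w_k]$ and $[v_i,w_j,v_k]$ rather than $[v_j,v_i]$, $[v_k,v_i]$) to make the independence/covering semantics work on this new tree. With these changes every tree-edge meets at most three hyperedges, which is where the bound $d\ge 3$ actually enters.
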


THGPs of degree 2 turn out to be less expressive:

\begin{theorem}\label{thm:pi3-thgp2}
$\Pitr = \THGP^2 = \THGP^2(2)$ and $\mPitr = \mTHGP^2 = \mTHGP^2(2)$.
\end{theorem}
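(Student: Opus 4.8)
The plan is to establish the cycle of inclusions
$$\Pitr \subseteq \THGP^2(2) \subseteq \THGP^2 \subseteq \Pitr$$
(together with its monotone analogue), from which all three equalities in each line of the statement follow at once. The middle inclusion $\THGP^2(2)\subseteq\THGP^2$ is immediate, since a path is a tree with two leaves. All the constructions below keep vertex labels equal to the literals they are given, so they send monotone objects to monotone objects; hence the monotone equalities $\mPitr=\mTHGP^2=\mTHGP^2(2)$ are obtained by the same argument restricted to positive labels. Thus it suffices to prove $\Pitr\subseteq\THGP^2(2)$ and $\THGP^2\subseteq\Pitr$.

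For $\Pitr\subseteq\THGP^2(2)$ I would give an explicit path THGP. Write a $\Pitr$ function as $F=\bigwedge_i C_i$ with $C_i=\bigvee_{j=1}^{m_i} t_{ij}$ and each term $t_{ij}$ a conjunction of literals. For a single clause $C_i$ I build a path gadget whose vertices, in order, are $w_{1},s_{1},w_{2},s_{2},\dots,s_{m_i-1},w_{m_i}$, where $w_{j}$ carries the (generalised) label $t_{ij}$ and every spacer $s_k$ is labelled $0$; the hyperedges are exactly the length-two intervals $\{w_{j},s_{j}\}$ and $\{s_{j},w_{j+1}\}$, so the gadget is a path of degree at most $2$. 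Since the spacers are forced and the only available hyperedges pair one spacer with one neighbouring $w$-vertex, every covering of the zeros is a matching saturating all $m_i-1$ spacers, and such a matching covers all but exactly one $w$-vertex, which may be chosen to be any $w_{j}$. Hence the gadget is coverable iff the left-out $w_{j}$ need not be covered, i.e.\ iff $t_{ij}$ is true for some $j$, which is precisely $C_i$. Concatenating these gadgets into one path without adding any hyperedge across gadget boundaries makes the coverings of distinct gadgets independent, so the whole path is coverable iff every $C_i$ holds, i.e.\ iff $F$ holds. Finally I remove the generalised (conjunctive) labels using Proposition~\ref{hyper:thgp}~(\emph{ii}), which preserves both the degree and the number of leaves, yielding an ordinary $\THGP^2(2)$ of polynomial size computing $F$; keeping all labels positive gives $\mPitr\subseteq\mTHGP^2(2)$.

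For the converse $\THGP^2\subseteq\Pitr$---which I expect to be the main obstacle---I would characterise coverability combinatorially. Following the proof of Theorem~\ref{thm:deg_2}, a degree-$2$ THGP $P$ on a tree hypergraph $H$ outputs $1$ on $\avec\alpha$ iff the $2$-SAT instance $\Phi_{\avec\alpha}$ (one variable $u_e$ per hyperedge, a consistency clause $\neg u_e\vee\neg u_{e'}$ for each intersecting pair, and a coverage or unit clause for each zero-vertex) is satisfiable. The task is to show that, for tree hypergraphs of degree at most $2$, the satisfiability of $\Phi_{\avec\alpha}$ is a depth-$3$ ($\bigwedge\bigvee\bigwedge$) function of $\avec\alpha$. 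The key structural facts I would exploit are that each vertex of $H$ lies in at most two convex hyperedges and that convex subtrees of a tree have the Helly property; these force the implication graph of $\Phi_{\avec\alpha}$ to split, along the tree, into overlap-components whose coverability is mutually independent---this independence yields the outer $\bigwedge$. Inside one component covering reduces to a path-structured $2$-SAT whose feasibility is an $\bigvee\bigwedge$ of labels, the prototype being two overlapping hyperedges $J_1,J_2$, where coverability is exactly $\bigl(\bigwedge_{v\in J_2\setminus J_1}\ell_v\bigr)\vee\bigl(\bigwedge_{v\in J_1\setminus J_2}\ell_v\bigr)$. Assembling these local $\Sigma_2$ conditions under the component-wise conjunction produces the required $\Pitr$ formula, and the positive-label case gives $\mTHGP^2\subseteq\mPitr$. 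The delicate point---and the step I expect to demand the most care---is verifying that the within-component feasibility genuinely collapses to a \emph{polynomial-size} $\bigvee\bigwedge$ of literals rather than to a bounded-width reachability that would only place the function in $\NC^1$; this is precisely where the degree-$2$ hypothesis and convexity of the hyperedges must be used in full.
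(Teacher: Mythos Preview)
Your construction for $\Pitr\subseteq\THGP^2(2)$ is essentially the paper's: alternate term-vertices and zero-spacers along a path, take all length-two intervals as hyperedges, then invoke Proposition~\ref{hyper:thgp}~(\emph{ii}) to eliminate conjunctive labels. Nothing to add there.

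The direction $\THGP^2\subseteq\Pitr$ is where your sketch has a real gap. You correctly set up the degree-$2$ picture and point to the $2$-SAT encoding from Theorem~\ref{thm:deg_2}, but the decisive combinatorial step is missing, and your stated expectation is in fact wrong: the feasibility condition inside a single overlap-component is \emph{not} a $\Sigma_2$ (an $\bigvee\bigwedge$) of labels in general. Your two-hyperedge prototype is the only case where it is; already for a chain of three or four pairwise-overlapping hyperedges the independent-set condition does not factor as a single disjunction of conjunctions of polynomial size. So ``assembling local $\Sigma_2$ conditions under a component-wise conjunction'' is not the right picture.

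What the paper does instead is work directly with the \emph{intersection graph} $D$ of the hyperedges: vertices of $D$ are the hyperedges of $P$, with an edge whenever two hyperedges meet. After discarding hyperedges contained in others, the degree-$2$ and convexity assumptions force $D$ to be a \emph{forest}. One then labels each vertex $e$ of $D$ by the conjunction of the labels of the hypergraph vertices lying only in $e$, and each edge $\{e,e'\}$ of $D$ by the conjunction of labels of the hypergraph vertices in $e\cap e'$. The key lemma is a path criterion: the THGP rejects $\avec{\alpha}$ iff $D$ contains a simple path $e_0,e_1,\dots,e_k$ of \emph{odd} length whose two end-vertices are labelled $0$ and whose ``even'' edges $\{e_{i-1},e_i\}$ (for even $i$) are labelled $0$. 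Since a forest has only polynomially many simple paths, this immediately yields a $\Pitr$ circuit: the output $\AND$ ranges over all odd simple paths; for each path an $\OR$ over the relevant labels (the two end-vertex labels and the even-edge labels); each such label is already a conjunction of input literals. No per-component $\Sigma_2$ collapse is needed or claimed.

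The missing piece in your plan, then, is precisely this odd-path characterisation (and the observation that $D$ is a forest). Your $2$-SAT route could in principle be pushed through, but you would still need to prove a structural lemma of this flavour about the implication graph; the student-level ``Helly plus independence'' intuition does not by itself produce the polynomial $\Pitr$ formula.
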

\begin{proof}
To show $\THGP^2 \subseteq \Pitr$, take a THGP $P$ of degree 2. Without loss of generality we can assume that it contains no hyperedges $e, e^\prime$ with $e \subseteq e^\prime$, for otherwise the vertices in $e$ would  not be covered by any other hyperedges, and so could be removed  from $P$ together with $e$.

Consider the graph $D$ whose vertices are the hyperedges of $P$, with two vertices being connected if the corresponding hyperedges intersect. Clearly, $D$ is a forest. We label an edge $\{e_1, e_2\}$ with the conjunction of the labels of the vertices in $e_1\cap e_2$, and  label a vertex $e$ with the conjunction of the labels of the vertices in $P$ contained exclusively in $e$.
It is easy to see that, for any given input, an independent cover of zeros in $P$ corresponds to an independent set in $D$ covering all zeros in the vertices and such that each edge labelled with 0 has precisely one endpoint in that independent set.

We claim that there is no such an independent set $I$ in $D$ iff there is a path $e_0, e_1, \ldots, e_k$ in $D$ with odd $k$ (in particular, $k=1$) such that $e_0$ and $e_k$ are labelled with $0$ and `even' edges $\{e_{i-1},e_{i}\}$ with even $i$ are labelled with $0$.
To see $(\Leftarrow)$, observe that we have to include $e_0$ and $e_k$ in $I$. Then, the edge $\{e_1, e_2\}$ labelled with $0$ makes us to include $e_2$ to $I$ ($e_1$~is adjacent to $e$ and cannot be included in $I$). Next, the edge $\{e_3,e_4\}$ makes us to include~$e_4$ in $I$ and so on. In the end we will have to include $e_{k-1}$ to $I$ and, since $e_k$ is also in $I$, this gives a contradiction with independence of $I$.

To show $(\Rightarrow)$, suppose there is no such a pair of vertices.  Then we can construct a desired independent set $I$. Add to $I$ all vertices labelled with $0$. If there is a triple of consecutive vertices $e, e_1, e_2$ in $D$ such that $e$ is already in $I$ and an edge $\{e_1,e_2\}$ is labelled with $0$, then we add $e_2$ to $I$. Note that, if we have add some vertex $e^\prime$ to $e$ in this process, then  there is a path $e=e_0, e_1, \ldots, e_k=e^\prime$ with even $k$ such that vertex $e$ is labelled with $0$ and every edge $\{e_{i-1},e_i\}$ for even $i$ in this path is labelled with $0$.

In this process we never add two connected vertices $e$ and $e^\prime$ of $D$ to $I$, for otherwise the union of the paths described above for these two vertices  would result in a path of odd length with endpoints labelled with $0$ and with every second edge labelled with $0$. This directly contradicts our assumption.

If there are still edges labelled with $0$ in $D$ with no endpoints in $I$, then add any endpoint of such an edge to $I$ and repeat the process above. This also will not lead to a pair of connected vertices in $I$. Indeed, if as a result we add to $I$ a vertex $e_1$ connected to a vertex $e$ which was added to $I$ previously, then there is an edge $\{e_2, e_1\}$ labelled with $0$ (that was the reason for adding $e_1$ to $I$), and so we should have added $e_2$ to $I$ before.
By repeating this process, we obtain an independent set $I$ covering all vertices and edges labelled with $0$.

The established claim means that an independent set $I$ in $D$ exists iff, for any simple path $e_0, e_1, \dots, e_k$ with an odd $k$, the label of $e_0$ or $e_k$ evaluates to $1$, or the label of at least one $\{e_{i-1},e_{i}\}$, for even $i$, evaluates to $1$.
This property is computed by a $\Pitr$-circuit where, for each simple path $e_0, e_1, \dots, e_k$ with an odd $k$, we take $(k+3)/2$-many 
$\AND$-gates whose inputs are the literals in the labels of $e_0$, $e_k$ and the $\{e_{i-1},e_{i}\}$ for even $i$; then we send the outputs of those $\AND$-gates to an $\OR$-gate; and, finally, we collect the outputs of all the $\OR$-gates as inputs to an $\AND$-gate.

\smallskip

To show $\Pitr \subseteq \THGP^2(2)$, suppose we are given a $\Pitr$-circuit $\Cir$. We can assume $\Cir$ to be a conjunction of DNFs. So, we first construct a generalised HGP $P$ from $ \THGP^2(2)$ computing the same function as $\Cir$. 
Denote the $\OR$-gates of $\Cir$ by $g^1, \dots, g^k$ and the inputs of $g^i$ by $h^i_{1}, \dots, h^i_{l_i}$, where each $h^i_{j}$ is an $\AND$-gate. 
Now, we define a tree hypergraph whose underlying path graph has the  following edges (in the given order)
\begin{equation*}
v^1_{0},\dots, v^1_{2l_1-2}, \ \ \ v^2_{0},\dots, v^2_{2l_2-2},\ \ \ \dots, \ \ \ v^k_{0},\dots, v^k_{2l_k-2}
\end{equation*}
and whose hyperedges are of the form $\{v^i_{j}, v^i_{j+1}\}$. We label $v^i_{2m}$ with a conjunction of the inputs of $h^i_{m+1}$ and the remaining vertices with $0$.
By the previous analysis  for a given~$i$ and an input for $\Cir$, we can cover all zeros among $v^i_{0},\dots, v^i_{2l_i-2}$ with an independent set of hyperedges iff at least one of the gates $h^i_{1},\dots, h^i_{l_i}$ outputs 1. For different $i$, the corresponding $v^i_{0},\dots, v^i_{2l_i-2}$ are covered independently. Thus, $P$ computes the same function as $\Cir$. We convert $P$ to a THGP from $\smash{\THGP^2(2)}$ using   Proposition~\ref{hyper:thgp}~(\emph{ii}).
\end{proof}


\section{The Size of OMQ Rewritings}\label{sec:7}

In this section, by an OMQ $\omq = (\T,\q)$ we mean a sequence $\{\omq_n = (\T_n,\q_n)\}_{n < \omega}$ of OMQs whose size is polynomial in $n$; by a rewriting $\q'$ of $\omq$ we mean a sequence $\{\q'_n\}_{n<\omega}$, where each $\q'_n$ is a rewriting of $\omq_n$, for $n < \omega$. 

By putting together the results of the previous three sections and  some known facts from circuit complexity, we obtain the upper and lower bounds on the size of PE-, NDL- and FO-rewritings for various OMQ classes that are collected in Table~\ref{table:rewritings},
\begin{table}
\caption{The size of OMQ rewritings.}{%
\renewcommand{\tabcolsep}{7pt}%
\begin{tabular}{lccc}\toprule
OMQ $\omq = (\T,\q)$ & PE & NDL & FO \\\midrule
$\T$ of depth 2 & \textcolor{gray}{$\mathsf{exp}$ {\footnotesize (Th.~\ref{Depth2:Clique})}} & $\mathsf{exp}$ {\footnotesize (Th.~\ref{Depth2:Clique})} & 
\begin{tabular}{c}$>\mathsf{poly} ~\text{ if }~ \NP \not\subseteq \P/\poly$ {\footnotesize (Th.~\ref{Depth2:Clique})}\\
$\mathsf{poly} ~\text{ iff }~ \NP/\poly \subseteq \NC^1$ {\footnotesize (Th.~\ref{nppoly})}\end{tabular}\\\midrule
$\T$ of depth 1 & $> \mathsf{poly}${\footnotesize (Th.~\ref{Depth1:PE})} & $\mathsf{poly}${\footnotesize (Th.~\ref{polyNDL})} & $\mathsf{poly} ~\text{ iff }~ \NL/\poly \subseteq \NC^1${\footnotesize (Th.~\ref{thm:nc1cond})}\\
\ \& $\q$ of treewidth $t$ & $\mathsf{poly}${\footnotesize (Th.~\ref{depth-one-btw})} &  \textcolor{gray}{$\mathsf{poly}${\footnotesize (Th.~\ref{depth-one-btw})}} & \textcolor{gray}{$\mathsf{poly}${\footnotesize (Th.~\ref{depth-one-btw})}} \\
\ \& $\q$ tree & $\mathsf{poly}\text{-}\mathsf{\Pi}_4${\footnotesize (Th.~\ref{depth-one-tree})} & \textcolor{gray}{$\mathsf{poly}${\footnotesize (Th.~\ref{depth-one-tree})}} & \textcolor{gray}{$\mathsf{poly}\text{-}\mathsf{\Pi}_4${\footnotesize (Th.~\ref{depth-one-tree})} } \\\midrule
$\q$ tree with $\ell$ leaves & \begin{tabular}{c}$> \mathsf{poly}${\footnotesize (Th.~\ref{linear-lower})}\\\footnotesize ($\T$ is of depth 2)\end{tabular} & $\mathsf{poly}${\footnotesize (Th.~\ref{bbcq-ndl})} & $\mathsf{poly} ~\text{ iff }~ \NL/\poly \subseteq \NC^1${\footnotesize (Th.~\ref{nbps-conditional})}\\\midrule
\begin{tabular}{l}$\omq$ with PFSP\\
$\q$ of treewidth $t$\end{tabular}  &  & $\mathsf{poly}${\footnotesize (Th.~\ref{btw-ndl})} & \begin{tabular}{c}$\mathsf{poly} ~\text{ iff }~ \LOGCFL/\poly \subseteq \NC^1$\\\mbox{}\hfill\footnotesize{(Th.~\ref{btw-fo})}\end{tabular}\\\bottomrule
\end{tabular}}
\label{table:rewritings}
\end{table}
where $\mathsf{exp}$ means an exponential lower bound, $> \mathsf{poly}$ a superpolynomial lower bound, $\mathsf{poly}$ a polynomial upper bound, $\mathsf{poly}\text{-}\mathsf{\Pi}_4$ a polynomial-size $\mathsf{\Pi}_4$-rewriting (that is, a PE-rewriting with the matrix of the form ${\land}{\lor}{\land}{\lor}$), and $\ell$ and $t$ are any fixed constants. It is to be noted that, in case of  polynomial upper bounds, we actually provide polynomial algorithms for constructing rewritings.

\subsection{Rewritings for OMQs with ontologies of depth 2}

By Theorem~\ref{NBC}, OMQs with ontologies of depth~2 can compute any $\NP$-complete monotone Boolean function, in particular, the function $\cli$ with $n(n-1)/2$ variables $e_{jj'}$, $1 \leq j < j'\le n$, that returns 1 iff the graph with vertices $\{1,\dots,n\}$ and  edges $\{ \{j,j'\} \mid e_{jj'}=1\}$ contains a $k$-clique, for some fixed $k$.
A series of papers,  started by Razborov~\cite{Razborov85}, gave an exponential lower bound for the size of monotone circuits computing $\cli$, namely, $2^{\Omega(\sqrt{k})}$ for $k \leq \smash{\frac{1}{4}} (n/ \log n)^{2/3}$~\cite{AlonB87}. For monotone formulas, an even better lower bound is known: $2^{\Omega(k)}$ for $k = 2n/3$~\cite{RazW92}. Thus, we obtain:

\begin{theorem}\label{Depth2:Clique}
There is an OMQ with ontologies of depth $2$, any \PE- and \NDL-rewritings of which are of exponential size, while any \FO-rewriting is of superpolynomial size unless $\NP \subseteq \P/\poly$.
\end{theorem}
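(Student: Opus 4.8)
The plan is to instantiate the general machinery developed in the previous sections for the specific hard function $\cli$. The key ingredients are already in place: Theorem~\ref{hg-to-query} associates with every hypergraph $H$ an OMQ $\omq_H = (\T,\q)$ with $\T$ of depth~$2$ such that any monotone HGP based on $H$ computes a subfunction of the primitive evaluation function $f^\vartriangle_{\omq_H}$, and the number of tree witnesses for $\omq_H$ is at most $|H|$. Combining this with Proposition~\ref{hyper:program}~(\emph{ii}) (which builds a monotone HGP computing $f_H$) gives us an OMQ whose primitive evaluation function $\prim[\omq_H]$ has $f_H$ as a subfunction. So the first step is to choose a hypergraph $H_n$ of polynomial size whose hypergraph function $f_{H_n}$ equals (or has as a subfunction) the clique function $\cli$ on $n$ vertices.

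First I would realise $\cli$ as a hypergraph function. Following the $\NP$-completeness encoding implicit in Theorem~\ref{NBC}, the natural choice is to take a hypergraph $H_n$ whose vertices carry the edge variables $e_{jj'}$ and whose independent coverings of zeros correspond exactly to $k$-cliques in the graph determined by the input. Concretely, one designs $H_n$ so that covering all the zero-labelled vertices by an independent set of hyperedges is possible iff the input graph contains a $k$-clique; this is a direct hypergraph-program encoding of $\cli$, of size polynomial in $n$. By Proposition~\ref{hyper:program}~(\emph{i}) the resulting monotone HGP computes a subfunction of $f_{H_n}$, and by Theorem~\ref{hg-to-query}~(\emph{ii}) it computes a subfunction of $\prim[\omq_{H_n}]$; hence $\cli$ is a subfunction of $\prim[\omq_{H_n}]$, obtained by fixing the hyperedge and auxiliary variables to constants.

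Next I would invoke the lower-bound transfer of Theorem~\ref{rew2prim}. If $\omq_{H_n}$ had a \PE-rewriting (respectively \NDL-rewriting) of size $s$, then $\prim[\omq_{H_n}]$ would be computed by a monotone Boolean formula (respectively monotone circuit) of size $O(s)$. Since subfunctions are obtained by fixing and identifying variables, a monotone formula/circuit for $\prim[\omq_{H_n}]$ yields one for $\cli$ of no greater size. Now the cited circuit-complexity results apply: the Alon--Boppana bound~\cite{AlonB87} gives a $2^{\Omega(\sqrt{k})}$ lower bound on monotone circuits for $\cli$ with a suitable $k = k(n)$, forcing any \NDL-rewriting (and hence any \PE-rewriting) to be of exponential size; the Razborov--Wigderson bound~\cite{RazW92} gives the analogous monotone-formula bound. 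For the \FO-case, I would use the same subfunction argument with Theorem~\ref{rew2prim}~(\emph{i}): a polynomial-size \FO-rewriting would yield a polynomial-size (non-monotone) Boolean formula for $\cli$, placing $\cli \in \NC^1 \subseteq \NP/\poly$; but since $\cli$ is $\NP$-complete, a polynomial-size general circuit for it would give $\NP \subseteq \P/\poly$, so a superpolynomial \FO lower bound holds unless $\NP \subseteq \P/\poly$.

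The main obstacle I expect is the first step: constructing the hypergraph $H_n$ so that $f_{H_n}$ really captures $\cli$ as a subfunction while keeping $H_n$ of polynomial size and of the right shape for Theorem~\ref{hg-to-query}. One must be careful that the auxiliary vertices and hyperedges introduced to force the clique semantics can be eliminated cleanly by the subfunction operations (fixing variables to $0$ or $1$, and identifying variables), so that the exponential lower bounds for $\cli$ transfer without loss. Once the encoding is verified, the rest is a routine chaining of Theorems~\ref{hg-to-query} and~\ref{rew2prim} with the known monotone lower bounds, so the proof reduces to getting this combinatorial encoding exactly right.
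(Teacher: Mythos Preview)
Your approach is correct and essentially the same as the paper's, but you are overcomplicating the first step. The paper does not construct an explicit hypergraph encoding of $\cli$; it simply observes that $\cli$ is monotone and in $\NP \subseteq \mNP/\poly$, and then invokes Theorem~\ref{NBC} ($\mNP/\poly = \mHGP$) as a black box to obtain a polynomial-size monotone HGP $P$ computing $\cli$. Letting $H$ be the hypergraph underlying $P$, Theorem~\ref{hg-to-query}~(\emph{ii}) immediately gives that $\cli$ is a subfunction of $f^\vartriangle_{\omq_H}$, and the rest is exactly your chaining through Theorem~\ref{rew2prim} and the Alon--Boppana / Raz--Wigderson bounds. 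So the ``main obstacle'' you identify---getting the combinatorial encoding exactly right---does not exist: Theorem~\ref{NBC} already did that work. Your detour through Proposition~\ref{hyper:program} and the hypergraph function $f_{H_n}$ is also unnecessary, since Theorem~\ref{hg-to-query}~(\emph{ii}) applies directly to the HGP.
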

\begin{proof}
In view of $\cli \in \NP \subseteq \NP/\poly$ and Theorem~\ref{NBC}, there is a polynomial-size monotone HGP $P$ computing $\cli$. Suppose $P$ is based on a hypergraph $H$ and $\omq_H$ is the OMQ for $H$ constructed in Section~\ref{sec5.1}. By Theorem~\ref{hg-to-query}~(\emph{ii}), $\cli$ is a subfunction of the primitive evaluation function $f^\vartriangle_{\omq_H}$. By Theorem~\ref{rew2prim}, if $\q'$ is a \PE- or \NDL-rewriting of $\omq_H$, then $f^\vartriangle_{\omq_H}$---and so $\cli$---can be computed by a monotone formula or, respectively, circuit of size $O(|\q'|)$. Thus, $\q'$ must be of exponential size. If $\q'$ is an \FO-rewriting of $\omq_H$ then, by Theorem~\ref{rew2prim}, $\cli$ is computable by a Boolean formula of size $O(|\q'|)$. If $\NP \not\subseteq \P/\poly$ then $\cli$ cannot be computed by a polynomial circuit, and so $\q'$ must be of superpolynomial size.
\end{proof}

Our next theorem gives a complexity-theoretic characterisation of  the existence of FO-rewritings for OMQs with ontologies of depth $2$.

\begin{theorem}\label{nppoly}
The following conditions are equivalent\textup{:}
\begin{enumerate}[(1)]
\item all OMQs with ontologies of depth $2$ have polynomial-size \FO-rewritings\textup{;}

\item all OMQs with ontologies of depth $2$ and polynomially many tree witnesses have polynomial-size \FO-rewritings\textup{;}

\item $\NP/\poly \subseteq \NC^1$.
\end{enumerate}
\end{theorem}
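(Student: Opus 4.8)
The plan is to establish the cycle of implications $(1) \Rightarrow (2) \Rightarrow (3) \Rightarrow (1)$. The implication $(1) \Rightarrow (2)$ is immediate, since the OMQs in condition~(2) form a subclass of those in~(1).

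For $(3) \Rightarrow (1)$, I would work with the modified tree-witness hypergraph function $\homfn$ rather than $\twfn$, because $\homfn$ has only polynomially many propositional variables even when the OMQ has exponentially many tree witnesses (cf.\ Example~\ref{ex:exp-tws}). As observed at the end of Section~\ref{hyper-functions}, $\homfn$ is computable by a nondeterministic polynomial-time algorithm in the number of its variables; taking the (polynomial-size) OMQ itself as advice, the sequence $\{\homfn\}$ lies in $\NP/\poly$. Assuming $\NP/\poly \subseteq \NC^1$, each $\homfn$ is then computed by a polynomial-size Boolean formula $\Phi$, and Theorem~\ref{Hom2rew}~(\emph{ii}) (the analogue of Theorem~\ref{TW2rew}~(\emph{i}) for $\homfn$) turns $\Phi$ into an FO-rewriting of $\omq$ of size $O(|\Phi| \cdot |\omq|)$, which is polynomial.

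The crux is $(2) \Rightarrow (3)$. Here I would start from an arbitrary $f \in \NP/\poly$ and, using $\NP/\poly = \HGP$ from Theorem~\ref{NBC}, fix a polynomial-size HGP $P$ computing $f$, say over variables $p_1, \dots, p_n$ and based on a hypergraph $H$. Since $f$ need not be monotone, the key step is to \emph{monotonise}: replacing in $P$ each label $p_i$ by a fresh positive variable $q_i$ and each label $\neg p_i$ by a fresh positive variable $r_i$ yields a monotone HGP $P'$ based on a hypergraph $H'$ (of the same polynomial size) that computes a monotone function $g$ with $g(\avec{q}, \neg \avec{q}) = f(\avec{q})$. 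Now $\omq_{H'}$ from Section~\ref{sec5.1} has an ontology of depth~$2$ and, crucially, at most $|H'|$ tree witnesses, i.e.\ polynomially many. By Theorem~\ref{hg-to-query}~(\emph{ii}), $g$ is a subfunction of the primitive evaluation function $f^\vartriangle_{\omq_{H'}}$. Condition~(2) then supplies a polynomial-size FO-rewriting $\q'$ of $\omq_{H'}$, whence Theorem~\ref{rew2prim}~(\emph{i}) gives a polynomial-size Boolean formula for $f^\vartriangle_{\omq_{H'}}$; restricting and renaming variables yields a polynomial-size formula for the subfunction $g$, and finally substituting $\neg q_i$ for each $r_i$ produces a polynomial-size Boolean formula for $f$. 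Thus $f \in \NC^1$, so $\NP/\poly \subseteq \NC^1$.

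I expect the main obstacle to be the $(2) \Rightarrow (3)$ step, and within it the monotonisation trick: one must route a possibly non-monotone target function through the monotone machinery of Theorems~\ref{NBC} and~\ref{hg-to-query}, and then recover $f$ from $g$ by the substitution $r_i \mapsto \neg q_i$ at the level of Boolean formulas — which is legitimate precisely because $\NC^1$ is closed under such literal substitutions, even though it is not a subfunction operation in the strict sense. A secondary point to verify is the bookkeeping that the OMQs $\omq_{H'}$ genuinely fall into the restricted class of condition~(2), i.e.\ that their tree-witness count stays polynomial, which is guaranteed by the bound $\le |H'|$ established for $\omq_H$ in Section~\ref{sec5.1}.
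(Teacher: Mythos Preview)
Your argument is correct. The implications $(1)\Rightarrow(2)$ and $(3)\Rightarrow(1)$ match the paper exactly. For $(2)\Rightarrow(3)$, however, you take a genuinely different route.

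The paper's proof of $(2)\Rightarrow(3)$ does not treat an arbitrary $f\in\NP/\poly$. Instead it fixes the single monotone function $\cli$, observes that $\cli\in\mNP/\poly=\mHGP$ (so a monotone HGP is available directly, with no monotonisation needed), applies Theorem~\ref{hg-to-query}~(\emph{ii}) to realise $\cli$ as a subfunction of $f^\vartriangle_{\omq_H}$ for a depth-$2$ OMQ with polynomially many tree witnesses, and then invokes condition~(2) plus Theorem~\ref{rew2prim} to get a polynomial-size Boolean formula for $\cli$. The final step appeals to the external fact that $\cli$ is $\NP/\poly$-complete under $\NC^1$-reductions, which immediately yields $\NP/\poly\subseteq\NC^1$.

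Your approach trades that completeness citation for the monotonisation trick: you handle every $f\in\NP/\poly$ directly, pass through a monotone auxiliary $g$ on doubled variables, and recover $f$ by the substitution $r_i\mapsto\neg q_i$ at the formula level. This is entirely sound and more self-contained (no reliance on $\cli$ being complete under $\NC^1$-reductions), at the cost of one extra step. The paper's version is shorter because choosing a \emph{monotone} hard function sidesteps both the monotonisation and the back-substitution.
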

\begin{proof}
The implication (1) $\Rightarrow$ (2) is obvious. To show that (2) $\Rightarrow$ (3), suppose there is a polynomial-size \FO-rewriting for the OMQ $\omq_H$ from the proof of Theorem~\ref{Depth2:Clique}, which has polynomially many tree witnesses. Then $\cli$ is computed by a polynomial-size Boolean formula. Since $\cli$ is $\NP/\poly$-complete under $\NC^1$-reductions, we have $\NP/\poly \subseteq \NC^1$. Finally, to prove (3) $\Rightarrow$ (1), assume $\NP/\poly \subseteq \NC^1$. Let $\omq$ be an arbitrary OMQ with ontologies of depth 2. As observed in Section~\ref{hyper-functions}, the function $\homfn$ is in $\NP \subseteq \NP/\poly$. 
Therefore, by our assumption, $\homfn$ can be computed by a polynomial-size formula, and so, by Theorem~\ref{Hom2rew}, $\omq$ has a polynomial-size \FO-rewriting.
\end{proof}

%


\subsection{Rewritings for OMQs with ontologies of depth 1}


\begin{theorem}\label{polyNDL}
Any OMQ $\omq$ with ontologies of depth $1$ has a polynomial-size $\NDL$-rewriting.
\end{theorem}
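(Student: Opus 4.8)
The plan is to run the tree-witness hypergraph function $\twfn$ through the machinery of Sections~\ref{sec:Bfunctions}--\ref{sec:circuit_complexity}, exploiting the fact that ontologies of depth $1$ produce hypergraphs of degree at most $2$. First I would invoke Theorem~\ref{depth1}: since $\T$ is of depth $1$, the hypergraph $\HG{\omq}$ has degree at most $2$ and $|\HG{\omq}| \le 2|\q|$. Feeding this into Proposition~\ref{hyper:program}~(\emph{ii}) yields a monotone HGP $P$ of degree at most $2$ and size $O(|\q|)$ that computes $\twfn$; in other words, $\twfn$ lies in the class $\mHGP^2$.

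Next I would pass to circuit complexity. By Theorem~\ref{thm:deg_2} we have $\comNL/\poly = \mHGP^2$, so $\twfn \in \comNL/\poly$. The remaining task is to produce a \emph{polynomial-size monotone circuit} for $\twfn$, after which Theorem~\ref{TW2rew}~(\emph{ii}) delivers an \NDL-rewriting of $\omq$ of size $O(|\Cir|\cdot|\omq|)$, which is polynomial in $|\omq|$, as required.

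To obtain the monotone circuit I would argue $\comNL/\poly \subseteq \mP/\poly$. Concretely, the proof of Theorem~\ref{thm:deg_2} turns the monotone degree-$2$ HGP $P$ into a polynomial-size monotone NBP $B$ computing the dual $(\twfn)^*$; a monotone NBP in turn admits a polynomial-size monotone circuit for its reachability predicate, since the standard transitive-closure recurrence (iterated Boolean `matrix multiplication' of the reachability relation) uses only \AND- and \OR-gates. Hence $(\twfn)^*$ has a polynomial-size monotone circuit. Dualising this circuit, i.e.\ swapping \AND- and \OR-gates, computes $((\twfn)^*)^* = \twfn$ and has the same size, giving the desired monotone $\Cir$. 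Abstractly this is just $\comNL/\poly \subseteq \mP/\poly$, which follows from $\mNL/\poly \subseteq \mP/\poly$ (see~\eqref{eq:inclusions_monotone}) together with the closure of $\mP/\poly$ under taking duals.

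The main obstacle is the monotone-duality bookkeeping: Theorem~\ref{thm:deg_2} naturally places $\twfn$ in $\comNL/\poly$ rather than in $\mNL/\poly$, and in the monotone world these two classes need not coincide, unlike $\NL/\poly = \coNL/\poly$ in the non-monotone setting. One must therefore take care to extract a genuinely monotone circuit for $\twfn$ \emph{itself}, which is exactly what the dualisation step above accomplishes. Since none of the intermediate transformations introduces negations, monotonicity is preserved throughout, and Theorem~\ref{TW2rew}~(\emph{ii}) then applies to give the polynomial-size \NDL-rewriting.
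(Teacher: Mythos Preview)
Your proposal is correct and follows essentially the same route as the paper: invoke Theorem~\ref{depth1} and Proposition~\ref{hyper:program}~(\emph{ii}) to place $\twfn$ in $\mHGP^2 = \comNL/\poly$ via Theorem~\ref{thm:deg_2}, use $\mNL/\poly \subseteq \mP/\poly$ to obtain a polynomial-size monotone circuit for the dual $(\twfn)^*$, swap \AND- and \OR-gates to get a monotone circuit for $\twfn$, and finish with Theorem~\ref{TW2rew}~(\emph{ii}). Your explicit attention to the monotone-duality bookkeeping matches the paper's argument exactly.
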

\begin{proof}
By Theorem~\ref{depth1},  the hypergraph $\HG{\omq}$ is of degree at most~2, and so, by Proposition~\ref{hyper:program}~(\emph{ii}), there is a polynomial-size monotone HGP of degree at most~2 computing $\twfn$. By Theorem~\ref{thm:deg_2}, $\comNL/\poly = \mHGP^2$, and so we have a polynomial-size monotone NBP computing the dual ${\twfn}^*$ of $\twfn$. 
Since $\mNL/\poly \subseteq \mP/\poly$,  
we also have a polynomial-size monotone Boolean circuit that  computes ${\twfn}^*$.  
By swapping $\AND$- and $\OR$-gates in that circuit, we obtain a polynomial-size monotone circuit computing $\twfn$. It remains to apply Theorem~\ref{TW2rew}~(\emph{ii}).
\end{proof}

However, this upper bound cannot be extended to \PE-rewritings:

\begin{theorem}\label{Depth1:PE}
There is an OMQ $\omq$ with ontologies of depth $1$, any \PE-rewriting of which is of superpolynomial size \textup{(}$n^{\Omega(\log n)}$, to be more precise\textup{)}.
\end{theorem}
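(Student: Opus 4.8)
The plan is to produce a concrete OMQ with a depth-1 ontology whose tree-witness hypergraph function $\twfn$ is hard for monotone formulas, and then to invoke Theorem~\ref{rew2prim}~(\emph{i}) to turn any small \PE-rewriting into a small monotone formula, obtaining a contradiction. Since Theorem~\ref{depth1} tells us that depth-1 ontologies yield hypergraphs of degree at most~2, and Theorem~\ref{thm:deg_2} identifies $\mHGP^2$ with $\comNL/\poly$, the natural strategy is to find a monotone Boolean function that lies in $\comNL/\poly$ (equivalently, whose dual lies in $\mNL/\poly$) but requires monotone formulas of size $n^{\Omega(\log n)}$. A classical candidate is an $st$-connectivity-style or directed-reachability function: such functions are computable by polynomial-size monotone nondeterministic branching programs (hence are in $\mNL/\poly$), yet are known to require monotone formulas of superpolynomial size, by the lower bounds of Karchmer--Wigderson~\cite{KarchmerW88} on monotone depth/formula size for $st$-connectivity. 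The precise $n^{\Omega(\log n)}$ bound matches the Karchmer--Wigderson separation, so I would fix the hard function $f$ to be (the dual of) the directed $st$-connectivity function, or a closely related monotone function with the same formula-size lower bound.

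First I would state the circuit-complexity input precisely: there is a monotone Boolean function $f = \{f_n\}$ with $f \in \comNL/\poly$ (so $f^* \in \mNL/\poly$) such that any monotone Boolean formula computing $f$ has size $n^{\Omega(\log n)}$. By Theorem~\ref{thm:deg_2}, $\comNL/\poly = \mHGP^2$, so $f$ is computed by a family of polynomial-size monotone HGPs of degree at most~2. Next I would apply the representation direction of Section~\ref{sec:depth1}: by Theorem~\ref{representable}~(\emph{i}), any degree-2 hypergraph $H$ is isomorphic to $\HG{\OMQI{H}}$ for an OMQ $\OMQI{H}=(\T,\q)$ with $\T$ of depth~1, and by Theorem~\ref{representable}~(\emph{ii}) the monotone HGP based on $H$ computes a subfunction of the primitive evaluation function $f^\vartriangle_{\OMQI{H}}$. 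Taking $H = H_n$ to be the degree-2 hypergraph underlying the HGP for $f_n$, this yields a polynomial-size family of depth-1 OMQs $\OMQI{H_n}$ such that $f_n$ is a subfunction of $f^\vartriangle_{\OMQI{H_n}}$.

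The conclusion then follows by contradiction. Suppose $\OMQI{H_n}$ had a \PE-rewriting $\q'_n$ of polynomial size. By Theorem~\ref{rew2prim}~(\emph{i}), $f^\vartriangle_{\OMQI{H_n}}$ is computed by a monotone Boolean formula of size $O(|\q'_n|)$, hence of polynomial size. Since taking subfunctions (fixing and identifying variables) does not increase monotone formula size, $f_n$ would also be computed by a monotone formula of polynomial size, contradicting the $n^{\Omega(\log n)}$ lower bound. Therefore every \PE-rewriting of $\OMQI{H_n}$ is of size $n^{\Omega(\log n)}$, which is the claimed superpolynomial lower bound.

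The main obstacle I anticipate is supplying the right hard function and its lower bound cleanly. One has to be careful that the function used is genuinely \emph{monotone} (so that the degree-2/monotone-HGP machinery and the monotone-formula lower bound both apply), that it really lies in $\comNL/\poly$ rather than merely $\mNL/\poly$ (the duality in Theorem~\ref{thm:deg_2} must be tracked, since $\mHGP^2 = \comNL/\poly$, not $\mNL/\poly$), and that the quantitative bound is exactly $n^{\Omega(\log n)}$ as stated. The Karchmer--Wigderson separation gives a monotone function in $\mNL$ whose monotone formula size is $n^{\Omega(\log n)}$; verifying that either this function or its dual sits on the correct side of the duality, and that its number of variables is polynomially related to $n$ so the exponent is preserved, is the delicate bookkeeping step. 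Everything else is a direct composition of already-established theorems.
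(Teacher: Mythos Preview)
Your proposal is correct and follows essentially the same route as the paper: take $f=\reach$, use that $f\in\mNL/\poly$ so that by Theorem~\ref{thm:deg_2} the dual $f^*$ is computed by a polynomial-size monotone HGP of degree~2 on some hypergraph $H$, apply Theorem~\ref{representable}~(\emph{ii}) to get $f^*$ as a subfunction of $f^\vartriangle_{\OMQI{H}}$, and conclude via Theorem~\ref{rew2prim}~(\emph{i}) together with the Karchmer--Wigderson $n^{\Omega(\log n)}$ lower bound. The duality bookkeeping you flag is resolved by the observation that monotone formula size is invariant under dualisation (swap $\land$ and $\lor$), so the $n^{\Omega(\log n)}$ bound for $\reach$ transfers to $\reach^*$; the paper leaves this implicit.
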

\begin{proof}
Consider the monotone function $\reach$ that takes the adjacency matrix of a  directed graph $G$ with two distinguished vertices $s$ and $t$ and returns~1 iff the graph $G$ contains a directed path from $s$ to $t$. It is known~\cite{KarchmerW88,Jukna12} that $\reach$ is computable by a polynomial-size monotone NBP (that is, belongs to $\mNL/\poly$), but any monotone Boolean formula for $\reach$ is of size $n^{\Omega(\log n)}$. 
Let $f = \reach$. By Theorem~\ref{thm:deg_2}, there is a polynomial-size monotone HGP that is based on a hypergraph $H$ of degree~$2$ and computes the dual $f^*$ of $f$. Consider now the OMQ $\OMQI{H}$ for $H$ defined in Section~\ref{sec:depth1}. By Theorem~\ref{representable}~(\emph{ii}), 
$f^*$ is a subfunction of $f^\vartriangle_{\OMQI{H}}$. By Theorem~\ref{rew2prim}~(\emph{i}),  no PE-rewriting of the OMQ $\OMQI{H}$ can be shorter than $\smash{n^{\Omega(\log n)}}$.
\end{proof}



%
\begin{theorem}\label{thm:nc1cond}
All OMQs with ontologies of depth $1$ have polynomial-size \FO-rewritings iff $\NL/\poly \subseteq \NC^1$.
\end{theorem}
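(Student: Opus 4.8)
The plan is to prove both implications by routing through the tree-witness hypergraph function $\twfn$ and the primitive evaluation function $f^{\vartriangle}_{\OMQI{H}}$, using the degree-$2$ characterisation of $\NL/\poly$ from Theorem~\ref{thm:deg_2}.

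For the direction ($\Leftarrow$), I would assume $\NL/\poly \subseteq \NC^1$ and fix a depth-$1$ OMQ $\omq$. First I would invoke Theorem~\ref{depth1} to conclude that $\HG{\omq}$ has degree at most $2$, and then Proposition~\ref{hyper:program}~(\emph{ii}) to obtain a polynomial-size monotone HGP of degree at most $2$ computing $\twfn = f_{\HG{\omq}}$. Since every monotone HGP is an HGP, this gives $\twfn \in \HGP^2 = \NL/\poly$ by Theorem~\ref{thm:deg_2}, whence $\twfn \in \NC^1$ by hypothesis. A polynomial-size Boolean formula for $\twfn$ then yields, via Theorem~\ref{TW2rew}~(\emph{i}), a polynomial-size \FO-rewriting of $\omq$.

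For ($\Rightarrow$), I would assume that every depth-$1$ OMQ has a polynomial-size \FO-rewriting. The core step establishes $\mNL/\poly \subseteq \NC^1$: given $f \in \mNL/\poly$, its dual satisfies $f^* \in \comNL/\poly = \mHGP^2$ by Theorem~\ref{thm:deg_2}, so $f^*$ is computed by a polynomial-size monotone HGP based on a degree-$2$ hypergraph $H$. By Theorem~\ref{representable}~(\emph{ii}) (and the construction of $\OMQI{H}$ in Section~\ref{sec:depth1}, which is a depth-$1$ OMQ of polynomial size), $f^*$ is a subfunction of $f^{\vartriangle}_{\OMQI{H}}$. The assumed polynomial \FO-rewriting of $\OMQI{H}$, fed into Theorem~\ref{rew2prim}~(\emph{i}), produces a polynomial-size Boolean formula for $f^{\vartriangle}_{\OMQI{H}}$; since taking subfunctions (substitution of constants and identification of variables) and dualising both preserve membership in $\NC^1$, I obtain $f = (f^*)^* \in \NC^1$.

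The remaining, and I expect most delicate, step is bridging the monotone collapse to the full one, i.e.\ showing that $\mNL/\poly \subseteq \NC^1$ implies $\NL/\poly \subseteq \NC^1$. The obstacle is intrinsic: primitive evaluation functions are monotone, so the representation machinery can only certify membership of monotone functions, whereas the target class $\NL/\poly$ is non-monotone. The fix is the standard doubling argument: for $f \in \NL/\poly$ computed by an NBP over $x_1,\dots,x_n$, I would replace each negative label $\neg x_i$ by a fresh positive variable to obtain a monotone NBP for a function $g$ on $2n$ variables with $g \in \mNL/\poly$ and $f(x_1,\dots,x_n) = g(x_1,\dots,x_n,\neg x_1,\dots,\neg x_n)$. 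Applying $\mNL/\poly \subseteq \NC^1$ to $g$ and substituting the negated inputs (which only adds $\NOT$-gates at the inputs) keeps $f$ in $\NC^1$. Combining the three steps yields the stated equivalence.
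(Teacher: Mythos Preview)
Your argument is correct, but the $(\Rightarrow)$ direction takes a genuinely different route from the paper's. The paper picks a single function, $\reach$, obtains (via Theorem~\ref{thm:deg_2}) a polynomial-size HGP of degree~$2$ computing it, passes to $\OMQI{H}$ and its assumed polynomial \FO-rewriting, and then invokes the fact that $\reach$ is $\NL/\poly$-complete under $\NC^1$-reductions to conclude $\NL/\poly \subseteq \NC^1$ in one stroke. Your version instead proves the inclusion for \emph{every} function: you first establish $\mNL/\poly \subseteq \NC^1$ by dualising into $\comNL/\poly = \mHGP^2$ (so that Theorem~\ref{representable}~(\emph{ii}), which needs the HGP to be monotone, applies cleanly), and then lift to $\NL/\poly$ via the standard variable-doubling trick on NBPs. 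The paper's route is shorter and leans on a known completeness result; yours is longer but more self-contained and has the virtue of making the monotonicity hypothesis of Theorem~\ref{representable}~(\emph{ii}) explicit. One small point: Theorem~\ref{representable} is stated for hypergraphs of degree \emph{exactly}~$2$, whereas $\mHGP^2$ gives degree \emph{at most}~$2$; you should remark that any degree-$\le 2$ hypergraph can be padded to degree exactly~$2$ without changing the computed function.
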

\begin{proof}
Suppose $\NL/\poly \subseteq \NC^1$. Let $\omq$ be an OMQ with ontologies of depth $1$. By Theorem~\ref{depth1}, its hypergraph $\HG{\omq}$ is of degree~$2$ and polynomial size. By Proposition~\ref{hyper:program}~(\emph{ii}), there is a polynomial-size HGP of degree~2 that computes  $\twfn$. By Theorem~\ref{thm:deg_2}, 
$\twfn\in\NL/\poly$. Therefore, by our assumption,  $\twfn$ can be computed by a polynomial-size Boolean formula.
Finally, Theorem~\ref{TW2rew}~(\emph{i}) gives a polynomial-size \FO-rewriting of $\omq$.

Conversely, suppose there is a polynomial-size \FO-rewriting for any OMQ with ontologies of depth~$1$. Let $f = \reach$. Since $f \in \NL \subseteq \NL/\poly$, 
by Theorem~\ref{thm:deg_2},  
we obtain a polynomial-size HGP computing $f$ and based on a hypergraph $H$ of degree~2. Consider the OMQ $\OMQI{H}$ with ontologies of depth~1 defined in Section~\ref{sec:depth1}. By Theorem~\ref{representable}~(\emph{ii}),
$f$ a subfunction of $f^{\vartriangle}_{\OMQI{H}}$. 
By our assumption, $\OMQI{H}$ has a polynomial-size \FO-rewriting; hence, by Theorem~\ref{rew2prim}~(\emph{i}), $f^{\vartriangle}_{\OMQI{H}}$ (and so $f$) are computed by polynomial-size Boolean formulas. Since $f$ is $\NL/\poly$-complete under $\smash{\NC^1}$-reductions~\cite{Razborov91}, we  obtain $\NL/\poly \subseteq \smash{\NC^1}$.
\end{proof}


\subsection{Rewritings for tree-shaped OMQs with a bounded number of leaves}

Since, by Theorem~\ref{thm:linear_hgp1}, the hypergraph function of a leaf-bounded OMQ can be computed by a polynomial-size NBP, we have:

\begin{theorem}\label{bbcq-ndl}
For any fixed $\ell \ge 2$, all tree-shaped OMQs with at most $\ell$ leaves have polynomial-size NDL-rewritings.
\end{theorem}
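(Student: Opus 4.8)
The plan is to combine the structural result of Theorem~\ref{thm:linear_hgp1} with the computational machinery of Theorems~\ref{Hom2rew} and~\ref{TW2rew}, using the inclusion $\NL/\poly \subseteq \P/\poly$ from~\eqref{eq:inclusions} to convert a branching program into a monotone circuit. Fix $\ell \ge 2$ and let $\omq = (\T,\q)$ be a tree-shaped OMQ with at most $\ell$ leaves. By Theorem~\ref{prop:tree-shaped}, the tree-witness hypergraph $\HG{\omq}$ is isomorphic to a tree hypergraph whose underlying tree has at most $\ell$ leaves (the case of no binary atoms being trivial). Hence $\HG{\omq}$ is exactly of the form required by Theorem~\ref{thm:linear_hgp1}, which gives an NBP of size polynomial in $|\HG{\omq}| = O(2^{|\q|})$---but here I must be careful, since the point of leaf-boundedness is precisely that the relevant function can be computed succinctly despite $\HG{\omq}$ potentially having exponentially many hyperedges.

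So first I would work with the modified tree-witness hypergraph function $\homfn$ rather than $\twfn$, because $\homfn$ has only polynomially many variables in $|\omq|$ (as noted right after Theorem~\ref{Hom2rew}), whereas $\twfn$ has a variable per tree witness. The key observation is that $\homfn$ is obtained from $f_{\HG{\omq}}$ by the variable substitution~\eqref{subst}, and $f_{\HG{\omq}}$ is the hypergraph function of a tree hypergraph with at most $\ell$ leaves. By Theorem~\ref{thm:linear_hgp1}, $f_{\HG{\omq}}$ is computed by an NBP of size polynomial in $|\HG{\omq}|$; I need this polynomial to be in $|\q|$, which requires bounding the number of hyperedges, i.e.\ tree witnesses. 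This is the main obstacle: a priori an OMQ may have exponentially many tree witnesses even with bounded leaves. The resolution is that Theorem~\ref{thm:linear_hgp1} bounds the NBP size by a polynomial in $|H|$ where $H$ is the \emph{tree hypergraph}, and for a tree-witness hypergraph the number of distinct $\q_\t$ (as subsets of atoms) determining convex subtrees of a tree with $\le \ell$ leaves is itself polynomial in $|\q|$, since each is determined by its boundary, which consists of at most $\ell$ leaf-endpoints together with the bounded branching structure. I would make this counting explicit to certify that $|\HG{\omq}|$ is polynomial in $|\q|$.

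Granting the polynomial bound, the remaining steps are routine. From the NBP of polynomial size computing $f_{\HG{\omq}}$, and hence (after the substitution~\eqref{subst}, which is a composition with polynomial-size monotone subformulas) computing $\homfn$, I invoke $\NL/\poly \subseteq \P/\poly$ in its monotone form: an NBP is monotone when its labels carry no negations, which holds here as $\homfn$ is monotone, so the construction yields a monotone NBP, and $\mNL/\poly \subseteq \mP/\poly$ from~\eqref{eq:inclusions_monotone} gives a polynomial-size monotone Boolean circuit $\Cir$ computing $\homfn$. Finally, by Theorem~\ref{Hom2rew}~(\emph{ii}) together with Theorem~\ref{TW2rew}~(\emph{ii}), the monotone circuit $\Cir$ yields an NDL-rewriting of $\omq$ of size $O(|\Cir| \cdot |\omq|)$, which is polynomial in $|\omq|$ as required. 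I expect the entire argument to hinge on the leaf-bound controlling the tree-witness count; once that is in place the reduction to circuit complexity is mechanical.
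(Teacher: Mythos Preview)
Your approach is correct and follows essentially the same route as the paper. The one unnecessary detour is passing through $\homfn$: once you have established (as you correctly do) that a tree with at most $\ell$ leaves admits only $|\q|^{O(\ell)}$ convex subtrees, the hypergraph $\HG{\omq}$ itself has polynomial size, so $\twfn = f_{\HG{\omq}}$ already has polynomially many variables and Theorem~\ref{thm:linear_hgp1} applies to it directly. The paper therefore stops at $\twfn$, obtains a polynomial-size monotone NBP (the construction in the proof of Theorem~\ref{thm:linear_hgp1} uses only positive labels), invokes $\mNL/\poly \subseteq \mP/\poly$, and finishes with Theorem~\ref{TW2rew}~(\emph{ii}) alone. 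Your explicit identification of the leaf bound as the step controlling the tree-witness count is exactly the point, and is the one thing the paper leaves implicit.
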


The superpolynomial lower bound below is proved in exactly the same way as Theorem~\ref{Depth1:PE} using Theorems~\ref{thm:linear_hgp} and~\ref{tree-hg-to-query} instead of Theorems~\ref{thm:deg_2} and~\ref{representable}.

\begin{theorem}\label{linear-lower}
There is an OMQ with ontologies of depth 2 and linear CQs any \PE-rewriting of which is of superpolynomial size \textup{(}$n^{\Omega(\log n)}$, to be more precise\textup{)}.
\end{theorem}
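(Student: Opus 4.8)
The plan is to mirror the proof of Theorem~\ref{Depth1:PE}, replacing the role of degree-$2$ hypergraphs and ontologies of depth $1$ by tree hypergraphs with two leaves and the depth-$2$ tree-shaped OMQs $\OMQT{H}$ constructed in Section~\ref{sec:5.3}. The hard Boolean function will again be the monotone reachability function $\reach$, which by~\cite{KarchmerW88,Jukna12} lies in $\mNL/\poly$ yet admits no monotone Boolean formula of size smaller than $n^{\Omega(\log n)}$.

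First I would invoke Theorem~\ref{thm:linear_hgp} with $\ell = 2$: since $\reach \in \mNL/\poly = \mTHGP(2)$, there is a polynomial-size monotone THGP $P$ computing $\reach$ whose underlying tree is a path (at most two leaves). Let $H$ be the tree hypergraph on which $P$ is based, and form the OMQ $\OMQT{H} = (\T,\q)$ of Section~\ref{sec:5.3}. By construction $\T$ is of depth $2$, and because the underlying tree of $H$ is a path, the Gaifman graph of $\q$ is again a path, so $\q$ is linear; moreover $\OMQT{H}$ has size polynomial in $|H|$, hence in $n$. By Theorem~\ref{tree-hg-to-query}~(ii), the function $\reach$ computed by $P$ is a subfunction of the primitive evaluation function $f^\vartriangle_{\OMQT{H}}$.

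It then remains to translate a short PE-rewriting into a short monotone formula. Suppose $\q'$ is any PE-rewriting of $\OMQT{H}$. By Theorem~\ref{rew2prim}~(i), $f^\vartriangle_{\OMQT{H}}$ is computed by a monotone Boolean formula of size $O(|\q'|)$; substituting constants for the fixed variables and relabelling the rest, as in the definition of subfunction, turns this into a monotone formula of size $O(|\q'|)$ for $\reach$. The $n^{\Omega(\log n)}$ lower bound for monotone formulas computing $\reach$ then forces $|\q'| = n^{\Omega(\log n)}$, as required.

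The only points needing care are (i) checking that $\OMQT{H}$ indeed has a linear CQ---this follows directly from the shape of the construction in Section~\ref{sec:5.3} when the underlying tree is a path, or alternatively from Theorem~\ref{prop:tree-shaped}, since $\q$ contains binary atoms---and (ii) observing that passing to a subfunction never increases monotone formula size, so the lower bound for $\reach$ transfers to $f^\vartriangle_{\OMQT{H}}$ and hence to $|\q'|$. Neither step presents a genuine obstacle: the substance of the argument is carried entirely by Theorems~\ref{thm:linear_hgp} and~\ref{tree-hg-to-query}, together with the known monotone-formula lower bound for $\reach$.
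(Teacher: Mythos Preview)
Your proposal is correct and follows essentially the same approach as the paper: use $\reach \in \mNL/\poly = \mTHGP(2)$ via Theorem~\ref{thm:linear_hgp} to obtain a polynomial-size monotone THGP on a two-leaf tree $H$, pass to the depth-$2$ linear OMQ $\OMQT{H}$ of Section~\ref{sec:5.3}, apply Theorem~\ref{tree-hg-to-query}~(ii) to make $\reach$ a subfunction of $f^\vartriangle_{\OMQT{H}}$, and then Theorem~\ref{rew2prim}~(i) together with the Karchmer--Wigderson lower bound. The paper's proof is exactly this, stated more tersely; your additional remarks about linearity of $\q$ and subfunctions not increasing monotone formula size are correct and harmless.
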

%

Our next result is similar to Theorem~\ref{thm:nc1cond}:

\begin{theorem}\label{nbps-conditional}
The following are equivalent\textup{:}
\begin{enumerate}[(1)]
\item there exist polynomial-size \FO-rewritings for all OMQs with linear CQs and ontologies of depth $2$\textup{;}

\item for any fixed $\ell$, there exist polynomial-size \FO-rewritings for all tree-shaped OMQs with at most $\ell$ leaves\textup{;}

\item $\NL/\poly \subseteq \NC^1$.
\end{enumerate}
\end{theorem}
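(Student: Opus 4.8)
The plan is to establish the three-way equivalence $(1)\Leftrightarrow(2)\Leftrightarrow(3)$ by following the same template as Theorem~\ref{thm:nc1cond}, but now using the leaf-bounded machinery of Section~\ref{sec:NLpoly-THGP} in place of the degree-$2$ machinery of Section~\ref{sec:depth1}. The implication $(1)\Rightarrow(2)$ is the routine direction; the substantive content is in reducing $\reach$ to the rewriting problem (giving $(1)\Rightarrow(3)$, or contrapositively $(2)\Rightarrow(3)$) and in constructing polynomial-size FO-rewritings from the hypothesis $\NL/\poly\subseteq\NC^1$ (giving $(3)\Rightarrow(2)$ and $(3)\Rightarrow(1)$).

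First I would prove $(3)\Rightarrow(2)$. Let $\omq$ be a tree-shaped OMQ with at most $\ell$ leaves. By Theorem~\ref{prop:tree-shaped}, $\HG{\omq}$ is isomorphic to a tree hypergraph whose underlying tree has at most $\ell$ leaves, and by Theorem~\ref{thm:linear_hgp1} the hypergraph function $\twfn = f_{\HG{\omq}}$ is therefore computable by a polynomial-size NBP, i.e.\ $\twfn \in \NL/\poly$. Under the assumption $\NL/\poly\subseteq\NC^1$, the function $\twfn$ is computed by a polynomial-size Boolean formula, so Theorem~\ref{TW2rew}~(\emph{i}) yields a polynomial-size FO-rewriting of $\omq$. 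This establishes $(2)$. The implication $(3)\Rightarrow(1)$ is analogous: for an OMQ with a linear CQ and an ontology of depth $2$, the CQ is tree-shaped with at most two leaves, so the very same argument applies (linear CQs are the $\ell=2$ case), giving polynomial-size FO-rewritings and hence $(1)$.

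Next I would prove $(2)\Rightarrow(3)$ (and symmetrically $(1)\Rightarrow(3)$) by reducing $\reach$. Let $f=\reach$, which is $\NL/\poly$-complete under $\NC^1$-reductions by~\cite{Razborov91} and lies in $\NL\subseteq\NL/\poly$. By Theorem~\ref{thm:linear_hgp}, since $f\in\NL/\poly = \THGP(\ell)$ for any fixed $\ell\ge 2$, there is a polynomial-size monotone THGP based on a tree hypergraph $H$ whose underlying tree has at most $\ell$ leaves and which computes $f$. Now apply the representation result Theorem~\ref{tree-hg-to-query}: the OMQ $\OMQT{H}$ is tree-shaped, its CQ has the same number of leaves as $H$ (bounded by $\ell$), its ontology is of depth $2$, and $f$ is a subfunction of the primitive evaluation function $f^\vartriangle_{\OMQT{H}}$. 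Under assumption $(2)$ there is a polynomial-size FO-rewriting of $\OMQT{H}$, whence by Theorem~\ref{rew2prim}~(\emph{i}) the function $f^\vartriangle_{\OMQT{H}}$, and therefore its subfunction $f=\reach$, is computed by a polynomial-size Boolean formula. Since $\reach$ is $\NL/\poly$-complete under $\NC^1$-reductions, this forces $\NL/\poly\subseteq\NC^1$. The reduction from $(1)$ is identical, noting that when $\ell=2$ the tree hypergraph produced by Theorem~\ref{thm:linear_hgp} is based on a path (two leaves), so $\OMQT{H}$ has a linear CQ, matching the hypothesis in $(1)$.

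The main obstacle I anticipate is the leaf-count bookkeeping in the reduction direction. One must be careful that the THGP supplied by Theorem~\ref{thm:linear_hgp} for $\reach$ genuinely has at most $\ell$ leaves (the theorem gives exactly this for each fixed $\ell$), and that Theorem~\ref{tree-hg-to-query} preserves the leaf count when passing from the tree hypergraph $H$ to the OMQ $\OMQT{H}$; in particular, for the linear case one needs $\ell=2$ to yield a \emph{linear} CQ rather than merely a bounded-leaf one. The other delicate point is simply invoking the correct completeness fact: the $\NL/\poly$-completeness of $\reach$ under $\NC^1$-reductions from~\cite{Razborov91} is exactly what converts a polynomial formula for one specific function into the blanket inclusion $\NL/\poly\subseteq\NC^1$. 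Everything else is a direct assembly of Theorems~\ref{thm:linear_hgp}, \ref{thm:linear_hgp1}, \ref{tree-hg-to-query}, \ref{rew2prim} and~\ref{TW2rew}, exactly parallel to the depth-$1$ argument in Theorem~\ref{thm:nc1cond}.
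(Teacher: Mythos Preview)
Your proposal is correct and follows essentially the same route as the paper, which proves the cycle $(1)\Rightarrow(3)\Rightarrow(2)\Rightarrow(1)$ using exactly the ingredients you name (Theorems~\ref{thm:linear_hgp}, \ref{thm:linear_hgp1}, \ref{tree-hg-to-query}, \ref{rew2prim}, \ref{TW2rew}). Two small slips to fix: the routine direction is $(2)\Rightarrow(1)$, not $(1)\Rightarrow(2)$; and to obtain a \emph{monotone} THGP from Theorem~\ref{thm:linear_hgp} (which is what Theorem~\ref{tree-hg-to-query}(ii) requires) you must invoke $f\in\mNL/\poly=\mTHGP(\ell)$ rather than $\NL/\poly=\THGP(\ell)$---this holds because $\reach$ is monotone and in $\NL$, as the paper notes explicitly.
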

\begin{proof} $(1) \Rightarrow (3)$ Suppose every OMQ $\omq = (\T,\q)$ with linear $\q$ and $\Tmc$ of depth~2 has an FO-rewriting of size $p(|\omq|)$, for some fixed polynomial $p$. Consider $f = \reach$.  As $f$ is monotone and $f\in\NL$, we have $f\in \mNL/\poly$. Thus, Theorem~\ref{thm:linear_hgp} gives us an HGP $P$ from $\mTHGP(2)$ that computes $f$. Let $P$ be based on a hypergraph~$H$, and let $\OMQT{H}$ be the OMQ with a linear CQ and an ontology of depth~2 constructed in Section~\ref{sec:5.3}.  By Theorem~\ref{tree-hg-to-query}~(\emph{ii}), $f$ is a subfunction of $f^\vartriangle_{\OMQT{H}}$.
By our assumption, however,  $\OMQT{H}$ has a polynomial-size FO-rewriting, and so, by Theorem~\ref{rew2prim}~(\emph{i}), it is computed by a polynomial-size Boolean formula. Since $f$ is $\NL/\poly$-complete under $\NC^1$-reductions~\cite{Razborov91}, we  obtain $\NL/\poly \subseteq \NC^1$.
The implication $(3) \Rightarrow (2)$ follows from Theorems~\ref{thm:linear_hgp1} and~\ref{TW2rew}~(\emph{i}), and $(2) \Rightarrow (1)$ is trivial.
\end{proof}


\subsection{Rewritings for OMQs with PFSP and bounded treewidth}\label{sec:7.4}

Since OMQs with the polynomial fundamental set property (PFSP, see Section~\ref{sec:TW}) and CQs of bounded treewidth can be polynomially translated into monotone THGPs and $\mTHGP = \mLOGCFL/\poly \subseteq \mP/\poly$, we obtain:
\begin{theorem}\label{btw-ndl}
For any fixed $t >0$, all OMQs with the PFSP and CQs of  treewidth at most $t$ have polynomial-size NDL-rewritings.
\end{theorem}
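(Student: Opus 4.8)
The plan is to chain together three results already established in the paper, following exactly the pattern suggested in the theorem's preamble. First I would invoke Theorem~\ref{DL2THP}: for any OMQ $\omq = (\Tmc,\q)$ with a fundamental set $\Omega_{\omq}$ and $\q$ of treewidth at most $t$, the generalised monotone THGP $P_{\omq}$ computes the modified tree-witness hypergraph function $\homfn$ and has size polynomial in $|\q|$ and $|\Omega_{\omq}|^t$. Since we are assuming the PFSP, the fundamental set $\Omega_{\omq}$ has size bounded by a polynomial $p(|\omq|)$, and since $t$ is a fixed constant, the quantity $|\Omega_{\omq}|^t$ is polynomial in $|\omq|$. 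Hence $P_{\omq}$ is of polynomial size. I would then apply Proposition~\ref{hyper:thgp}~(\emph{ii}) to replace the \emph{generalised} THGP by an ordinary (monotone) THGP $P'_{\omq}$ computing the same function $\homfn$, with $|P'_{\omq}| \le n \cdot |P_{\omq}|$ (where $n$ is the number of variables, polynomial in $|\omq|$), so polynomiality is preserved.

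Next I would feed this into the circuit-complexity characterisation. By Theorem~\ref{thm:thp_vs_sac}, $\mTHGP = \mLOGCFL/\poly$, so the function $\homfn$ computed by the polynomial-size monotone THGP $P'_{\omq}$ lies in $\mLOGCFL/\poly$. The key inclusion to use now is $\mLOGCFL/\poly \subseteq \mP/\poly$ from the chain~\eqref{eq:inclusions_monotone}, which guarantees a polynomial-size \emph{monotone Boolean circuit} $\Cir$ computing $\homfn$. Finally, I would apply part~(\emph{ii}) of Theorem~\ref{TW2rew} as extended to $\homfn$ by Theorem~\ref{Hom2rew}~(\emph{ii}): a polynomial-size monotone circuit for $\homfn$ yields an NDL-rewriting of $\omq$ of size $O(|\Cir| \cdot |\omq|)$, which is polynomial. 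Moreover, since each of Theorems~\ref{DL2THP}, \ref{thm:thp_vs_sac}, and~\ref{TW2rew} provides explicit constructions, the rewriting can be produced by a polynomial-time algorithm.

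I do not expect any genuine obstacle here, since this theorem is essentially a corollary bundling the machinery developed in Sections~\ref{sec:OMQs&hypergraphs} and~\ref{sec:circuit_complexity}. The one point requiring slight care is bookkeeping on the size bounds: I must verify that the translation from generalised to ordinary THGPs in Proposition~\ref{hyper:thgp}~(\emph{ii}) keeps everything polynomial (the factor $n$ is harmless), and that the conversion $\mLOGCFL/\poly \subseteq \mP/\poly$ is genuinely available in the monotone setting (it is, being one of the non-strict inclusions in~\eqref{eq:inclusions_monotone}). The main conceptual subtlety worth stating explicitly is why we work with $\homfn$ rather than $\twfn$: for bounded-treewidth CQs the number of tree witnesses may be exponential, so $\twfn$ has exponentially many variables and the reduction via $\twfn$ would be useless; the modified function $\homfn$ has only polynomially many variables (as noted after Theorem~\ref{Hom2rew}), which is precisely what makes Theorem~\ref{DL2THP} applicable and the whole argument go through.

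\begin{proof}
Let $\omq = (\Tmc,\q)$ be an OMQ with the PFSP and $\q$ of treewidth at most $t$. By the PFSP, $\omq$ has a fundamental set $\Omega_{\omq}$ with $|\Omega_{\omq}| \le p(|\omq|)$ for some fixed polynomial $p$. Since $t$ is a fixed constant, $|\Omega_{\omq}|^t$ is polynomial in $|\omq|$. By Theorem~\ref{DL2THP}, the generalised monotone THGP $P_{\omq}$ computes $\homfn$ and has size polynomial in $|\q|$ and $|\Omega_{\omq}|^t$, hence polynomial in $|\omq|$. By Proposition~\ref{hyper:thgp}~(\emph{ii}), there is an ordinary monotone THGP $P'_{\omq}$ computing the same function $\homfn$ with $|P'_{\omq}| \le n \cdot |P_{\omq}|$, where $n$ is polynomial in $|\omq|$; thus $P'_{\omq}$ is of polynomial size.

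By Theorem~\ref{thm:thp_vs_sac}, $\mTHGP = \mLOGCFL/\poly$, so $\homfn \in \mLOGCFL/\poly$. By the chain of inclusions~\eqref{eq:inclusions_monotone}, $\mLOGCFL/\poly \subseteq \mP/\poly$, so there is a polynomial-size monotone Boolean circuit $\Cir$ computing $\homfn$. Finally, by Theorem~\ref{Hom2rew}~(\emph{ii}), Theorem~\ref{TW2rew}~(\emph{ii}) continues to hold for $\homfn$, and hence $\omq$ has an NDL-rewriting of size $O(|\Cir| \cdot |\omq|)$, which is polynomial in $|\omq|$. Since the constructions in Theorems~\ref{DL2THP},~\ref{thm:thp_vs_sac} and~\ref{TW2rew} are all effective, the rewriting is computable in polynomial time.
\end{proof}
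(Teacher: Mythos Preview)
Your proof is correct and follows essentially the same route as the paper's own proof: Theorem~\ref{DL2THP} $\to$ Theorem~\ref{thm:thp_vs_sac} $\to$ $\mLOGCFL/\poly \subseteq \mP/\poly$ $\to$ Theorem~\ref{Hom2rew}~(\emph{ii}). You are actually a bit more careful than the paper in explicitly invoking Proposition~\ref{hyper:thgp}~(\emph{ii}) to pass from the \emph{generalised} THGP produced by Theorem~\ref{DL2THP} to an ordinary monotone THGP before applying Theorem~\ref{thm:thp_vs_sac}; the paper glosses over this conversion.
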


Using Theorem~\ref{role-inc} and the fact that OMQs with ontologies of bounded depth  enjoy the PFSP, we obtain:
\begin{corollary}
The following OMQs have polynomial-size \NDL-rewritings\textup{:}
\begin{nitemize}
\item[--] OMQs with ontologies of bounded depth and CQs of bounded treewidth\textup{;}

\item[--] OMQs with ontologies not containing axioms of the form $\varrho(x,y) \to \varrho'(x,y)$ \textup{(}and~\eqref{eq:sugar}\textup{)} and CQs of bounded treewidth.
\end{nitemize}
\end{corollary}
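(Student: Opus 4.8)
The plan is to derive both items directly from Theorem~\ref{btw-ndl}, which already guarantees polynomial-size NDL-rewritings for every OMQ class that enjoys the PFSP and whose CQs are of bounded treewidth. Since both classes in the corollary restrict CQs to bounded treewidth by hypothesis, the only thing left to establish is that each class enjoys the PFSP; the conclusion then follows by a single application of Theorem~\ref{btw-ndl} with the appropriate fixed $t$.

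For the first item, I would verify that OMQs with ontologies of bounded depth enjoy the PFSP (a fact already flagged as ``clear'' right after the definition of the PFSP). Fix the depth bound $d$. By the definition of depth, for every such ontology $\T$ and every $\tau$ the canonical model $\C_\T^{\smash{\tau(a)}}$ contains no word $a\varrho_1\dots\varrho_n$ with $n > d$; since each chain of labelled nulls in $\C_{\T,\A}$ starting at a constant is generated locally in exactly this way, every labelled null of $\C_{\T,\A}$ has the form $a\varrho_1\dots\varrho_n$ with $n \le d$. Hence the set $\Omega_{\omq}$ of all role-sequences $\varrho_1\dots\varrho_n$ of length $n \le d$ is fundamental for $\omq$: the sub-model of $\C_{\T,\A}$ with domain $\{aw \mid a \in \ind(\A),\ w \in \Omega_{\omq}\}$ is simply all of $\C_{\T,\A}$. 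As the number of distinct roles in $\T$ is at most $2|\T|$, we get $|\Omega_{\omq}| \le (2|\T|)^{d+1} = O(|\omq|^d)$, which is polynomial in $|\omq|$ for fixed $d$. This establishes the PFSP for this class, and Theorem~\ref{btw-ndl} then yields the required polynomial-size NDL-rewritings.

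For the second item the PFSP is supplied directly by Theorem~\ref{role-inc}: the class of OMQs whose ontologies contain no role-inclusion axioms $\varrho(x,y) \to \varrho'(x,y)$ (and no syntactic sugar~\eqref{eq:sugar}) enjoys the PFSP. Combined with the bounded-treewidth hypothesis, Theorem~\ref{btw-ndl} again delivers polynomial-size NDL-rewritings. I expect no genuine obstacle here, since the nontrivial content has already been absorbed into Theorems~\ref{btw-ndl} and~\ref{role-inc}, and the corollary is essentially a packaging step. The only point requiring a little care is the cardinality bound on $\Omega_{\omq}$ in the first item: one must count role-\emph{sequences} of length at most $d$ and note that this quantity is polynomial in $|\omq|$ for fixed $d$ even though it grows exponentially in $d$ itself, so the constant-depth restriction is essential.
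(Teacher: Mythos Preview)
Your proposal is correct and follows exactly the approach indicated in the paper: derive both items from Theorem~\ref{btw-ndl} by verifying the PFSP, invoking Theorem~\ref{role-inc} for the second item and the observation (already flagged as clear after the PFSP definition) that bounded-depth ontologies have polynomial fundamental sets for the first. Your explicit cardinality estimate for $\Omega_{\omq}$ in the bounded-depth case is a welcome elaboration of what the paper leaves implicit.
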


Whether all OMQs without axioms of the form $\varrho(x,y) \to \varrho'(x,y)$ have polynomial-size rewritings remains open.\!\footnote{A positive answer to this question given by Kikot et al.~\cite{DBLP:conf/dlog/KikotKZ11} is based on a flawed proof.}

\begin{theorem}\label{btw-fo}
The following are equivalent\textup{:}
\begin{enumerate}[(1)]
\item there exist polynomial-size FO-rewritings for all tree-shaped OMQs with ontologies of depth 2\textup{;}

\item there exist polynomial-size FO-rewritings for all OMQs with the PFSP and CQs of treewidth at most $t$ \textup{(}for any fixed $t$\textup{)}\textup{;} 

\item $\LOGCFL/\poly \subseteq \NC^1$.
\end{enumerate}
\end{theorem}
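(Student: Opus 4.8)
The plan is to prove the three statements equivalent by establishing the cycle $(1) \Rightarrow (3) \Rightarrow (2) \Rightarrow (1)$, following the same strategy as in the proofs of Theorems~\ref{thm:nc1cond} and~\ref{nbps-conditional}, but with $\LOGCFL/\poly$ in place of $\NL/\poly$ and the machinery of Section~\ref{sec:5.3} together with Theorem~\ref{thm:thp_vs_sac} in place of the degree-2 and linear constructions.

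For $(1) \Rightarrow (3)$, I would show directly that every $f \in \LOGCFL/\poly$ lies in $\NC^1$. Fix such an $f$ and a $\SAC^1$-circuit family computing it; since in $\SAC^1$ negations occur only at the inputs, replacing each negated input $\neg x_i$ by a fresh variable $y_i$ yields a monotone $\SAC^1$-circuit for a monotone function $\tilde f$ of $2n$ variables with $f(\avec{x}) = \tilde f(\avec{x}, \neg\avec{x})$. Thus $\tilde f \in \mLOGCFL/\poly = \mTHGP$ by Theorem~\ref{thm:thp_vs_sac}, so there is a polynomial-size monotone THGP computing $\tilde f$, based on some tree hypergraph $H$. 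By Theorem~\ref{tree-hg-to-query}~(\emph{ii}), $\tilde f$ is a subfunction of $f^\vartriangle_{\OMQT{H}}$, where $\OMQT{H}$ is the tree-shaped OMQ with a depth-2 ontology from Section~\ref{sec:5.3}, of size polynomial in $|H|$. Assuming (1), $\OMQT{H}$ has a polynomial-size FO-rewriting, so by Theorem~\ref{rew2prim}~(\emph{i}) the function $f^\vartriangle_{\OMQT{H}}$---and hence its subfunction $\tilde f$---is computed by a polynomial-size Boolean formula. Substituting $y_i = \neg x_i$ then gives a polynomial-size formula for $f$, so $f \in \NC^1$; as $f$ was arbitrary, $\LOGCFL/\poly \subseteq \NC^1$.

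For $(3) \Rightarrow (2)$, let $\omq$ be an OMQ with the PFSP and CQs of treewidth at most $t$. By Theorem~\ref{DL2THP} the function $\homfn$ is computed by a generalised monotone THGP of size polynomial in $|\q|$ and $|\Omega_{\omq}|^t$, which is polynomial since the PFSP bounds $|\Omega_{\omq}|$ by a polynomial and $t$ is fixed; removing the generalised labels via Proposition~\ref{hyper:thgp}~(\emph{ii}) keeps the size polynomial. Hence $\homfn \in \mTHGP = \mLOGCFL/\poly \subseteq \LOGCFL/\poly$ by Theorem~\ref{thm:thp_vs_sac}, and the assumption $\LOGCFL/\poly \subseteq \NC^1$ yields a polynomial-size Boolean formula for $\homfn$. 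By Theorem~\ref{Hom2rew}~(\emph{ii}) together with Theorem~\ref{TW2rew}~(\emph{i}), this produces a polynomial-size FO-rewriting of $\omq$. Finally, $(2) \Rightarrow (1)$ is immediate: a tree-shaped CQ has treewidth $1$ and an ontology of depth $2$ has bounded depth and therefore the PFSP, so tree-shaped OMQs with depth-2 ontologies form a subclass of the OMQs covered by (2) with $t = 1$.

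The main obstacle I anticipate is the $(1) \Rightarrow (3)$ step, specifically the passage from an arbitrary (non-monotone) function in $\LOGCFL/\poly$ to a monotone function in $\mLOGCFL/\poly$ to which the embedding Theorem~\ref{tree-hg-to-query}~(\emph{ii}) applies. The variable-doubling argument above hinges on the normal form of $\SAC^1$-circuits (negations only at inputs); I would need to check that this doubling indeed stays within $\mLOGCFL/\poly$ and that passing to a subfunction and re-substituting $y_i = \neg x_i$ does not blow up the formula size. An alternative, closer in spirit to Theorems~\ref{thm:nc1cond} and~\ref{nbps-conditional}, would be to fix a single monotone function that is $\LOGCFL/\poly$-complete under $\NC^1$-reductions and argue via completeness; the doubling route seems preferable here because it avoids relying on a concrete monotone $\LOGCFL/\poly$-complete problem.
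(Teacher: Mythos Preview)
Your proof is correct, and the cycle $(2)\Rightarrow(1)$ and $(3)\Rightarrow(2)$ match the paper's argument essentially verbatim (the paper simply cites Theorems~\ref{DL2THP}, \ref{thm:thp_vs_sac}, \ref{Hom2rew} for the latter and calls the former trivial). The one genuine difference is in $(1)\Rightarrow(3)$. The paper follows the template of Theorems~\ref{thm:nc1cond} and~\ref{nbps-conditional}: pick a \emph{single} monotone function that is $\LOGCFL/\poly$-complete under $\NC^1$-reductions, embed it via Theorems~\ref{thm:thp_vs_sac} and~\ref{tree-hg-to-query}~(\emph{ii}) into some $\primfnP$, and invoke completeness at the end. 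You instead treat an \emph{arbitrary} $f\in\LOGCFL/\poly$ directly, using the normal form of $\SAC^1$ (negations only at inputs) to double the variables and pass to a monotone $\tilde f\in\mLOGCFL/\poly$, then undo the doubling at the formula level. Your route is more self-contained---it sidesteps the need to name and cite a monotone $\LOGCFL/\poly$-complete problem, which the paper does not actually do explicitly---while the paper's route keeps all four ``iff'' theorems in a uniform mould. Both arguments are sound; your anticipated obstacle (that the doubling stays in $\mLOGCFL/\poly$ and that substitution and re-negation do not blow up formula size) is not a real issue, for exactly the reasons you give.
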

\begin{proof} The implication $(2) \Rightarrow (1)$ is trivial, and $(1) \Rightarrow (3)$ is proved similarly to the corresponding case of Theorem~\ref{nbps-conditional} using Theorems~\ref{thm:thp_vs_sac},~\ref{tree-hg-to-query} and~\ref{rew2prim}.
%
%
%
$(3) \Rightarrow (2)$ follows from Theorems~\ref{DL2THP},~\ref{thm:thp_vs_sac} and~\ref{Hom2rew}.
\end{proof}


\subsection{Rewritings for OMQs with ontologies of depth 1 and CQs of  bounded treewidth}\label{sec:7.5}

We show finally that polynomial PE-rewritings are guaranteed to exist for OMQs with ontologies of depth 1 and CQs of bounded treewidth. By Theorem~\ref{thm:nc1-thgp3}, it suffices to show that $\homfn$ is computable by a THGP of bounded degree. However, since  tree witnesses can be initiated by multiple roles, the THGPs constructed in Section \ref{sec:boundedtw} do not enjoy this property and require a minor modification. 

Let $\omq = (\T,\q)$ be an OMQ with $\T$ of depth 1.  For every tree witness $\t = (\tr,\ti)$, we take a fresh binary predicate $P_\t$ 
(which cannot occur in any data instance) 
and extend~$\T$ with the following axioms:
\begin{align*}
\tau(x) \to \exists y\, P_\t(x,y), & \qquad  \text{ if } \tau \text{ generates } \t,\\
P_\t(x,y) \to \varrho(x,y), & \qquad \text{ if } \varrho(u,v) \in \q_\t, u \in \tr \text{ and } v \in \ti.  
\end{align*}
Denote the resulting ontology by $\T'$ and set $\omq' = (\T',\q)$. 
By Theorem~\ref{depth1}, the number of tree witnesses for $\omq$ does not exceed $|\q|$, and so 
the size of $\omq'$ is polynomial in $|\omq|$. It is easy to see that any rewriting of $\omq'$ (with $P_\t$ replaced by $\bot$) is also a rewriting for~$\omq$.  
Thus, it suffices to consider OMQs of the form $\omq'$, which will be called \emph{explicit}.

Given an explicit OMQ $\omq = (\T,\q)$, we construct a THGP $P'_{\omq}$ in the same way as $P_{\omq}$ in Section~\ref{sec:boundedtw} except that in the definition of $E^k_i$, instead of considering all types $\avec{w}_k$ of $N_i$, we only use $\avec{w}_k=(\avec{w}[1],\dots,\avec{w}[m])$ in which $\avec{w}[j]$ is either $\varepsilon$ or $P_\t$ for the unique tree witness $\t = (\tr,\ti)$ with $\ti = \{ \lambda_j(N_i)\}$. (Since $\T$ is of depth 1, every tree witness~$\t$ has $\ti=\{z\}$, for some variable $z$, and $\ti \neq \ti'$ whenever $\t \neq \t'$.)
This modification guarantees that, for every $i$, the number of distinct $E^k_i$ is bounded by $2^m$. It follows that the hypergraph of $\smash{P_{\omq}'}$ is of bounded degree, $2^m + 2^{2m}$ to be more precise. 
To establish the correctness of the modified construction, we can prove an analogue of Theorem~\ref{DL2THP}, in which the original THGP $P_{\omq}$ is replaced by $P_{\omq}'$, and the function $\homfn$ is replaced by 
\begin{equation*}\label{hyper-func''}
\homfnprime \ \ = \ \ \hspace*{-1em}\bigvee_{\substack{\Theta \subseteq \twset
\\ \text{ independent}}}\hspace*{-2mm}
\Big(\bigwedge_{\atom \in \q \setminus \q_\Theta} \hspace*{-1em} p_\atom \hspace*{1em}
 \wedge  \hspace*{0.5em}\bigwedge_{\t \in \Theta} \big(\bigwedge_{R(z,z')\in \q_\t} p_{z=z'} \hspace*{0.5em} \land  \hspace*{0.5em} \bigwedge_{z \in \tr\cup\ti} p_{\exists y P_\t(z,y)}\big)\Big) 
\end{equation*}
(which is obtained from $\homfn$ by always choosing $P_\t$ as the predicate that initiates $\t$). It is easy to see that Theorem~\ref{TW2rew} holds also for $\homfnprime$ (with explicit $\omq$), which gives us:
\begin{theorem} \label{depth-one-btw}
For any fixed $t>0$, all OMQs with ontologies of depth 1 and CQs of treewidth at most $t$ have polynomial-size PE-rewritings.
\end{theorem}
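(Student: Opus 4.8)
The plan is to route the construction of a short PE-rewriting through the circuit-complexity correspondence developed earlier, exactly as for the other polynomial upper bounds. By Theorem~\ref{TW2rew}~(\emph{i}) (in the version that holds for $\homfnprime$, noted just above), a polynomial-size \emph{monotone} Boolean formula computing the modified tree-witness hypergraph function yields a polynomial-size PE-rewriting; and by the monotone half of Theorem~\ref{thm:nc1-thgp3}, $\mNC^1 = \mTHGP^d$ for every $d \ge 3$, any monotone THGP of polynomial size and \emph{constant} degree computes a function in $\mNC^1$, i.e.\ a function with a polynomial-size monotone formula. So the entire task reduces to producing, for each $\omq = (\T,\q)$ with $\T$ of depth $1$ and $\q$ of treewidth at most $t$, a polynomial-size monotone THGP of constant degree computing $\homfnprime$.

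First I would carry out the reduction to \emph{explicit} OMQs described above: replace $\omq = (\T,\q)$ by $\omq' = (\T',\q)$, where $\T'$ adds, for each of the at most $|\q|$ tree witnesses $\t$ (Theorem~\ref{depth1}), a dedicated predicate $P_\t$ together with the axioms $\tau(x) \to \exists y\, P_\t(x,y)$ and $P_\t(x,y) \to \varrho(x,y)$. Since $|\twset| \le |\q|$, the ontology $\T'$ and hence $\omq'$ has size polynomial in $|\omq|$, and any PE-rewriting of $\omq'$ becomes one for $\omq$ after substituting $\bot$ for every $P_\t$-atom. The payoff of working with $\omq'$ is that each tree witness $\t=(\tr,\ti)$, which for a depth-$1$ ontology has $\ti=\{z\}$ for a single variable $z$, is now $P_\t$-initiated for the \emph{unique} predicate $P_\t$, and distinct tree witnesses carry distinct internal variables and hence distinct initiating predicates.

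Next I would build $P'_\omq$ by copying the construction of $P_\omq$ from Section~\ref{sec:boundedtw} verbatim, but restricting the bag types used in the hyperedges $E^k_i$: for a bag of width $\twidth-1 = t$, each coordinate of an admissible type is forced to be either $\emptyword$ or $P_\t$ for the unique tree witness whose internal variable sits at that coordinate. This caps the number of admissible types per bag at $2^{\twidth}$ and therefore the number of hyperedges through any vertex of $P'_\omq$ at $2^{\twidth}+2^{2\twidth}$; as $\twidth \le t+1$ is a constant, $P'_\omq$ has constant degree (indeed $\ge 3$, so Theorem~\ref{thm:nc1-thgp3} applies directly) while remaining of polynomial size, the polynomiality being underwritten by the PFSP of bounded-depth ontologies. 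Applying Proposition~\ref{hyper:thgp}~(\emph{ii}) to clear the conjunctive vertex labels (a factor $\twidth$, preserving degree) turns this into an ordinary monotone THGP of constant degree.

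The hard part will be the correctness step, i.e.\ the analogue of Theorem~\ref{DL2THP} asserting that $P'_\omq$ computes $\homfnprime$ rather than merely $\homfn$. Here one must verify that restricting to $\{\emptyword,P_\t\}$-valued types loses no homomorphisms: every $h\colon \q \to \C_{\T',\A}$ that sends a variable outside $\ind(\A)$ must do so along precisely the distinguished predicate $P_\t$ of its tree witness, so that an independent cover of the zeros in $P'_\omq$ corresponds exactly to an independent family of tree witnesses together with the equalities $p_{z=z'}$ and atoms $p_{\exists y\, P_\t(z,y)}$ that appear in $\homfnprime$. Once this verification is in place, chaining $\mNC^1 = \mTHGP^d$ with the $\homfnprime$-version of Theorem~\ref{TW2rew}~(\emph{i}) delivers the claimed polynomial-size PE-rewriting, and since each step is constructive it also gives a polynomial-time algorithm producing it.
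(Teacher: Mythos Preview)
Your proposal is correct and follows essentially the same approach as the paper: make the OMQ explicit via the $P_\t$ predicates, build the restricted-type THGP $P'_\omq$ of bounded degree computing $\homfnprime$, and then chain $\mNC^1 = \mTHGP^d$ (Theorem~\ref{thm:nc1-thgp3}) with the $\homfnprime$-version of Theorem~\ref{TW2rew}~(\emph{i}). The only minor slip is the ``factor $\twidth$'' when de-generalising the THGP---Proposition~\ref{hyper:thgp}~(\emph{ii}) gives a factor equal to the number of propositional variables (polynomial in $|\q|$, not $\twidth$)---but this does not affect the argument.
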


For tree-shaped OMQs, we obtain an even better result. Indeed, by Theorem~\ref{prop:tree-shaped}, $\HG{\omq}$ is a tree hypergraph; by Theorem~\ref{depth1}, it is of degree at most~2, and so, by Theorem~\ref{thm:pi3-thgp2}, $\twfn$ is computed by a polynomial-size $\Pitr$-circuit (which is monotone by definition). Thus, Theorem~\ref{TW2rew}~(\emph{i}) gives us the following ($\Pitr$ turns into $\mathsf{\Pi}_4$ because of the disjunction in the formula $\tw_\t$):

\begin{theorem} \label{depth-one-tree}
All tree-shaped OMQs with ontologies of depth 1 have polynomial-size $\mathsf{\Pi}_4$-rewritings. 
\end{theorem}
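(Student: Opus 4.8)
The plan is to chain together the structural facts about tree-witness hypergraphs that have already been proved, so that the desired rewriting drops out of Theorem~\ref{TW2rew}~(\emph{i}). Let $\omq = (\T,\q)$ be a tree-shaped OMQ with $\T$ of depth~$1$. First I would note that $\HG{\omq}$ is constrained in two complementary ways: by Theorem~\ref{prop:tree-shaped} it is isomorphic to a tree hypergraph, and by Theorem~\ref{depth1} it is of degree at most~$2$ with $|\HG{\omq}| \le 2|\q|$. Hence $\HG{\omq}$ is a tree hypergraph of degree at most~$2$ whose size is linear in $|\q|$.

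Next I would pass from this hypergraph to its function. Applying Proposition~\ref{hyper:thgp}~(\emph{i}) with $d = 2$ yields a monotone THGP of size $O(|\HG{\omq}|)$ (thus polynomial in $|\q|$) whose underlying hypergraph has degree at most $\max(2,2) = 2$ and which computes $f_{\HG{\omq}} = \twfn$; that is, $\twfn \in \mTHGP^2$. By Theorem~\ref{thm:pi3-thgp2} we have $\mTHGP^2 = \mPitr$, so $\twfn$ is computed by a polynomial-size monotone $\Pitr$-circuit. Moreover, the circuit constructed in the $\THGP^2 \subseteq \Pitr$ direction of that proof is tree-structured (each bottom $\AND$-gate feeds a unique $\OR$-gate, and each $\OR$-gate feeds the single output $\AND$-gate), so it is in fact a genuine monotone $\Pitr$-formula $\Phi$ of shape ${\land}{\lor}{\land}$ over the variables $p_{S(\avec{z})}$ and $p_{\t}$.

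Finally I would feed $\Phi$ into Theorem~\ref{TW2rew}~(\emph{i}): replacing each $p_{S(\avec{z})}$ by the atom $S(\avec{z})$, each $p_{\t}$ by $\tw_\t$, and adding the prefix $\exists\avec{y}$ produces a PE-rewriting $\Phi^*$ of $\omq$ of size $O(|\Phi|\cdot|\omq|)$, which is polynomial. It then remains to read off the matrix shape. An atom $S(\avec{z})$ is a degenerate disjunction, whereas the quantifier-free part of $\tw_\t = \exists z\,\bigl(\bigwedge_{x \in \tr}(x=z) \ \land \ \bigvee_{\t \text{ generated by } \tau}\tau(z)\bigr)$ is a ${\land}{\lor}$ of (possibly existential) atoms. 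Substituting these into the innermost $\land$ of $\Phi$ turns that conjunction into a ${\land}{\lor}$; the middle $\lor$ and the outer $\land$ of $\Phi$ then yield the matrix ${\land}{\lor}{\land}{\lor}$. Prenexing the existentials $\exists z$ introduced by the various occurrences of $\tw_\t$ (after renaming them apart) to the front of the $\exists\avec{y}$ prefix leaves this matrix intact, since positive existential quantifiers may always be pulled out without increasing the propositional alternation. Hence $\Phi^*$ is a $\mathsf{\Pi}_4$-rewriting, as required.

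The only genuinely verificatory step — and the sole place where something could go wrong — is the last one: confirming that substitution plus prenexing yields exactly ${\land}{\lor}{\land}{\lor}$ rather than a deeper alternation. The decisive observations are that $\Phi$ is a true formula (so no shared gates could reintroduce extra alternation under unfolding), that each $\tw_\t$ contributes precisely one additional disjunction level beneath the bottom conjunction of $\Phi$, and that the existential quantifiers sitting inside the $\tw_\t$ are harmless for the $\mathsf{\Pi}_4$ count. Everything else is a direct invocation of the cited results, so the argument is short once these bookkeeping points are checked.
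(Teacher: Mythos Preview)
Your proposal is correct and follows essentially the same route as the paper: combine Theorems~\ref{prop:tree-shaped} and~\ref{depth1} to see that $\HG{\omq}$ is a tree hypergraph of degree at most~$2$, apply Proposition~\ref{hyper:thgp}~(\emph{i}) and Theorem~\ref{thm:pi3-thgp2} to get a polynomial-size monotone $\Pitr$-formula for $\twfn$, and then invoke Theorem~\ref{TW2rew}~(\emph{i}), noting that the disjunction inside $\tw_\t$ adds one alternation level to yield $\mathsf{\Pi}_4$. Your explicit observation that the $\THGP^2 \subseteq \Pitr$ construction already produces a tree-structured circuit is a nice sharpening of the paper's ``simple unravelling argument''.
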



\section{Combined Complexity of OMQ answering}\label{sec:complexity}

The size of OMQ rewritings we investigated so far is crucial for  classical OBDA, which relies upon a reduction to standard database query evaluation (under the assumption that it is efficient in real-world applications). However, this way of answering OMQs may not be optimal, and so understanding the size of OMQ rewritings does not shed much light on how hard OMQ answering actually is. For example, 
answering the OMQs from the proof of Theorem~\ref{Depth1:PE} via PE-rewriting requires superpolynomial time, while the graph reachability problem encoded by those OMQs is \NL-complete. On the other hand, the existence of a short rewriting does not obviously imply tractability.

In this section, we analyse the \emph{combined} complexity of answering OMQs classified according to the depth of ontologies and the shape of CQs. More precisely, our concern is the following decision problem: given an OMQ $\omq(\avec{x}) = (\T,\q(\avec{x}))$, a data instance $\A$ and a tuple $\avec{a}$ from $\ind(\A)$ (of the same length as $\avec{x}$), decide whether $\T,\A \models \q(\avec{a})$. Recall from Section~\ref{sec:TW} that $\T,\A \models \q(\avec{a})$ iff $\canmod \models \q(\avec{a})$ iff there exists a homomorphism from $\q(\avec{a})$ to $\canmod$.

The combined complexity of CQ evaluation has been thoroughly investigated in relational database theory. In general, evaluating CQs  is \NP-complete~\cite{Chandra&Merlin77}, but becomes tractable for tree-shaped CQs~\cite{DBLP:conf/vldb/Yannakakis81} and bounded treewidth CQs~\cite{DBLP:journals/tcs/ChekuriR00,DBLP:conf/stoc/GroheSS01}---\LOGCFL-complete, to be more precise~\cite{DBLP:journals/jacm/GottlobLS01}.

The emerging combined complexity landscape for OMQ answering is summarised in Fig.~\ref{pic:results}~(b) in Section~\ref{sec:results}. The \NP{} and \LOGCFL{} lower bounds for arbitrary OMQs and tree-shaped OMQs with ontologies of bounded depth are inherited from the corresponding CQ evaluation problems. The \NP{} upper bound for all OMQs was shown by~\cite{CDLLR07} and \cite{ACKZ09}, while the matching lower bound for tree-shaped OMQs by~\cite{DBLP:conf/dlog/KikotKZ11} and \cite{DBLP:journals/ai/GottlobKKPSZ14}. By reduction of the reachability problem for directed graphs, one can easily show that evaluation of tree-shaped CQs with a bounded number of leaves (as well as answering OMQs with unary predicates only) is \NL-hard.
We now establish the remaining results.


\subsection{OMQs with bounded-depth ontologies}

We begin by showing that the \LOGCFL{} upper bound for CQs of bounded treewidth~\cite{DBLP:journals/jacm/GottlobLS01} is preserved even in the presence of ontologies of bounded depth.

\begin{theorem}\label{logcfl-btw}
For any fixed $d\geq 0$ and $t>0$, answering OMQs with ontologies of depth at most $d$ and CQs of treewidth at most $t$ is \LOGCFL-complete. 
\end{theorem}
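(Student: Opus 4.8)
The lower bound is inherited from plain CQ evaluation: if $\T$ has no axioms generating labelled nulls (the trivial case covered by $d\ge 0$), then answering $\omq=(\T,\q)$ coincides with evaluating $\q$ over $\A$, and evaluating CQs of treewidth at most $t$ is already \LOGCFL-hard~\cite{DBLP:journals/jacm/GottlobLS01}. So the work lies in the matching upper bound. The plan is to reduce OMQ answering to a homomorphism problem into the canonical model and then invoke the \LOGCFL{} algorithm for bounded-treewidth CQ evaluation. Recall that $\T,\A\models \q(\avec{a})$ iff $\canmod \models \q(\avec{a})$, i.e.\ iff there is a homomorphism $h\colon \q(\avec{a}) \to \canmod$ with $h(\avec{x})=\avec{a}$.

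First I would observe that, since $\T$ is of depth at most the fixed constant $d$, every labelled null $a\varrho_1\dots\varrho_n$ of $\canmod$ has $n\le d$; equivalently, $\Omega_{\omq}=\{\varrho_1\dots\varrho_n \mid n\le d\}$ is a fundamental set. As the number of such words is bounded by a polynomial in $|\T|$ (with $d$ fixed), the domain $\{aw \mid a\in\ind(\A),\ w\in\Omega_{\omq}\}$ of $\canmod$ has size polynomial in $|\omq|+|\A|$. Thus $\canmod$ is a relational structure of polynomial size, and answering $\omq$ amounts to deciding whether there is a homomorphism from the treewidth-$t$ CQ $\q(\avec{a})$ into this polynomial-size structure. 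Moreover, whether a given ground atom holds in $\canmod$ can be decided in \NL: by the definition of the canonical model, this reduces to reachability questions in graphs derived from $\T$ and $\A$ (e.g.\ whether $\T\models \varrho(x,y)\to P(x,y)$, whether $\T\models \exists y\,\varrho(y,x)\to A(x)$, or whether $\T,\A\models A(u)$ for a constant $u$), all of which are \NL-computable.

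It then remains to evaluate a bounded-treewidth CQ over a polynomial-size target within \LOGCFL. For queries of bounded treewidth this is precisely the \LOGCFL{} membership result of Gottlob, Leone and Scarcello~\cite{DBLP:journals/jacm/GottlobLS01}, whose algorithm is a logarithmic-space, polynomial-time auxiliary pushdown automaton (equivalently, an $\SAC^1$-circuit) that guesses and verifies a traversal of a width-$t$ tree decomposition of $\q$; such a decomposition is computable in logarithmic space for fixed $t$. The only difference in our setting is that the target $\canmod$ is not part of the input but must be generated from $(\T,\A)$. I would handle this by folding the \NL{} atom-membership test into the automaton: whenever the algorithm needs to know whether an atom holds in $\canmod$, it nondeterministically guesses the answer and verifies it by an \NL{} subcomputation, using $\NL=\coNL$ to certify either outcome. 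Since these subcomputations use only logarithmic space and polynomial time and never touch the pushdown, the overall machine still runs in logarithmic space and polynomial time, placing the problem in \LOGCFL.

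The main obstacle is exactly this last step: composing the \NL{} construction of $\canmod$ with the \LOGCFL{} homomorphism check without leaving \LOGCFL. Because \LOGCFL{} is not known to be closed under logspace-Turing reductions, I would avoid appealing to a black-box closure property and instead argue directly at the level of the auxiliary pushdown automaton, relying on $\NL\subseteq\LOGCFL$ and on the fact that each \NL{} atom query can be realised by an $\SAC^1$-type subcircuit replacing the corresponding oracle gate. Verifying that this interleaving keeps the time and space budgets within the \LOGCFL{} bounds, and that the bounded-treewidth decomposition is available within these resources, is the crux of the argument; the remaining bookkeeping (restricting answer variables to $\avec{a}$, and checking that all the auxiliary reachability problems are genuinely in \NL) is routine.
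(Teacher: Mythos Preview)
Your approach is essentially the same as the paper's: reduce to bounded-treewidth CQ evaluation over the polynomial-size canonical model $\canmod$, whose elements can be written in logspace (depth $\le d$ is fixed) and whose atom-membership queries are in \NL. The difference is in how the composition is handled, and here you make your life unnecessarily hard.

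You write that ``\LOGCFL{} is not known to be closed under logspace-Turing reductions'' and therefore propose to inline the \NL{} subcomputations into the NAuxPDA by hand. But \LOGCFL{} \emph{is} closed under logspace-Turing reductions: $\lspace^{\LOGCFL}=\LOGCFL$ is a standard result (see~\cite{DBLP:conf/icalp/GottlobLS99}), and this is exactly what the paper invokes. The paper simply shows that $\canmod$ can be produced by an \llred-transducer (each domain element fits in logspace, and deciding membership of a word in $\Delta^{\canmod}$ or of an atom in $A^{\canmod}$, $P^{\canmod}$ reduces to \OWLQL{} entailment, which is in $\NL\subseteq\LOGCFL$), and then applies the closure property together with the \LOGCFL-completeness of bounded-treewidth CQ evaluation. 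So the ``main obstacle'' you identify is not an obstacle at all; once you use the correct closure fact, the argument is a two-line reduction rather than a hand-rolled interleaving of NAuxPDA and $\SAC^1$ subcircuits.
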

\begin{proof}
Let $\omq(\avec{x}) = (\T,\q(\avec{x}))$ be an OMQ with $\T$ of depth at most~$d$ and $\q$ of treewidth at most~$t$. As $\Tmc$ is of finite depth, $\canmod$ is finite for any $\A$.
As \LOGCFL{} is closed under \llred\ reductions~\cite{DBLP:conf/icalp/GottlobLS99} and evaluation of CQs of bounded treewidth is \LOGCFL-complete, it suffices to show that $\canmod$ can be computed by an \llred-transducer (a deterministic logspace Turing machine with a \LOGCFL{} oracle). 
Clearly, we need only logarithmic space to represent any predicate name or individual constant from $\T$ and $\A$, as well as any word $aw \in \Delta^{\canmod}$ (since $|w| \leq d$ and $d$ is fixed). Finally, as entailment in \OWLQL{} is in \NL~\cite{ACKZ09}, each of the following problems can be decided by making a call to an \NL\ (hence \LOGCFL) oracle:
\begin{nitemize}
\item decide whether $a\varrho_1 \dots \varrho_n \in \Delta^{\canmod}$, for any $n \le d$ and roles $\varrho_i$ from $\T$;

\item decide whether $u \in \Delta^{\canmod}$ belongs to $A^{\canmod}$, for a unary $A$ from $\T$ and $\A$;

\item decide whether $(u_1,u_2) \in \Delta^{\canmod} \times \Delta^{\canmod}$ is in $P^{\canmod}$, for a binary $P$ from $\T$ and $\A$. \qed
\end{nitemize}
\end{proof}

If we restrict the number of leaves in tree-shaped OMQs, then the \LOGCFL{} upper bound can be reduced to \NL:

\begin{theorem}\label{nl-bb}
For any fixed $d\geq 0$ and $\ell\geq 2$, answering OMQs with ontologies of depth at most $d$ and tree-shaped CQs with at most $\ell$ leaves is \NL-complete.
\end{theorem}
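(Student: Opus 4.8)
The lower bound ($\NL$-hardness) already follows from the reduction of directed reachability mentioned before the statement, so the work is the $\NL$ upper bound. The plan is to mimic the proof of Theorem~\ref{logcfl-btw}, replacing the $\LOGCFL$ oracle calls and the bounded-treewidth CQ evaluation by the \emph{$\NL$}-completeness of evaluating tree-shaped CQs with a bounded number of leaves. The key point is that $\NL$ is closed under $\lspace^{\NL}$ reductions (indeed $\NL = \lspace^{\NL}$), so it suffices to present a deterministic logspace transducer, equipped with an $\NL$ oracle, that produces a representation of $\canmod$ on which an $\NL$ algorithm for evaluating a bounded-leaf tree-shaped CQ can then be run.

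First I would observe that, since $\T$ is of depth at most the fixed constant $d$, every element $aw$ of $\Delta^{\canmod}$ is a word $a\varrho_1\dots\varrho_n$ with $n \le d$, and hence can be stored in logarithmic space. As in Theorem~\ref{logcfl-btw}, the three membership tests---whether $a\varrho_1\dots\varrho_n \in \Delta^{\canmod}$, whether $u \in A^{\canmod}$, and whether $(u_1,u_2) \in P^{\canmod}$---each reduce to entailment checks in \OWLQL, which is in $\NL$ by~\cite{ACKZ09}, and so can be answered by oracle calls. Thus the transducer can enumerate and recognise the vertices, unary atoms and binary atoms of $\canmod$ within the required resource bounds. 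This shows $\canmod$ is computable by an $\lspace^{\NL}$-transducer exactly as before; the only genuinely new ingredient is the final evaluation step.

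The heart of the argument is then to verify that answering a tree-shaped CQ $\q(\avec{x})$ with at most $\ell$ leaves over the (explicitly available) finite structure $\canmod$ is in $\NL$. I would argue this by a direct nondeterministic logspace procedure that guesses a homomorphism $h \colon \q(\avec{a}) \to \canmod$ on the fly: root $\q$ at some vertex, process it in a depth-first manner, and at any moment keep in memory only the images of the at most $\ell$ active branch-tips plus a constant amount of bookkeeping. Because the CQ is tree-shaped, verifying the homomorphism amounts to local consistency checks along edges of $G_\q$ (each an oracle call of the binary/unary-membership kind above), so no global information beyond the current frontier need be retained. Since the number of simultaneously active branches never exceeds $\ell$, and each branch image is a word of length $\le d$ storable in $\log$-space, the whole computation runs in logarithmic space with nondeterminism, giving membership in $\NL$. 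This is precisely the bounded-leaf analogue of the classical $\NL$-algorithm for reachability, and it is where the bound on leaves is essential: without it the frontier could grow unboundedly.

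\textbf{Main obstacle.} The delicate point is the memory accounting in the final evaluation step: one must argue carefully that a tree-shaped CQ with $\le \ell$ leaves can be matched against $\canmod$ while storing only $O(\ell)$ partial images at once, so that the whole procedure stays in $\NL$ rather than merely in $\LOGCFL$. The natural way to make this rigorous is to fix a root and a traversal order in which, at each step, only the images of the ancestors lying on the currently explored root-to-leaf path together with the not-yet-explored sibling branch points are remembered; bounding the number of such simultaneously pending branch points by the number of leaves $\ell$ is exactly what confines the working space to $O(\ell \cdot (\log|\A| + d))$. I expect this bookkeeping, rather than the oracle-call machinery (which is inherited verbatim from Theorem~\ref{logcfl-btw}), to be the part needing the most care, and I would spell out the traversal and the invariant on the frontier size in the formal proof.
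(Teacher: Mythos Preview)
Your proposal is correct and essentially the same as the paper's proof. The paper makes the argument concrete by giving an explicit nondeterministic procedure (Algorithm~\ref{algo:tree-entail}, \bbbdalgo) that does precisely what you describe: it roots $\q$, guesses the image of the root, and maintains a set $\frontier$ of pairs $z\mapsto u$ that is processed node by node, guessing images for children and checking compatibility via the \NL\ subroutine \canMap; the two observations you single out---that bounded depth makes each element of $\Delta^{\canmod}$ storable in logarithmic space, and that bounded leaves keeps $|\frontier|\le\ell$---are exactly the ones the paper invokes to conclude the whole computation is in \NL.
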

\begin{proof}
Algorithm~\ref{algo:tree-entail} defines a non-deterministic procedure \bbbdalgo{} for deciding whether a tuple $\avec{a}$ is a certain answer to a tree-shaped OMQ $(\T,\q(\avec{x}))$ over $\A$. The procedure views $\q$ as a directed tree (we pick one of its variables $z_0$ as a root) and constructs a homomorphism from $\q(\avec{x})$ to~$\canmod$ on-the-fly by traversing the tree from root to leaves. 
The set $\frontier$ is initialised with a pair $z_0\mapsto u_0$  representing the choice of where to map $z_0$.
The possible choices for $z_0$ include $\ind(\Amc)$ and $aw \in \Delta^{\canmod}$ such that  \mbox{$|w| \leq 2|\Tmc|+|\q|$}, which are enough to find a homomorphism if it exists~\cite{ACKZ09}. This set of possible choices is denoted by $U$ in Algorithm~\ref{algo:tree-entail}. Note that $U$ occurs only in statements  of the form `\Guess $u \in U$' and need not be materialised. Instead, we assume that the sequence $u$ is guessed element-by-element and the condition $u\in U$ is verified along the sequence of guesses. We use the subroutine call \canMap{$z_0$, $u_0$} to check whether the guessed $u_0$ is compatible with $z_0$.\!\footnote{The operator \Check{} immediately returns $\false$ if the condition is not satisfied.}\ It first ensures that, if $z_0$ is an answer variable of $\q(\avec{x})$, then $u_0$ is the individual constant corresponding to $z_0$ in $\avec{a}$. Next, if $z_0  \in \ind(\A)$, then it verifies that $u_0$ satisfies all atoms in $\q(\avec{x})$ that involve only~$z_0$. If $u_0  \not \in \ind(\A)$, then $u_0$ must take the form $a w \varrho$ and the subroutine checks whether $\T \models \exists y\, \varrho(y,x) \rightarrow A(x)$ (equivalently, $a w \varrho \in A^{\canmod}$) for every $A(z_0) \in \q$ and whether $\T\models P(x,x)$ for every $P(z_0,z_0)\in \q$.  The remainder of the algorithm consists of a while loop, in which we remove $z\mapsto u$ from $\frontier$, and if $z$ is not a leaf node, guess where to map its children. We must then check that the guessed element $u'$ for child $z'$ is compatible with (\emph{i}) the binary atoms linking $z$ to $z'$ and (\emph{ii}) the atoms that involve only $z'$; the latter is done by \canMap{$z'$, $u'$}. If the check succeeds, we add  $z' \mapsto u'$ to $\frontier$, for each child $z'$ of $z$; otherwise, $\false$ is returned. We exit the while loop when $\frontier$ is empty, i.e., when an element of $\canmod$ has been assigned to each variable in~$\q(\avec{x})$.

\begin{algorithm}[t]\SetAlgoVlined
\KwData{a tree-shaped OMQ $(\T,\q(\avec{x}))$, a data instance $\A$ and 
a tuple $\avec{a}$ from $\ind(\A)$}
\KwResult{\true{} if $\T,\Amc\models\q(\avec{a})$ and \false{} otherwise}
\BlankLine
fix a directed tree $T$ compatible with the Gaifman graph of $\q$ and let $z_0$ be its root\; 
let $U = \bigl\{aw \in\Delta^{\canmod} \mid a\in \ind(\A) \text{ and }  |w| \leq 2|\Tmc|+|\q|\bigr\}$\tcc*[r]{not computed}
\Guess{$u_0 \in U$}\tcc*[r]{use the definition of $U$ to check whether the guess is allowed}
\Check{\canMap{$z_0$,$u_0$}}\;
$\frontier \assign \{z_0 \mapsto u_0 \}$\;
\While{$\frontier \ne \emptyset$}{
remove some $z \mapsto u$ from $\frontier$\;
\ForEach{child $z'$ of $z$ in $T$}{%
\Guess{$u' \in U$}\tcc*[r]{use the def. of $U$ to check whether the guess is allowed}
\Check{$(u,u')\in P^{\C_{\T,\A}}$, for all $P(z,z')\in\q$, \KwAnd \canMap{$z'$,$u'$}}\;
$\frontier \assign \frontier \cup \{ z' \mapsto u' \}$
}
}
\Return \true\;
\BlankLine
\func{\canMap{$z$, $u$}}{
\lIf{$z$ is the $i$th answer variable \KwAnd $u\ne a_i$}{\Return \false}
\uIf(\tcc*[f]{the element $u$ is in the tree part of the canonical model}){$u = aw\varrho$}{%
\Check{$\T\models \exists y\,\varrho(y,x)\to A(x)$, for all $A(z)\in \q$, \KwAnd $\T\models P(x,x)$, for all $P(z,z)\in \q$}}
\Else(\tcc*[f]{otherwise, $u\in \ind(\A)$}){%
\Check{$u\in A^{\canmod}$, for all $A(z)\in \q$, \KwAnd $(u,u)\in P^{\canmod}$, for all $P(z,z)\in \q$}
}
\Return \true;
}
\caption{Non-deterministic procedure \bbbdalgo{}
for answering tree-shaped OMQs}\label{algo:tree-entail}
\end{algorithm}

Correctness and termination of the algorithm are straightforward and hold for tree-shaped OMQs with arbitrary ontologies. Membership in \NL{} for bounded-depth ontologies and bounded-leaf queries follows from the fact that the number of leaves of $\q$ does not exceed  $\ell$, in which case the cardinality of $\frontier$ is bounded by $\ell$, and the fact that the depth of $\T$ does not exceed~$d$, in which case every element of $U$ requires only a fixed amount of space to store. So, since each variable $z$ can be stored in logarithmic space, the set $\frontier$ can also be stored in logarithmic space. Finally, it should be clear that the subroutine \canMap{$z$, $u$} can also be implemented in \NL~\cite{ACKZ09}.
\end{proof}

\subsection{OMQs with bounded-leaf CQs}

It remains to settle the complexity of answering OMQs with arbitrary ontologies and bounded-leaf CQs, for which neither the upper bounds from the preceding subsection nor the \NP{} lower bound by~\cite{DBLP:conf/dlog/KikotKZ11} are applicable. 

\begin{theorem}\label{logcfl-c-arb}
For any fixed $\ell\geq 2$, answering OMQs with tree-shaped CQs having at most $\ell$ leaves is \LOGCFL-complete.
\end{theorem}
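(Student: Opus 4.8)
The plan is to establish matching \LOGCFL\ upper and lower bounds. For the upper bound I would reuse the nondeterministic procedure \bbbdalgo{} of Algorithm~\ref{algo:tree-entail}, whose correctness already holds for arbitrary ontologies, but implement it on a logarithmic-space nondeterministic auxiliary pushdown automaton running in polynomial time; by the standard characterisation such machines capture exactly \LOGCFL. The worktape, which must stay logarithmic, stores only the bounded data of the current traversal: the current variable $z$, the frontier $\frontier$ (of size at most $\ell$, since $\q$ has at most $\ell$ leaves), and a constant number of role and individual names. The genuinely unbounded object is a labelled null $aw\in\Delta^{\canmod}$, whose word $w$ can have length up to $2|\T|+|\q|$; this is precisely what the pushdown stack will hold. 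Moving along a single edge of $\q$ changes the image by one step in $\canmod$, namely pushing a role, popping a role, or a reflexive/ABox move, so along a single root-to-leaf path the stack tracks the image with only top-of-stack access, while each local compatibility test is decided by an \NL\ oracle call exactly as in Theorem~\ref{nl-bb}.

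The delicate point, and what I expect to be the main obstacle, is the branching of $\q$: when the traversal backtracks from a subtree to a branch variable $z$, the stack must again represent the image $aw$ of $z$, yet exploring that subtree may have driven the image strictly above $aw$ in the anonymous forest and back, so a naive stack discipline loses the information needed to restore $aw$. I would resolve this using the boundedness of $\ell$: the images of the at most $\ell$ leaves and $\ell-1$ branch variables span a subtree of $\canmod$ with boundedly many leaves, so I would traverse $\q$ in a \emph{deepest-frontier-first} order, always expanding the frontier element whose image word is longest. This confines all stack operations to the top and realises the (constantly many) frontier words as a properly nested family on the single stack; because a homomorphism from a \emph{tree-shaped} CQ is constrained only hereditarily along tree edges, and never by a join forcing two separate variables to coincide, no global consistency needs to be rechecked. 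The number of stack reversals is then bounded by a function of $\ell$ times a polynomial, so the machine runs in polynomial time.

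For the lower bound I would prove \LOGCFL-hardness already for linear CQs ($\ell=2$) by a logarithmic-space reduction from the value problem for semi-unbounded circuits, which is \LOGCFL-complete since $\LOGCFL=\SAC^1$. Crucially, the reduction must output ontologies of \emph{unbounded} depth: by Theorem~\ref{nl-bb}, bounded-depth ontologies with bounded-leaf CQs are only \NL-complete, so the extra power must be injected through the existential depth, which the \OWLQL\ chase unfolds into a deep anonymous tree. Given a circuit $C$ of logarithmic depth and an input $\avec{x}$, I would let the data $\A$ encode $\avec{x}$ and design $\T_C$ so that the anonymous part of $\canmod$ contains a copy of the evaluation (proof) tree of $C$, with fan-in-two $\AND$-gates realised by branching in the chase and unbounded fan-in $\OR$-gates realised by the choice among generating axioms. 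A single linear CQ then performs an Euler-tour-like up-and-down walk of this tree, its matches corresponding exactly to accepting proof trees of $C$. The pushdown used in the upper bound is precisely the device that mirrors this nested walk, which explains why a \emph{linear} query already suffices here, in contrast to the unbounded-leaf acyclic queries underlying the classical \LOGCFL-completeness of plain CQ evaluation.
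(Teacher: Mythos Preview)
Your proposal is correct and follows essentially the same approach as the paper: the upper bound implements the traversal as an NAuxPDA with the anonymous-path word on the pushdown and a deepest-first discipline on the bounded frontier (the paper's \bbarbalgo{} makes this precise via four options, with Option~3 popping and processing \emph{all} deepest tuples simultaneously), and the lower bound reduces from $\SAC^1$ circuit acceptance by building a linear Euler-tour CQ over an ontology whose canonical model is the unravelling of the circuit. The only notable difference is cosmetic: the paper encodes the input $\avec{\alpha}$ in the ontology $\T_{\avec{\alpha}}$ (with $\A=\{G_m(a)\}$ a single atom) rather than in the data, but this does not affect the argument.
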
 
\begin{proof} 
First, we establish the upper bound using a characterisation of the class \LOGCFL{} in 
terms of non-deterministic auxiliary pushdown automata (NAuxPDAs). 
An NAuxPDA~\cite{DBLP:journals/jacm/Cook71} is a non-deterministic Turing machine with an additional work tape constrained
to operate as a pushdown store. \cite{sudborough78} showed that \LOGCFL{} coincides with the class of problems that can be solved by NAuxPDAs running in logarithmic space and polynomial time (note that the space on the pushdown tape is not subject to the logarithmic space bound). Algorithms~\ref{algo:bbqueries} and~\ref{algo:bbqueries:2} give a procedure \bbarbalgo{} for answering OMQs with bounded-leaf CQs that can be implemented by an NAuxPDA.

\begin{algorithm}[t]\SetAlgoVlined%
\KwData{a bounded-leaf OMQ $(\T,\q(\avec{x}))$, a data instance $\A$ and
a tuple $\avec{a}$ from $\ind(\A)$}
\KwResult{\true{} if $\T,\Amc\models\q(\avec{a})$ and \false{} otherwise}
\BlankLine
fix a directed tree $T$ compatible with the Gaifman graph of $\q$ and let $z_0$ be its root\; 
\Guess{$a_0 \in\ind(\A)$}\tcc*[r]{guess the ABox element}
\Guess{$n_0 < 2|\Tmc|+|\q|$}\tcc*[r]{maximum distance from ABox of relevant elements}
\ForEach(\tcc*[f]{guess the initial element in a step-by-step fashion} ){$n$ in $1,\dots,n_0$}{
\Guess{a role $\varrho$ in $\T$ \KwSuchThat \isGenerated{$\varrho$, $a_0$, $\topof(\stack)$}}\;
push $\varrho$ on $\stack$}
\Check{\canMapTail{$z_0$, $a_0$, $\topof(\stack)$}}\;
$\frontier\assign \bigl\{(z_0 \mapsto (a_0, |\stack|), z_i) \mid z_i \text{ is a child of } z_0 \text{ in } \T \bigr\}$\;
\While{$\frontier \ne \emptyset$}{
\Guess{one of the 4 options}\;
\uIf(\tcc*[f]{take a step in $\ind(\A)$}){Option 1}{
remove some $(z \mapsto (a,0), z')$ from $\frontier$\;
\Guess{$a' \in \ind(\A)$}\; 
\Check{$(a,a')\in P^{\C_{\T,\A}}$, for all $P(z,z')\in\q$, \KwAnd \canMapTail{$z'$, $a'$, $\varepsilon$}}\;
$\frontier \assign \frontier \cup \{ (z' \mapsto (a',0),z_i')\mid z_i' \text{ is a child of } z' \text{ in } T \}$
}
\uElseIf(\tcc*[f]{a step `forward' in the tree part}){Option 2 \KwAnd $|\stack| < 2|\Tmc|+|\q|$}{
remove some $(z \mapsto (a,|\stack|), z')$ from $\frontier$\;
\Guess{a role $\varrho$ in $\T$ \KwSuchThat \isGenerated{$\varrho$, $a$, $\topof(\stack)$}}\;
push $\varrho$ on $\stack$\;
\Check{$\T\models \varrho(x,y)\to P(x,y)$, for all $P(z,z')\in \q$, \KwAnd \canMapTail{$z'$, $a$, $\topof(\stack)$}}\;
$\frontier \assign \frontier \cup \{ (z' \mapsto (a,|\stack|),z_i')\mid z_i' \text{ is a child of } z' \text{ in } T \}$
}
\uElseIf(\tcc*[f]{take a step `backward' in the tree part}){Option 3 \KwAnd $|\stack| > 0$}{
let  $\deepest = \{(z \mapsto (a,n), z') \in \frontier \mid n = |\stack| \}$\tcc*[r]{may be empty}
remove all $\deepest$ from $\frontier$\; 
pop $\varrho$ from $\stack$\;
\ForEach{$(z \mapsto (a,n), z') \in \deepest$}{
\Check{$\T\models \varrho(x,y) \to P(x,y)$, for all $P(z',z)\in\q$, \KwAnd \canMapTail{$z'$, $a$, $\topof(\stack)$}}\;
$\frontier \assign \frontier \cup \{ (z' \mapsto (a,|\stack|),z_i')\mid z_i' \text{ is a child of } z' \text{ in } T \}$
}
}
\uElseIf(\tcc*[f]{take a `loop'-step in the tree part of $\canmod$}){Option 4}{ 
remove some $(z \mapsto (a,|\stack|), z')$ from $\frontier$\;
\Check{$\T\models P(x,x)$, for all $P(z,z')\in\q$,  \KwAnd \canMapTail{$z'$, $a$, $\topof(\stack)$}}\;
$\frontier \assign \frontier \cup \{ (z' \mapsto (a,|\stack|),z_i')\mid z_i' \text{ is a child of } z' \text{ in } T \}$
}
\lElse{\Return \false}
}
\Return \true\;
\caption{Non-deterministic procedure \bbarbalgo{} 
for answering bounded-leaf OMQs.}
\label{algo:bbqueries}
\end{algorithm}

\begin{algorithm}[t]\SetAlgoVlined%
\func{\canMapTail{$z$, $a$, $\sigma$}}{
\lIf{$z$ is the $i^{\text{th}}$ answer variable \KwAnd either $a\ne a_i$ or $\sigma\ne\varepsilon$}{\Return \false}
\uIf(\tcc*[f]{an element of the form $a\ldots\sigma$ in the tree part}){$\sigma \ne\varepsilon$}{%
\Check{$\T\models \exists y\,\sigma(y,x)\to A(x)$, for all $A(z)\in \q$, \KwAnd $\T\models P(x,x)$, for all $P(z,z)\in\q$}
}
\Else(\tcc*[f]{otherwise, in $\ind(\A)$}){%
\Check{$a\in A^{\canmod}$, for all $A(z)\in \q$, \KwAnd $(a,a)\in P^{\canmod}$, for all $P(z,z)\in\q$}
}
\Return \true;
}
\BlankLine
\func{\isGenerated{$\varrho$, $a$, $\sigma$}}{
\uIf(\tcc*[f]{an element of the form $a\ldots\sigma$ in the tree part}){$\sigma \ne\varepsilon$}{%
\Check{$\T\models \exists y\,\sigma(y,x)\to \exists y\,\varrho(x,y)$}
}
\Else(\tcc*[f]{otherwise, in $\ind(\A)$}){%
\Check{$(a,b)\in \varrho(x,y)^{\canmod}$, for some $b\in\Delta^{\canmod}\setminus\ind(\A)$}
}
\Return \true;
}
\caption{Subroutines for \bbarbalgo{}.}
\label{algo:bbqueries:2}
\end{algorithm}

Similarly to \bbbdalgo, the idea is to view the input CQ $\q(\avec{x})$ as a tree and iteratively construct a homomorphism from $\q(\avec{x})$ to $\canmod$, working from root to leaves. We begin by guessing an element $a_0w$ to which the root variable $z_0$ is mapped and checking that $a_0w$ is compatible with $z_0$. However, instead of storing directly $a_0w$ in $\frontier$, we guess it element-by-element and push the word $w$ onto the stack, $\stack$. We assume that we have access to the top of the $\stack$, denoted by $\topof(\stack)$, and  the call $\topof(\stack)$ on empty $\stack$ returns $\varepsilon$. During execution of \bbarbalgo, the height of the stack will never exceed $2|\T| + |\q|$, and so we assume that the height of the stack, denoted by $|\stack|$, is also available as, for example, a variable whose value is updated by the push and pop operations on $\stack$.  

After having guessed $a_0w$, we check that $z_0$ can be mapped to it, which is done by calling \canMapTail{$z_0$, $a_0$, $\topof(\stack)$}. If the check succeeds, we initialise $\frontier$ to the set of $4$-tuples of the form $(z_0 \mapsto (a_0, |\stack|), z_i)$, for all children $z_i$ of $z_0$ in $T$. 
Intuitively, a tuple $(z \mapsto (a,n),z')$ records that the variable $z$ is mapped to the element $a \,\stack_{\leq n}$ and that the child $z'$ of $z$ remains to be mapped (in the explanations we use $\stack_{\leq n}$ to refer to the word comprising the first $n$ symbols of $\stack$; the algorithm, however, cannot make use of it). 

In the main loop, we remove one or more tuples from $\frontier$, choose where to map the variables and update $\frontier$ and $\stack$ accordingly. There are four options. Option~1 is used for tuples $(z \mapsto (a,0),z')$ where both $z$ and $z'$ are mapped to individual constants, Option~2 (Option~3) for tuples $(z\mapsto(a,n),z')$ in which we map $z'$ to a child (respectively, parent) of the image of $z$ in $\canmod$, while Option~4 applies when $z$ and $z'$ are mapped to the same element (which is possible if $P(z,z')\in \q$, for some $P$ that is reflexive according to~$\T$). Crucially, however, the order in which tuples are treated matters due to the fact that several tuples `share' the single stack. Indeed, when applying Option~3, we pop a symbol from $\stack$, and may therefore lose some information that is needed for processing other tuples. To avoid this, 
Option~3 may only be applied to tuples \mbox{$(z \mapsto (a,n),z')$} with maximal $n$, and it must be applied to \emph{all} such tuples at the same time. For Option~2, we  require that the selected tuple $(z\mapsto(a,n), z')$ is such that $n=|\stack|$:
since $z'$ is being mapped to an element $a \, \stack_{\leq n} \,\varrho$, we need to access the $n$th symbol in $\stack$ to determine the possible choices for $\varrho$ and to record the symbol chosen by pushing it onto $\stack$. 

The procedure terminates and returns \true{} when $\frontier$ is empty, meaning that we have successfully constructed a homomorphism witnessing that the input tuple is an answer. Conversely, given a homomorphism from $\q(\avec{a})$ to $\canmod$, we can define a successful execution of \bbarbalgo. We prove in Appendix~\ref{app:complexity}  that  \bbarbalgo{} terminates (Proposition~\ref{logcfl-upper-prop:termination}), is correct (Proposition~\ref{logcfl-upper-prop:correctness}) 
and can be implemented by an NAuxPDA (Proposition \ref{nauxpda}).
The following example illustrates the  construction.  
 
\begin{example}\label{ex-second-algo}
Suppose $\T$ has the following axioms:
\begin{align*}
A(x) & \rightarrow \exists y\, P(x,y),  & 
P(x,y) & \rightarrow U(y,x),  \\ 
\exists y \, P(y,x) & \rightarrow \exists y \,S(x,y), &
\exists y\, S(y,x) & \rightarrow \exists y\, T(y,x), &
\exists y\, P(y,x) & \rightarrow B(x).
 \end{align*}
the query is as follows:
\begin{multline*}
\q(x_1,x_2) \ \ = \ \ \exists y_1y_2y_3y_4y_5\, \bigl(R(y_2,x_1) \ \land \ P(y_2,y_1) \ \land \ S(y_1,y_3) \ \land {}\\T(y_5,y_3)  \ \land \ S(y_4,y_3) \ \land \ U(y_4,x_2)  \bigr)
\end{multline*}
and $\Amc = \{A(a), R(a,c)\}$.
Observe that $\Cmc_{\Tmc, \Amc} \models \q(c,a)$. We show how to define an execution of \bbarbalgo{} that returns \true{} on $((\Tmc,\q), \Amc, \q, (c,a))$ and the homomorphism it induces.
We fix some variable, say $y_1$, as the root of the query tree. We then guess the constant~$a$ and the word $P$, push $P$ onto $\stack$ and check using \mbox{\canMapTail{$y_1$, $a$, $P$}} that our choice is compatible with $y_1$.  
At the start of the while loop, we have
\begin{equation*}\tag{\textsf{w-1}}
\frontier = \{(y_1\mapsto (a,1),y_2),(y_1\mapsto(a,1), y_3)\} \ \ \ \text{ and } \ \ \ \stack = P,
\end{equation*}
where the first tuple, for example, records that $y_1$ has been mapped to $a\, \stack_{\leq 1} = aP$ and 
$y_2$ remains to be mapped. 
We are going to use Option~3 for $(y_1\mapsto(a,1),y_2)$ and  Option~2 for $(y_1\mapsto(a,1), y_3)$. 
We (have to) start with Option~2 though: we remove $(y_1\mapsto (a,1), y_3)$ from $\frontier$, guess $S$, push it onto $\stack$, and add $(y_3\mapsto(a,2), y_4)$ and $(y_3\mapsto(a,2),y_5)$ to $\frontier$. Note that the tuples in $\frontier$ allow us to  read off the elements $a\, \stack_{\leq 1}$ and $a\, \stack_{\leq 2}$ to which $y_1$ and $y_3$ are mapped. Thus, 
\begin{equation*}\tag{\textsf{w-2}}
\frontier = \{(y_1\mapsto(a,1),y_2),(y_3\mapsto(a,2),y_4),(y_3\mapsto(a,2),y_5)\} \ \ \ \text{ and } \ \ \ \stack= PS
\end{equation*}
at the start of the second iteration of the while loop. We are going to use Option~3 for $(y_3\mapsto(a,2),y_4)$ and Option~2 for $(y_3\mapsto(a,2),y_5)$. Again, we have to start with Option~2: we remove $(y_3\mapsto(a,2),y_5)$ from $\frontier$, and guess $T^-$ and push it onto $\stack$. As $y_5$ has no children, we leave $\frontier$ unchanged. At the start of the third iteration, 
\begin{equation*}\tag{\textsf{w-3}}
\frontier = \{(y_1\mapsto(a,1),y_2),(y_3\mapsto(a,2),y_4)\}  \ \ \ \text{ and } \ \ \ \stack= PST^-;
\end{equation*}
see Fig.~\ref{ex-fig}~(a). We apply Option~3 and, since $\deepest = \emptyset$, we pop $T^-$ from  $\stack$ but make no other changes. In the fourth iteration, we again apply Option~3. Since \mbox{$\deepest = \{(y_3\mapsto(a,2),y_4)\}$}, we remove this tuple from
$\frontier$ and pop $S$ from $\stack$. As the checks succeed for $S$, we add $(y_4\mapsto(a,1), x_2)$ to $\frontier$. 
Before the fifth iteration, 
\begin{equation*}\tag{\textsf{w-5}}
\frontier = \{(y_1\mapsto(a,1),y_2),(y_4\mapsto(a,1),x_2)\}   \ \ \ \text{ and } \ \ \ \ \stack = P;
\end{equation*}
see Fig.~\ref{ex-fig}~(b). We apply Option 3 with $\deepest = \{(y_1\mapsto(a,1),y_2),(y_4\mapsto(a,1),x_2)\}$. This leads to both tuples being removed
from $\frontier$ and $P$ popped from $\stack$. We next perform the required checks and, in particular,  verify that the choice of where to map the answer variable $x_2$ agrees with the input vector $(c,a)$ (which is indeed the case).  Then, we add $(y_2\mapsto(a,0), x_1)$ to $\frontier$. The final, sixth, iteration begins with 
\begin{equation*}\tag{\textsf{w-6}}
\frontier = \{(y_2\mapsto(a,0),x_1)\} \ \ \ \text{ and } \ \ \ \ \stack= \varepsilon;
\end{equation*}
see Fig.~\ref{ex-fig}~(c). We choose Option 1,  remove $(y_2\mapsto (a,0),x_1)$ from $\frontier$, guess $c$, and perform the required compatibility checks. As $x_1$ is a leaf, no new tuples are added to $\frontier$; see Fig.~\ref{ex-fig}~(d).  We are thus left with $\frontier= \emptyset$, and return \true{}. 
\begin{figure}[t]%
\centering%
\begin{tikzpicture}[yscale=0.9,xscale=1]\footnotesize
\node at (-0.5,3.8) {\small (a)};
\node[point,label=left:{$y_1$}, label=above:$B$] (y1) at (1,3) {};
\node at (1.25,3.8) {$\q(x_1,x_2)$};
\node[lrgpoint,label=left:{$y_2$}] (y2) at (0,2) {};
\node[point,label=right:{$y_3$}] (y3) at (2,2) {};
\node[bpoint,label=left:{$x_1$}] (x1) at (0,1) {};
\node[lrgpoint,label=left:{$y_4$}] (y4) at (1.5,1) {};
\node[point,label=below:{$y_5$}] (y5) at (2.5,1) {};
\node[bpoint,label=left:{$x_2$}] (x2) at (1.5,0) {};
\begin{scope}\footnotesize
\draw[->,query] (y2) to node[above,sloped,pos=0.4] {$P$} (y1);
\draw[->,query] (y1) to node[below,sloped] {$S$} (y3);
\draw[->,query] (y2) to node[left] {$R$} (x1);
\draw[->,query] (y4) to node[pos=0.3,above,sloped] {$S$} (y3);
\draw[->,query] (y5) to node[pos=0.3,above,sloped] {$T$} (y3);
\draw[->,query] (y4) to node[left] {$U$} (x2);
\end{scope}
\begin{scope}[xshift=25mm]
\node[bpoint, label=left:{$A$}, label=above:{$a$}] (a) at (1,3) {};
\node at (1.75,3.8) {$\canmod$};
\node[point, label=above:{$c$}] (c) at (2.5,3) {};
\node[point, label=right:{\scriptsize $aP$}, label=left:{$B$}] (d1) at (1,2) {};
\node[point, label=right:{\scriptsize $aPS$}] (d2) at (1,1) {};
\node[point, label=right:{\scriptsize $aPST^-$}] (d3) at (1,0) {};
\draw[->,can] (a) to node[above] {$R$} (c);
\draw[->,can] (a)  to node [right]{$P, U^-$} (d1); 
\draw[->,can] (d1)  to node [right]{$S$} (d2);
\draw[->,can] (d2)  to node [right]{$T^-$} (d3);
\draw [line width=1mm] (2,0.7) -- ++(1,0);
\node[rectangle,fill=black,minimum width=7mm] at (2.5,1) {\textcolor{white}{$\boldsymbol{P}$}}; 
\node[rectangle,fill=black,minimum width=7mm] at (2.5,1.5) {\textcolor{white}{$\boldsymbol{S}$}}; 
\node[rectangle,fill=black,minimum width=7mm] at (2.5,2) {\textcolor{white}{$\boldsymbol{T^-}$}}; 
\end{scope}
\draw[hom] (y5) to node[above,midway,sloped,circle,fill=black,inner sep=1pt] {\sffamily\bfseries\scriptsize\textcolor{white}{2}}  (d3);
\draw[hom] (y3) to node[above,midway,sloped,circle,fill=black,inner sep=1pt] {\sffamily\bfseries\scriptsize\textcolor{white}{1}} (d2);
\draw[hom] (y1) to node[above,midway,sloped,circle,fill=black,inner sep=1pt] {\sffamily\bfseries\scriptsize\textcolor{white}{0}} (d1);
%
\begin{scope}[xshift=70mm]
\node at (-0.5,3.8) {\small (b)};
\node[point,label=left:{$y_1$}, label=above:$B$] (y1) at (1,3) {};
\node at (1.25,3.8) {$\q(x_1,x_2)$};
\node[lrgpoint,label=left:{$y_2$}] (y2) at (0,2) {};
\node[point,label=right:{$y_3$}] (y3) at (2,2) {};
\node[bpoint,label=left:{$x_1$}] (x1) at (0,1) {};
\node[point,label=left:{$y_4$}] (y4) at (1.5,1) {};
\node[point,label=below:{$y_5$}] (y5) at (2.5,1) {};
\node[lrgbpoint,label=left:{$x_2$}] (x2) at (1.5,0) {};
\begin{scope}\footnotesize
\draw[->,query] (y2) to node[above,sloped,pos=0.4] {$P$} (y1);
\draw[->,query] (y1) to node[below,sloped] {$S$} (y3);
\draw[->,query] (y2) to node[left] {$R$} (x1);
\draw[->,query] (y4) to node[pos=0.3,above,sloped] {$S$} (y3);
\draw[->,query] (y5) to node[pos=0.25,below,sloped] {$T$} (y3);
\draw[->,query] (y4) to node[left] {$U$} (x2);
\end{scope}
\begin{scope}[xshift=25mm]
\node[bpoint, label=left:{$A$}, label=above:{$a$}] (a) at (1,3) {};
\node at (1.75,3.8) {$\canmod$};
\node[point, label=above:{$c$}] (c) at (2.5,3) {};
\node[point, label=right:{\scriptsize $aP$}, label=left:{$B$}] (d1) at (1,2) {};
\node[point, label=right:{\scriptsize $aPS$}] (d2) at (1,1) {};
\node[point, label=right:{\scriptsize $aPST^-$}] (d3) at (1,0) {};
\draw[->,can] (a) to node[above] {$R$} (c);
\draw[->,can] (a)  to node [right]{$P, U^-$} (d1); 
\draw[->,can] (d1)  to node [right]{$S$} (d2);
\draw[->,can] (d2)  to node [right]{$T^-$} (d3);
\draw [line width=1mm] (2,0.7) -- ++(1,0);
\node[rectangle,fill=black,minimum width=7mm] at (2.5,1) {\textcolor{white}{$\boldsymbol{P}$}}; 
%
\end{scope}
%
%
\draw[hom] (y3) to node[below,pos=0.7,sloped,circle,fill=gray,inner sep=1pt] {\sffamily\bfseries\scriptsize\textcolor{white}{1}} (d2);
\draw[hom] (y4) to node[above,pos=0.7,sloped,circle,fill=black,inner sep=1pt] {\sffamily\bfseries\scriptsize\textcolor{white}{4}} (d1);
\end{scope}
\begin{scope}[xshift=0mm,yshift=-42mm]
\node at (-0.5,3.3) {\small (c)};
\node[point,label=left:{$y_1$}, label=above:$B$] (y1) at (1,3) {};
\node[point,label=left:{$y_2$}] (y2) at (0,2) {};
\node[point,label=right:{$y_3$}] (y3) at (2,2) {};
\node[lrgbpoint,label=left:{$x_1$}] (x1) at (0,1) {};
\node[point,label=left:{$y_4$}] (y4) at (1.5,1) {};
\node[point,label=below:{$y_5$}] (y5) at (2.5,1) {};
\node[bpoint,label=left:{$x_2$}] (x2) at (1.5,0) {};
\begin{scope}\footnotesize
\draw[->,query] (y2) to node[above,sloped,pos=0.4] {$P$} (y1);
\draw[->,query] (y1) to node[below,sloped] {$S$} (y3);
\draw[->,query] (y2) to node[left] {$R$} (x1);
\draw[->,query] (y4) to node[pos=0.3,above,sloped] {$S$} (y3);
\draw[->,query] (y5) to node[pos=0.3,above,sloped] {$T$} (y3);
\draw[->,query] (y4) to node[left] {$U$} (x2);
\end{scope}
\begin{scope}[xshift=25mm]
\node[bpoint, label=left:{$A$}, label=above:{$a$}] (a) at (1,3) {};
\node[point, label=above:{$c$}] (c) at (2.5,3) {};
\node[point, label=right:{\scriptsize $aP$}, label=left:{$B$}] (d1) at (1,2) {};
\node[point, label=right:{\scriptsize $aPS$}] (d2) at (1,1) {};
\node[point, label=right:{\scriptsize $aPST^-$}] (d3) at (1,0) {};
\draw[->,can] (a) to node[above] {$R$} (c);
\draw[->,can] (a)  to node [right]{$P, U^-$} (d1); 
\draw[->,can] (d1)  to node [right]{$S$} (d2);
\draw[->,can] (d2)  to node [right]{$T^-$} (d3);
\draw [line width=1mm] (2,0.7) -- ++(1,0);
%
\end{scope}
%
%
\draw[hom] (y2) to node[above,pos=0.75,sloped,circle,fill=black,inner sep=1pt] {\sffamily\bfseries\scriptsize\textcolor{white}{5}} (a);
\draw[hom] (y1) to node[above,pos=0.25,sloped,circle,fill=gray,inner sep=1pt] {\sffamily\bfseries\scriptsize\textcolor{white}{0}} (d1);
\draw[hom] (y4) to node[below,pos=0.75,sloped,circle,fill=gray,inner sep=1pt] {\sffamily\bfseries\scriptsize\textcolor{white}{4}} (d1);
\draw[hom] (x2) to node[above,pos=0.85,sloped,circle,fill=black,inner sep=1pt] {\sffamily\bfseries\scriptsize\textcolor{white}{5}} (a);
\end{scope}
\begin{scope}[xshift=70mm,yshift=-42mm]
\node at (-0.5,3.3) {\small (d)};
\node[point,label=left:{$y_1$}, label=above:$B$] (y1) at (1,3) {};
\node[point,label=left:{$y_2$}] (y2) at (0,2) {};
\node[point,label=right:{$y_3$}] (y3) at (2,2) {};
\node[bpoint,label=left:{$x_1$}] (x1) at (0,1) {};
\node[point,label=left:{$y_4$}] (y4) at (1.5,1) {};
\node[point,label=below:{$y_5$}] (y5) at (2.5,1) {};
\node[bpoint,label=left:{$x_2$}] (x2) at (1.5,0) {};
\begin{scope}\footnotesize
\draw[->,query] (y2) to node[above,sloped,pos=0.4] {$P$} (y1);
\draw[->,query] (y1) to node[below,sloped] {$S$} (y3);
\draw[->,query] (y2) to node[left] {$R$} (x1);
\draw[->,query] (y4) to node[pos=0.3,above,sloped] {$S$} (y3);
\draw[->,query] (y5) to node[pos=0.3,above,sloped] {$T$} (y3);
\draw[->,query] (y4) to node[left] {$U$} (x2);
\end{scope}
\begin{scope}[xshift=25mm]
\node[bpoint, label=left:{$A$}, label=above:{$a$}] (a) at (1,3) {};
\node[point, label=above:{$c$}] (c) at (2.5,3) {};
\node[point, label=right:{\scriptsize $aP$}, label=left:{$B$}] (d1) at (1,2) {};
\node[point, label=right:{\scriptsize $aPS$}] (d2) at (1,1) {};
\node[point, label=right:{\scriptsize $aPST^-$}] (d3) at (1,0) {};
\draw[->,can] (a) to node[above] {$R$} (c);
\draw[->,can] (a)  to node [right]{$P, U^-$} (d1); 
\draw[->,can] (d1)  to node [right]{$S$} (d2);
\draw[->,can] (d2)  to node [right]{$T^-$} (d3);
\draw [line width=1mm] (2,0.7) -- ++(1,0);
%
\end{scope}
%
%
\draw[hom] (y2) to node[above,pos=0.7,sloped,circle,fill=gray,inner sep=1pt] {\sffamily\bfseries\scriptsize\textcolor{white}{5}} (a);
\draw[hom] (x1) to node[above,pos=0.6,sloped,circle,fill=black,inner sep=1pt] {\sffamily\bfseries\scriptsize\textcolor{white}{6}} (c);
\end{scope}
\end{tikzpicture}%
\caption{Partial homomorphisms from a  tree-shaped CQ $\q(x_1,x_2)$ to the canonical model 
$\Cmc_{\Tmc, \Amc}$ and the contents of $\stack$ in Example~\ref{ex-second-algo}: (a) before the third iteration,  (b) before the fifth iteration, (c) before and (d) after the final (sixth) iteration. Large nodes indicate the last component of the tuples in $\frontier$.}
\label{ex-fig}
\end{figure}
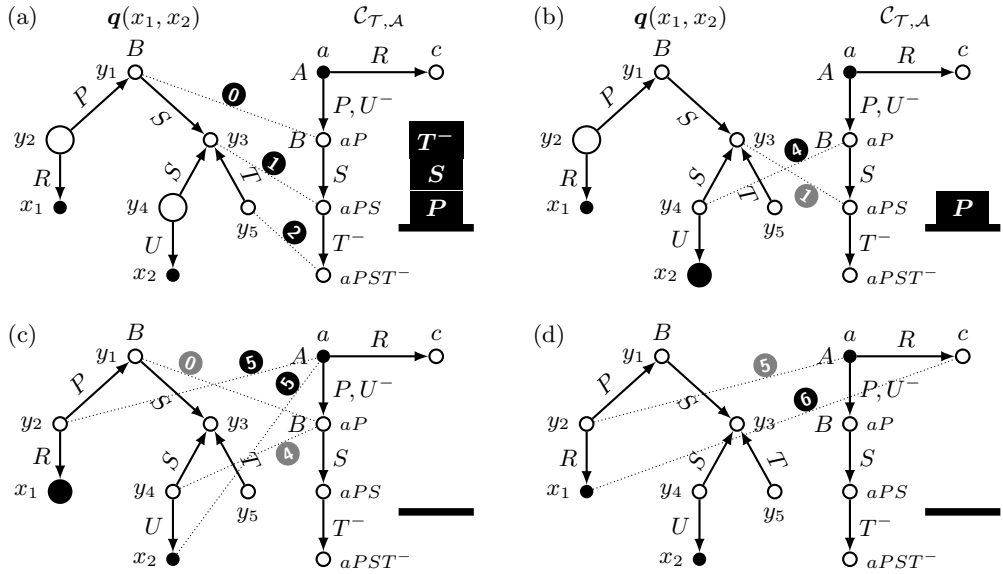%
\end{example}

The proof of \LOGCFL-hardness is by reduction of the following problem:  decide whether an input of length $n$ is accepted by the $n$th circuit of a \emph{logspace-uniform} family of $\SAC^1$ circuits, which is known to be \LOGCFL-hard~\cite{DBLP:journals/jcss/Venkateswaran91}. This problem was used by \cite{DBLP:journals/jacm/GottlobLS01} to show \LOGCFL-hardness of evaluating tree-shaped CQs. We follow a similar approach, but with one crucial difference: using an ontology, we `unravel' the circuit into a tree, which allows us to replace tree-shaped CQs by linear ones. 
Following~\cite{DBLP:journals/jacm/GottlobLS01}, we assume without loss of generality that the considered $\SAC^1$ circuits adhere to the following \emph{normal form}: 
\begin{nitemize}
\item fan-in of all $\AND$-gates is 2;

\item nodes are assigned to levels, with gates on level $i$ only receiving inputs from gates on level $i-1$, the input gates on level 1 and the output gate on the greatest level;

\item the number of levels is odd, all even-level gates are $\OR$-gates, and all odd-level non-input gates are $\AND$-gates.
\end{nitemize}
It is well known~\cite{DBLP:journals/jacm/GottlobLS01,DBLP:journals/jcss/Venkateswaran91} that a circuit in normal form 
accepts an input $\avec{\alpha}$ iff there is a labelled rooted tree (called a \emph{proof tree}) such that
\begin{nitemize}
\item the root node is labelled with the output $\AND$-gate;

\item if a node is labelled with an $\AND$-gate $g_i$ and $g_i = g_j \ANDOP g_k$,
then it has two children labelled with $g_j$ and $g_k$, respectively;

\item if a node is labelled with an $\OR$-gate $g_i$ and $g_i = g_{j_1} \OROP \dots \OROP g_{j_k}$,
then it has a unique child that is labelled with one of $g_{j_1},\dots,g_{j_k}$;

\item every leaf node is labelled with an input gate whose literal evaluates to 1 under $\avec{\alpha}$.
\end{nitemize}
For example, the circuit in Fig.~\ref{fig:5a}~(a) accepts $(1,0,0,0,1)$, as witnessed by the proof tree in Fig.~\ref{fig:5a}~(b). 
While a circuit-input pair may admit multiple proof trees, they are all isomorphic modulo the labelling. Thus, with every circuit $\Cir$, we can associate a \emph{skeleton proof tree} $T$ such that $\Cir$ accepts $\avec{\alpha}$ iff some labelling of $T$ is a proof tree for $\Cir$ and~$\avec{\alpha}$. Note that $T$ depends only on the number of levels in $\Cir$. The reduction~\cite{DBLP:journals/jacm/GottlobLS01}, which is for presentation purposes reproduced here with minor modifications, encodes $\Cir$ and $\avec{\alpha}$ in the database and uses a Boolean tree-shaped CQ based on the skeleton proof tree. Specifically, the database $\dcx$ uses the gates of $\Cir$ as constants and consists of the following facts:
\begin{align*}
& \leftand(g_j, g_i) \text{ and } \rightand(g_k,g_i), && \text{ for every $\AND$-gate $g_i$ with } g_i = g_j \ANDOP g_k;\\
& \aor(g_{j_1},g_i),\dots,\aor(g_{j_k},g_i), && \text{ for every $\OR$-gate $g_i$ with } g_i = g_{j_1} \OROP \cdots \OROP g_{j_k};\\
& \trueleaf(g_i), && \text{ for every input gate $g_i$ whose value is $1$ under $\avec{\alpha}$}. 
\end{align*}
The CQ $\q$ uses the nodes of $T$ as variables, 
has an atom $\aor(z_j,z_i)$ ($\leftand(z_j,z_i)$, $\rightand(z_j,z_i)$) for every node $z_i$ with unique (left, right) child $z_j$,
and has an atom $\trueleaf(z_i)$ for every leaf node $z_i$.
These definitions guarantee that $\dcx\models \q$ iff $\Cir$ accepts $\avec{\alpha}$; moreover, both $\q$ and $\dcx$ can be constructed by logspace transducers.

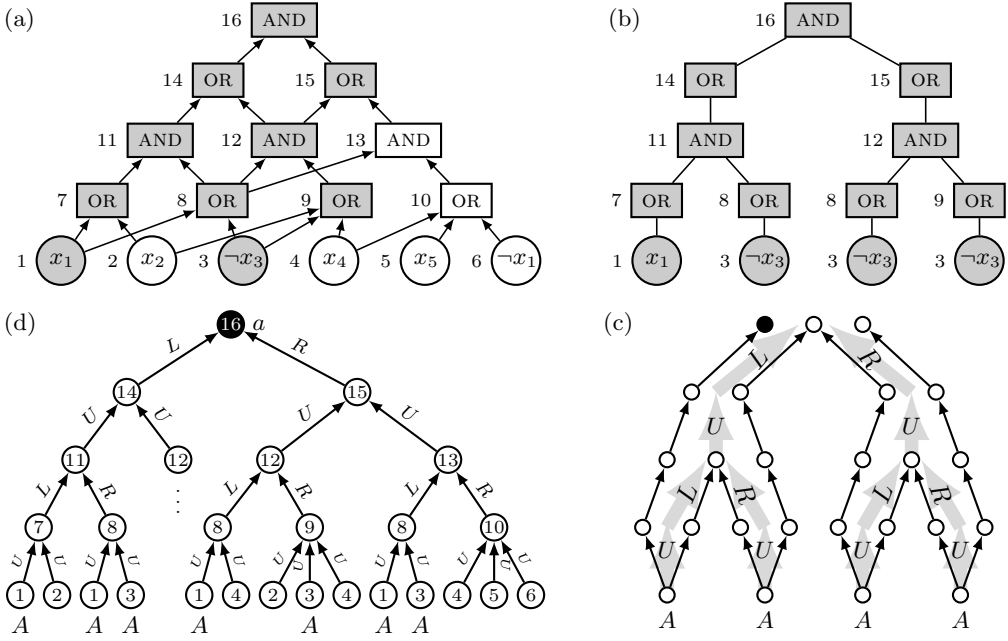
\begin{figure}[t]%
\begin{tikzpicture}[xscale=1.1,yscale=0.8]\small%
\node at (-3.2,4) {{\small (a)}};
\node[fill=gray!40,input,label=left:{\scriptsize $1$}] (in1) at (-2.7,0) {$x_1$}; 
\node[input,label=left:{\scriptsize $2$}] (in2) at (-1.6,0) {$x_2$}; 
\node[fill=gray!40,input,label=left:{\scriptsize $3$}] (in3) at (-.5,0) {$\!\!\neg x_3$}; 
\node[input,label=left:{\scriptsize $4$}] (in4) at (.6,0) {$x_4$}; 
\node[input,label=left:{\scriptsize $5$}] (in5) at (1.7,0) {$x_5$}; 
\node[input,label=left:{\scriptsize $6$}] (in6) at (2.8,0) {$\!\!\neg x_1$}; 
%
%
\node[fill=gray!40,or-gate,label=left:{\scriptsize $7$}] (or1-1) at (-2.2,1) {$\OR$}; 
\node[fill=gray!40,or-gate,label=left:{\scriptsize $8$}] (or1-2) at (-.75,1) {$\OR$}; 
\node[fill=gray!40,or-gate,label=left:{\scriptsize $9$}] (or1-4) at (.75,1) {$\OR$}; 
\node[or-gate,label=left:{\scriptsize $10$}] (or1-5) at (2.2,1) {$\OR$}; 
\node[fill=gray!40,and-gate,label=left:{\scriptsize $11$}] (and1-1) at (-1.5,2) {$\AND$};
\node[fill=gray!40,and-gate,label=left:{\scriptsize $12$}] (and1-2) at (0,2) {$\AND$};
\node[and-gate,label=left:{\scriptsize $13$}] (and1-3) at (1.5,2) {$\AND$};
%
\node[fill=gray!40,or-gate,label=left:{\scriptsize $14$}] (or2-1) at (-.8,3) {$\OR$}; 
\node[fill=gray!40,or-gate,label=left:{\scriptsize $15$}] (or2-2) at (.8,3) {$\OR$}; 
\node[fill=gray!40,and-gate,label=left:{\scriptsize $16$}] (and2) at (0,4) {$\AND$}; 
\begin{scope}[semithick]
\draw[->] (in1) to (or1-1);
\draw[->] (in1) to (or1-2);
\draw[->] (in2) to (or1-1);
\draw[->] (in2) to (or1-4);
\draw[->] (in3) to (or1-2);
\draw[->] (in3) to (or1-4);
\draw[->] (in4) to (or1-4);
\draw[->] (in4) to (or1-5);
\draw[->] (in5) to (or1-5);
\draw[->] (in6) to (or1-5);
\draw[->] (or1-1) to (and1-1);
\draw[->] (or1-2) to (and1-1);
\draw[->] (or1-2) to (and1-3);
\draw[->] (or1-2) to (and1-2);
\draw[->] (or1-4) to (and1-2);
\draw[->] (or1-5) to (and1-3);
\draw[->] (and1-1) to (or2-1);
\draw[->] (and1-2) to (or2-1);
\draw[->] (and1-2) to (or2-2);
\draw[->] (and1-3) to (or2-2);
\draw[->] (or2-1) to (and2);
\draw[->] (or2-2) to (and2);
\end{scope}
\begin{scope}[xshift=-20mm,xscale=1.3]
\node at (4.7,4) {{\small (b)}};
\node[input,fill=gray!40,label=left:{\scriptsize $1$}] (v10) at (5,0)  {$x_1$}; 
\node[or-gate,fill=gray!40,label=left:{\scriptsize $7$}] (v6) at (5,1)    {$\OR$}; 
\node[input,fill=gray!40,label=left:{\scriptsize $3$}] (v11) at (6,0)   {$\!\!\neg x_3$}; 
\node[or-gate,fill=gray!40,label=left:{\scriptsize $8$}] (v7) at (6,1)    {$\OR$}; 
\node[and-gate,fill=gray!40,label=left:{\scriptsize $11$}] (v4) at (5.5,2)  {$\AND$}; 
\node[or-gate,fill=gray!40,label=left:{\scriptsize $14$}] (v2) at (5.5,3)  {$\OR$}; 
\node[and-gate,fill=gray!40,label=left:{\scriptsize $16$}] (v1) at (6.5,4)  {$\AND$}; 
\node[or-gate,fill=gray!40,label=left:{\scriptsize $15$}] (v3) at (7.5,3)  {$\OR$}; 
\node[and-gate,fill=gray!40,label=left:{\scriptsize $12$}] (v5) at (7.5,2)  {$\AND$}; 
\node[or-gate,fill=gray!40,label=left:{\scriptsize $8$}] (v8) at (7,1)   {$\OR$}; 
\node[or-gate,fill=gray!40,label=left:{\scriptsize $9$}] (v9) at (8,1)   {$\OR$}; 
\node[input,fill=gray!40,label=left:{\scriptsize $3$}] (v12) at (7,0) {$\!\!\neg x_3$}; 
\node[input,fill=gray!40,label=left:{\scriptsize $3$}] (v13) at (8,0)  {$\!\!\neg x_3$}; 
\begin{scope}[semithick]
\draw[-] (v10) to (v6);
\draw[-] (v11) to (v7);
\draw[-] (v6) to (v4);
\draw[-] (v7) to (v4);
\draw[-] (v4) to (v2);
\draw[-] (v2) to (v1);
\draw[-] (v12) to (v8);
\draw[-] (v13) to (v9);
\draw[-] (v8) to (v5);
\draw[-] (v9) to (v5);
\draw[-] (v5) to (v3);
\draw[-] (v3) to (v1);
\end{scope}
\end{scope}
\end{tikzpicture}\\[6pt]
\begin{tikzpicture}[yscale=0.9,
point/.style={circle,draw=black,thick,minimum size=3.5mm,inner sep=0pt,fill=white},
ipoint/.style={circle,draw=black,thick,minimum size=2mm,inner sep=0pt,fill=white}]
\footnotesize
\begin{scope}[xscale=0.7]
\node at (-4,4) {{\small (d)}};
\begin{scope}\scriptsize
\node[point,label=below:{\footnotesize $\trueleaf$}] (in1) at (-4,0) {1}; 
\node[point] (in2) at (-3.3,0) {2}; 
\node[point,label=below:{\footnotesize $\trueleaf$}] (in3c2) at (-2.6,0) {1}; 
\node[point,label=below:{\footnotesize $\trueleaf$}] (in4c2) at (-1.9,0) {3}; 
\node[point,label=below:{\footnotesize $\trueleaf$}] (in3) at (-0.6,0) {1}; 
\node[point] (in4) at (0.1,0) {4}; 
\node[point] (in2d) at (0.8,0) {2}; 
\node[point,label=below:{\footnotesize $\trueleaf$}] (in3d) at (1.5,0) {3}; 
\node[point] (in4d) at (2.2,0) {4}; 
\node[point,label=below:{\footnotesize $\trueleaf$}] (in3c) at (2.9,0) {1}; 
\node[point,label=below:{\footnotesize $\trueleaf$}] (in4c) at (3.6,0) {3}; 
\node[point] (in5) at (4.3,0) {4}; 
\node[point] (in6) at (5,0) {5}; 
\node[point] (in7) at (5.7,0) {6}; 
\node[point] (or1-1) at (-3.65,1) {7}; 
\node[point] (or1-2) at (-2.25,1) {8};
\node[point] (or1-3) at (-0.25,1) {8}; 
\node[point] (or1-4) at (1.5,1) {9}; 
\node[point] (or1-5) at (3.25,1) {8}; 
\node[point] (or1-6) at (5,1) {10}; 
\node[point] (and1-1) at (-2.25-0.7,2) {11};
\node[point] (and1-2) at (-1,2) {12};
\node at (-1, 1.5) {$\vdots$};
\node[point] (and1-3) at (.75,2) {12};
\node[point] (and1-4) at (3.25+0.875,2) {13};
\node[point] (g2) at (-1-1.95/2,3) {14}; 
\node[point] (g3) at (0.75+3.3/2,3) {15}; 
\node[point,fill=black,label=right:{\footnotesize $a$}] (a) at (0,4) {\textcolor{white}{16}}; 
\end{scope}
\begin{scope}\scriptsize
\draw[->,can] (g2)  to node [above,sloped]{$L$} (a);
\draw[->,can] (g3)  to node [above,sloped]{$R$} (a);
\draw[->,can] (and1-1)  to node [above,sloped]{$U$} (g2);
\draw[->,can] (and1-2)  to node [above,sloped]{$U$} (g2);
\draw[->,can] (and1-3)  to node [above,sloped]{$U$} (g3);
\draw[->,can] (and1-4)  to node [above,sloped]{$U$} (g3);
\draw[->,can] (or1-1)  to node [above,sloped,pos=0.4]{$L$} (and1-1);
\draw[->,can] (or1-2)  to node [above,sloped,pos=0.4]{$R$} (and1-1);
\draw[->,can] (or1-3)  to node [above,sloped,pos=0.4]{$L$} (and1-3);
\draw[->,can] (or1-4)  to node [above,sloped,pos=0.4]{$R$} (and1-3);
\draw[->,can] (or1-5)  to node [above,sloped,pos=0.4]{$L$} (and1-4);
\draw[->,can] (or1-6)  to node [above,sloped,pos=0.4]{$R$} (and1-4);
\begin{scope}\tiny
\draw[->,can] (in1)  to node [above,sloped,pos=0.4]{$U$} (or1-1);
\draw[->,can] (in2)  to node [above,sloped,pos=0.4]{$U$} (or1-1);
\draw[->,can] (in3c2)  to node [above,sloped,pos=0.4]{$U$} (or1-2);
\draw[->,can] (in4c2)  to node [above,sloped,pos=0.4]{$U$} (or1-2);
\draw[->,can] (in3c)  to node [above,sloped,pos=0.4]{$U$} (or1-5);
\draw[->,can] (in4c)  to node [above,sloped,pos=0.4]{$U$} (or1-5);
\draw[->,can] (in2d)  to node [above,sloped,pos=0.4]{$U$} (or1-4);
\draw[->,can] (in3d)  to node [above,sloped,pos=0.4]{$U$} (or1-4);
\draw[->,can] (in4d)  to node [above,sloped,pos=0.4]{$U$} (or1-4);
\draw[->,can] (in3)  to node [above,sloped,pos=0.4]{$U$}  (or1-3);
\draw[->,can] (in4)  to node [above,sloped,pos=0.4]{$U$} (or1-3);
\draw[->,can] (in5)  to node [above,sloped,pos=0.4]{$U$} (or1-6);
\draw[->,can] (in6)  to node [above,sloped,pos=0.4]{$U$} (or1-6);
\draw[->,can] (in7)  to node [above,sloped,pos=0.4]{$U$} (or1-6);
\end{scope}
\end{scope}
\end{scope}
\begin{scope}[xshift=-7mm,xscale=1.3]
\node at (4.5,4) {{\small (c)}};
%
\begin{scope}[line width=1.8mm,gray!30] 
\draw[->] (5,0) to node[pos=0.6] {\color{black}$\aor$} (5,1.1);
\draw[->] (6,0)  to node[pos=0.6] {\color{black}$\aor$} (6,1.1);
\draw[->] (5,1)  to node[midway,sloped] {\normalsize\color{black}$\leftand$} (5.4,2);
\draw[->] (6,1)  to node[midway,sloped] {\normalsize\color{black}$\rightand$} (5.6,2);
\draw[->] (7,0)  to node[pos=0.6] {\color{black}$\aor$} (7,1.1);
\draw[->] (8,0)  to node[pos=0.6] {\color{black}$\aor$} (8,1.1);
\draw[->] (7,1)  to node[midway,sloped] {\normalsize\color{black}$\leftand$} (7.4,2);
\draw[->] (8,1)  to node[midway,sloped] {\normalsize\color{black}$\rightand$} (7.6,2);
\draw[->] (5.5,2)  to node[pos=0.5] {\color{black}$\aor$} (5.5,3);
\draw[->] (7.5,2)  to node[pos=0.5] {\color{black}$\aor$} (7.5,3);
\draw[->] (5.5,3)  to node[midway,sloped] {\normalsize\color{black}$\leftand$} (6.35,4);
\draw[->] (7.5,3)  to node[midway,sloped] {\normalsize\color{black}$\rightand$} (6.65,4);
\end{scope}
\begin{scope}\footnotesize
\node[ipoint,fill=black] (u1) at (6,4) {};
\node[ipoint] (u2) at (5.25,3) {};
\node[ipoint] (u3) at (5,2) {};
\node[ipoint] (u4) at (4.75,1) {};
\node[ipoint,label=below:{$\trueleaf$}] (u5) at (5,0) {};
\node[ipoint] (u6) at (5.25,1) {};
\node[ipoint] (u7) at (5.5,2) {};
\node[ipoint] (u8) at (5.75,1) {};
\node[ipoint,label=below:{$\trueleaf$}] (u9) at (6,0) {};
\node[ipoint] (u10) at (6.25,1) {};
\node[ipoint] (u11) at (6,2) {};
\node[ipoint] (u12) at (5.75,3) {};
\node[ipoint] (u13) at (6.5,4) {};
\node[ipoint] (u14) at (7.25,3) {};
\node[ipoint] (u15) at (7,2) {};
\node[ipoint] (u16) at (6.75,1) {};
\node[ipoint,label=below:{$\trueleaf$}] (u17) at (7,0) {};
\node[ipoint] (u18) at (7.25,1) {};
\node[ipoint] (u19) at (7.5,2) {};
\node[ipoint] (u20) at (7.75,1) {};
\node[ipoint,label=below:{$\trueleaf$}] (u21) at (8,0) {};
\node[ipoint] (u22) at (8.25,1) {};
\node[ipoint] (u23) at (8,2) {};
\node[ipoint] (u24) at (7.75,3) {};
\node[ipoint] (u25) at (7,4) {};
\end{scope}
\begin{scope}[thick]\normalsize
\draw[<-] (u1) to (u2);
\draw[<-] (u2) to (u3);
\draw[<-] (u3) to node[left] {} (u4);
\draw[<-] (u4) to  (u5);
\draw[->] (u5) to  (u6);
\draw[->] (u6) to (u7);
\draw[<-] (u7) to  (u8);
\draw[<-] (u8) to (u9);
\draw[->] (u9) to (u10);
\draw[->] (u10) to (u11);
\draw[->] (u11) to (u12);
\draw[->] (u12) to (u13);
\draw[<-] (u13) to (u14);
\draw[<-] (u14) to  (u15);
\draw[<-] (u15) to (u16);
\draw[<-] (u16) to (u17);
\draw[->] (u17) to (u18);
\draw[->] (u18) to  (u19);
\draw[<-] (u19) to  (u20);
\draw[->] (u21) to (u20);
\draw[->] (u21) to (u22);
\draw[->] (u22) to (u23);
\draw[->] (u23) to (u24);
\draw[->] (u24) to (u25);
\end{scope}
\end{scope}
\end{tikzpicture}
\caption{
(a) A circuit $\Cir$ of 5 levels with input $\avec{\alpha}\colon x_1\mapsto 1, \  x_2\mapsto 0, \ x_3\mapsto 0,\  x_4\mapsto 0, \  x_5\mapsto 0$ (the gate number is indicated on the left and gates with value $1$ under $\avec{\alpha}$ are shaded); (b) a proof tree for $\Cir$ and $\avec{\alpha}$; (c) CQs $\q$ (thick gray arrows) and $\qclin$ (black arrows);
(d) canonical model of $(\kbC)$ with the subscript of $G_i$ inside the nodes.}
\label{fig:5a}
\end{figure}

To adapt this reduction to our setting, we replace $\q$ by a linear CQ $\qclin$, which is obtained by a depth-first traversal of $\q$. When evaluated on $\dcx$, the CQs $\qclin$ and $\q$ may give different answers, but the answers coincide if the CQs are evaluated on the \emph{unravelling of $\dcx$ into a tree}. Thus, we define $(\kbC)$ whose canonical model 
induces a tree isomorphic to the unravelling of $\dcx$. 
To formally introduce $\qclin$,
consider the sequence of words  defined inductively as follows:
\begin{equation*}
w_0 = \varepsilon\quad\text{ and }\quad  w_{j+1} = \leftand^-\,\aor^-\,w_j\,\aor\,\leftand\,\rightand^-\,\aor^-\,w_j\,\aor\,\rightand, \text{ for } j > 0.
\end{equation*}
Suppose $\Cir$ has $2d+1$ levels, $d \geq 0$. Consider the $d$th word $w_d =  \varrho_1 \varrho_2 \dots \varrho_k$ and take
\begin{equation*}
\qclin(y_0) \ \  \ = \ \ \ \exists y_1, \dots, y_k\,\Bigl[\ \ \bigwedge_{i=1}^{k} \varrho_i(y_{i-1}, y_i) \ \ \ \land\bigwedge_{
\varrho_i\varrho_{i+1} =\aor^-\,\aor} \hspace*{-1.5em}\trueleaf(y_i)\ \Bigr];
\end{equation*}
see Fig.~\ref{fig:5a}~(c). 
We now define $(\kbC)$.
Suppose $\Cir$ has gates $g_1, \dots, g_m$, with $g_m$ 
the output gate. In addition to predicates $\aor$, $\leftand$, $\rightand$, $\trueleaf$, we introduce  
a unary predicate $G_i$ for each gate $g_i$.
We set $\Amc = \{G_m(a)\}$ and include  the following axioms in $\Tmc_{\avec{\alpha}}$:
\begin{align*}
& G_{i}(x) \rightarrow \exists y\, \bigl(S(x,y)\land G_j(y)\bigr), && \text{ for every } S(g_j,g_i) \in \dcx, \ S\in \{\aor, \leftand,\rightand\},\\
& G_i(x) \rightarrow \trueleaf(x), && \text{ for every } \trueleaf(g_i) \in \dcx;
\end{align*}
see Fig.~\ref{fig:5a}~(d) for an illustration. 
When restricted to predicates $\aor$, $\leftand$, $\rightand$, $\trueleaf$, the canonical model of $(\kbC)$ is isomorphic to the unravelling of $\dcx$ starting from $g_m$.

We show in Appendix~\ref{app:logcfl-hardness} that $\qclin$ and $(\kbC)$ can be constructed by logspace transducers (Proposition~\ref{prop:f.2}), and that $\Cir$ accepts $\avec{\alpha}$ iff $\Tmc_{\avec{\alpha}}, \Amc \models \qclin(a)$ (Proposition~\ref{prop:f.3}). 
\end{proof}

\section{Conclusions and open problems}\label{sec:conclusions}

Our aim in this work was to understand how the size of OMQ rewritings and the combined complexity of OMQ answering depend on (\emph{i}) the existential depth of \OWLQL{} ontologies, (\emph{ii}) the treewidth of CQs or the number of leaves in tree-shaped CQs, and (\emph{iii}) the type of rewriting: PE, NDL or arbitrary FO. 

We tackled  the succinctness problem by representing OMQ rewritings as (Boolean) hypergraph functions and  establishing an unexpectedly tight correspondence between the size of OMQ rewritings and the size of various computational models for computing these  functions.
It turned out that polynomial-size \emph{PE-rewritings} can only be constructed for OMQs with ontologies of depth 1 and CQs of bounded treewidth. Ontologies of larger depth require, in general, PE-rewritings of super-polynomial size. 
The good and surprising news, however, is that, for classes of OMQs with ontologies of bounded depth and CQs of bounded treewidth, we can always (efficiently) construct polynomial-size \emph{NDL-rewritings}. The same holds if we consider OMQs obtained by pairing ontologies of depth 1 with arbitrary CQs or coupling arbitrary ontologies with bounded-leaf queries; see Fig.~\ref{pic:results} for details. 
The existence of polynomial-size \emph{FO-rewritings} for different classes of OMQs was shown to be equivalent to major open problems in computational and circuit complexity such as `$\NL/\poly \subseteq \smash{\NC^1}?$'\!, `$\LOGCFL/\poly \subseteq \smash{\NC^1}?$' and `$\NP/\poly \subseteq \smash{\NC^1}?$'

We also determined the combined complexity of answering OMQs from the considered classes. In particular, we showed that OMQ answering is tractable---either \NL- or \LOGCFL-complete---for  bounded-depth ontologies coupled with bounded treewidth CQs, as well as for arbitrary ontologies paired with tree-shaped queries with a bounded number of leaves. We point out that membership in \LOGCFL{} implies that answering OMQs from the identified tractable classes can be `profitably parallelised' (for details, consult~\cite{DBLP:journals/jacm/GottlobLS01}).

Comparing the two sides of Fig.~\ref{pic:results}, we remark that the class of tractable OMQs nearly coincides with the OMQs admitting polynomial-size \NDL-rewritings (the only exception being OMQs with ontologies of depth 1 and arbitrary CQs). However, the \LOGCFL{} and \NL{} membership results cannot be immediately inferred from the existence of polynomial-size \NDL-rewritings, since evaluating polynomial-size NDL-queries is a \PSpace-complete problem in general. In fact, 
much more work is required to construct NDL-rewritings that can be evaluated in \LOGCFL{} and \NL{}, which will be done in a follow-up publication; see technical report~\cite{MeghynDL16}.

Although the present work gives comprehensive solutions to the succinctness and combined complexity problems formulated in Section~\ref{intro}, it also raises some interesting and challenging questions: 
\begin{enumerate}[(1)]
\item What is the size of rewritings of OMQs with a \emph{fixed ontology}?

\item What is the size of rewritings of OMQs with ontologies in a \emph{fixed signature}?

\item Is answering OMQs with CQs of bounded treewidth and ontologies of finite depth fixed-parameter tractable if the \emph{ontology depth} is the parameter? 

\item What is the size of rewritings for OMQs whose ontologies do not contain role inclusions, that is, axioms of the form $\varrho(x,y) \to \varrho'(x,y)$?
\end{enumerate}
Answering these questions would provide further insight into the difficulty of OBDA and could lead to the identification of new  classes of well-behaved OMQs.  

As far as practical OBDA is concerned, our experience with the query answering engine Ontop~\cite{ISWC13,DBLP:conf/semweb/KontchakovRRXZ14}, which employs the tree-witness rewriting, shows that mappings and database constraints together with semantic query optimisation techniques can drastically reduce the size of rewritings and produce efficient SQL queries over the data. The role of mappings and data constraints in OBDA 
is yet to be fully investigated~\cite{DBLP:conf/kr/Rodriguez-MuroC12,DBLP:conf/esws/Rosati12,DBLP:conf/semweb/LemboMRST15,DBLP:conf/dlog/BienvenuR15}
and constitutes another promising avenue for future work.

Finally, the focus of this paper was on the ontology language \OWLQL{} that has been designed specifically for OBDA via query rewriting. However, in practice ontology designers often require constructs that are not available in \OWLQL. Typical examples are axioms such as $A(x) \to B(x) \lor C(x)$ and $P(x,y) \land A(y) \to B(x)$. The former is a standard covering constraint in conceptual modelling, while the latter occurs in ontologies such as SNOMED CT. 
There are at least two ways of extending the applicability of rewriting techniques to a wider class of ontology languages.
A first approach relies upon the observation that although many ontology languages do not guarantee the existence of rewritings for all ontology-query pairs,
it may still be the case that the queries and ontologies typically encountered in practice do admit rewritings.
This has motivated the development of diverse methods for identifying particular ontologies and OMQs for which (first-order or Datalog) rewritings exist \cite{DBLP:conf/ijcai/LutzPW11,DBLP:conf/ijcai/BienvenuLW13,DBLP:journals/tods/BienvenuCLW14,DBLP:conf/aaai/KaminskiNG14,DBLP:conf/ijcai/LutzHSW15}. 
A second approach consists in replacing an ontology formulated in a complex ontology language (which lacks efficient query answering algorithms)
by an ontology written in a simpler language, for which query rewriting methods can be employed.
Ideally, one would show that the simpler ontology is equivalent to the original with regards to query answering \cite{BotoevaCSSSX16}, and thus provides the exact set of answers.
Alternatively, one can use a simpler ontology to approximate the answers for the full one \cite{DBLP:conf/semweb/ConsoleMRSS14,BotoevaCSSSX16}
(possibly employing a more costly complete algorithm to decide the status of the remaining candidate answers \cite{DBLP:journals/jair/ZhouGNKH15}).

\appendix



\section{Proof of Theorem~\ref{Hom2rew}}\label{app:proof:Th4.5}

\indent

\textsc{Theorem~\ref{Hom2rew}}\ \ {\itshape
\textup{(}i\textup{)} For any OMQ $\omq(\avec{x})$, the formulas $\qtw(\avec{x})$ and $\qtw'(\avec{x})$ are equivalent, and so $\qtw'(\avec{x})$ is a PE-rewriting of $\omq(\avec{x})$ over complete data instances.

\textup{(}ii\textup{)} Theorem~\ref{TW2rew} continues to hold for $\twfn$ replaced by $\homfn$.}

\smallskip

\noindent\begin{proof} 
Let $\omq(\avec{x}) = (\T,\q(\avec{x}))$ and $\q(\avec{x}) = \exists \avec{y}\, \varphi (\avec{x},\avec{y})$.
We begin by showing that for every tree witness $\t$ for $\omq(\avec{x})$, we have the following chain of equivalences: 
\begin{multline*}
\bigwedge_{R(z,z') \in \q_\t} \hspace*{-0.75em}(z=z')\ \  \land \hspace*{-0.5em}\bigvee_{\t \text{ is $\varrho$-initiated}} \,\, 
\bigwedge_{z \in \tr \cup \ti} \hspace*{-0.25em}\varrho^*(z)
\quad \equiv 
\bigwedge_{z,z' \in \tr \cup \ti} \hspace*{-0.75em}(z=z') \ \ \land \hspace*{-0.5em}\bigvee_{\t \text{ is $\varrho$-initiated}}  \,\, \bigwedge_{z \in \tr \cup \ti} \hspace*{-0.25em}\varrho^*(z)
\\
\equiv \ \  
\exists z_0 \, \Bigl (\bigwedge_{z \in \tr \cup \ti} \hspace*{-0.75em} (z=z_0) \ \ \land \bigvee_{\t \text{ is $\varrho$-initiated}}  \hspace*{-1em}\varrho^*(z_0)
\Bigr ) 
\quad\equiv \quad 
\exists z_0 \, \Bigl (\bigwedge_{z \in \tr \cup \ti} (z=z_0)\ \ \land \bigvee_{\t \text{ generated by } \tau}\hspace*{-1em}\tau(z_0)\Bigr), 
\end{multline*}
where $z_0$ is a fresh variable.
The first equivalence 
follows from the transitivity of equality and the fact that every pair of variables $z,z'$ in a tree witness must be linked by a sequence of binary atoms. The following equivalence can be readily verified using first-order semantics. For the final equivalence, 
we use the fact that
if~$\t$ is $\varrho$-initiated and $\T\models\tau(x) \to \exists y \,\varrho(x,y)$, then $\t$ is generated by~$\tau$, and conversely, if $\t$~is generated by~$\tau$,
then there is some $\varrho$ that initiates $\t$ and is such that $\T\models\tau(x) \to \exists y\, \varrho(x,y)$. 

By the above equivalences, the query $\qtw'(\avec{x})$ can be equivalently expressed as follows:
\begin{align*}
&\exists \avec{y} \!\!\!\!\!\bigvee_{\begin{subarray}{c}\Theta \subseteq \twset\\ \text{ independent}\end{subarray}}\!\!\!\!\!\!\!
\Big(\bigwedge_{\atom \in \q \setminus \q_\Theta}\!\!\! \rsz \,\,\wedge \bigwedge_{\t \in \Theta} \big(\exists z_0 \, (\bigwedge_{z \in \tr \cup \ti}\hspace*{-0.25em} (z=z_0) \ \ \land \hspace*{-0.5em}\bigvee_{\t \text{ is generated by } \tau}\hspace*{-1em} \tau(z_0)) \big) \Big).
\end{align*}
Finally, we observe that, for every independent $\Theta \subseteq \twset$, the variables that occur in some $\ti$, for $\t \in \Theta$, do not occur in $\ti'$ for any other $\t' \in \Theta$. It follows that if $z \in \ti$ and $\t \in \Theta$, then the only occurrence of $z$ in the disjunct for $\Theta$ is in the equality atom $z = z_0$. We can thus drop all such atoms, 
while preserving equivalence, which gives us precisely the tree-witness rewriting $\q_\tw(\avec{x})$. In particular, this means that $\q_\tw'(\avec{x})$ is a rewriting of $\omq(\avec{x})$ over complete data instances.

To establish the second statement, let $\Phi$ be a Boolean formula that computes
\begin{equation*}
\homfn \ \ = \bigvee_{\substack{\Theta \subseteq \twset
\\ \text{ independent}}}\hspace*{-2mm}
\Bigl(\bigwedge_{\atom \in \q \setminus \q_\Theta} \hspace*{-1em}p_\atom
 \ \ \wedge \ \  \bigwedge_{\t \in \Theta} \big(\bigwedge_{R(z,z')\in \q_\t}\hspace*{-1em} p_{z=z'} \ \ \wedge \bigvee_{\t \text{ is $\varrho$-initiated}} \,\, \bigwedge_{z \in \tr\cup\ti} p_{\varrho^*(z)}\big)\Bigr),
\end{equation*}
 and let $\q'(\avec{x})$ be the FO-formula obtained by replacing each $p_{z=z'}$ in $\Phi$ with $z = z'$, each $p_\atom$ with~$\atom$, each $p_{\varrho^*(z)}$ with $\bigvee_{\T \models \tau(x) \to \exists y\, \varrho(x,y)}\tau(z)$, and prefixing the result with  $\exists \avec{y}$. 
Recall that the modified rewriting $\qtw'(\avec{x})$
was obtained by applying this same transformation to the original monotone Boolean formula for $\homfn$.
Since $\Phi$ computes $\homfn$,
$\q'(\avec{x})$ and $\qtw'(\avec{x})$ are equivalent FO-formulas. As we have already established that $\qtw'(\avec{x})$ is a rewriting of $\omq(\avec{x})$, 
the same must be true of $\q'(\avec{x})$. The statement regarding \NDL-rewritings can be proved 
similarly to the proof of Theorem~\ref{TW2rew}~(\emph{ii}).
\end{proof}

\section{Proof of Theorem~\ref{tree-hg-to-query}}\label{app:proof:Th5.9}

\indent

\textsc{Theorem~\ref{tree-hg-to-query}} \ {\itshape
\textup{(}i\textup{)} Any tree hypergraph $H$ is isomorphic to a subgraph of $\HG{\OMQT{H}}$.

\textup{(}ii\textup{)} Any monotone THGP based on a tree hypergraph $H$ computes a subfunction of the primitive evaluation function $f^\vartriangle_{\OMQT{H}}$.}

\begin{proof}
(i) Fix a tree hypergraph $H = (V, E)$ whose underlying tree $T=(V_T, E_T)$  has vertices 
$V_T = \{1, \dots, n\}$, for $n > 1$, and $1$ is a leaf of $T$.   The \emph{directed} tree obtained from $T$ by fixing $1$ as the root and orienting the edges away from~$1$ is denoted by $T^1 = (V_T,E^1_T)$. 
By definition, each  $e\in E$ induces a convex subtree $T^e = (V_e, E_e)$ of $T^1$. Since, for each subtree $T^e$, the OMQ $\OMQT{H}$ has  a tree-witness $\t^e$ with
\begin{align*}
\tr^e & = \{\, z_i \mid i \text{ in on the boundary of } e \,\},\\
\ti^e & = 
\{\,z_i \mid i \text{ is in the  interior of } e\,\} \cup \{\,y_{ij} \mid (i,j) \in e \, \},
\end{align*}
it follows that $H$ is 
isomorphic to the subgraph of $\HG{\OMQT{H}}$  
obtained by removing all superfluous hyperedges and 
all vertices corresponding to atoms with $S_{ij}$. 

\bigskip

(ii)  Suppose that $P$ is bssed on a tree hypergraph $H$.
Given an input~$\avec{\alpha}$ for $P$, we define an assignment $\avec{\gamma}$ for the 
predicates in $\OMQT{H} = (\T,\q)$ by taking each $\avec{\gamma}(R_{ij})$ and $\avec{\gamma}(S_{ij})$ to be the value of the label of $(i,j) \in E^1_{T}$ under $\avec{\alpha}$  and $\avec{\gamma}(A_e) = 1$ for all $e \in E$ ($\avec{\gamma}(R_\zeta) = 0$, for all normalisation predicates $R_\zeta$). 
We show that for all $\avec{\alpha}$ we have
\begin{equation*}
P(\avec{\alpha}) = 1 \quad \text{iff} \quad \primfnP(\avec{\gamma}) = 1.
\end{equation*}
Observe that the canonical model $\smash{\C_{\T,\A(\avec{\gamma})}}$  contains two labelled nulls, $w_e$ and $w_e'$, for each $e\in E$, satisfying 
\begin{multline*}
\C_{\T,\A(\avec{\gamma})}\models \bigwedge_{(i,j)\in E_e, \ i = r^e}\hspace*{-1em} R_{r^ej}(a,w_e) \ \ \land \bigwedge_{(i,j)\in E_e, \ j\in L_e} \hspace*{-1.5em}S_{ij}(w_e,a) \ \ \ \land\\ 
\bigwedge_{(i,j)\in E_e, \ i \ne r^e} \hspace*{-1.5em}R_{ij}(w'_e,w_e) \ \  \ \land \bigwedge_{(i,j)\in E_e, \ j\notin L_e} \hspace*{-2em} S_{ij}(w_e,w_e').
\end{multline*}

\smallskip

\noindent ($\Rightarrow$)\  Suppose that $P(\avec{\alpha}) = 1$.
Then there exists an independent $E' \subseteq E$ that covers all zeros of $\avec{\alpha}$.
We show $\Tmc, \Amc(\avec{\gamma}) \models  \q$ (that is, $\primfnP(\avec{\gamma})= 1$).
Define a mapping $h$ as follows:
\begin{equation*}
h(z_i)= \begin{cases}
w_e', & \text{if } i \text{ is in the interior of } e \in E', \\
a, &  \text{otherwise},
\end{cases}
\quad
h(y_{ij})= \begin{cases}
w_e, & \text{if }(i, j) \in e \in E', \\
a & \text{otherwise}.
\end{cases}
\end{equation*}
Note that $h$ is well-defined: since $E'$ is independent, its hyperedges share no interior, and there can be at most one hyperedge $e \in E'$ containing any given vertex $(i, j)$.

It remains to show that $h$ is a homomorphism from $\q$ to   $\Cmc_{\Tmc, \Amc(\avec{\gamma})}$.
Consider a pair of atoms $R_{ij}(z_i, y_{ij})$  and $S_{ij}(y_{ij}, z_{j})$ in $\q$.
Then $(i, j) \in \smash{E^1_T}$. 
If there is $e\in E'$ with$(i, j) \in e$ then there are four possibilities to consider:
\begin{nitemize}
\item if 
neither $i$ nor $j$ is in the interior then, since $T_e$ is a tree and $(i,j)$ is its edge,  
the only possibility is $e=\{( i, j )\}$, whence  $h(z_i)=h(z_j)=a$ and $h(y_{ij})=w_e$; 

\item 
$i$ is on the boundary and $j$ is internal, then
$h(z_i)=a$, $h(y_{ij})=w_e$, and $h(z_{j})=w_e'$;

\item 
if $j$ is on the boundary and $i$ is internal, then this case is the mirror image;

\item   
if both $i$ and $j$ are in the interior,  then $h(z_i)=h(z_j)=w_e'$ and $h(y_{ij})=w_e$.
\end{nitemize}
Otherwise, the label of $(i, j)$ must evaluate to 1 under $\avec{\alpha}$, whence $\Amc(\avec{\gamma})$ contains $R_{ij}(a,a)$ and $S_{ij}(a,a)$ and we set $h(z_i)=h(y_{ij})=h(z_j)=a$.
In all cases, $h$ preserves the atoms $R_{ij}(z_i, y_{ij})$ and $S_{ij}(y_{ij}, z_{j})$, and so $h$ is indeed a homomorphism.

\medskip

\noindent ($\Rightarrow$)\  Suppose that $\primfnP(\avec{\gamma}) = 1$. Then $\Tmc, \Amc(\avec{\gamma}) \models \q$, and so there is a homomorphism
$h\colon \q \rightarrow  \Cmc_{\Tmc, \Amc(\avec{\gamma})}$. We show that there is
an independent $E'\subseteq E$ that covers all zeros of~$\avec{\alpha}$.
Let  $E' $ be the set of all $e \in E$ such that $h^{-1}(w_e) \neq \emptyset$ (that is, $w_e$ is in the image of~$h$).
To show that $E'$ is independent, we need the following claim:

\medskip

\noindent\textbf{Claim}. If $h^{-1}(w_e) \neq \emptyset$, 
then $h(y_{ij})= w_e$ for all $(i, j) \in e$. 

\smallskip

\noindent\emph{Proof of claim}. 
Let $r^e$ be the root of $T^e$ and  $L_e$ its leaves.
Pick some variable $z \in h^{-1}(w_e)$
such that there is no $z' \in h^{-1}(w_e)$ higher than $z$ in $\q$ (we use the ordering
of variables induced by the tree $T^1$).
Observe that $z$ cannot be of the form~$z_j$, because then $\q$ would contain some atom $R_{j \ell}(z_j, y_{j\ell})$ or $S_{\ell j}(y_{\ell j}, z_j)$, 
but $w_e$ has no outgoing $R_{j \ell}$ or $S_{\ell j}^-$ arcs in $\smash{\Cmc_{\Tmc, \Amc(\avec{\gamma})}}$. 
It follows that $z$ is of the form $y_{j \ell}$, for some $j,\ell$. By considering
the available arcs leaving~$w_e$ again, we conclude that $(j, \ell) \in e$. We next
show that $j= r^e$. Suppose that this is not the case. Then, 
there must be $(p, j) \in e$ with $(p, j) \in T^1$. A simple examination of the axioms in $\Tmc$ shows that the only
way for $h$ to satisfy the atom $R_{j \ell}(z_j, y_{j \ell})$ is to map $z_j$ to $w_e'$.
It follows that to satisfy the atom $S_{p j}(y_{p j}, z_j)$, we must put $h(y_{p j})=w_e$ contrary to the assumption that $z=y_{j \ell}$ was a highest vertex in~$h^{-1}(w_e)$. Thus, $j= r^e$.
Now, using a simple inductive argument on the distance from $z_{r^e}$, and considering the possible ways of
mapping the atoms of $\q$, we can show that
$h(y_{ij})= w_e$ for every $(i, j) \in e$. \hfill(\emph{end proof of claim})

\medskip

\noindent Suppose that there are two distinct hyperedges $e,e' \in E'$ that have a non-empty intersection:
$(i, j) \in e \cap e'$. We know that either $y_{ij}$ or $y_{ji}$ occurs in $\q$, and we can assume the former without loss of generality. 
By the claim, we obtain $h(y_{ij})= w_e = w_{e'}$, a contradiction.
Therefore, $E'$ is independent. We now show that it covers all zeros. 
Let $(i,j)$ be such that its label evaluates to $0$ under $\avec{\alpha}$, and  assume again without loss of generality 
that $y_{ij}$ occurs in  $\q$. Then $\Amc(\avec{\gamma})$ does not contain
$R_{ij}(a,a)$, so the only way $h$ can satisfy the atom $R_{ij}(z_i, y_{ij})$
is by mapping~$y_{ij}$ to some $w_e$ with $(i,j) \in e$. 
It follows that there is an $e \in E'$ such that
$(i,j) \in e$, so all zeros of $\avec{\alpha}$ are covered by $E'$.
We have thus shown that $E'$ is an independent subset of $E$
that covers all zeros of $\avec{\alpha}$, and hence, $P(\avec{\alpha}) = 1$.
\end{proof}

\section{Proof of Proposition~\ref{hyper:thgp}}\label{app:proof:Prop5.10}

\indent

\textsc{Proposition}~\ref{hyper:thgp}.  \ \ \emph{ 
\textup{(}i\textup{)} For any tree hypergraph $H$ of degree~\mbox{$\le d$}, there is a monotone THGP of size $O(|H|)$ that computes $f_H$ and such that its hypergraph is of degree~\mbox{$\le \max(2,d)$}.}

\emph{\textup{(}ii\textup{)} For every generalised THGP $P$ over $n$ variables, there is a THGP $P^\prime$ computing the same function and such that $|P'| \le n \cdot |P|$.}

\begin{proof}
(\emph{i})  
Consider a hypergraph $H = (V,E)$ based on a tree $T=(V_T, E_T)$ with $V = E_T$. 
We label each $v \in V$ with a variable $p_v$ and, for each $e \in E$,
we choose some $v_e \in \bigcup e$, add fresh vertices $a_e$ and $b_e$ with edges $\{v_e, a_e\}$ and $\{a_e, b_e\}$ to $T$ as well as a new hyperedge $e' = [v_e, b_e]$ to $E$. We label the segment $[v_e, a_e]$ with $1$ and the segment $[a_e, b_e]$ with $p_e$. We also extend $e$ to include the segment $[v_e, a_e]$.
We claim that the resulting THGP $P$ computes $f_H$. Indeed, for any input $\avec{\alpha}$ with $\avec{\alpha}(p_e) = 0$, we have to include the edge $e'$ into the cover, and so cannot include the edge $e$ itself. Thus, $P(\avec{\alpha}) = 1$ iff there is an independent set $E$ of hyperedges with $\avec{\alpha}(p_e)=1$, for all $e\in E$, covering all zeros of the variables $p_v$. It follows that $P$ computes $f_H$.

\smallskip

(\emph{ii}) Let P be a generalised THGP based on a hypergraph $H = (V,E)$ with the underlying tree  $T=(V_T, E_T)$ such that $V= E_T$.  To construct $P'$, we split every vertex $v\in V$ 
(which is an edge of $T$) labelled with $\bigwedge_{i=1}^{k} \li_i$ into $k$ new edges $v_1, \dots, v_k$ and label $v_i$ with $\li_i$, for $1 \leq i \leq k$; each hyperedge containing $v$ will now contain all the $v_i$. It is easy to see that $P(\avec{\alpha}) = P'(\avec{\alpha})$, for any valuation $\avec{\alpha}$. Since $k \leq n$, we have $|P'| \le n \cdot |P|$. It should be clear that the degree of $P'$ and the number of leaves in it are the same as in $P$.
\end{proof}

\section{Proof of  Theorem~\ref{DL2THP}}\label{app:proof:5.13}

\indent

\textsc{Theorem~\ref{DL2THP}.} \ \ {\itshape
For every OMQ $\omq(\avec{x}) = (\Tmc,\q(\avec{x}))$  with a fundamental set $\Omega_{\omq}$ and with $\q$ of treewidth~$t$,
the generalised monotone THGP $P_{\omq}$ computes $\homfn$ and is of size polynomial in $|\q|$ and $|\Omega_{\omq}|^t$.}

\begin{proof}
By~\cite[Lemma~11.9]{DBLP:series/txtcs/FlumG06}, we can assume that the tree $T$ in the tree decomposition of $\q$ has at most $N$, $N \le |\q|$, nodes. Recall that $M = |\Omega_{\omq}|^t$ is the number of bag types. We claim that $P_{\omq}$
\begin{nitemize}
\item contains at most $(2M +1)N$ vertices and at most $N(M+M^2)$ hyperedges;
\item and has labels with at most $3|\q|$ conjuncts.
\end{nitemize}
The vertices of the hypergraph of $P_{\omq}$ correspond to the edges of $T'$, and there can be at most $N \cdot (2M+1)$ of them, because there can be no more than $N$ edges in $T$, and each is replaced by a sequence of $2M + 1$ new edges.
The hyperedges are of two types: $E^k_i$ (where $1 \leq i \leq N$ and $1 \leq k \leq M$) and $E^{k \ell}_{i j}$ (where $(i, j)$ correspond to an edge in $T$ and $1 \leq k,\ell \leq M$). It follows that the total number of hyperedges cannot exceed $N(M+M^2)$. Finally, a simple examination of the labelling function shows that there can be at most $3|\q|$ conjuncts in each label.
Indeed, given $i$, $j$ and $k$, each atom $S(\avec{z})$ with $\avec{z} \subseteq \lambda(N_i)$ generates either 1 or 3 propositional variables
in the label of $\{u^k_{ij}, v^k_{ij}\}$, and $|\q|$ is the upper bound for the number of such atoms.


\smallskip

To complete the proof, we show that $P_{\omq}$ computes $\homfn$:  for any valuation $\avec{\alpha}$, 
\begin{equation*}
\homfn(\avec{\alpha}) = 1\qquad \text{ iff }\qquad P_{\omq}(\avec{\alpha}) = 1.
\end{equation*}

$(\Rightarrow)$ Let $\avec{\alpha}$ be such that $\homfn(\avec{\alpha})=1$.
Then we can find an independent $\Theta \subseteq \twset$ such that $\avec{\alpha}$ satisfies the
corresponding disjunct of $\homfn$:
\begin{equation}\label{disjunct}
\bigwedge_{\atom \in \q \setminus \q_\Theta} \hspace*{-1em} p_\atom \ \ \ 
 \wedge  \bigwedge_{\t \in \Theta} \Bigl(\bigwedge_{R(z,z')\in \q_\t}\hspace*{-0.5em}p_{z=z'} \,\, \wedge \bigvee_{\t \text{ is $\varrho$-initiated}}
 \,\, \bigwedge_{z \in \tr\cup\ti}\hspace*{-0.3em} p_{\varrho^*(z)}\Bigr).
\end{equation}
For every $\t \in \Theta$, let $\varrho_\t$ be a role that makes the  disjunction hold.
Since $\t$ is $\varrho_\t$-initiated, we can 
choose a homomorphism $h_\t\colon \q_\t \rightarrow \Cmc^{\varrho_{\t}(a)}_\Tmc$
such that, for every $z \in \ti$,  $h_\t(z)$ is of the form $a \varrho_\t w$, for some $w$ . 

With each node $N$ in the tree decomposition $(T,\lambda)$
we associate the \emph{type $\avec{w}$ of $N$} by taking, for all $z\in \lambda(N)$:
\begin{equation*}
\avec{w}[\nu_N(z)] = \begin{cases}
w, & \text{if } z \in \ti \text{ and } h_\t(z) = a w, \text{ for some } \t \in \Theta,\\
\emptyword, & \text{otherwise}. 
\end{cases}
\end{equation*}
Observe that $\avec{w}$ is well-defined since
the independence of $\Theta$ guarantees that every variable in $\q$ can appear in $\ti$ for
at most one $\t \in \Theta$. 
We show that $\avec{w}$ is compatible with~$N$.
Consider a unary atom \mbox{$A(z) \in \q$}
such that $z\in\lambda(N)$ and $\avec{w}[\nu_N(z)] \neq \emptyword$. Then there must be $\t \in \Theta$
such that $z \in \ti$, in which case  
$h_\t (z)=a \avec{w}[\nu_N(z)]$.
Let $\varrho$ be the final symbol in~$h_\t (z)$. Since $h_\t\colon \q_\t\to\Cmc^{\exists y \varrho_\t(a,y)}_\Tmc$ is a homomorphism,
we have $\Tmc \models \exists y\, \varrho(y,x) \to A(x)$.
Consider now a binary atom $P(z,z') \in \q$ such that $z,z'\in\lambda(N)$ and either $\avec{w}[\nu_N(z)] \neq \emptyword$
or $\avec{w}[\nu_N(z')] \neq \emptyword$. 
We assume w.l.o.g.\ that the former is true (the other case is handled analogously).
By definition, there is $\t \in \Theta$
such that $z \in \ti$ and $h_\t (z)=a \avec{w}[\nu_N(z)]$.
Since $z \in \ti$ and $P(z,z') \in \q$, by the definition of tree witnesses,
$z' \in \tr \cup \ti$.
Since $h_\t\colon \q_\t\to \Cmc^{\exists y \varrho_\t(a,y)}_\Tmc$ is a homomorphism, one of the following holds:
\begin{nitemize}
%
\item 
$\avec{w}[\nu_N(z')] = \avec{w}[\nu_N(z)]$ and  $\Tmc \models P(x,x)$; 
\item 
$\avec{w}[\nu_N(z)]=\avec{w}[\nu_N(z')] \cdot \varrho$ for some $\varrho$
with $\Tmc \models \varrho(y,x) \to P(x,y)$;
\item 
$\avec{w}[\nu_N(z')]=\avec{w}[\nu_N(z)] \cdot \varrho$ for some $\varrho$
with $\Tmc \models \varrho(x,y) \to P(x,y)$.
\end{nitemize}
This establishes the second part of the compatibility condition.
Next, we show that the pairs associated with different nodes in $T$ are compatible.
Consider a pair of nodes $N$ and $N'$ and their types
$\avec{w}$ and  $\avec{w}'$.
It is clear  
that, by construction,  $\avec{w}[\nu_{N}(z)] = \avec{w}'[\nu_{N'}(z)]$, for all $z \in \lambda(N) \cap \lambda(N')$. 

\smallskip

Let $\avec{w}_1,\dots,\avec{w}_\numtypes$ be all the bag types. Consider now the tree hypergraph $P_{\omq}$, and let~$E'$ be the set consisting of the following hyperedges:
\begin{nitemize}
\item for every $N_i$ in $T$, the hyperedge $E_i^k = [ N_i, u_{ij_1}^k, \ldots, u_{i j_n}^k] $, where $k$ is such that
$\avec{w}_k$ is the type of $N_i$, and $N_{j_1}, \ldots, N_{j_n}$ are the neighbours of $N_i$;
\item for every pair of adjacent nodes $N_i, N_j$ in $T$, the hyperedge
$E_{ij}^{k \ell} = [ v_{ij}^k, v_{ji}^\ell] $, where $k$ and $\ell$ are such that $\avec{w}_k$ and $\avec{w}_\ell$ are the types of $N_i$ and $N_j$, respectively.
\end{nitemize}
Note that all these hyperedges are present in the hypergraph of $P_{\omq}$ because we have shown that
the type of each node $N_i$ is compatible with it and that the pairs of types of $N_i$ and $N_j$ are compatible with the pair $(N_i,N_j)$.
It is easy to see that $E'$ is independent, since whenever we include $E_i^k$ or $E_{ij}^{k \ell}$, we do not include any $E_i^{k'}$ or $E_{ij}^{k' \ell}$ for $k' \neq k$.
It remains to show that every vertex of the hypergraph of $P_{\omq}$  that is not covered by $E'$ evaluates to 1 under~$\avec{\alpha}$. 
Observe first that most of the vertices are covered by $E'$. Specifically: 
\begin{nitemize}
\item $\{N_i, u_{ij}^1\}$ is covered by $E_i^k$;
\item $\{v_{ij}^k, u_{ij}^{k +1}\}$ is  covered either by $E_i^n$ (if $n \leq k+1$) or by 
$E_{ij}^{n \ell}$ (if $n > k +1$);
\item $\{v_{ij}^\numtypes, v_{ji}^\numtypes\}$ is covered by $E_{ij}^{k \ell}$;
\item $\{u_{ij}^k, v_{ij}^k\}$ is covered by $E_i^n$ if $k< n$, and by $E_{ij}^{n \ell}$ if $n > k$.
\end{nitemize}
Thus, the only type of vertex not covered by $E'$ is of the form $\{u_{ij}^k, v_{ij}^k\}$, where $\avec{w}_k$ is the type of $N_i$.
In this case, by definition, $\{u_{ij}^k, v_{ij}^k\}$ is labelled by the following variables:
\begin{nitemize}
\item $p_\atom$, if  $\atom \in \q$,
$\avec{z}\subseteq \lambda(N_i)$
and $\avec{w}_k[\nu_{N_i}(z)]= \emptyword$, for all $z\in\avec{z}$;
\item $p_{\varrho^*(z)}$, if $A(z) \in \q$,  $z \in \lambda(N_i)$ and $\avec{w}_k[\nu_{N_i}(z)]= \varrho w$;
\item $p_{\varrho^*(z)}$, $p_{\varrho^*(z')}$ and $p_{z=z'}$, if 
$S(z,z') \in \q$ (possibly with $z=z'$),  $z,z' \in \lambda(N_i)$
and either $\avec{w}_k[\nu_{N_i}(z)]= \varrho w$ or $\avec{w}_k[\nu_{N_i}(z')]= \varrho w$.
\end{nitemize}
First suppose that $p_\atom$ appears in the label of $\{u_{ij}^k, v_{ij}^k\}$.
Then $\avec{w}_{k}[\nu_{N_i}(z)]= \emptyword$, for all $z\in \avec{z}$, 
and hence there is no variable in $\atom$ that belongs to any $\ti$ for $\t \in \Theta$. 
It follows that $\atom \in \q \setminus \q_\Theta$, and since~\eqref{disjunct} is satisfied,
the variable $p_\atom$ evaluates to 1 under~$\avec{\alpha}$. 
Next suppose that one of $p_{\varrho^*(z)}$, $p_{\varrho^*(z')}$ and $p_{z=z'}$ is part of the label. 
We focus on the case where these variables came from a binary atom (third item above), but the proof is  similar for the case of a unary atom (second item above).
We know that there is some atom $S(z,z') \in \q$ with
 $z,z'\in\lambda(N_i)$ 
and either $\avec{w}_{k}[\nu_{N_i}(z)]= \varrho w$ or $\avec{w}_{k}[\nu_{N_i}]= \varrho w$.
It follows that there is a tree witness $\t \in \Theta$ such that $z,z' \in \tr\cup\ti$.
This means that the atom $p_{z=z'}$ is a conjunct of~\eqref{disjunct}, and so it
is satisfied under $\avec{\alpha}$. 
Also, either $\avec{w}_{k}[\nu_{N_i}(z)]=h_\t(z)$ or $\avec{w}_{k}[\nu_{N_i}(z')]=h_\t(z')$
is of the form $\varrho w$, and, since all non-empty words in the image of $h_\t$ begin by $\varrho_\t$, 
we obtain $\varrho = \varrho_\t$. Since $\varrho_\t$ was chosen so that 
$\bigwedge_{z \in \tr\cup\ti} p_{\varrho^*(z)}$ is satisfied under $\avec{\alpha}$, 
both $p_{\varrho^*(z)}$ and $p_{\varrho^*(z')}$ evaluate to~1 under~$\avec{\alpha}$. 
Therefore, $E'$ is independent and covers all zeros under $\avec{\alpha}$,
which means that $P_{\omq}(\avec{\alpha})=1$. 

\bigskip

($\Leftarrow$) Suppose $P_{\omq}(\avec{\alpha})=1$, i.e., there is an independent subset $E'$
of the hyperedges in~$P_{\omq}$ that covers all vertices evaluated to~$0$ under $\avec{\alpha}$.
It is clear from the construction of $P_{\omq}$ that $E'$ contains exactly one hyperedge
of the form $E_i^k$ for every node $N_i$ in $T$, and so we can associate with every node $N_i$ the unique index $\mu(N_i) = k$.
We also know that $E'$ contains exactly one hyperedge of the form
$E_{ij}^{k \ell}$ for every edge $\{N_i, N_j\}$ in~$T$. Moreover, if we have hyperedges
$E_i^k$ and $E_{ij}^{\smash{k'} \ell}$ (respectively, $E_j^\ell$ and $E_{ij}^{k \smash{\ell'}}$), then $k=k'$ (respectively, $\ell=\ell'$).
It also follows from the definition of  $P_{\omq}$ that
every $\avec{w}_{\mu(N_i)}$ is compatible with~$N_i$, and 
pairs $(\avec{w}_{\mu(N_j)},\avec{w}_{\mu(N_j)})$ are compatible for adjacent nodes $N_i$, $N_j$.
Using the compatibility properties and the connectedness condition of tree
decompositions, we can conclude that the pairs assigned to \emph{any two nodes} $N_i$ and $N_j$
in $T$ are compatible. Since every variable must appear in at least one node label,
it follows that we can associate a \emph{unique} word $w_z$ with every variable $z$ in $\q$. 

Since all zeros are covered by $E'$, we know that for every node $N_i$,
the following variables are assigned to $1$ by $\avec{\alpha}$:
\begin{nitemize}
\item $p_\atom$, if  $\atom \in \q$,
$\avec{z} \subseteq \lambda(N_i)$
and $w_z= \emptyword$, for $z\in \avec{z}$;
\item $p_{\varrho^*(z)}$, if $A(z) \in \q$, $z \in \lambda(N_i)$, and $w_z= \varrho w$; \hfill ($\star$)
\item $p_{\varrho^*(z)}$, $p_{\varrho(z')}$ and $p_{z=z'}$, 
if  $S(z,z') \in \q$ (possibly with $z=z'$),  
 $z,z' \in\lambda(N_i)$ and either $w_z= \varrho w$ or $w_{z'}= \varrho w$.
\end{nitemize}
Now let $\equiv$ be the smallest equivalence relation on the atoms of $\q$ that satisfies
the following condition, for every  variable $z$ in $\q$, 
\begin{equation*}
\text{if } w_z \neq \emptyword \text{ and } z \text{ occurs in both } S_1(\avec{z}_1) \text{ and } S_2(\avec{z}_2), \text{ then } S_1(\avec{z}_1)\equiv S_2(\avec{z}_2). 
\end{equation*}
Let $\q_1, \ldots, \q_n$ be the subqueries corresponding to the equivalence classes of $\equiv$. It is easily verified that the $\q_i$
are pairwise disjoint. Moreover, if $\q_i$ contains only variables~$z$ with $w_z= \emptyword$, then $\q_i$ consists of a single atom.
We can show that the remaining $\q_i$ correspond to tree witnesses.

\smallskip

\noindent\textbf{Claim}. For every $\q_i$ that contains a variable $z$ with $w_z \neq \emptyword$:
\begin{enumerate}[(1)]
\item there is a role $\varrho_{i}$ such that every $w_z \neq \emptyword$ (with $z$ a variable in $\q_i$) begins by $\varrho_{i}$;
\item there is a homomorphism $h_{i}\colon \q_i\to \Cmc^{\exists y \varrho_{i}(a,y)}_\Tmc$ such that $h_i(z)=a w_z$ for every variable $z$ in $\q_i$;
\item there is a tree witness $\t^i$ for $\omq$ that is $\varrho_i$-initiated and such that $\q_i= \q_{\t^i}$
\end{enumerate}
\emph{Proof of claim}.
By the definition of $\q_i$, there exists a sequence $Q_0, \ldots, Q_n$ of subsets of~$\q$
such that $Q_0 = \{S_0(\avec{z}_0)\} \subseteq \q_i$ contains a variable $z_0$ with $w_{z_0}\neq \emptyword$,
$Q_n= \q_i$, and for every $0 \leq \ell < n$,
$Q_{\ell +1}$ is obtained from $Q_\ell$ by adding an atom 
that contains a variable $z$ that appears in $Q_\ell$ and is such that $w_z \neq \emptyword$.
By construction, every atom in $\q_i$ contains a variable $z$ with $w_z \neq \emptyword$.
Let $\varrho_i$ be the first letter of the word $w_{z_0}$, and for every $0 \leq \ell \leq n$,
let $h_\ell$ be the function that maps every variable $z$ in $Q_\ell$ to~$a w_z$.

Statements 1 and 2 can be shown by induction. The base case is trivial.
For the induction step, suppose that at stage $\ell$, we know that every variable $z$ in $Q_\ell$ with
$w_z \neq \emptyword$ begins by $\varrho_i$, and that $h_\ell$ is a homomorphism of $Q_\ell$ into
the canonical model $\Cmc^{\exists y \varrho_{i}(a,y)}_\Tmc$ that satisfies $h_\ell(y)=a w_z$. 
We let $\atom$ be the unique atom in $Q_{\ell+1} \setminus Q_\ell$.
Then $\atom$ contains a variable $z$ that appears in $Q_\ell$ and is such that
$w_z \neq \emptyword$. 
If  $\atom= B(z)$ or $\atom=R(z,z)$, then Statement~1 for $w_z$ is immediate.
For Statement~2, we let $N$ be a node in $T$ such that 
 $z \in \lambda(N)$.
 Since $\avec{w}_N$ is compatible with $N$, 
 it follows that, if $\atom= B(z)$, then 
 $w_z$ ends by a role $\varrho$ with $\Tmc \models \exists y\, \varrho(y,x) \to B(x)$, 
 and, if $\atom=R(z,z)$, then $\Tmc \models  R(x,x)$, 
which proves Statement~2.
Next, consider the case when $\atom$ contains two variables, that is, it is of the form
$R(z,z')$ or $R(z,z')$.
We give the argument for the former (the the latter is analogous).
Let $N$ be a node in $T$ such that $\{z,z'\} \subseteq \lambda(N)$.
Since $\avec{w}_N$ is compatible with $N$,
either
\begin{nitemize}
\item $w_{z'}= w_{z} \varrho$ with $\Tmc \models \varrho(x,y) \to R(x,y)$, or
\item $w_{z} = w_{z'} \varrho$ with $\Tmc \models \varrho(y,x) \to R(x,y)$.
\end{nitemize}
Since $w_z$ begins with $\varrho_i$, the same holds for $w_{z'}$ unless $w_{z'} = \emptyword$,
which proves Statement~1. It is also clear from the way we defined
$h_{\ell+1}$ that it is homomorphism from $Q_{\ell+1}$ to $\Cmc^{\exists y \varrho_{i}(a,y)}_\Tmc$, so Statement~2 holds.

Statement 3 now follows from Statements~1 and~2, the definition of $\q_i$ and the definition of tree witnesses.\hfill
\emph{(end proof of claim)}
\smallskip

Let $\Theta$ consist of all the tree witnesses $\t^i$ obtained in the~claim.
As the $\q_i$ are disjoint, the set $\{\q_{\t^i} \mid \t^i \in \Theta\}$ is independent.
We show that $\avec{\alpha}$ satisfies the disjunct of $\homfn$ that corresponds to $\Theta$; cf.~\eqref{disjunct}.
First, consider some $\atom \in \q \setminus \q_\Theta$. Then, for every variable~$z$ in $\atom$,
we have $w_z = \emptyword$.
Let $N$ be a node in $T$ such that $\avec{z} \subseteq \lambda(N)$.
Then $w_z= \emptyword$, for all $z \in \avec{z}$.
It follows from $(\star)$ that $\avec{\alpha}(p_\atom) =1$.
Next, consider a variable $p_{z=z'}$ such that there is an atom $\atom \in \q_{\t^i}$
with $\avec{z}=\{z,z'\}$. Since $\atom \in \q_i$,
either $w_{z} \neq \emptyword$ or
$w_{z'} \neq \emptyword$.  It follows from $(\star)$ that $\avec{\alpha}(p_{z=z'})=1$.
Finally, let us consider a tree witness $\t^i \in \Theta$, and let $\varrho_i$ be the role
from the claim. 
We show that 
$p_{\varrho_i^*(z)}=1$ for every variable~$z$  in  $\t^i$, which will imply that 
the final disjunction is satisfied by $\avec{\alpha}$. Consider a variable~$z$ in $\t^{i}$. 
By the construction of the query $\q_i = \q_{\t^{i}}$, it contains a binary atom 
$S(\avec{z})$ with $z,z' \in \avec{z}$ and either $w_z \neq \emptyword$ or $w_{z'} \neq \emptyword$. 
By the definition of tree decompositions, there is a node $N$ in $T$ with 
 $z,z' \in \lambda(N)$. Then, by Statement~1 of the claim, 
 either $w_z= \varrho_i w$ or $w_{z'}= \varrho_i w$. 
 Now we can apply $(\star)$ to obtain $p_{\varrho^*_i(z)}=1$, as required. 
\end{proof}

\section{Proofs of Theorems~\ref{thm:linear_hgp} and ~\ref{thm:linear_hgp1}}\label{app:circuit_complexity}

{\sc Theorem~\ref{thm:linear_hgp} (general case).}
$\NL/\poly = \THGP(\ell)$ and $\mNL/\poly = \mTHGP(\ell)$, for any $\ell \ge 2$.
\begin{proof} 
Suppose a polynomial-size THGP $P$ based on a tree hypergraph $H$ with at most $\ell$ leaves computes a Boolean function $f$. We show how to construct a polynomial-size NBP that computes the same $f$. By Theorem~\ref{thm:linear_hgp1} (to be proved below),  $f_H$ can be computed by a polynomial-size NBP $B$. We replace the vertex variables $p_v$ in labels of $B$ by the corresponding vertex labels in $P$ and fix all the edge variables $p_e$ to $1$; see the proof of Proposition~\ref{hyper:program}~(\emph{i}). Clearly, the resulting NBP $B'$ is as required.

\smallskip

The converse direction is given in Section~\ref{sec:NLpoly-THGP}.
\end{proof}

{\sc Theorem~\ref{thm:linear_hgp1}.}{\it\ 
Fix $\ell \ge 2$. For any tree hypergraph $H$ based on a  tree with at most $\ell$ leaves, the function $f_H$ can be computed by an NBP of size polynomial in $|H|$.
}
\begin{proof} 
Let $H = (V, E)$ be a tree hypergraph  and  $T = (V_T, E_T)$ its underlying tree ($V = E_T$ and each $e\in E$ induces a convex subtree $T_e$ of $T$). 
Pick some vertex $r \in V_T$ and fix it as a root of $T$.
We call an independent subset $F \subseteq E$ of hyperedges \emph{flat}
if every simple path in $T$ with endpoint $r$ intersects at most one of
$T_e$, for $e\in F$. 
Note that every flat subset can contain at most $\ell$ hyperedges, 
so the number of flat subsets is  bounded by a polynomial in $|H|$.
We denote by $[F]$ the union of subtrees $T_e$, for all $e\in F$ ($[F]$ is a possibly disconnected subgraph of $T$).
Flat subsets can be partially ordered by taking $F \preceq F'$ if every simple path between the root $r$ 
and a vertex of $[F']$ intersects $[F]$. As usual, $F \prec F'$ if $F \preceq F'$ but $F\ne F'$. 

The required NBP $P$ 
is based on the graph $G = (V_P, E_P)$ with 
\begin{align*} 
V_P & = \bigl\{u_F, \bar{u}_F \mid F \text{ is flat}\bigr\} \cup \bigl\{s,t\bigr\},\\ 
E_P & =\bigl\{(s, u_F), (\bar{u}_F, t), (u_F, \bar{u}_F) \mid F
 \text{ is flat}\bigr\}  \cup \bigl\{(\bar{u}_F, u_{F'}) \mid F, F' \text{ are flat and } F \prec F'\bigr\}.
\end{align*}
%
%
To define labels, we introduce some notation first for sets of edges of $T$ (which are sets of vertices of $H$). For a flat~$F$, 
let $\mathsf{before}(F)$ be the edges of $T$ that lie outside $[F]$ and are accessible from the root $r$ via paths not passing through $[F]$;
we denote by $\mathsf{after}(F)$ the edges of $T$
outside $[F]$ that are accessible from $r$ only via paths passing through $[F]$.
Finally, for flat $F$ and $F'$ with $F \prec F'$, we denote by $\mathsf{between}(F, F')$ the  set of edges in $T$ `between' $[F]$ and $[F']$, that is those edges of $T$ outside $[F]$ and $[F']$ that are accessible from $[F]$ via paths not passing through $[F']$ but are not accessible from the root $r$ via a path not passing through $[F]$; see Fig.~\ref{fig:8}.
Now we are ready to define the labelling for edges of $G$:
\begin{nitemize}
\item each $(u_F, \bar{u}_F)$ is labelled with the conjunction of $p_e$ for $e\in F$;
\item each $(s, u_{F})$ is labelled with the conjunction of $p_v$ for  $v\in \mathsf{before}(F)$;
\item each $(\bar{u}_{F}, u_{F'})$ is labelled with the conjunction of $p_v$ for $v\in\mathsf{between}(F, F')$;
\item each $(\bar{u}_{F}, t)$ is labelled with the conjunction of $p_v$ for $v\in\mathsf{after}(F)$.
\end{nitemize}
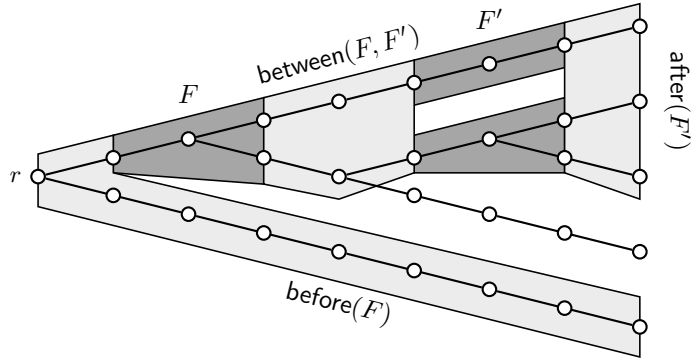
\begin{figure}[t]%
\centering%
\begin{tikzpicture}[xscale=2]
\filldraw[fill=gray!70,semithick] (0.5,0.05) -- ++(0,0.5) -- ++(1,0.5) -- ++(0,-1.15) -- cycle;
\filldraw[fill=gray!70,semithick] (2.5,0.05) -- ++(0,0.5) -- ++(1,0.5) -- ++(0,-1) -- cycle;
\filldraw[fill=gray!70,semithick] (2.5,0.95) -- ++(0,0.6) -- ++(1,0.5) -- ++(0,-0.6) -- cycle;
\filldraw[fill=gray!15,semithick] (0.5,0.55) -- ++(-0.5,-0.25) -- ++(0,-0.7) -- ++(4,-2) -- ++(0,0.8) -- (0.5,0.05) -- cycle;
\filldraw[fill=gray!15,semithick] (1.5,-0.1) -- ++(0,1.15) -- ++(1,0.5) -- ++(0,-1.5) -- ++(-0.5,-0.35) -- cycle;
\filldraw[fill=gray!15,semithick] (3.5,0.05) -- ++(0,2) -- ++(0.5,0.25) -- ++(0,-2.6) -- cycle;
\begin{scope}\normalsize
\node at (1,1.1) {$F$};
\node at (3,2.1) {$F'$};
\node[rotate=15] at (2,1.6) {$\mathsf{between}(F,F')$};
\node[rotate=-90] at (4.25,1) {$\mathsf{after}(F')$};
\node[rotate=-14] at (2,-1.7) {$\mathsf{before}(F)$};
\end{scope}
\node[point,label=left:{$r$}] (r) at (0,0) {};
\node[point] (v1) at (1,0.5) {};
\node[point] (v2) at (1,-0.5) {};
\node[point] (v3) at (2,1) {};
\node[point] (v4) at (2,0) {};
\node[point] (v5) at (2,-1) {};
\node[point] (v6) at (3,1.5) {};
\node[point] (v7) at (3,0.5) {};
\node[point] (v8) at (3,-0.5) {};
\node[point] (v9) at (3,-1.5) {};
\node[point] (v10) at (4,2) {};
\node[point] (v11) at (4,1) {};
\node[point] (v12) at (4,0) {};
\node[point] (v13) at (4,-1) {};
\node[point] (v14) at (4,-2) {};
\begin{scope}[thick]
\draw (r) -- (v1);
\draw (r) -- (v2);
\draw (v1) -- (v3);
\draw (v1) -- (v4);
\draw (v2) -- (v5);
\draw (v3) -- (v6);
\draw (v4) -- (v7);
\draw (v4) -- (v8);
\draw (v5) -- (v9);
\draw (v6) -- (v10);
\draw (v7) -- (v11);
\draw (v7) -- (v12);
\draw (v8) -- (v13);
\draw (v9) -- (v14);
\end{scope}
\node[point] (v01) at (0.5,0.25) {};
\node[point] (v02) at (0.5,-0.25) {};
\node[point] (v03) at (1.5,0.75) {};
\node[point] (v04) at (1.5,0.25) {};
\node[point] (v05) at (1.5,-0.75) {};
\node[point] (v6) at (2.5,1.25) {};
\node[point] (v7) at (2.5,0.25) {};
\node[point] (v8) at (2.5,-0.25) {};
\node[point] (v9) at (2.5,-1.25) {};
\node[point] (v10) at (3.5,1.75) {};
\node[point] (v11) at (3.5,0.75) {};
\node[point] (v12) at (3.5,0.25) {};
\node[point] (v13) at (3.5,-0.75) {};
\node[point] (v14) at (3.5,-1.75) {};
\end{tikzpicture}
\caption{Parts of the underlying tree in the proof of Theorem~\ref{thm:linear_hgp1}: $F \prec F'$ for $F$ with one hyperedge and $F'$ with two hyperedges.}\label{fig:8}
\end{figure}
We claim that under any valuation $\avec{\alpha}$ of $p_e$ and $p_v$, there is a path from $s$ to 
$t$ in $G$ all of whose 
labels evaluate to 1 under~$\avec{\alpha}$ iff $f_H(\avec{\alpha}) = 1$, that is, iff
there is an independent (not necessarily flat) subset $E' \subseteq E$  such that
$\avec{\alpha}(p_e) = 1$ for all $e \in E'$ and $\avec{\alpha}(p_v) = 1$ for all $v\in V\setminus V_{E'}$.
Indeed, any such $E'$ splits  into flat `layers' $F^1, F^2, \dots F^m$ that form a path
\begin{equation*}
s\to u_{F^1} \to \bar{u}_{F^1} \to u_{F^2} \to \cdots \to \bar{u}_{F^m} \to t
\end{equation*}
in $G$ and whose edge labels evaluate to 1:
take $F^1$ to be the set of all hyperedges from $E'$ that are accessible from $r$ via paths
which do not cross (that is come in and go out) any hyperedge of $E'$; take  $F^2$ to be the set of all edges from $E' \setminus F^1$ that are accessible from $r$ via paths
that do not cross any hyperedge of $E' \setminus F^1$, and so on.
Conversely, any path leading from $s$ to $t$ gives us a covering $E'$, which is the union
of all flat sets that occur in the subscripts of vertices on this path.
\end{proof}

\section{Proof of Theorems~\ref{thm:thp_vs_sac} and~\ref{thm:nc1-thgp3}}\label{app:thp_vs_sac}

\begin{lemma}\label{l:6.5}
Any semi-unbounded fan-in circuit $\Cir$ of $\AND$-depth $d$ is equivalent to a semi-unbounded fan-in circuit 
$\Cir'$
of size $2^d |\Cir|$ and $\AND$-depth $d$ such that,
for each $n \leq d$, $\Cir'$ satisfies 
\begin{equation*}
\bigcup\nolimits_{g \in S_n} \leftt(g) \quad \cap \quad \bigcup\nolimits_{g \in S_n} \rightt(g) \ \ = \ \ \emptyset.
\end{equation*}
\end{lemma}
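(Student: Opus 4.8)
I want to show that any semi-unbounded fan-in circuit $\Cir$ of $\AND$-depth $d$ can be replaced by an equivalent circuit $\Cir'$ of size at most $2^d |\Cir|$ and the same $\AND$-depth, satisfying the disjointness condition: for each level $n \le d$, the collection of subcircuits feeding the \emph{left} inputs of the $\AND$-gates at $\AND$-depth $n$ is disjoint from the collection feeding their \emph{right} inputs. The obstruction to this condition in an arbitrary circuit is \emph{sharing}: a single gate $g$ may be reused as the left input of one $\AND$-gate and the right input of another (or the left and right input of the same $\AND$-gate), so the two unions $\bigcup_{g\in S_n}\leftt(g)$ and $\bigcup_{g\in S_n}\rightt(g)$ can overlap. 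The plan is to eliminate this sharing by making private copies of the relevant subcircuits, working downward through the $\AND$-depth levels.

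\textbf{The main construction.} I would proceed by induction on the $\AND$-depth, processing levels from the output ($\AND$-depth $d$) down toward the inputs ($\AND$-depth $0$). The key idea is that each $\AND$-gate $g$ at $\AND$-depth $n$ should own \emph{disjoint private copies} of its left and right input subcircuits. Concretely, I would duplicate, for every $\AND$-gate, the entire subcircuit computing its left input and the entire subcircuit computing its right input, so that no gate is shared between a left-role and a right-role at the same level. Since the $\OR$-gates lying strictly between two consecutive $\AND$-levels do not increase $\AND$-depth, duplicating them does not affect the bound; only the doubling that occurs as we cross each $\AND$-level contributes to the size blow-up. Crossing each of the $d$ $\AND$-levels can at most double the number of copies of any given subcircuit, which yields the factor $2^d$. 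I would make this precise by letting $\Cir'$ be the tree-like unfolding of $\Cir$ \emph{restricted} so that gates are merged (shared) only when they have the same $\AND$-depth and the same left/right role along every path from the output; this guarantees the required disjointness at every level while keeping the multiplicity of any original gate bounded by $2^{(\text{number of } \AND\text{-levels above it})} \le 2^d$.

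\textbf{Verifying the three claims.} Three things then need checking. First, \emph{equivalence}: duplicating subcircuits never changes the Boolean function computed, since each copy computes exactly the same value as the original on every input; a straightforward bottom-up induction on gates confirms $\Cir'(\avec{\alpha}) = \Cir(\avec{\alpha})$ for all $\avec{\alpha}$. Second, \emph{$\AND$-depth preservation}: copying a gate reproduces its $\AND$-depth, and the construction never inserts new $\AND$-gates on any path, so the $\AND$-depth of every copied gate equals that of its original and the overall $\AND$-depth remains $d$. Third, the \emph{disjointness condition}: by construction, a gate that serves as a left input to some $\AND$-gate at depth $n$ is, by role-separation in the unfolding, a different copy from any gate serving as a right input at depth $n$, so the two unions are disjoint. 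Each of these is a routine induction, and I would phrase them as such rather than spelling out every gate.

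\textbf{Expected difficulty.} The main obstacle is bookkeeping the size bound cleanly: I must argue that the number of distinct copies of any original gate is bounded by $2^d$ rather than something larger, which requires care because a single gate can be reachable from the output along many paths that differ in their left/right pattern. The clean way to control this is to observe that the relevant copy of a gate is determined only by the sequence of left/right choices at the $\AND$-gates strictly above it (not by the $\OR$-gates, which do not branch the multiplicity in a way that grows with depth), and since there are at most $d$ such $\AND$-levels, there are at most $2^d$ distinct copies; summing over all $|\Cir|$ original gates gives $|\Cir'| \le 2^d|\Cir|$. Making this path-counting argument airtight — in particular, confirming that the number of copies is governed by $\AND$-levels alone and that the $\OR$-layers between levels can be shared freely within a fixed left/right history — is the one place where I would be careful to avoid an accidental exponential overcount.
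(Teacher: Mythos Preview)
Your approach is correct and essentially the same as the paper's: both separate the left and right subcircuits level by level, doubling at each of the $d$ $\AND$-levels to obtain the $2^d$ factor. The paper's phrasing is slightly crisper---it makes a \emph{single} copy of $\bigcup_{g\in S_n}\leftt(g)$ at the top level $n$ and then applies the induction hypothesis (with parameter $n-1$) to the two now-disjoint halves---but your left/right-history tagging describes the same resulting circuit and the same size bound.
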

\begin{proof}
We show by induction on $n$ that we can reconstruct the circuit in such a way that the property holds for all $i \leq n$,
the $\AND$-depth of the circuit does not change and the size of the circuit increases at most by the factor of $2^n$.

Consider a subcircuit $\bigcup_{g \in S_n} \leftt(g)$ of $\Cir$, take its copy $\Cir''$ and feed
the outputs of $\Cir''$ as left inputs to $\AND$-gates in $S_n$. This at most doubles the size of the circuit and ensures the property for $S_n$.
Now apply the induction hypothesis to both $\Cir''$ and $\bigcup_{g \in S_n} \rightt(g)$  (which do not intersect).
The size of the resulting circuit will increase at most by the factor of $2^{n-1}$ and the property for $S_i$ for $i < n$ will be ensured.\qed
\end{proof}

Let $g_i$ be a gate in $\Cir$. We denote by $T_i$ the subtree of $T$ with the root $v_i$ and, given an input $\avec{\alpha}$, we say that $T_i$ can be \emph{covered} under $\avec{\alpha}$ if the hypergraph with the underlying tree $T_i$ has an independent subset of hyperedges that 
are wholly in $T_i$ and cover all zeros under $\avec{\alpha}$.

\begin{lemma}\label{D2}
For a given input $\avec{\alpha}$ and any $i$, the gate $g_i$ outputs $1$ iff $T_i$ can be covered.
\end{lemma}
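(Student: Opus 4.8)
The plan is to prove both implications of the biconditional by structural induction on the circuit $\Cir$, following the level structure (the $\AND$-depth) and the recursive construction of the labelled tree $T$. The key observation is that the hyperedges of the THGP $P$ were defined so as to mirror exactly the gate structure: a $[w_i,u_i]$ pair handles the ``local'' vertices attached to $g_i$, an $\AND$-hyperedge $[v_j,v_k,v_i]$ forces both child-subtrees $T_j$ and $T_k$ to be covered, and each $\OR$-hyperedge $[v_{j_\ell},v_i]$ offers one way of covering $T_i$ through a single child. I would first record this correspondence precisely as the statement to be proved by induction: for every gate $g_i$ and every input $\avec{\alpha}$, the gate $g_i$ outputs $1$ under $\avec{\alpha}$ if and only if $T_i$ can be covered under $\avec{\alpha}$.

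\emph{Base case (input gates).} If $g_i$ is a literal $\li$, then $T_i$ consists essentially of the single edge $\{u_i,v_i\}$ labelled with $\li$ (together with the $\{w_j,v_j\}$-type edges labelled $0$ or $1$ that are forced). Since no hyperedge of $P$ lies strictly inside $T_i$ in this case, the only way to cover $T_i$ is for its sole relevant edge to evaluate to $1$, i.e.\ for $\li$ to evaluate to $1$ under $\avec{\alpha}$, which is exactly the output of $g_i$. I would here be careful about the $0$-labelled $\{u_j,v_j\}$- and $\{w_j,v_j\}$-edges and the $1$-labelled edges, checking that the $[w_i,u_i]$ hyperedges cover precisely the auxiliary zeros and leave the label edge to be witnessed.

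\emph{Inductive step.} For $g_i = g_j \ANDOP g_k$, the $\AND$-hyperedge $[v_j,v_k,v_i]$ is the unique hyperedge through $v_i$ usable from above, and including it splits the remaining uncovered part into the two subtrees $T_j$ and $T_k$; thus $T_i$ is coverable iff both $T_j$ and $T_k$ are, which by the induction hypothesis is iff both $g_j$ and $g_k$ output $1$, i.e.\ iff $g_i$ outputs $1$. For $g_i = g_{j_1}\OROP\cdots\OROP g_{j_k}$, each $\OR$-hyperedge $[v_{j_\ell},v_i]$ corresponds to choosing the $\ell$th disjunct, covering $v_i$ and the path down to $v_{j_\ell}$ while leaving only $T_{j_\ell}$ to be covered below; hence $T_i$ is coverable iff \emph{some} $T_{j_\ell}$ is coverable, matching the semantics of $\OR$. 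I would invoke Lemma~\ref{l:6.5} (the disjointness of the left/right subcircuits at each $\AND$-depth) to ensure the subtrees $T_j,T_k$ on which the induction hypothesis is applied share no vertices, so that independence of the chosen hyperedges across the two branches is automatic and the two sub-covers can be taken simultaneously.

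\emph{Main obstacle.} The delicate point will be verifying \emph{independence} and \emph{exhaustiveness} of the cover at the join, i.e.\ confirming that the hyperedges selected inside $T_j$ and inside $T_k$, together with the $\AND$-hyperedge at $v_i$ and the forced $[w,u]$-pairs, never overlap and that every $0$-labelled edge of $T_i$ (in particular the ``glue'' edges introduced when forking the path into two branches) is covered by exactly one chosen hyperedge. This is exactly where the normal-form assumption from Lemma~\ref{l:6.5} is essential, and where I would spend the most care: I would argue that because $\bigcup_{g\in S_n}\leftt(g)$ and $\bigcup_{g\in S_n}\rightt(g)$ are disjoint, the branches of $T$ associated with the left and right inputs are vertex-disjoint, so a cover of $T_i$ restricts uniquely to covers of $T_j$ and $T_k$ and, conversely, any two independent sub-covers glue to an independent cover of $T_i$ once the single $\AND$-hyperedge through $v_i$ is adjoined. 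Applying this at $i=m$ (with $v_m$ the root of $T$) then yields that $\Cir$ and $P$ compute the same function, completing the proof.

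\begin{proof}
The claim is proved by induction on the $\AND$-depth of $g_i$, with an inner induction on the structure of $\Cir$. Throughout we use Lemma~\ref{l:6.5} to assume that, for each $\AND$-depth $n$, the subcircuits $\bigcup_{g\in S_n}\leftt(g)$ and $\bigcup_{g\in S_n}\rightt(g)$ are disjoint, so that the corresponding subtrees of $T$ are vertex-disjoint.

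If $g_i$ is an input gate labelled by a literal $\li$, then the only edge of $T_i$ whose label depends on $\avec{\alpha}$ is $\{u_i,v_i\}$, labelled with $\li$; all forced $\{w_j,v_j\}$- and $\{u_j,v_j\}$-edges labelled $0$ are covered by the pairs $[w_j,u_j]$, and the remaining edges are labelled $1$. Hence $T_i$ can be covered iff $\li$ evaluates to $1$ under $\avec{\alpha}$, which is precisely the output of $g_i$.

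For the inductive step, suppose first $g_i = g_j \ANDOP g_k$. The only hyperedge through $v_i$ available from within $T_i$ is the $\AND$-hyperedge $[v_j,v_k,v_i]$; including it covers $v_i$ and the glue edges down to $v_j$ and $v_k$, and leaves exactly the subtrees $T_j$ and $T_k$ to be covered. By disjointness these two covering tasks are independent, so $T_i$ is coverable iff both $T_j$ and $T_k$ are. By the induction hypothesis this holds iff $g_j$ and $g_k$ both output $1$, i.e.\ iff $g_i$ outputs $1$. Now suppose $g_i = g_{j_1}\OROP\cdots\OROP g_{j_k}$. Each $\OR$-hyperedge $[v_{j_\ell},v_i]$ covers $v_i$ and the path to $v_{j_\ell}$, leaving only $T_{j_\ell}$ to be covered, and these hyperedges are the only ones through $v_i$. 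Thus $T_i$ is coverable iff some $T_{j_\ell}$ is coverable, which by the induction hypothesis holds iff some $g_{j_\ell}$ outputs $1$, i.e.\ iff $g_i$ outputs $1$.

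Taking $i=m$, where $g_m$ is the output gate and $v_m$ is the root of $T$, we conclude that $g_m$ outputs $1$ under $\avec{\alpha}$ iff $T=T_m$ can be covered under $\avec{\alpha}$, i.e.\ iff $P(\avec{\alpha})=1$. Hence $P$ computes the same function as $\Cir$.
\end{proof}
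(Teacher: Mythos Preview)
Your proof is correct and follows essentially the same approach as the paper's: a structural induction on the gate $g_i$, with the three cases (input, $\AND$, $\OR$) handled in the same way. The paper's version is slightly terser and leaves the role of Lemma~\ref{l:6.5} and of the auxiliary $[w_j,u_j]$ hyperedges implicit in the inductive cases, whereas you spell out the independence argument more explicitly; both are sound.
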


\begin{proof}
We prove the claim by induction on $i$. If $g_i$ is an input gate and outputs~1, then the label of the edge 
$\{v_i, u_i\}$ is evaluated into 1 under $\avec{\alpha}$, and the remainder of $T_i$ can be covered by a set of
$[w_j, u_j]$-hyperedges. Conversely, if an input gate $g_i$ outputs 0, then no hyperedge can cover $\{v_i, u_i\}$.

If $g_i = g_j \ANDOP g_k$ is an $\AND$-gate and outputs 1, then both its inputs output $1$. We cover both subtrees corresponding to the inputs (by induction hypothesis) and add to the covering the hyperedge
$[v_i, v_j, v_k]$, which covers $T_i$. Conversely, any covering of zeros in
$T_i$ must include the hyperedge $[v_i, v_j, v_k]$, and so the subtrees $T_j$ and $T_k$
must be covered. Thus, by the induction hypothesis, $g_j$ and $g_k$ should output 1, and so does $g_i$.

If $g_i = g_{j_1} \OROP \cdots \OROP g_{j_k}$ is an $\OR$-gate and outputs 1, then one of its inputs, say, $g_j$, is 1. By the induction hypothesis, we cover its subtree and add the hyperedge $[v_i, v_j]$, which forms a covering of $T_i$. Conversely, since $\{v_i, u_i\}$ is labelled by 0, 
any covering of $T_i$ must include a hyperedge of the form $[v_i, v_j]$ for some $j\in \{j_1,\dots, j_k\}$.
Thus $T_j$ must also be covered. By the induction hypothesis, $g_j$ outputs~1 and so does $g_i$.
\qed \end{proof}


\medskip

{\sc Theorem~\ref{thm:nc1-thgp3}.}{\it
$\NC^1 =  \THGP^d$ and $\mNC^1 =  \mTHGP^d$, for any $d \geq 3$.
}

\begin{proof}
To prove $\NC^1 \subseteq \THGP^3$, consider a polynomial-size formula $\Cir$, which we represent as a tree of gates $g_1, \dots, g_m$ enumerated so that $j < i$ whenever  $g_j$ is an input of $g_i$. We assume that $\Cir$ has negated variables in place of $\NOT$-gates. We now construct the tree, $T$, underlying the THGP $P$ we are after: $T$  contains triples of vertices $u_i,v_i,w_i$ partially ordered in the same way as the $g_i$ in $\Cir$. We then remove vertex $w_m$ and make $v_m$ the root of $T$.
The THGP $P$ is based on the hypergraph whose vertices are the edges of $T$ and whose hyperedges comprise the following:
\begin{nitemize}
\item $[u_i, u_j]$, for each $i < m$, where $j<i$ and $g_j$ is the input of $g_i$;

\item $[v_i,v_j,v_k]$, for each $g_i = g_j \ANDOP g_k$;

\item $[v_i, v_j, w_k], [v_i, w_j, v_j]$, for each $g_i = g_{j} \OROP g_{k}$.
\end{nitemize}
Finally, if an input gate $g_i$ is a literal $\li$, we label the edge $\{u_i, v_i\}$ with $\li$; we label all other edges with $0$. It is not hard to check that $P$ is of degree~3, has size polynomial in~$|\Cir|$, and computes the same function as $\Cir$.

\smallskip

The inclusion $\NC^1 \supseteq \THGP^d$ follows from the proof of $\LOGCFL/\poly \subseteq \THGP$ in Theorem~\ref{thm:thp_vs_sac}. Indeed, if the degree of the THGP is at most $d$, then the disjunction in~\eqref{eq:sac_construction} has at most $d+1$ disjuncts, and so the constructed circuit has depth $O(\log s)$.
\end{proof}

\section{Proofs for Section~\ref{sec:7}}

\hspace*{1em} {\sc Theorem~\ref{bbcq-ndl}.} {\it For any fixed $\ell \ge 2$, all tree-shaped OMQs with at most $\ell$ leaves have polynomial-size NDL-rewritings.}

\begin{proof}
Fix $\ell \ge 2$ and let $\omq$ be a tree-shaped OMQ with at most $\ell$ leaves. By Theorem~\ref{prop:tree-shaped}, $\HG{\omq}$ is a tree hypergraph whose underlying tree has at most $\ell$ leaves. By Theorem~\ref{thm:linear_hgp1}, $\twfn$ is computable by a polynomial-size monotone NBP, and so,
since $\mNL/\poly \subseteq \mP/\poly$, $\twfn$ can be computed by a polynomial-size monotone Boolean circuit. It  remains to apply Theorem~\ref{TW2rew}~(\emph{ii}). 
\end{proof}

{\sc Theorem~\ref{linear-lower}.} {\it 
There is an OMQ with ontologies of depth 2 and linear CQs any PE-rewriting of which is of superpolynomial size $n^{\Omega(\log n)}$.
}

\begin{proof}
We consider the function $f = \reach$. Since $f\in \mNL/\poly$, by Theorem~\ref{thm:linear_hgp}, there is a polynomial-size monotone HGP that is based on a hypergraph $H$ with underlying tree with~2 leaves and computes $f$. Consider now the OMQ $\OMQT{H}$ for $H$ defined in Section~\ref{sec:5.3}, which has an ontology of depth~2. By Theorem~\ref{tree-hg-to-query}~(\emph{ii}), $f$ is a subfunction of $f^\vartriangle_{\OMQT{H}}$. By Theorem~\ref{rew2prim}~(\emph{i}),  no PE-rewriting of the OMQ $\OMQT{H}$ can be shorter than $n^{\Omega(\log n)}$.
%
%
%
\qed
\end{proof}

{\sc Theorem~\ref{btw-ndl}.} {\it 
For any fixed $t >0$, all OMQs with the PFSP and CQs of  treewidth at most $t$ have polynomial-size NDL-rewritings.
}

\begin{proof}
Fix a $t > 0$ and a class of OMQs with PFSP. 
Take an OMQ $\omq$  of treewidth at most $t$ from the class. By  Theorem~\ref{DL2THP}, there is 
a polynomial-size monotone THGP that computes~$\homfn$. Since
$\mTHGP \subseteq \mLOGCFL/\poly \subseteq \mP/\poly$ 
(Theorem~\ref{thm:thp_vs_sac}), $\homfn$ can be computed by a polynomial-size 
monotone Boolean circuit. It remains to apply Theorem~\ref{Hom2rew}~(\emph{ii}).
\end{proof}

{\sc Theorem~\ref{depth-one-btw}.}  {\it
For any fixed $t>0$, all OMQs with ontologies of depth 1 and CQs of treewidth at most $t$ have polynomial-size PE-rewritings.
}

\smallskip

The main argument underlying Theorem~\ref{depth-one-btw} was given in the body of the paper. To complete the proof, we
give the following two lemmas, which are the modified versions of Theorems~\ref{TW2rew} and~\ref{DL2THP} mentioned in the body. 

\begin{lemma}
Theorem~\ref{TW2rew} continues to hold if $\twfn$ is replaced by $\homfnprime$. 
\end{lemma}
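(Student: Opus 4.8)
The plan is to re-run the proof of Theorem~\ref{TW2rew} verbatim but with $\homfnprime$ in place of $\twfn$, observing that the only structural difference between the two functions is the formula substituted for each hyperedge variable. Recall that $\twfn$ is the hypergraph function~\eqref{hyper} in which $p_e$ stands for a whole tree witness, whereas $\homfnprime$ replaces each $p_e$ by the conjunction $\bigwedge_{R(z,z')\in\q_\t} p_{z=z'} \wedge \bigwedge_{z\in\tr\cup\ti} p_{\exists y P_\t(z,y)}$ for an \emph{explicit} OMQ $\omq = (\T',\q)$. Since the number of these new propositional variables is polynomial in $|\omq|$, nothing in the size bookkeeping changes asymptotically.

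First I would treat claim~(\emph{i}). Suppose $\homfnprime$ is computed by a (monotone) Boolean formula $\Phi$. I would form $\Phi^*(\avec{x})$ by replacing each $p_{S(\avec{z})}$ with $S(\avec{z})$, each $p_{z=z'}$ with the equality atom $z=z'$, each $p_{\exists y P_\t(z,y)}$ with the disjunction $\bigvee_{\T\models\tau(x)\to\exists y\, P_\t(x,y)}\tau(z)$ (dropping normalisation predicates as in Remark~\ref{ignored}), and adding the prefix $\exists\avec{y}$. By construction this substitution turns the defining formula of $\homfnprime$ into exactly the modified rewriting $\qtw'(\avec{x})$ for the explicit OMQ, so $\Phi^*$ is equivalent to $\qtw'$; since $\qtw'$ is a rewriting of $\omq$ over complete data instances by Theorem~\ref{Hom2rew}~(\emph{i}), so is $\Phi^*$, and Proposition~\ref{complete} removes the completeness assumption. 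The size bound $O(|\Phi|\cdot|\omq|)$ follows because each substituted subformula has length $O(|\omq|)$.

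Next I would treat claim~(\emph{ii}), the NDL case. Given a monotone circuit $\Cir$ computing $\homfnprime$, I would build an NDL-program exactly as in the proof of Theorem~\ref{TW2rew}~(\emph{ii}), introducing an intermediate predicate $G_i(\avec{z})$ for each gate $g_i$, with the input gates for $p_{S(\avec{z})}$, $p_{z=z'}$ and $p_{\exists y P_\t(z,y)}$ seeded by clauses $S(\avec{z})\to G_i(\avec{z})$, $(z=z')\to G_i(\avec{z})$ (handled via variable identification), and $\tau(u)\to G_i(\avec{z}[\,\cdot\,/u])$ for the generating concepts $\tau$ of $\t$, and the $\AND$/$\OR$ gates translated as before. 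Correctness reduces to the fact, already established for $\qtw'$, that the seeded predicates evaluate on $\I_\A$ precisely to the truth values the corresponding propositional variables take in $\homfnprime$.

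The only genuinely new point — and the step I expect to require the most care — is verifying that the equality variables $p_{z=z'}$ behave correctly under both substitutions, since they are absent from the original $\twfn$. For the FO/PE direction this is routine because equality atoms are first-order formulas; for the NDL direction I would argue that, in an explicit OMQ, the predicate $P_\t$ forces all variables of a chosen tree witness to a single element, so the equality atoms are automatically satisfied whenever the witness is used, and hence can be encoded by variable identification in the head of the seeding clause rather than as genuine equality reasoning. Once this is checked, everything else is a direct transcription of Theorems~\ref{TW2rew} and~\ref{Hom2rew}, and the lemma follows.
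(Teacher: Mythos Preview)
Your proposal is correct and follows essentially the same route as the paper: both substitute $p_{\atom}\mapsto\atom$, $p_{z=z'}\mapsto(z=z')$, and $p_{\exists y P_\t(z,y)}\mapsto\bigvee_{\tau\text{ generates }\t}\tau(z)$ into the defining formula of $\homfnprime$, and then argue that the resulting first-order formula is equivalent to the tree-witness rewriting, after which the rest is a transcription of Theorems~\ref{TW2rew} and~\ref{Hom2rew}. One small caution: the substituted formula is not \emph{literally} $\qtw'$ for the explicit OMQ (the quantifier structure $\bigwedge_z\bigvee_\tau$ differs from $\bigvee_\varrho\bigwedge_z\varrho^*$), so you still need the short argument that the $P_\t$-disjunct dominates all other $\varrho$-disjuncts---but that is exactly the ``key step'' the paper flags without spelling out.
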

\begin{proof}
The proof proceeds similarly to the proof of Theorem~\ref{Hom2rew}. The key step in the proof is showing that the 
FO-formula 
$$
\exists \avec{y} \!\!\!\!\!\bigvee_{\substack{\Theta \subseteq \twset\\ \text{ independent}}}\!\!\!\!\!\!\!
\Big(\bigwedge_{\atom \in \q \setminus \q_\Theta}\!\!\! \rsz \,\,\wedge \bigwedge_{\t \in \Theta} \big(\bigwedge_{R(z,z')\in\q_\t}\hspace{0cm} \!\!\!\!\!\!z=z' \,\wedge \bigwedge_{\substack{\phantom{y}\\z \in \tr \cup \ti}} \bigvee_{\t \text{ generated by } \tau} \hspace*{-1em}\tau(z) \big)\Big)
$$
obtained from $\homfnprime$ by replacing variables $p_{\atom}$, $p_{z=z'}$, and $p_{\exists y P_\t(z,y)}$ by $\atom$, $z=z'$, 
\\and $\bigvee_{\t \text{ generated by } \tau} \tau(z)$ respectively is equivalent the tree-witness rewriting $\qtw$. 
\end{proof}

\begin{lemma} In the setting of Section~\ref{sec:7.5}, for the modified hypergraph program $P'_{\omq}$ we still have $f_{P'_{\omq}}(\avec{v}) = \homfnprime(\avec{v})$.
\end{lemma}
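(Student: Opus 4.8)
The plan is to mirror the proof of Theorem~\ref{DL2THP} (given in Appendix~\ref{app:proof:5.13}) almost verbatim, tracking the two modifications introduced in Section~\ref{sec:7.5}: first, that the ontology $\T$ is now explicit, so that every tree witness $\t = (\tr,\ti)$ carries a dedicated initiating predicate $P_\t$ with $\ti = \{z\}$ for a single variable $z$; and second, that in defining the hyperedges $E^k_i$ of $P'_{\omq}$ we only range over those types $\avec{w}_k$ whose nonempty components are of the form $P_\t$ for the \emph{unique} tree witness $\t$ with $\ti = \{\lambda_j(N_i)\}$. Since $\T$ is of depth~$1$, every labelled null in $\C_{\T,\A}$ has the form $aP_\t$ for a single symbol $P_\t$, so the fundamental set $\Omega_{\omq}$ consists of the empty word $\varepsilon$ together with the words $P_\t$. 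This collapses the words $w$ appearing in the compatibility analysis of Theorem~\ref{DL2THP} to single symbols, which is exactly why the role that initiates $\t$ is forced to be $P_\t$.

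The argument splits into the two directions of $f_{P'_{\omq}}(\avec{v}) = \homfnprime(\avec{v})$, reusing the structure of the Theorem~\ref{DL2THP} proof. For the $(\Rightarrow)$ direction, I would start from an independent $\Theta \subseteq \twset$ satisfying the disjunct of $\homfnprime$; because the ontology is explicit, the choice of initiating role in the inner conjunction $\bigwedge_{z\in\tr\cup\ti} p_{\exists y P_\t(z,y)}$ is no longer a disjunction over roles $\varrho$ but is pinned to $P_\t$. I would then build the type $\avec{w}$ of each node $N$ exactly as before, setting $\avec{w}[\nu_N(z)] = P_\t$ when $z \in \ti$ for some $\t \in \Theta$ (using the homomorphism $h_\t \colon \q_\t \to \C_\T^{\exists y P_\t(a,y)}$, which maps each internal variable to $aP_\t$), and $\varepsilon$ otherwise. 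Well-definedness again follows from independence of $\Theta$ and from the fact that, $\T$ being of depth~$1$, distinct tree witnesses have distinct internal variables. Compatibility of $\avec{w}$ with $N$, and compatibility of type-pairs across edges, are verified exactly as in Theorem~\ref{DL2THP}, except that the case analysis on word suffixes $w\varrho$ degenerates to the single-symbol case. The independent covering set $E'$ of hyperedges and the verification that it covers all zeros carry over unchanged, noting that the restricted definition of $E^k_i$ still contains the hyperedge associated with the (now uniquely determined) type of each node.

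For the $(\Leftarrow)$ direction, I would take an independent covering $E'$ of $P'_{\omq}$, extract the unique index $\mu(N_i)$ for each node, assign a unique word $w_z \in \{\varepsilon\} \cup \{P_\t\}$ to each variable via the connectedness of the tree decomposition, and reconstruct the equivalence relation $\equiv$ and the subqueries $\q_1,\dots,\q_n$ as in the original proof. The Claim that each nontrivial $\q_i$ is $P_{\t^i}$-initiated and equals $\q_{\t^i}$ for a tree witness $\t^i$ is proved by the same inductive argument on distance from the root of $\q_i$; the depth-$1$ restriction simplifies this since the word accumulated along any path can only grow by a single symbol $P_\t$, so Statement~(1) of the Claim is immediate. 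Setting $\Theta = \{\t^i\}$ and reading off that $\avec{v}$ satisfies the corresponding disjunct of $\homfnprime$ completes the direction.

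The main obstacle I anticipate is \emph{not} in the homomorphism bookkeeping, which is routine, but in cleanly justifying that the restricted family of types used in $E^k_i$ is \emph{complete} for the explicit ontology---that is, that every type arising from a genuine homomorphism $h\colon\q\to\C_{\T',\A}$ is of the permitted form $\avec{w}[j] \in \{\varepsilon, P_\t\}$ with $\t$ the unique witness having $\ti = \{\lambda_j(N_i)\}$. This requires observing that in the explicit OMQ $\omq'$ the labelled nulls are generated \emph{only} through the predicates $P_\t$ (the original roles $\varrho$ being derived from $P_\t$ via role inclusions $P_\t(x,y)\to\varrho(x,y)$), so that any homomorphic image of an internal variable must be named by some $P_\t$, and depth~$1$ forces this to be a single such symbol, determined by which variable is internal. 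Once this completeness observation is in place, the bounded-degree claim (degree at most $2^m + 2^{2m}$) and the equality $f_{P'_{\omq}} = \homfnprime$ follow as in Theorem~\ref{DL2THP}, and $\homfnprime$ is seen to be obtained from $\homfn$ by always choosing $P_\t$ as the initiating predicate, as claimed in the body.
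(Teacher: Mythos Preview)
Your proposal is correct and follows essentially the same route as the paper's proof, which also refers back to Theorem~\ref{DL2THP} and records only the two tweaks: in the $(\Rightarrow)$ direction one takes $P_\t$ rather than choosing some $\varrho_\t$, and in the $(\Leftarrow)$ direction the depth-$1$ assumption forces each equivalence class $\q_j$ to have a \emph{single} variable $v_j$ with $w_{v_j}\neq\varepsilon$, which immediately pins $\t^j$ as the unique tree witness with $\ti^j=\{v_j\}$ and $\varrho_j=P_{\t^j}$. You have all of this, though you leave the single-variable observation implicit inside your ``depth-$1$ simplification'' remark; making it explicit (neighbours of any $z$ with $w_z\neq\varepsilon$ must have empty word, since depth~$1$ forbids both extension and reflexive loops) is what makes the identification $\q_j=\q_{\t^j}$ a one-liner rather than an inductive argument.
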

\begin{proof}
The proof closely follows that of Theorem~\ref{DL2THP}. For the first direction of the proof, the only notable difference is that instead of selecting a role $\varrho_\t$
that satisfies the disjunct corresponding to the tree witness $\t$, we must take the special role $P_\t$. For the second direction, we use the assumption that $\T$ is of depth $1$ to show that every query~$\q_j$ (constructed according to the equivalence relation) has a single variable $v_j$
such that $w_{v_j} \neq \varepsilon$. This allows us to prove a stronger version of the claim in which $\q_j = \q_{\t^j}$, with $\t^j$ the unique tree witness with $\ti^j = \{v_j\}$, and the selected role $\varrho_j$ is equal to the special predicate $P_{\t^j}$ associated with $\t^j$. \end{proof}

%

{\sc Theorem \ref{depth-one-tree}. }{\it
All tree-shaped OMQs with ontologies of depth 1 have polynomial-size $\mathsf{\Pi}_4$-rewritings.
}
\begin{proof}
Take an OMQ $\omq = (\T, \q)$ with $\T$ of depth 1 and a tree-shaped $\q$. 
By Theorems~\ref{depth1} and \ref{prop:tree-shaped}, $\HG{\omq}$ is a polynomial-size 
tree hypergraph of degree at most~2. By Proposition \ref{hyper:thgp}~(\emph{i}), 
$\twfn$ can be computed by a polynomial-size THGP $P$ of degree at most~2.
By Theorem \ref{thm:pi3-thgp2}, there is a polynomial-size 
monotone $\Pitr$-circuit computing  $f^\triangledown_\omq$.  By a simple 
unravelling argument, it follows that there is polynomial-size monotone Boolean formula
computing $f^\triangledown_\omq$. It remains to apply Theorem \ref{TW2rew}~(\emph{i}) and
conclude that there is a polynomial-size positive existential $\mathsf{\Pi}_4$-rewriting for $\omq$.
\end{proof}

\section{Proof of  LOGCFL membership in Theorem~\ref{logcfl-c-arb}}\label{app:complexity}

We say that an iteration of the \textbf{while} loop is \emph{successful} if the procedure \bbarbalgo{}  does not return \false{}; in particular, if none of the \Check{} operations returns \false{}. The following properties can be easily seen to hold by examination of  \bbarbalgo{} and straightforward induction:  
\begin{align}
\label{inv:B1}
&\text{For every tuple  } (z\mapsto(a,n),z')\in \frontier,  \ z'  \text{ is a child of } z \text{ in } T.\\
\label{inv:B2}
&\text{For every tuple } (z\mapsto(a,n),z')\in\frontier,   \text{ we have } n \leq |\stack|.\\
\label{inv:B3}
&\text{All tuples } (z\mapsto(a,n),z')\in\frontier \text{ with } n> 0 \text{ share the same } a.\\ 
\label{inv:B4}
& \text{Once } (z\mapsto(a,n), z') \text{ is added to } \frontier, \text{ no tuple of the form } (z\mapsto(a',n'), z')\\[-3pt] 
\notag{}&\hspace*{4em} \text{can ever be added to } \frontier.\\ 
\label{inv:B5}
&\text{In every successful iteration, either at least one tuple is removed from } \frontier\\[-3pt]
\notag{}&\hspace*{4em}\text{or $\frontier$ is unchanged but one } \varrho \text{ is popped from the } \stack.\\
\label{inv:B6}
&\text{If } (z\mapsto(a,n), z') \text{ is removed from } \frontier \text{ in a successful iteration,}\\[-3pt]
\notag{}&\hspace*{4em}\text{then a tuple of the form } (z'\mapsto(a',n'),z'') \text{ is added to } \frontier,\\[-3pt]
\notag{}&\hspace*{6em} \text{ for every child } z'' \text{ of } z' \text{ in } T. 
\end{align}


\begin{proposition}\label{logcfl-upper-prop:termination}
Every execution of \bbarbalgo\ terminates.
\end{proposition}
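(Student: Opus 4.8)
The plan is to establish termination by exhibiting a quantity that strictly decreases in a well-founded ordering at every successful iteration of the \textbf{while} loop. Since a non-successful iteration immediately returns \false{} and thus terminates the execution, I only need to control the successful iterations. The natural candidate is a lexicographically ordered pair built from the data recorded in $\frontier$ and $\stack$. First I would observe, using the invariants~\eqref{inv:B1}--\eqref{inv:B6}, that the set of variables appearing as the \emph{third} component $z'$ of tuples in $\frontier$ is tightly controlled: by~\eqref{inv:B4}, once a tuple with a given $z'$ has been added, no further tuple with the same $z'$ is ever added, and by~\eqref{inv:B6}, removing a tuple with third component $z'$ replaces it (when $z'$ is not a leaf) by tuples whose third components are the children of $z'$ in $T$. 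Because $T$ is a finite tree, the total number of tuples that can ever be added to $\frontier$ over the whole execution is bounded by $|T|$, the number of nodes.

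The key step is to design the termination measure so that each of the four Options strictly decreases it. I would use the lexicographic pair
\begin{equation*}
\Phi \ = \ \bigl(\, N_{\text{rem}}, \ |\stack| \,\bigr),
\end{equation*}
where $N_{\text{rem}}$ is the number of tuples that still \emph{remain to be added} to $\frontier$ during the rest of the execution, measured via the finite supply guaranteed by~\eqref{inv:B4} and~\eqref{inv:B6}, plus the tuples currently in $\frontier$. More concretely, I would count $N_{\text{rem}}$ as the number of query variables $z'$ that have not yet been removed from $\frontier$ as a third component (equivalently, the variables of $\q$ whose processing is not yet complete). Options~1, 2, and~4 each remove at least one tuple from $\frontier$ and add only tuples for the \emph{children} of the processed variable, so by the tree structure they strictly decrease $N_{\text{rem}}$; Option~3 either removes the nonempty $\deepest$ set (again decreasing $N_{\text{rem}}$) or, when $\deepest=\emptyset$, leaves $\frontier$ unchanged but pops one symbol from $\stack$, strictly decreasing the second coordinate $|\stack|$ while keeping $N_{\text{rem}}$ fixed. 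By~\eqref{inv:B5}, these are the only two behaviours a successful iteration can exhibit, so $\Phi$ strictly decreases in the lexicographic order in every successful iteration.

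I would then close the argument by noting that $\Phi$ takes values in $\mathbb{N}\times\mathbb{N}$, which is well-founded under the lexicographic order, so no infinite descending chain exists; hence the \textbf{while} loop executes only finitely many successful iterations, and since the first non-successful iteration halts the procedure, \bbarbalgo{} always terminates. To make the bound on $|\stack|$ rigorous rather than merely asserting it is eventually forced down, I would invoke the explicit height cap $|\stack| < 2|\T|+|\q|$ enforced by the guards on Options~2 and~3 together with invariant~\eqref{inv:B2}, so the second coordinate ranges over a bounded initial segment of $\mathbb{N}$; this also gives a concrete polynomial bound on the total number of iterations, which will matter for the later NAuxPDA time analysis.

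The main obstacle I anticipate is handling Option~3 cleanly, because it is the one iteration type that can leave $\frontier$ unchanged, so the first coordinate $N_{\text{rem}}$ does not decrease there and the whole argument rests on the second coordinate strictly dropping. I must verify carefully that whenever $\deepest=\emptyset$ a pop genuinely occurs (guaranteed by the guard $|\stack|>0$ and the unconditional \textbf{pop} statement), and that an unbounded run of such ``empty'' Option~3 steps cannot happen—this follows because $|\stack|$ is a nonnegative integer that strictly decreases and cannot go below $0$. The subtler point is that a pop can be interleaved with pushes from later Option~2 steps, so $|\stack|$ is not globally monotone; the lexicographic ordering resolves this, since any push happens only in Option~2, which simultaneously decreases $N_{\text{rem}}$ and therefore dominates the increase in $|\stack|$ in the lexicographic comparison. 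Confirming this domination precisely, using~\eqref{inv:B5} to rule out any iteration that both leaves $\frontier$ unchanged and fails to pop, is the crux of the proof.
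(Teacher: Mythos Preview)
Your proposal is correct and follows essentially the same approach as the paper. The paper's proof is a terse version of your argument: it bounds the total number of tuples that can ever enter $\frontier$ by the number of edges in $T$ (using \eqref{inv:B1} and \eqref{inv:B4}) and then invokes \eqref{inv:B5} to conclude that each tuple is eventually removed, whereas you make the same counting explicit via the lexicographic measure $(N_{\text{rem}},|\stack|)$ and verify the decrease case by case; the underlying idea---finitely many tuples to process, interleaved with boundedly many pure pops---is identical.
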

\begin{proof}
A simple examination of \bbarbalgo{} shows that the only possible source of non-termination 
is the \textbf{while} loop, which continues as long as $\frontier$ is non-empty. 
By~\eqref{inv:B1} and~\eqref{inv:B4}, the total number of tuples that may appear in $\frontier$ at any point cannot exceed the number of edges in~$T$, 
which is itself bounded by $|\q|$. By~\eqref{inv:B4} and~\eqref{inv:B5}, every tuple is added at most once and is eventually removed from $\frontier$. Thus, either the algorithm will exit the \textbf{while} loop by returning \false{} (if one of the \Check operations fails), or it will eventually exit the  loop after reaching an empty $\frontier$. 
\end{proof}

\begin{proposition}\label{logcfl-upper-prop:correctness}
There exists an execution of \bbarbalgo{} that returns \true{} on input $((\Tmc,\q),\Amc, \avec{a})$ if and only if $\Tmc, \Amc \models \q(\avec{a})$. 
\end{proposition}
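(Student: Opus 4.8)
The plan is to prove both directions by relating successful executions of \bbarbalgo{} to homomorphisms $h\colon\q(\avec{a})\to\canmod$, following the same tree-traversal strategy used for \bbbdalgo{} in Theorem~\ref{nl-bb}, but now with the stack $\stack$ encoding the suffix of the current labelled null. The key semantic invariant I would maintain is that whenever a tuple $(z\mapsto(a,n),z')$ sits in $\frontier$, the partial map constructed so far sends $z$ to the element $a\,\stack_{\leq n}\in\Delta^{\canmod}$, and all atoms of $\q$ whose variables have already been fully processed are satisfied in $\canmod$. The bound $|w|\le 2|\Tmc|+|\q|$ on relevant suffixes (justified by~\cite{ACKZ09}) guarantees that restricting the guessed words to this length loses no homomorphisms, so it suffices to work with these bounded elements of $U$.

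For the direction $(\Leftarrow)$, assuming $\Tmc,\Amc\models\q(\avec{a})$, I would fix a homomorphism $h\colon\q(\avec{a})\to\canmod$ and show how to schedule the guesses so that \bbarbalgo{} returns \true{}. Viewing $\q$ as the directed tree $T$ rooted at $z_0$, I would initialise by guessing $h(z_0)=a_0w_0$ symbol-by-symbol, pushing $w_0$ onto $\stack$, and verifying compatibility via \canMapTail{}. The main subtlety is the scheduling discipline forced by the single shared stack: since Option~3 pops a symbol and must be applied simultaneously to \emph{all} deepest tuples, I would process the frontier in an order that respects the stack depth, always extending ``forward'' children (Options~1,~2,~4) before retreating (Option~3), and when retreating, popping exactly when every tuple at the current maximal depth is ready. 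The homomorphism $h$ tells us, for each edge $(z,z')$ of $T$, whether $h(z')$ is an ABox individual (Option~1), a forward successor $h(z)\varrho$ (Option~2), a predecessor $h(z)=h(z')\varrho$ (Option~3), or equal to $h(z)$ under a reflexive role (Option~4); the \isGenerated{} and \canMapTail{} checks succeed precisely because $h$ respects all unary and binary atoms and the structure of $\canmod$.

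For the direction $(\Rightarrow)$, given a successful execution returning \true{}, I would read off a homomorphism from the sequence of guesses using invariants~\eqref{inv:B1}--\eqref{inv:B6}. By~\eqref{inv:B4} each variable $z'$ of $\q$ is assigned an image exactly once (when the unique tuple ending in $z'$ is processed), and by~\eqref{inv:B6} every variable eventually receives an image, so the map $h$ sending each variable to the element $a\,\stack_{\leq n}$ recorded at processing time is well-defined and total. The \Check{} operations guarantee that each binary atom $P(z,z')$ is mapped to an edge of $\canmod$ (the four options correspond exactly to the four ways an edge can arise in the canonical model: ABox edges, forward existential successors, inverse edges, and reflexive loops), \canMapTail{} guarantees each unary atom and the answer-variable constraints $h(x_i)=a_i$ hold, and invariant~\eqref{inv:B3} ensures the stack-encoded words are consistent with a single ABox anchor, so $h$ is a genuine homomorphism witnessing $\Tmc,\Amc\models\q(\avec{a})$.

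The main obstacle, and the part deserving the most care, is verifying that the rigid stack-processing discipline in the $(\Leftarrow)$ direction is actually \emph{realisable} for every homomorphism --- that is, that one can always order the frontier operations so that whenever Option~2 (a forward push) is needed we indeed have $n=|\stack|$, and whenever Option~3 (a pop) is applied, it is legitimately applied to all and only the deepest tuples without destroying stack information still required by shallower tuples. This is exactly where the bounded number of leaves $\ell$ matters: the frontier never holds more than $\ell$ tuples, so the branches of $\q$ sharing a common stack prefix can be processed in a depth-first manner that keeps the stack synchronised with the deepest active branch. I would make this precise by an induction on the combined measure (remaining frontier size, remaining stack height), checking that at each step at least one of the four options is enabled by $h$ and that the stack-depth bookkeeping in~\eqref{inv:B2} is preserved; the example in Example~\ref{ex-second-algo} already illustrates the required ``process forward, then retreat together'' pattern that the general argument must formalise.
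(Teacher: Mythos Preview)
Your proposal is essentially the same approach as the paper's: maintain the invariant that a tuple $(z\mapsto(a,n),z')\in\frontier$ means $h(z)=a\,\stack_{\le n}$, schedule ``forward'' options before ``retreat'', and in the converse direction read off a partial homomorphism from each iteration. The paper formalises the $(\Leftarrow)$ scheduling by an explicit case priority (first try Option~1, else Option~4 at depth $|\stack|$, else Option~2 at depth $|\stack|$, else Option~3), and then verifies case-by-case that the checks succeed and the invariant is preserved; termination is handled separately via Proposition~\ref{logcfl-upper-prop:termination} rather than by a combined well-founded measure.

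One point to correct: the bounded-leaf assumption plays \emph{no} role in the correctness argument. The scheduling discipline you describe works for arbitrary tree-shaped $\q$, because whenever you reach Case~4 (none of Options~1,~2,~4 applies at the current top depth), every tuple at depth $|\stack|$ necessarily has $h(z')$ equal to the \emph{parent} of $h(z)$ in $\canmod$, so popping is simultaneously correct for all of them; tuples at strictly smaller depth are unaffected since only the top symbol is removed. The bound on $\ell$ is used only in Proposition~\ref{nauxpda} to show that $\frontier$ fits in logarithmic space. Dropping the appeal to bounded leaves and instead arguing directly from the case priority will give you the clean proof.
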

\begin{proof}
($\Leftarrow$) Suppose that $\Tmc, \Amc \models \q(\avec{a})$. Then there exists a homomorphism
\mbox{$h\colon \q \to \canmod$} such that $h(\avec{x}) = \avec{a}$. Without loss of generality we may choose $h$ so that the image of $h$ consists
of elements $aw$ with $|w| \leq 2 |\Tmc| + |\q|$~\cite{ACKZ09}. We use $h$ to specify an execution of \bbarbalgo($(\Tmc, \q), \Amc,  \avec{a})$ that 
returns \true. First, we fix an arbitrary variable $z_0$ as root, and then, we choose the element $h(z_0)=a_0w_0$.
Since $h$ defines a homomorphism of $\q(\avec{a})$ into $\canmod$, the call \canMapTail{$z_0$, $a_0$, $\topof(\stack)$} 
returns \true.
We initialise $\stack$ to $w_0$ and $\frontier$ to $\{(z_0 \mapsto (a_0, |\stack|), v_i \mid v_i \text{ is a child of } v_0 \}$.
Next, we enter the \textbf{while} loop. Our aim is to make the non-deterministic choices to satisfy the following invariant: 
\begin{equation}\label{eq:logcfl:inv}
\text{If } \ \ (z\mapsto (a,m),z')\in\frontier, \ \ \ \ \text{ then } \ \ h(z)=a\, \stack_{\leq m}.
\end{equation}
Recall that $\stack_{\leq m}$ denotes the word obtained by concatenating the first $m$ symbols of $\stack$. 
 Observe that before the \textbf{while} loop,  property~\eqref{eq:logcfl:inv} is satisfied. 
At the start of each iteration of the while loop, we proceed as follows.

\medskip

\noindent\textsc{[Case 1.]} If $\frontier$ contains $(z\mapsto(a,0),z')$ such that $h(z') \in \ind(\Amc)$, then we choose Option~1. We remove the tuple from $\frontier$ and choose the individual $a' = h(z')$ for the guess. 
As $a=h(z)$ (by~\eqref{eq:logcfl:inv}) and $h$ is a homomorphism, we have $(a,a')\in P^{\canmod}$, for all $P(z,z')\in\q$, and the call \canMapTail{$z'$, $a'$, $\varepsilon$} returns \true. 
We  thus add \mbox{$(z'\mapsto(a',0),z'')$} to $\frontier$ for every child $z''$ of $z'$ in $T$. These additions to $\frontier$ clearly preserve the invariant. 

\medskip

\noindent\textsc{[Case 2.]} If Case 1 does not apply and $\frontier$ contains $(z\mapsto (a,|\stack|),z')$ such that $h(z') = h(z)$, then we choose Option 4 and remove the tuple from $\frontier$. 
Since $h$ is homomorphism, we have $\Tmc \models P(x,x)$, for all $P(z,z') \in \q$, and \mbox{\canMapTail{$z'$, $a$, $\topof(\stack)$}} returns \true{}. 
Then, for every child $z''$ of $z'$ in $T$, we add $(z'\mapsto(a,|\stack|),z'')$ to $\frontier$. 
Observe that since $h(z) = h(z')$ and~\eqref{eq:logcfl:inv} holds for $z$, property~\eqref{eq:logcfl:inv} also holds for the newly added tuples. 

\medskip

\noindent\textsc{[Case 3.]} If neither Case~1 nor Case~2 applies
and $\frontier$ contains $(z\mapsto(a,|\stack|),z')$ such that $h(z') = h(z) \varrho$, then 
we choose Option~2 and remove the tuple from $\frontier$. Note that in this case, $|\stack| < 2|\Tmc|+|\q|$ since 
(i) by~\eqref{eq:logcfl:inv}, $h(z)=a w$, for $w=\stack_{\leq |\stack|}$, and (ii) by the choice of homomorphism $h$, 
we have $|w \varrho| \leq 2|\Tmc|+|\q|$. 
So, we continue and choose $\varrho$ for the guess. 
By~\eqref{eq:logcfl:inv}, since $h$ is a homomorphism and $h(z') = h(z) \varrho$, the call \isGenerated{$\varrho$, $a$, $\topof(\stack)$} returns \true{}, $\T\models \varrho(x,y)\to P(x,y)$, for all $P(z,z')\in\q$ and the call \canMapTail{$z'$, $a$, $\topof(\stack)$} returns \true{}. 
So, 
we push $\varrho$ onto $\stack$ and add $(z'\mapsto (a,|\stack|), z'')$ to $\frontier$ 
for every child $z''$ of $z'$ in $T$. As $\stack$ contains the word component of $h(z')$,
invariant~\eqref{eq:logcfl:inv} holds for the newly added tuples.  

\medskip
 
\noindent\textsc{[Case 4.]} If none of Case~1, Case~2 or Case~3 is applicable, 
then we choose Option~3 and remove all elements in $\deepest = \{(z\mapsto(a,n),z') \in \frontier \mid n = |\stack|\}$ from $\frontier$.
Since neither Case~1 nor Case~3 applies, $|\stack| > 0$. So,
we pop the top symbol~$\varrho$ from $\stack$.  Suppose first that $\deepest \ne\emptyset$. By~\eqref{inv:B3},
all tuples in $\deepest$ share the same individual $a$. By~\eqref{eq:logcfl:inv}, 
every tuple $(z\mapsto(a,n),z') \in \deepest$ is such that $h(z)=aw\varrho$, where $w =\stack_{\leq|\stack|}$. 
Moreover, since Case~3 is not applicable, for every such tuple 
$(z\mapsto(a,n),z')$, we have $h(z')=aw$. 
Using the fact that $h$ is a homomorphism, one can show that $\T\models\varrho(x,y)\to P(x,y)$, for all $P(z',z)\in\q$, and \canMapTail{$z'$, $a$, $\topof(\stack)$} returns \true{}. So, we add to $\frontier$ all tuples $(z'\mapsto (a,|\stack|), z'')$, a child $z''$ of $z'$ in $T$.
Note that invariant~\eqref{eq:logcfl:inv} is satisfied by all the new tuples. Moreover, since we only removed the last symbol in $\stack$, 
all the remaining tuples in $\frontier$ continue to satisfy~\eqref{eq:logcfl:inv}. Finally, if $\deepest$ was empty, then we do nothing but the tuples in $\frontier$ continue to satisfy~\eqref{eq:logcfl:inv}.


\medskip

It is easily verified that so long as $\frontier$ is non-empty, one of these four cases applies. 
Since we have shown how to make the non-deterministic choices in the \textbf{while} loop without returning \false, by Proposition~\ref{logcfl-upper-prop:termination},
the procedure eventually leaves the \textbf{while} loop and returns \true{}.

\bigskip

($\Rightarrow$) Consider an execution of \bbarbalgo($(\Tmc,\q), \Amc, \avec{a})$ that returns \true. 
It follows that the \textbf{while} loop is successfully exited after reaching an empty $\frontier$. 
Let $L$ be the total number of iterations of the \textbf{while} loop. We inductively define a sequence $h_0, h_1, \ldots, h_L$ of partial functions 
from the variables of $\q$ to $\Delta^{\canmod}$ by considering the guesses made during the different iterations of the \textbf{while} loop. The domain of $h_i$ will be denoted by  $\dom(h_i)$.
We will ensure that the following properties hold for every $0 \leq i < L$:
\begin{align}
\label{prop:one} 
& \text{If } i>0, \text{ then } \dom(h_{i-1}) \subseteq \dom(h_i), \text{ and } h_i(z) = h_{i-1}(z), \text{ for } z \in \dom(h_{i-1}).\\
\label{prop:two} 
& \text{If } (z\mapsto (a,n),z')\in\frontier \text{ at the end of iteration } i, \text{ then}\\
\tag{\ref{prop:two}a}\label{prop:two:a}
&\hspace*{6em} h_i(z)=aw, \text{ where } w= \stack_{\leq n},\\
\tag{\ref{prop:two}b}\label{prop:two:b}
&\hspace*{6em} \text{and neither } z' \text{ nor any of its descendants belongs to } \dom(h_i).\\
\label{prop:three} 
& h_i \text{ is a homomorphism } \q_i\to\canmod, \text{ where } \q_i \text{ is the restriction of } \q \text{ to } \dom(h_i).
\end{align}

We begin by setting $h_0(z_0)=a_0w_0$, where $w_0$ is the word in $\stack$ (and leaving $h_0$ undefined for all other variables). Property~\eqref{prop:one} is vacuously satisfied. 
Property~\eqref{prop:two} holds because of the initial values of $\frontier$ and $\stack$ because only $z_0 \in \dom(h_0)$, and $z_0$ cannot be its own child (hence, it cannot appear in the last component of 
a tuple in $\frontier$). To see why~\eqref{prop:three} is satisfied, first suppose that $w_0 = \varepsilon$ and so 
$a_0w_0 \in \ind(\Amc)$. Then, the call \canMapTail{$z_0$, $a_0$, $\topof(\stack)$} returns \true. It follows that 
\begin{align*}
& \text{if } z_0 \text{ is the $j$th answer variable then } a_0 = a_j;\\ 
& a_0\in A^{\canmod}, \text{ for each } A(z_0)\in\q, \ \ \ \text{ and } \ \ \ (a_0,a_0)\in P^{\canmod}, \text{ for each } P(z_0,z_0)\in \q;
 \end{align*}
and hence, $h_0$ defines a homomorphism of $\q_0$ into $\canmod$. 
Otherwise, $w_0$ is non-empty and $w_0 = w_0' \varrho$.  
It follows that 
\begin{align*}
& z_0 \text{ is not an answer variable of } \q;\\ 
& \Tmc \models \exists y\, \varrho(y,x) \rightarrow A(x), \text{ for each } A(z_0)\in\q,  \ \ \ \text{ and } \ \ \ \T\models P(x,x), \text{ for each } P(z_0,z_0)\in\q;
 \end{align*}
and hence $h_0$ homomorphically maps all atoms of $\q_0$ into $\canmod$. Thus, the initial partial function $h_0$
satisfies~\eqref{prop:one}--\eqref{prop:three}. 

Next we show how to inductively define $h_{i}$ from $h_{i-1}$ while preserving~\eqref{prop:one}--\eqref{prop:three}.
The variables that belong to $\dom(h_{i}) \setminus \dom(h_{i-1})$ are precisely those 
variables that appear in the last position of tuples removed from $\frontier$ during iteration $i$ (since these are the
variables for which we guess a domain element). 
The choice of where to map these variables depends on which of the four options was selected.
In what follows, we will use $\stack^i$ to denote the contents of $\stack$ at the end of iteration $i$.

\medskip

\noindent\textsc{Option 1:} we remove a tuple $(z\mapsto (a,0),z')$ and guess $a' \in \ind(\Amc)$. 
So, we set \mbox{$h_i(z')=a'$} and $h_i(v)=h_{i-1}(v)$ for all $v\in\dom(h_{i-1})$ (all other variables remain undefined). 
Property~\eqref{prop:one} is by definition. 
For property~\eqref{prop:two}, consider a tuple \mbox{$\tau = (v\mapsto(c,m),v')$} that belongs to $\frontier$ at the end of iteration $i$. 
Suppose first $\tau$ was added to $\frontier$ during 
iteration $i$, in which case $\tau = (z'\mapsto(a',0),z'')$ for some child $z''$ of $z'$.
Property~\eqref{prop:two:a} is  satisfied because $\stack^i_{\leq 0}=\varepsilon$. 
Since $h_{i-1}$ satisfies~\eqref{prop:two}, $z''$ (a descendant of $z'$) is not in $\dom(h_{i-1})$, 
which satisfies~\eqref{prop:two:b}. 
The remaining possibility is that $\tau$ was already in $\frontier$ at the beginning of iteration $i$. Since  
$h_{i-1}$ satisfies~\eqref{prop:two}, we have
$h_{i-1}(v)=c w$ for $w= \stack^{i-1}_{\leq n}$ and  neither $v'$ nor any of its descendants belongs to $\dom(h_{i-1})$.  
Since $\stack^i = \stack^{i-1}$ and $h_i(v)=h_{i-1}(v)$, property~\eqref{prop:two:a} holds for~$\tau$. 
Moreover, as $\tau$ was not removed from $\frontier$ during iteration $i$, 
we have $\tau \neq (z\mapsto (a,0),z')$, and so, by~\eqref{inv:B4}, $v' \neq z'$.
Thus, neither $v'$ nor any of its descendants is in  $\dom(h_{i})$.

For property~\eqref{prop:three}, we first note that since $h_{i}$ agrees with $h_{i-1}$ on $\dom(h_i)$ and $h_{i-1}$ satisfies~\eqref{prop:three},
it is only necessary to consider the atoms in $\q_{i}$ that do not belong to $\q_{i-1}$. There are three kinds of such atoms:
\begin{nitemize}
\item[--] if $A(z') \in \q_i$, then, since \canMapTail{$z'$, $a'$, $\varepsilon$} returns \true{}, 
$h_i(z') = a' \in A^{\canmod}$;
\item if $P(z',z') \in \q_i$, then, again,  since \canMapTail{$z'$, $a'$, $\varepsilon$} returns \true{}, we have
$(h_i(z'),h_i(z')) = (a',a') \in P^{\canmod}$;
\item if $P(z',v)\in \q_i$ with $v \neq z'$, then $v\in\dom(h_i)$,
so $v$ must coincide with $z$, the parent of~$z'$ (rather than being one of the children of $z'$); the \textbf{check} operation 
in the algorithm  then guarantees $(h_{i}(z'),h_{i}(v))= (a',a) \in P^{\canmod}$.
\end{nitemize}
Thus,~\eqref{prop:three} holds for $h_i$. 

\smallskip

\noindent\textsc{Option 2:}  
a tuple $(z\mapsto (a,n),z')$ was removed from $\frontier$, $n = |\stack|$ and  a role~$\varrho$ was guessed.  
We set $h_i(z')=h_{i-1}(z) \varrho$. By~\eqref{prop:two}, $h_{i-1}(z)$ is defined. 
Moreover, the call \isGenerated{$\varrho$, $a$, $\topof(\stack)$} ensures that $h_{i-1}(z) \varrho\in\Delta^{\canmod}$. 
We also set $h_i(v)=h_{i-1}(v)$ for all $v\in\dom(h_{i-1})$ and leave the remaining variables undefined. 
Property~\eqref{prop:one} is immediate from the definition of $h_i$, and~\eqref{prop:two:b} can be shown exactly as for Option~1. To show~\eqref{prop:two:a}, consider 
a tuple $\tau=(v\mapsto (c,m),v')$ that belongs to $\frontier$ at the end of iteration $i$. 
Suppose first that $\tau$ was added to $\frontier$
during iteration $i$, in which case $\tau=(z'\mapsto (a,n+1),z'')$ for some child $z''$ of $z'$. 
Since $h_{i-1}$ satisfies~\eqref{prop:two}, $h_{i-1}(z)= a \, \stack^{i-1}_{\leq n}$. Property~\eqref{prop:two:a} follows then 
from $h_i(z')= h_{i-1}(z) \varrho$ and $\stack^i = \stack^{i-1} \, \varrho$. 
The other possibility is that $\tau$ was present in $\frontier$ at the beginning of iteration $i$. Since $h_{i-1}$ satisfies~\eqref{prop:two}, we have $h_{i-1}(v)=a \, \stack^{i-1}_{\leq m}$.
Property~\eqref{prop:two:a} continues to hold for $\tau$ because $\stack^i = \stack^{i-1} \, \varrho$ and $m \leq |\stack^{i-1}|$ and $h_i(v)=h_{i-1}(v)$.

We now turn to property~\eqref{prop:three}. As explained in the proof for Option~1, it is sufficient to consider the atoms in 
$\q_{i} \setminus \q_{i-1}$, which can be of three types:
\begin{nitemize}
\item[--]  if $A(z') \in \q_i$, then, since \canMapTail{$z'$, $a$, $\varrho$} returns \true{}, we have
\mbox{$\Tmc \models \exists y\ \varrho(y,x)\rightarrow A(x)$}, hence $h_{i}(z')=h_{i-1}(z) \varrho\in A^{\canmod}$.
\item[--]  if $P(z',z')\in\q_i$, then, again, since \canMapTail{$z'$, $a$, $\varrho$} returns \true{}, we have
\mbox{$\Tmc \models P(x,x)$}, hence $(h_{i}(z'),h_i(z'))\in P^{\canmod}$.
\item[--]  if $P(z',v)\in\q_i$ with $v \neq z'$ then $v=z$ (see Option 1);
so, $\Tmc \models \varrho(x,y) \rightarrow P(y,x)$, whence $(h_i(z'),h_i(v)) = (h_{i-1}(z) \varrho, h_{i-1}(z)) \in P^{\canmod}$.
\end{nitemize}
Therefore, $h_i$ is a homomorphism from $\q_i$ into $\canmod$, which is required by~\eqref{prop:three}.

\medskip

\noindent\textsc{Option 3:}  tuples in $\deepest = \{(z\mapsto (a,n),z') \in \frontier \mid n = |\stack|\}$ are removed 
from $\frontier$, and role $\varrho$ is popped from $\stack$. By~\eqref{inv:B3}, all tuples in 
$\deepest$ share the same individual $a$.  Let $V = \{z' \mid (z\mapsto (a,n), z')\in\deepest \}$.
For every $v \in V$, we set $h_i(v) = a \,\stack^i$; we also set $h_{i}(v)=h_{i-1}(v)$ for all $v\in\dom(h_{i-1})$ and leave the remaining variables undefined. 
Property~\eqref{prop:one} is again immediate, and the argument for~\eqref{prop:two:b} is the same as in Option~1. 
For property~\eqref{prop:two:a}, 
take any tuple $\tau=(v\mapsto (c,m),v')$ in $\frontier$ at the end of iteration $i$. 
If the tuple  was added
to $\frontier$ during this iteration, then  $v \in V$, $a=c$, $m=|\stack^i|$, and 
$h_i(v)= a \, \stack^i$, whence~\eqref{prop:two:a}. 
The other possibility is that $\tau$ was present in $\frontier$ at the beginning of iteration $i$. 
Then $h_{i-1}(v)=c \,\stack^{i-1}_{\leq m}$ and $m<|\stack^{i-1}|$.
Since $\stack^i$ is obtained from $\stack^{i-1}$ by popping one role, 
we have $m \leq |\stack^i|$,  and so~\eqref{prop:two:a} holds for $\tau$.

For property~\eqref{prop:three}, the argument is similar to Options~1 and~2 and involves considering
the different types of atoms that may appear in $\q_{i} \setminus \q_{i-1}$: 
\begin{nitemize}
\item[--]  if $A(z') \in \q_i$ with $z'\in V$ then, since \canMapTail{$z'$, $a$, $\topof(\stack)$} returns \true{}, we have $h_{i}(z') \in A^{\canmod}$ (see Options~1 and~2);
\item[--] if $P(z',z')\in \q_i$ with $z'\in V$ then, since  \canMapTail{$z'$, $a$, $\topof(\stack)$} returns \true{}, we have  $(h_i(z'),h_i(z'))=(a,a) \in P^{\canmod}$;
\item   if $P(z',v) \in \q_{i}$ with $v \neq z'$ and $z'\in V$, then $v$ is the parent of $z$ (see Option~1) and, since
$\Tmc \models \varrho(y,x) \rightarrow P(x,y)$, we obtain $(h_i(z'),h_i(v)) = (a \, \stack^i, a \, \stack^i\, \varrho) \in P^{\canmod}$.
\end{nitemize}
Thus,~\eqref{prop:three} holds for $h_i$. 

\medskip

\noindent\textsc{Option 4:}  a tuple $(z\mapsto (a,n),z')$ was removed from $\frontier$ with $n=|\stack|$.
We set $h_i(z')=h_i(z)$, $h_i(v)=h_{i-1}(v)$ for every $v \in \dom(h_{i-1})$, and leave all other variables unmapped. 
Again, it is easy to see that properties~\eqref{prop:one} and~\eqref{prop:two:b} are satisfied by $h_i$.
For property~\eqref{prop:two:a}, 
let  $\tau=(v\mapsto (c,m),v')$ be a tuple in $\frontier$ at the end of iteration~$i$. 
If the tuple was added during iteration $i$, then $v=z'$, $a=c$, and $m=n$. Since \mbox{$(z\mapsto (a,n),z')$} was present at the end of iteration $i-1$ and $\stack^i = \stack^{i-1}$,
we have $h_i(z') = a\, \stack^{i-1}_{\leq n}$, hence $h_i(z)= c\, \stack^i_{\leq m}$. 
As $h_i(z')=h_i(z)$, we have $h_i(z')= a \,\stack^i_{\leq m}$, so $\tau$ satisfies~\eqref{prop:two:a}.
If $\tau$ was already present at the beginning of iteration $i$, 
then we can use the fact that $\stack^{i} = \stack^{i-1}$  
and all tuples in $\frontier$ satisfy~\eqref{prop:two:a}. 

To show~\eqref{prop:three}, we consider the three types of atoms that may appear in $\q_{i} \setminus \q_{i-1}$: 
\begin{nitemize}
\item[--] if $A(z') \in \q_i$ then, since \canMapTail{$z'$, $a$, $\topof(\stack)$} returns \true{},  then \mbox{$\Tmc \models \exists y\, \varrho(y,x)\rightarrow A(x)$,} where $\varrho = \topof(\stack)$, and so $h_{i}(z')\in A^{\canmod}$;
\item[--] if $P(z',z') \in \q_i$ then, since \canMapTail{$z'$, $a$, $\topof(\stack)$} returns \true{},  then \mbox{$\Tmc \models P(x,x)$}, and so $(h_i(z'), h(z'))\in P^{\canmod}$;
\item[--]  if $P(z',v) \in \q_{i}$  with $v \neq z'$, then $v=z$ (see Option~1), and so, since $\Tmc \models P(x,x)$, we have $(h_i(z'),h_i(z)) \in P^{\canmod}$.
\end{nitemize}

\medskip

We claim that the final partial function $h_L$ is a homomorphism of $\q$ to $\canmod$. 
Since $h_L$ is a homomorphism of $\q_L$ into $\canmod$, it suffices to show that $\q=\q_L$, 
or equivalently, that all variables of $\q$ are in $\dom(h_L)$. This follows from the tree-shapedness of $\q$ (which in particular means that $\q$ is connected), invariants~\eqref{inv:B1}, and \eqref{inv:B6}, 
and the 
fact that $\dom(h_{i+1}) = \dom(h_i) \cup \{z' \mid (z\mapsto (a,n),z') \text { is removed from } \frontier \text{ during iteration } i\}$. 
\end{proof}

\begin{proposition}\label{nauxpda}
\bbarbalgo\ can be implemented by an NAuxPDA.
\end{proposition}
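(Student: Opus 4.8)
The plan is to show that the non-deterministic procedure $\bbarbalgo$ of Algorithms~\ref{algo:bbqueries} and~\ref{algo:bbqueries:2} meets the two resource constraints characterising $\LOGCFL$ via NAuxPDAs~\cite{sudborough78}: it uses \emph{logarithmic work space} (not counting the pushdown store) and runs in \emph{polynomial time}, with the pushdown tape operated as a genuine stack. The entire design of $\bbarbalgo$ was made with this implementation in mind, so the argument is essentially an audit of the storage and the running time.

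First I would identify what is stored on the logspace work tape and what is stored on the pushdown store. The pushdown store holds only the word component $\stack$ of the canonical-model element currently under consideration; each symbol pushed is a single role $\varrho$ from $\T$, so each requires $O(\log|\omq|)$ bits, and $\stack$ is accessed strictly through $\topof(\stack)$, push and pop---exactly the stack discipline an NAuxPDA's pushdown tape permits. Everything else lives on the logspace work tape. The only non-trivial object there is $\frontier$, a set of tuples $(z\mapsto(a,n),z')$. By invariants~\eqref{inv:B1} and~\eqref{inv:B4}, the number of such tuples is bounded by the number of edges of $T$, hence by $|\q| \le \ell$ (since $\q$ is a bounded-leaf tree, its number of edges is linear and the $\frontier$ is bounded by the number of leaves $\ell$, a fixed constant). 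Each tuple records two query variables $z,z'$, a single ABox individual $a$, and an integer $n$ with $n \le |\stack| \le 2|\T|+|\q|$ by~\eqref{inv:B2}; each component takes $O(\log|\omq| + \log|\A|)$ bits. Crucially, $\frontier$ stores only the stack \emph{height} $n$, never the word $\stack_{\le n}$ itself, so it does not replicate pushdown contents. Thus $\frontier$, together with the counters $a_0$, $n_0$ and the loop bookkeeping, fits in logarithmic space.

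Next I would bound the running time. By Proposition~\ref{logcfl-upper-prop:termination} the procedure terminates, and I would sharpen this to a polynomial bound on the number of $\textbf{while}$-loop iterations. By~\eqref{inv:B4} and~\eqref{inv:B5}, every successful iteration either removes at least one (never re-added) tuple from $\frontier$ or pops one symbol from $\stack$. The total number of tuples ever added is at most the number of edges of $T$ (bounded by $|\q|$), and the total number of pops cannot exceed the total number of pushes, which is bounded by the number of forward moves times the maximal stack height $2|\T|+|\q|$; both are polynomial in $|\omq|$. Hence the loop runs polynomially many times. Each iteration performs a constant number of guesses and $\Check$ operations; every $\Check$ reduces to an $\OWLQL$ entailment test or a membership test $(u,u')\in P^{\canmod}$, each decidable in $\NL \subseteq \LOGCFL$ and, in particular, in polynomial time~\cite{ACKZ09}. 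The subroutines $\canMapTail$ and $\isGenerated$ likewise make only finitely many such tests. Summing over the polynomially many iterations gives a polynomial time bound.

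I expect the main obstacle to be the bookkeeping required to verify that $\frontier$ genuinely never needs access to the buried contents of the stack: Option~2 reads and pushes onto the top of $\stack$ (requiring $n=|\stack|$ for the selected tuple), while Option~3 pops and must be applied simultaneously to \emph{all} deepest tuples (those with $n=|\stack|$). I would argue that these side conditions are exactly what guarantees that whenever the algorithm needs a symbol of the word attached to a variable, that symbol is at the top of $\stack$, so no random access into the pushdown store is ever attempted. Establishing this cleanly amounts to combining invariants~\eqref{inv:B2} and~\eqref{inv:B3}---that stack heights recorded in $\frontier$ never exceed $|\stack|$ and that all positive-height tuples share one ABox individual---to conclude that the stack is consulted and modified only at its top. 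Once this is in place, the three requirements (logspace work tape, polynomial time, stack-only pushdown access) are all met, which is precisely the NAuxPDA characterisation of $\LOGCFL$, completing the upper bound and, with Proposition~\ref{logcfl-upper-prop:correctness}, the membership half of Theorem~\ref{logcfl-c-arb}.
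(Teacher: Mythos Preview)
Your approach is the same as the paper's: audit the workspace (everything but $\stack$ fits in log space) and the running time (polynomially many iterations, each doing $\NL$-checkable work). You are also more explicit than the paper about why the pushdown discipline is respected, which is a plus.

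There is, however, one concrete error in your space analysis. You write that by invariants~\eqref{inv:B1} and~\eqref{inv:B4} the number of tuples in $\frontier$ is bounded by the number of edges of $T$, ``hence by $|\q| \le \ell$''. The inequality $|\q|\le\ell$ is false: a linear query has $\ell=2$ leaves but may have arbitrarily many atoms. Invariants~\eqref{inv:B1} and~\eqref{inv:B4} only bound the \emph{total} number of tuples ever created by $|\q|$; they do not by themselves bound the size of $\frontier$ at a fixed moment by a constant, which is what you need for the logspace claim. The correct argument is structural: at any point the set $\{z' \mid (z\mapsto(a,n),z')\in\frontier\}$ is an antichain in the rooted tree $T$ (by~\eqref{inv:B1} each $z'$ is a child of its $z$; by~\eqref{inv:B6} whenever $(z\mapsto\cdots,z')$ is removed all children of $z'$ are inserted, so no $z'$ in $\frontier$ is an ancestor of another). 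In a rooted tree with at most $\ell$ leaves every antichain has size at most $\ell$, since distinct incomparable vertices have disjoint sets of leaf descendants. This gives $|\frontier|\le\ell$, a constant, and the rest of your space argument goes through.

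Your time bound is also stated loosely (``number of forward moves times the maximal stack height'' overcounts: each Option~2 iteration pushes exactly one symbol, so total pushes are at most $(2|\T|+|\q|)+|\q|$), but the conclusion that the number of iterations is polynomial is unaffected. The paper gets the sharper bound $2|\T|+3|\q|$ on the total number of iterations.
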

\begin{proof}
It suffices to show that \bbarbalgo{} runs in non-deterministic logarithmic 
space and polynomial time (the size of $\stack$ does not have to be bounded). 

First, we non-deterministically fix a root variable $z_0$, but do not 
actually need to store the induced directed tree $T$ in memory. Instead, it suffices   
to decide, given two variables $z$ and $z'$, whether $z'$ is a child of $z$ in $T$, which clearly belongs to~\NL. 

Next, we need only logarithmic space to store the individual $a_0$. The initial word $w_0 = \varrho_1 \ldots \varrho_{n_0}$ 
is guessed symbol by symbol and pushed onto $\stack$. 
We note that both subroutines, \isGenerated and \canMapTail, can be made to run in 
non-deterministic logarithmic space. 
%
Then, since the children of a node in $T$ can be identified in \NL, 
we can decide in non-deterministic  logarithmic space whether a tuple $(z_0\mapsto (a_0,|\stack|,z_i)$ should be included
in $\frontier$. Moreover, since the input query $\q$ is a tree-shaped query with a bounded number of leaves,
we know that only constantly many tuples can be added to $\frontier$ by each such operation. Moreover, it is clear that 
every tuple can be stored using in logarithmic space. More generally, by~\eqref{inv:B1} and~\eqref{inv:B4}, one can show that 
$|\frontier|$ is bounded by a constant throughout the execution of the procedure, 
and the tuples added during the \textbf{while} loop can also be stored in logarithmically space. 

Next observe that every iteration of the \textbf{while} loop involves a polynomial number of the
following elementary operations such as
\begin{nitemize}
\item remove a tuple from $\frontier$, or add a tuple to $\frontier$;
\item pop a role from $\stack$, or push a role onto $\stack$;
\item guess a single individual constant or symbol;
\item identify the children of a given variable;
\item test whether $\Tmc \models \alpha$, for some inclusion $\alpha$ involving symbols from $\Tmc$;
\item make a call to one of the subroutines \isGenerated or \canMapTail. 
\end{nitemize}
For each of the above operations, it is either easy to see, or has already been explained, that the 
operation can be performed in non-deterministic logarithmic space.

To complete the proof, observe that, by~\eqref{inv:B5}, each iteration of the while loop involves removing a tuple from $\frontier$ or popping a role from $\stack$. 
By~\eqref{inv:B1} every tuple in $\frontier$ corresponds to an edge in $T$, and, by~\eqref{inv:B4}, we create at most one tuple per edge. Thus, there can be at most $|\q|$ iterations involving the removal of a tuple. The total number of roles added to $\stack$ is bounded by  at most $\leq 2|\T| +|\q|$ roles in the initial stack, plus the at most $|q|$ roles added in later iterations, yielding at most $2|\T| +2|\q|$ iterations involving only the popping of a role. Thus, the total number of iterations of the while loop cannot exceed can  $2|\T| +3|\q|$.
\end{proof}

\section{Proof of LOGCFL-hardness in Theorem~\ref{logcfl-c-arb}}\label{app:logcfl-hardness}

\begin{proposition}\label{prop:f.2}
The query $\qclin$ and KB $(\kbC)$ can be computed from $\Cir$ by logspace transducers. 
\end{proposition}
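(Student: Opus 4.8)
The plan is to treat the two objects separately, exploiting the fact that $\qclin$ is a \emph{skeleton} depending only on the number $2d+1$ of levels of $\Cir$, whereas $(\kbC)$ is a purely local rewriting of the database $\dcx$, which is already known to be logspace-computable.

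First I would dispatch $(\kbC)$, which is the routine part. By the reduction of Gottlob et al.~\cite{DBLP:journals/jacm/GottlobLS01}, the facts of $\dcx$ can be enumerated by a logspace transducer, since each $\leftand$-, $\rightand$-, $\aor$- and $\trueleaf$-fact is produced from the gate indices and the value of $\avec{\alpha}$ using only a constant number of $O(\log|\Cir|)$-bit counters. The ontology $\Tmc_{\avec{\alpha}}$ is obtained from $\dcx$ by a local substitution: every fact $S(g_j,g_i)$ with $S\in\{\aor,\leftand,\rightand\}$ is replaced by the axiom $G_i(x)\to\exists y\,(S(x,y)\land G_j(y))$, and every fact $\trueleaf(g_i)$ by $G_i(x)\to\trueleaf(x)$. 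Since this substitution maps each output of the $\dcx$-transducer to a bounded-size string mentioning only the indices $i,j$ already on the work tape, composing the $\dcx$-transducer with this formatting step stays in logspace (logspace transducers compose to logspace transducers). Finally, $\Amc=\{G_m(a)\}$ requires only locating the output gate $g_m$, i.e.\ the unique gate on the greatest level, which is a logspace test. Hence $(\kbC)$ is logspace-computable.

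The core of the argument is the generation of $\qclin$. I would first note that the word $w_d=\varrho_1\cdots\varrho_k$ has polynomial length: from $w_0=\varepsilon$ and $w_{j+1}=\leftand^-\,\aor^-\,w_j\,\aor\,\leftand\,\rightand^-\,\aor^-\,w_j\,\aor\,\rightand$ we get $|w_{j+1}|=2|w_j|+8$, so $k=|w_d|=8(2^d-1)$; since $\Cir$ is an $\SAC^1$-circuit its depth $2d+1$ is $O(\log|\Cir|)$, whence $k$ is polynomial in $|\Cir|$. The word is exactly the sequence of edge labels read off a depth-first (Euler) traversal of the skeleton proof tree, so I would output it by a recursive procedure $\textsf{emit}(j)$ that prints $\leftand^-\aor^-$, calls $\textsf{emit}(j-1)$, prints $\aor\,\leftand\,\rightand^-\,\aor^-$, calls $\textsf{emit}(j-1)$, and prints $\aor\,\rightand$, with $\textsf{emit}(0)$ being a leaf. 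The recursion depth is $d=O(\log|\Cir|)$ and each stack frame stores only which recursive call is active, so the traversal uses $O(\log|\Cir|)$ work space. Alongside it I would maintain one global counter $i$ of $O(\log k)=O(\log|\Cir|)$ bits: whenever a role $\varrho_i$ is emitted I output the atom $\varrho_i(y_{i-1},y_i)$ and increment $i$, and at every call $\textsf{emit}(0)$, which occurs precisely between an emitted $\aor^-$ and the next $\aor$ (the positions bracketing an empty $w_0$, i.e.\ the leaves), I additionally output $\trueleaf(y_i)$. A preliminary pass computes $d$ and $k$ and emits the prefix $\exists y_1\cdots y_k$, so the whole query $\qclin(y_0)$ is produced within logarithmic space.

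The main obstacle, and the only genuinely non-routine step, is this logspace generation of the recursively doubled word $w_d$ together with the correct placement of the $\trueleaf$-atoms and the variable indices. What makes it go through is that $d$ (hence $k$) is only logarithmic/polynomial and can be tracked with $O(\log|\Cir|)$-bit counters, and that the recursion tree has depth $O(\log|\Cir|)$ with constant-size frames, so bounded-depth recursion together with a single polynomial-range counter suffices; everything else reduces to local rewriting for which logspace-transducer composition applies.
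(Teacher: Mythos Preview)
Your proposal is correct and follows essentially the same approach as the paper: for $\qclin$ you use a depth-$d$ recursive traversal with a global variable-index counter (exactly what the paper does, though you spell out the recurrence $|w_d|=8(2^d-1)$ and the leaf test for the $\trueleaf$ atoms more explicitly), and for $(\kbC)$ you read off the axioms locally from the circuit structure. The only cosmetic difference is that you route the KB construction through $\dcx$ and invoke closure of logspace under composition, whereas the paper describes a single direct pass over the circuit; both yield the same bound.
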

\begin{proof}
Consider a  circuit $\Cir$ in normal form with $2d+1$ layers of gates, where $d$ is logarithmic in number of its inputs $n$. We show that $(\kbC)$ and $\qclin$ can be constructed 
using $O(\log(n))$ worktape memory.

To produce the query $\qclin$, we can generate the word $w_d$ letter-by-letter and insert the corresponding variables. 
This can be done by a simple recursive procedure of depth $d$, using the worktape to remember the current position in the recursion tree
as well as the index of the current variable $y_i$. Note that $|w_d|$ (hence the largest index of the query variables) may be exponential in $d$, 
but is only polynomial in $n$, and so we need only logarithmic space to store the index of the current variable. 

The ontology $\Tmc_{\avec{\alpha}}$ is obtained by 
making a single pass over a (graph representation) of the circuit and generating the 
axioms that correspond to the gates of $\Cir$ and the links between them. 
To decide which axioms of the form $G_i(x) \rightarrow A(x)$ to include,  we must also look up the value 
of the variables associated to the input gates under the valuation $\avec{\alpha}$. 
%
Finally, $\Amc$ consists of a single constant atom.
\end{proof}

\begin{proposition}\label{prop:f.3}
$\Cir$ accepts $\avec{\alpha}$ iff 
$\Tmc_{\avec{\alpha}}, \Amc \models \qclin(a)$.
\end{proposition}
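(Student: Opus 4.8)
The plan is to factor the claimed equivalence through the notion of a \emph{proof tree} and the tree unravelling $U$ of $\dcx$. Recall from the reduction that $\Cir$ accepts $\avec{\alpha}$ iff the skeleton proof tree admits a proof-tree labelling~\cite{DBLP:journals/jacm/GottlobLS01,DBLP:journals/jcss/Venkateswaran91}, and that, restricted to the predicates $\aor,\leftand,\rightand,\trueleaf$, the canonical model $\Cmc_{\kbC}$ is isomorphic to $U$, with its root $a$ sitting at the output gate $g_m$. Two of the required links are then immediate. First, $\kbC\models\qclin(a)$ iff $\Cmc_{\kbC}\models\qclin(a)$, since the canonical model captures certain answers. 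Second, as every atom of $\qclin$ uses only the predicates $\aor,\leftand,\rightand,\trueleaf$ (possibly inverted) and never the auxiliary $G_i$, a homomorphism of $\qclin(a)$ into $\Cmc_{\kbC}$ is literally the same as a homomorphism into $U$. Working in $U$ rather than directly in the DAG $\dcx$ is essential: because $U$ is a tree, each homomorphic image is forced to be a genuine subtree, which is what will match a proof tree. Thus it remains to prove the single combinatorial statement: $U\models\qclin(a)$ iff there is a proof tree for $\Cir$ and $\avec\alpha$.

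The key idea is to read the word $w_d=\varrho_1\cdots\varrho_k$ defining $\qclin$ as an \emph{Euler tour} of the skeleton proof tree. In the recursion $w_{j+1}=\leftand^-\,\aor^-\,w_j\,\aor\,\leftand\,\rightand^-\,\aor^-\,w_j\,\aor\,\rightand$, the prefix $\leftand^-\,\aor^-$ descends from an $\AND$-gate through its left $\OR$-input into the left subcircuit (two steps, matching the alternating $\AND/\OR$ levels), the first $w_j$ Euler-tours that subcircuit, and $\aor\,\leftand$ ascends back along the same two edges; the second half does the same for the right subcircuit. First I would check that each $w_j$ is \emph{balanced} (the matching forward/inverse role pairs $\leftand^-\!/\leftand$, $\aor^-\!/\aor$, $\rightand^-\!/\rightand$ cancel, and $w_0=\varepsilon$ is balanced), so that any homomorphism $h$ sends both endpoints of the $w_j$-path to one and the same vertex of $U$; in particular $h(y_0)=h(y_k)=a$. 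I would also note that the adjacency pattern $\varrho_i\varrho_{i+1}=\aor^-\aor$ that triggers a $\trueleaf(y_i)$ atom arises precisely when the recursion bottoms out with $w_j=w_0=\varepsilon$, i.e.\ exactly at the input-gate leaves of the tour.

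The heart of the argument is then an induction on $j$: for every $u\in U$ sitting at a gate $g$ that heads a $(2j+1)$-level subcircuit, a homomorphism of the $w_j$-path query (together with its $\trueleaf$ atoms) fixing $y_0\mapsto u$ exists iff $g$ evaluates to $1$ under $\avec\alpha$. In the base case $j=0$, $g$ is an input gate and the enclosing $\aor^-\aor$ forces $\trueleaf(u)$, which holds in $U$ iff $g$ is a true input. In the step, $g=g_\ell\ANDOP g_r$ is an $\AND$-gate: the two copies of $w_j$ must map into the subtrees hanging below the left and right $\OR$-inputs, and for each copy the descending $\aor^-$ is free to choose any one input of the corresponding $\OR$-gate (mirroring the unique child an $\OR$-gate has in a proof tree), so by the induction hypothesis that copy embeds iff the chosen gate below outputs $1$, i.e.\ iff the $\OR$-gate outputs $1$. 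Hence $w_{j+1}$ embeds at $u$ iff both $\OR$-inputs output $1$, i.e.\ iff $g=1$. Taking $j=d$ and $u=a$ yields $U\models\qclin(a)$ iff $g_m=1$, i.e.\ iff $\Cir$ accepts $\avec\alpha$. The main obstacle I anticipate is the tree-theoretic bookkeeping in this induction: one must verify that balancedness genuinely pins each backtracking pair of inverse roles to the \emph{same} tree edge (so the tour cannot wander across unrelated branches of $U$) and that the $\trueleaf$ atoms land exactly on leaf images; once this is set up, the local correctness of each case is routine.
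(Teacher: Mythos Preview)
Your approach is correct, and the core mechanism---that the tree structure of $\Cmc_{\kbC}$ forces the Euler-tour path to retrace the same edges on the way back up, so that the image of $\qclin$ is a genuine subtree matching a proof tree---is the same one the paper uses. The packaging, however, is genuinely different.

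The paper does not argue by induction on the recursion depth $j$. Instead it keeps the tree-shaped CQ $\q$ and the instance $\dcx$ in the picture and argues via a small commutative square: with $e\colon\qclin\to\q$ the natural fold and $e'\colon\Cmc_{\kbC}\to\dcx$ the natural projection, the claim reduces to showing that a homomorphism $\qclin\to\Cmc_{\kbC}$ exists iff a homomorphism $\q\to\dcx$ exists (the latter equivalence with acceptance being the cited result of Gottlob, Leone and Scarcello). For $(\Rightarrow)$ the paper lifts a given $h\colon\q\to\dcx$ along the unravelling to $h'\colon\q\to\Cmc_{\kbC}$ and precomposes with $e$. For $(\Leftarrow)$---the delicate direction---it proves by induction on $|j-i|$ that any $f\colon\qclin\to\Cmc_{\kbC}$ satisfies $e(y_i)=e(y_j)\Rightarrow f(y_i)=f(y_j)$, using that every element of the tree $\Cmc_{\kbC}$ has a \emph{unique} parent; hence $f$ factors through $e$ to give $f'\colon\q\to\Cmc_{\kbC}$, and $e'\circ f'$ is the desired map into $\dcx$. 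That node-wise induction is precisely your ``balancedness pins each backtracking pair to the same tree edge'' obstacle, just formulated per identified variable pair rather than per nested $w_j$.

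What each buys: the paper's route is more modular (it reuses the known $\q$/$\dcx$ equivalence and isolates the new content as a clean factoring lemma), whereas your route is self-contained (you re-derive the proof-tree semantics inside the induction on $j$ and never mention $\q$ or $\dcx$ explicitly). Your formulation also makes the role of the $\trueleaf$ atoms at the $\aor^-\aor$ positions more transparent. Either is fine; just be aware that the ``balancedness $\Rightarrow$ same endpoint'' step really does need the unique-parent property of $U$ and is not a purely syntactic cancellation.
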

\begin{proof}
Denote by $e$ the natural homomorphism from $\qclin$ to $\q$, and
by $e'$ the natural homomorphism from $\C_{\kbC}$ to $\dcx$.
Since
$\Cir$ accepts input $\avec{\alpha}$ iff there is a homomorphism $h$ from  $\q$ to $\dcx$~\cite{DBLP:journals/jacm/GottlobLS01}, it 
suffices to show that there exists a homomorphism $f$ from $\qclin$ to $\C_{\kbC}$ 
iff there is a homomorphism $h$ from $\q$ to $\dcx$:\\
\centerline{%
\begin{tikzpicture}[auto]
\node at (-1,2) {(a)};
\node (qclin) at (0,2) {$\qclin$};
\node (qc) at (0,0) {$\q$};
\node (dc) at (2,0) {$\dcx$};
\node (cm) at (2,2) {$\C_{\kbC}$};
\draw[->] (qclin) to node [left] {$e$} (qc);
\draw[->] (cm) to node {$e'$} (dc);
\draw[->] (qc) to node [below]  {$h$} (dc);
\draw[->,dashed] (qc) to node [below right] {$h'$} (cm);
\draw[->,dashed] (qclin) to node [above] {$f$} (cm);
\begin{scope}[xshift=2cm]
\node at (3,2) {(b)};
\node (qclin1) at (4,2) {$\qclin$};
\node (qc1) at (4,0) {$\q$};
\node (dc1) at (6,0) {$\dcx$};
\node (cm1) at (6,2) {$\C_{\kbC}$};
\draw[->] (qclin1) to node [left] {$e$} (qc1);
\draw[->] (cm1) to node {$e'$} (dc1);
\draw[->] (qclin1) to node [above] {$f$} (cm1);
\draw[->,dashed] (qc1) to node [below right] {$f'$} (cm1);
\draw[->,dashed] (qc1) to node [below]  {$h$} (dc1);
\end{scope}
\end{tikzpicture}}

($\Rightarrow$) Suppose that $h$ is a homomorphism  from $\q$ to $\dcx$.
We define a homomorphism $h'\colon \q \to \C_{\kbC}$ inductively moving from
the root $n_1$ of $\q$ to its leaves. For the basis of induction,  we set $h'(n_1) = a$; note that
$\C_{\kbC} \models G_m(a)$. For the inductive step,  
suppose that $n_j$ is a child of $n_i$, $h'(n_i)$ is defined, $\C_{\kbC} \models G_{i'}(h(n_i))$
and  $h(n_j) = g_{j'}$. In this case, we set $h'(n_j) = h'(n_i) P^-_{i'j'}$.  It follows from the definition of
$\Tmc_{\avec{\alpha}}$ that $\C_{\kbC} \models G_{j'}(h'(n_j))$, 
which enables us to continue the induction. It should be clear that
$h'$ is indeed a homomorphism from $\q$ into $\C_{\kbC}$. 
The desired homomorphism $f\colon \qclin \rightarrow \C_{\kbC}$ can be obtained as the composition of $e$ and $h'$, as illustrated in diagram (a).  

\smallskip

($\Leftarrow$) Suppose that $f$ is a homomorphism  from $\qclin$ to $\C_{\kbC}$. 
We prove,  by induction on $|j - i|$, that for all its variables $y_i, y_j$, 
\begin{equation}\label{eq:match}
e(y_i) = e(y_j) \ \ \text{ implies } \ \ f(y_i) = f(y_j). 
\end{equation}
The base case ($|j - i| = 0$) is trivial.
For the inductive step, we may assume without loss of generality that $i < j$ and
between $y_i$ and $y_j$ there is no intermediate variable $y_k$ with $e(y_i) = e(y_k) =e(y_j)$
(otherwise, we can simply use the induction hypothesis together with the transitivity of equality). It follows that 
$e(y_{i+1}) = e(y_{j-1})$, and the atom between $y_{j-1}$ and $y_{j}$ is oriented from $y_{j-1}$ towards $y_{j}$,
while the atom between $y_i$ and $y_{i+1}$ goes from $y_{i+1}$ to $y_i$. 
 Indeed, this holds if the node $n = e(y_i) = e(y_j)$ is an $\OR$-node since there are exactly two variables in $\qclin$ which are mapped to $n$, 
 and they bound the subtree in $\q$ generated by $n$. For an $\AND$-node, this also holds because of our assumption about intermediate variables.
 By the induction hypothesis, we have
$f(y_{i+1}) =  f(y_{j-1}) = aw\varrho$ for some word $aw\varrho$. Since the only parent of $aw\varrho$ in $\C_{\kbC}$ is $aw$, 
all arrows in relations $U$, $L$ and $R$ are oriented towards the root, 
and $f$ is known to be a homomorphism, it follows that $f(y_{i}) =  f(y_{j}) = aw$. This concludes the inductive argument.

Next, we define $f'\colon \q \to \C_{\kbC}$ by setting $f'(x)=f(y)$, where $y$ is such that \mbox{$e(y)=x$}. By~\eqref{eq:match},  $f'$ is well-defined, and because $f$ is a homomorphism, the same holds for $f'$. To obtain the desired 
homomorphism $h\colon\q\to\dcx$, it suffices to consider the composition of $f'$ and $e'$; see diagram (b).
\end{proof}

\end{document}